\documentclass[11pt,a4paper,twoside,openright]{memoir}

\checkandfixthelayout

\setlength\emergencystretch{\hsize}

\setsecnumdepth{subsection}
\setcounter{tocdepth}{2}
\setcounter{secnumdepth}{3}

\usepackage{tikz}
\usetikzlibrary{}
\usepgfmodule{oo}

\usepackage[spanish,german,catalan,italian,english]{babel}
\usepackage[utf8]{inputenc}
\usepackage{amsthm}
\usepackage{amssymb}
\usepackage{amsmath}
\usepackage{bbold}
\usepackage{bbm}
\usepackage{nonfloat}
\usepackage{color}
\usepackage{xcolor}
\usepackage[pdftex]{hyperref}
\usepackage{etoolbox}
\newcommand{\bra}[2]{\mathinner{\langle #2|}_{#1}}
\newcommand{\ket}[2]{\mathinner{|#2\rangle}_{\hspace{-0.1em} #1}}


\newcommand{\ketbra}[3]{\mathinner{|#2\rangle\langle #3|}_{#1}}

\usepackage{dsfont}
\usepackage{mathdots}
\usepackage{mathtools}
\usepackage{enumerate}
\usepackage{csquotes}
\usepackage{stmaryrd}
\usepackage[cal=boondox]{mathalfa}
\usepackage{graphicx}
\usepackage{stackengine}
\usepackage{scalerel}
\usepackage{array}
\usepackage{makecell}
\newcolumntype{x}[1]{>{\centering\arraybackslash}p{#1}}
\usepackage{tikz}
\usepackage{tcolorbox}
\usepackage{etoolbox}
\usepackage{epigraph}
\usepackage[mathscr]{eucal}
\usepackage{capt-of}
\usepackage[percent]{overpic}
\usepackage{xspace}

\newcommand{\etal}{\textit{et al. }}

\hyphenation{op-tical net-works semi-conduc-tor faith-ful-ness}

\DeclareGraphicsExtensions{.jpeg, .png, .pdf}

\newtheorem{thm}{Theorem}
\newtheorem*{thm*}{Theorem}
\newtheorem{prop}[thm]{Proposition}
\newtheorem*{prop*}{Proposition}
\newtheorem{lemma}[thm]{Lemma}
\newtheorem*{lemma*}{Lemma}
\newtheorem{cor}[thm]{Corollary}
\newtheorem*{cor*}{Corollary}
\newtheorem{cj}{Conjecture}
\newtheorem*{cj*}{Conjecture}

\newtheorem*{Def*}{Definition}

\counterwithin{thm}{chapter}
\counterwithin{cj}{chapter}

\theoremstyle{definition}
\newtheorem{rem}[thm]{Remark}

\newtheorem{ex}[thm]{Example}

\newcommand{\cI}{\mathcal{I}}
\newcommand{\cR}{\mathcal{R}}
\newcommand{\cC}{\mathcal{C}}
\newcommand{\cS}{\mathcal{S}}
\newcommand{\cT}{\mathcal{T}}

\newcommand{\cH}{\mathcal{H}}

\newcommand{\eps}{\varepsilon}
\newcommand{\DM}{D_{M}}  
\newcommand{\NN}{\mathbb{N}}
\newcommand{\setS}[1]{\mathscr S \left( #1 \right)}
\newcommand{\iid}{i.i.d.\xspace}

\newcommand{\bb}{\begin{equation}}
\newcommand{\bbb}{\begin{equation*}}
\newcommand{\ee}{\end{equation}}
\newcommand{\eee}{\end{equation*}}
\newcommand{\Tr}{\text{Tr}\,}
\newcommand{\tr}{\text{Tr}}
\newcommand{\rk}{\text{rk}\,}
\newcommand*{\coloneqq}{\mathrel{\vcenter{\baselineskip0.5ex \lineskiplimit0pt \hbox{\scriptsize.}\hbox{\scriptsize.}}} =}

\newcommand{\texteq}[1]{\stackrel{\mathclap{\scriptsize \mbox{#1}}}{=}}
\newcommand{\textleq}[1]{\stackrel{\mathclap{\scriptsize \mbox{#1}}}{\leq}}

\newcommand{\textgeq}[1]{\stackrel{\mathclap{\scriptsize \mbox{#1}}}{\geq}}

\newcommand{\id}{{\mathds{1}}}

\newcommand{\lmatrix}{\left(\begin{smallmatrix}}
\newcommand{\rmatrix}{\end{smallmatrix}\right)}
\newcommand{\norm}[1]{\left\lVert#1\right\rVert}

\newcommand{\hyptest}[2]{\begin{tcolorbox}[nofloat, colback=uab100!5!white,colframe=uab100!75!black,title=#2] #1 \end{tcolorbox}  }
\newcommand{\bonus}[2]{\begin{tcolorbox}[colback=uab100!5!white,colframe=uab100!75!black,title=#2] #1 \end{tcolorbox}  }
\DeclareMathOperator{\supp}{supp}
\DeclareMathOperator{\poly}{poly}

\tcbset{colback=green!5,colframe=orange!50!red,
fonttitle=\bfseries, float=htb}

\newtcolorbox[auto counter]{example}[3][]
{float*=ht,title=Observation ~\thetcbcounter: #2,label= ex:#3 ,#1}


\patchcmd{\bibsetup}{\interlinepenalty=5000}{\interlinepenalty=10000}{}{}

\stackMath

\makeatletter
\newcommand*\rel@kern[1]{\kern#1\dimexpr\macc@kerna}
\newcommand*\widebar[1]{%
  \begingroup
  \def\mathaccent##1##2{%
    \rel@kern{0.8}%
    \overline{\rel@kern{-0.8}\macc@nucleus\rel@kern{0.2}}%
    \rel@kern{-0.2}%
  }%
  \macc@depth\@ne
  \let\math@bgroup\@empty \let\math@egroup\macc@set@skewchar
  \mathsurround\z@ \frozen@everymath{\mathgroup\macc@group\relax}%
  \macc@set@skewchar\relax
  \let\mathaccentV\macc@nested@a
  \macc@nested@a\relax111{#1}%
  \endgroup
}

 \usepackage{lipsum}
\usepackage{tcolorbox}

\definecolor{uab100}{RGB}{240,69,35}    

\begin{document}

\frontmatter

\clearpage
\thispagestyle{empty}
\thispagestyle{empty}

\begin{center}
\setlength{\parskip}{0pt}
{\large\textbf{Universitat Aut\`onoma de Barcelona}\par}
\vfill
{\huge \bfseries From asymptotic hypothesis testing to entropy inequalities \par}
\vfill
{\LARGE by \par}
\smallskip
{\LARGE Christoph Hirche \par}
\smallskip
{\large under supervision of \par}
\smallskip
{\large Prof. John Calsamiglia \par}
\vfill
{\large A thesis submitted in partial fulfillment for the \par}
{\large degree of Doctor of Philosophy \par}
{\large in Physics \par}
\bigskip
\bigskip
{\large in \par}
{\large Unitat de F\'isica Te\`orica: Informaci\'o i Fen\`omens Qu\`antics \par}
{\large Departament de F\'isica \par}
{\large Facultat de Ci\`encies \par}
\bigskip
\bigskip
\bigskip
{\Large Bellaterra, March, 2018 \par}
\begin{figure}[ht]
  \centering
  \mbox{\hspace{1.2ex} \includegraphics[height=4cm, width=5cm, keepaspectratio]{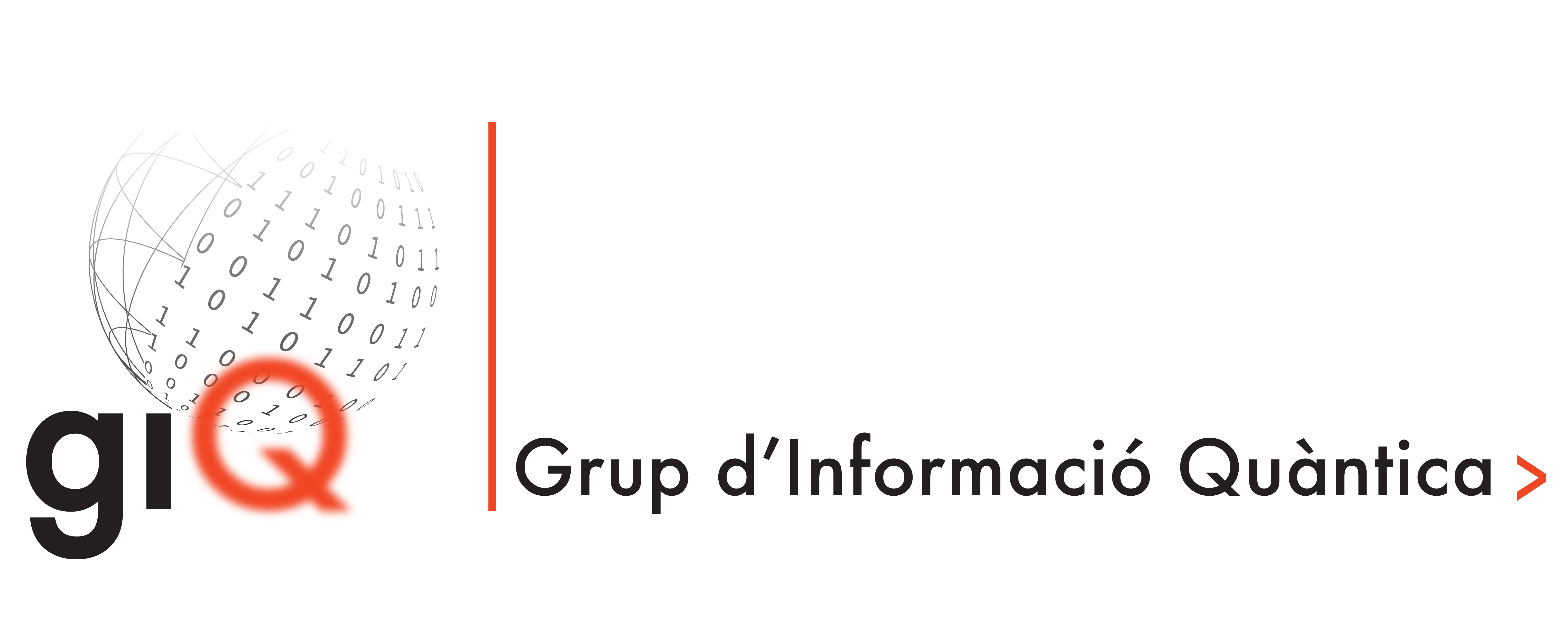}}
\end{figure}
%
%
%
%
%
%
%
%
%
%
%
%
%
%
%
%
%
\end{center}

\null
\newpage

\thispagestyle{empty}
\begin{vplace}[0.7]
\setlength{\epigraphwidth}{12cm}
\epigraph{\includegraphics[width=\textwidth]{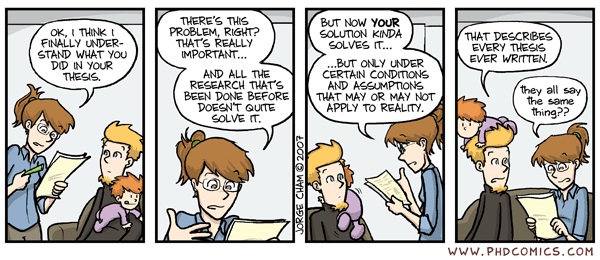}
}{"Piled Higher and Deeper" by Jorge Cham \\
www.phdcomics.com}
\end{vplace}

\cleardoublepage

\begin{abstract}
This thesis addresses the interplay between asymptotic hypothesis testing and entropy inequalities in quantum information theory. 
In the first part of the thesis we focus on hypothesis testing. Here, we consider two main settings; one can either fix quantum states while optimizing over possible measurements or fix a measurement and evaluate its capability to discriminate quantum states by optimizing over such states. With regard to the former setting, we prove a general result on the optimal error rate in asymmetric composite hypothesis testing, which leads to a composite quantum Stein's Lemma. We also discuss how this gives an operational interpretation to several quantities of interest, such as the relative entropy of coherence, and how to transfer the result to symmetric hypothesis testing. For the latter, we give the optimal asymptotic error rates in several symmetric and asymmetric settings, as well as discuss properties and examples of these rates.  

In the second part, the focus is shifted to entropy inequalities. We start with recoverability inequalities, which have gained much attention recently. As it turns out, they are closely related to the first part of the thesis. Using tools which we developed to prove the composite Stein's Lemma, we further prove a strengthened lower bound on the conditional quantum mutual information in terms of a regularized relative entropy featuring an explicit and universal recovery map. Next, we show two a priori different approaches to give an operational interpretation to the relative entropy of recovery via composite hypothesis testing. Then, we discuss and extend some recent counterexamples, which show that the non-regularized relative entropy of recovery is not a lower bound on the conditional quantum mutual information; additionally we provide more counterexamples where some of the involved systems are classical, showing that also in this restricted setting the same bound does not hold. Ultimately we employ the connection between hypothesis testing and recoverability to show that the regularization in our composite Stein's Lemma is indeed needed. 

We then turn to a seemingly different type of entropy inequalities called bounds on information combining, which are concerned with the conditional entropy of the sum of random variables with associated side information. Using a particular recoverability inequality, we show a non-trivial lower bound and additionally conjecture optimal lower and upper bounds. Furthermore, we discuss implications of our bounds to the finite blocklength behavior of Polar codes to attain optimal communication capacities in quantum channels.

Finally, we discuss \mbox{R\'enyi-$2$} entropy inequalities for Gaussian states on infinite dimensional systems, by exploiting their formulation as log-det inequalities to find recoverability related bounds on several interesting quantities. We apply this to Gaussian steerability and entanglement measures, proving their monogamy and several other features. 
\end{abstract}

\newpage

\begin{otherlanguage}{spanish}
\begin{abstract}
Esta tesis trata sobre la relación entre el contraste de hipótesis cuánticas y las desigualdades entrópicas en la teoría cuántica de la información. 
En la primera parte de la tesis nos centramos en el contraste de hipótesis. Aquí, consideramos dos configuraciones principales, o bien fijar los estados cuánticos y optimizar sobre las posibles medidas, o bien fijar una medida y evaluar su capacidad de discriminación de estados cuánticos optimizando sobre estos últimos. En la primera configuración, demostramos un resultado general en la tasa de error óptima en el contraste de hipótesis compuestas asimétricas, que lleva a un Lema de Stein cuántico compuesto. También discutimos como esto da una interpretación operacional a varias cantidades de interés, como la entropía relativa de la coherencia, y cómo transferir este resultado al contraste  de hipótesis en el régimen asintotico. En la segunda, damos la tasa de error asintótica óptima en varias configuraciones, tanto simétricas como asimétricas, y también discutimos las propiedades y algunos ejemplos de estas tasas. 

En la segunda parte, nos centramos en las desigualdades entrópicas. Empezamos con las desigualdades de recuperabilidad, que han recibido mucha atención recientemente. Como vemos, están estrechamente relacionadas con la primera parte de la tesis. Utilizando las herramientas desarrolladas para demostrar el Lema de Stein compuesto, demostramos un límite inferior para la información mutua condicionada en términos de una entropía relativa regularizada que presenta un mapa de recuperación universal explícito. A continuación, mostramos dos enfoques a priori diferentes para dar una interpretación operacional a la entropía relativa de recuperación a través del contraste de hipótesis compuestas. Luego discutimos y ampliamos algunos contraejemplos recientes afirmando que la entropía relativa de recuperación no regularizada no es una cota inferior a la información mutua cuántica condicionada. Además, aportamos más contraejemplos donde algunos de los sistemas cuánticos involucrados son, en realidad, clásicos, viendo que incluso en esta configuración restringida la cota inferior no es correcta. En última instancia, empleamos la conexión entre la contraste de hipótesis y la recuperabilidad para mostrar que la regularización en nuestro Lema de Stein compuesto es, de hecho, necesaria.  
Luego nos centramos en un tipo aparentemente diferente de desigualdades entrópicas, llamadas cotas a la combinación de información, relacionadas con la entropía condicional de la suma de variables aleatorias con información lateral asociada. Usando una desigualdad de recuperabilidad particular, mostramos una cota inferior no trivial y, adicionalmente, conjeturamos cotas óptimas tanto inferiores como superiores. Además, discutimos las implicaciones de nuestras cotas en el comportamiento de la longitud de bloque finita en códigos polares, utilizados en la comunicación clásica sobre canales cuánticos.

Finalmente, discutimos las desigualdades de la entropía de Rényi-2 para estados Gaussianos en sistemas de  dimension infinita, haciendo uso de su formulación como desigualdades log-det para encontrar límites relacionados con la recuperabilidad en varias cantidades de interés. Esto último lo aplicamos a medidas Gaussianas de entrelazamiento y ``steerability'', lo que demuestra su monogamia entre otras características. 
\end{abstract}
\end{otherlanguage}

\newpage

\begin{otherlanguage}{catalan}
\begin{abstract}
Aquesta tesi tracta sobre la relació entre el contrast d'hipòtesis quàntiques i les desigualtats entròpiques en la teoria quàntica de la informació.
A la primera part de la tesi ens centrem en el contrast d'hipòtesis. Aquí, considerem dues configuracions principals, o bé fixar els estats quàntics i optimitzar sobre les possibles mesures, o bé fixar una mesura i avaluar la seva capacitat de discriminació d'estats quàntics optimitzant sobre aquests últims. A la primera configuració,  demostrem un resultat general a la taxa d'error òptima en el contrast d'hipòtesis compostes asimètriques, que porta a un Lema de Stein quàntic compost. També discutim com això dóna una interpretació operacional a diverses quantitats d'interès, com l'entropia relativa de la coherència, i com transferir aquest resultat al contrast d'hipòtesis en el règim asimptòtic. A la segona, donem la taxa d'error asimptòtica òptima en diverses configuracions, tant simètriques com asimètriques, i també discutim les propietats i alguns exemples d'aquestes taxes.

A la segona part, ens centrem en les desigualtats entròpiques. Comencem amb les desigualtats de recuperabilitat, que han rebut molta atenció recentment. Com veiem, estan estretament relacionades amb la primera part de la tesi. Utilitzant les eines desenvolupades per demostrar el Lema de Stein compost,  demostrem una fita inferior per a la informació mútua condicionada en termes d'una entropia relativa regularitzada que presenta un mapa de recuperació universal explícit. A continuació, mostrem dos enfocs a priori diferents per donar una interpretació operacional a l'entropia relativa de recuperació a través del contrast d'hipòtesis compostes. Després  discutim i  ampliem alguns contraexemples recents afirmant que l'entropia relativa de recuperació no regularitzada no és una bona fita inferior a la informació mútua quàntica condicionada. A més, aportem més contraexemples on alguns dels sistemes quàntics involucrats són, en realitat, clàssics, veient que fins i tot en aquesta configuració restringida la fita inferior no és correcta. Al final, fem servir la connexió entre la contrast d'hipòtesis i la recuperabilitat per mostrar que la regularització al nostre Lema de Stein compost és, de fet, necessària.
Després ens centrem en un tipus aparentment diferent de desigualtats entrópicas, anomenades fites a la combinació d'informació, relacionades amb l'entropia condicional de la suma de variables aleatòries amb informació lateral associada. Usant una desigualtat de recuperabilitat particular,  mostrem una cota inferior no trivial i, addicionalment, conjecturem cotes òptimes tant inferiors com superiors. A més, discutim les implicacions de les nostres cotes en el comportament de la longitud de bloc finita en codis polars, emprats en comunicació clàssica en canals quànitcs.

Finalment, discutim les desigualtats de l'entropia de Rényi-2 per a estats Gaussians en sistemes de dimensió infinita, fent ús de la seva formulació com desigualtats log-det per trobar límits relacionats amb la recuperabilitat en diverses quantitats d'interès. Això últim ho apliquem a mesures gaussianes d'entrellaçament i `` steerability '',  demostrant així la seva monogàmia entre altres característiques.

\end{abstract}
\end{otherlanguage}

\cleardoublepage

\chapter*{Acknowledgements}

First of all I would like to thank my supervisor John Calsamiglia for advising me over the last three years, giving me great freedom to work on whatever I was interested in and helping me circumvent all the bureaucratic hurdles we encountered. 

I am happy to defend my thesis in front of a committee of distinguished experts and I thank Nilanjana Datta, Ramon Muñoz-Tapia, Frédéric Dupuis, Anna Sanpera and Alexander Müller-Hermes for taking over that task. 

I am grateful to John Calsamiglia, Vindhiya Prakash and Andreu Riera for proofreading all or parts of my thesis. 

My time in the field has been extremely enjoyable and this is foremost due to the large number of friends and colleagues I had the pleasure to work with: Gerardo Adesso, Emilio Bagan, Mario Berta, Fernando G. S. L. Brand\~ao, John Calsamiglia, Tom Cooney, Andrew J. Ferris, Masahito Hayashi, Masato Koashi, Ludovico Lami, Ciara Morgan, Yoshifumi Nakata, Jonathan P. Olson, David Poulin, David Reeb, Kaushik P. Seshadreesan, John Watrous, Mark M. Wilde and Andreas Winter. 
A particular mention goes to David Reeb for sharing the enthusiasm to work on that one entropy inequality for so many years. 

The path to this thesis lead me to many places, meeting even more great people. Starting with my first quantum information course at the University of Bristol given by Noah Linden and Sandu Popescu. Even more influential at that time was a lecture I probably wasn't even supposed to be in: A graduate level course by Andreas Winter on quantum Shannon theory. 
Following that, I came back to Hannover with the idea to work on this field and convinced to start my Bachelor thesis with Reinhard F. Werner. I am thankful to him for guiding me through my remaining two years in Hannover. 
Another lucky coincidence was that, at about the same time, Ciara Morgan joined the group in Hannover, she was working on just the topics I was interested in and took over the adviser role for my Bachelor and Master thesis, doing a great job at introducing me to the field and the people working in it. I also learned a great deal from Mark M. Wilde, with whom I enjoyed my first  international collaboration at that time. 

The next big step lead me to Barcelona, starting my PhD in GIQ, where I couldn't have wished for a better welcome, thanks to the group, foremost its former members Cecilia Lancien, Mohammad Mehboudi, Krishnakumar Sabapathy, Giannicola Scarpa and everyone else who has over time belonged to the infamous Graciosos group. 

I am also grateful to Fernando G. S. L. Brand\~ao for having me as a visitor for almost four month in his group at Caltech, where I had a great time working with him and Mario Berta. 

Finally, it would not have been the same without all the great friends I met at conferences and while visiting several groups, from interesting discussions to the much needed beer after a long day, of whom I can unfortunately only name very few here: Felix Leditzky (and the group in Boulder I enjoyed visiting last year), Christian Majenz, Frédéric Dupuis, Anna-Lena Hashagen, Daniel Stilck Franca (and many more from the groups in Munich), Rene Schwonnek (and pretty much everyone else from QIG in Hannover) and all the friends in Cambridge, Copenhagen, London, Pasadena, Zürich and many other groups. 

Last but not least, I thank all my friends and family back in Hannover, to where I always enjoy coming back, and of course my \includegraphics[height=\fontcharht\font`\B]{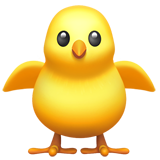}.

\cleardoublepage

\tableofcontents
 
\mainmatter

\cleardoublepage

\mainmatter

\chapter{Introduction} \label{chapter1}

\section{Introduction} \label{sec intro}

In the recent past, research in quantum information theory has been fast approaching practically implementable scenarios, both in theory and practice. With the prospect of having functional mid-scale quantum computers \cite{P18}, quantum communication \cite{satellite} or even a quantum internet \cite{internet} in the foreseeable future, we might soon be able to go beyond the limitations of classical computer science. 
Nevertheless (or rather, for exactly that reason) one of the most important goals of the field remains to find the ultimate bounds on the capabilities obtained by using quantum mechanics. To judge the potential the new technologies would bring within our reach, we need to know what the boundaries are, to which we can push these possibilities. 

In order to do this, we also need to strengthen the set of tools available to us. Two of the most used tools in quantum information theory that lie unarguably at its heart, are hypothesis testing and entropy inequalities; both of which have many important applications and have provided us with the capability to explore the boundaries of information theory.

This work is focused on extending the framework of these two core tools. In particular, showing their close connection and how a better understanding of either one of them can facilitate the investigation of the other. 

The first part of this thesis is focused on asymptotic hypothesis testing.
Hypothesis testing originates in statistical mathematics and is concerned with the question of which hypothesis, from a given set of possible options, is true. A commonly used example is that of a courtroom trial. Generally the defendant is assumed to be not-guilty until proven otherwise. Therefore we call ``non-guilty'' the null hypothesis and ``guilty'' the alternative hypothesis. In this scenario, there are two possible errors to make; one could convict an innocent person (Type 1 error) or acquit a person that committed the crime (Type 2 error). The most natural question is now how well can we minimize both errors at the same time, this is called symmetric hypothesis testing. On the other hand in many scenarios like the one above, we wish to make sure that a certain error is extremely unlikely (here, convicting an innocent person). So, in this case we would fix the probability of a Type 1 error to be very small and try to find the lowest Type 2 error possible under that constraint. This task we call asymmetric hypothesis testing. 

The underlying task of confirming a certain hypothesis is a very fundamental problem and as such also finds application in quantum information theory. Here, our hypotheses are simply the assumptions that we are in possession of a certain quantum state and we would like to verify which state it is. We decide on this by applying a measurement to the state in question. Unfortunately, in most scenarios the errors cannot be made arbitrarily small, due to the inherent uncertainty in quantum mechanics (how a judge can not decide perfectly well if missing a crucial piece of information). However, lower errors can be achieved when we have access to many copies of the given state. The fundamental case of having access to an infinite amount of copies is called asymptotic hypothesis testing. A more precise introduction to the main results in asymptotic hypothesis testing is given in Chapter \ref{hypoTesting}. 

In general, when we are interested in the optimal bounds on hypothesis testing we can consider two different types of optimization. We can optimize over possible measurements for fixed states, which in our courtroom example would be equivalent to picking an optimal strategy for the judge to come to a conclusion in a specific case. Alternatively, we can optimize over the quantum states for a fixed measurement, equivalent to being given a judge with limited capability and asking how sure is the decision in the easiest court case. The former corresponds to the well known setting of quantum state discrimination. The latter describes the capability of a quantum measurement to discriminate states. In the setting where we allow for many arbitrarily chosen states it gives the ultimate capability of the device, which we call the discrimination power of a quantum detector. 
In Chapter \ref{discPower} we give reasonably simple expressions for the discrimination power in several symmetric and asymmetric hypothesis testing scenarios, allowing for arbitrarily chosen states, including non-\iid, entangled and adaptively picked quantum states, thus determining the ultimate discrimination power of the device.

While the relatively simple quantum state discrimination scenario described above is well explored, one often has to deal with more complicated settings. A particular complication that might arise is when we can't determine exactly what the state assigned to our hypothesis is, but rather we only have the information that it must belong to a certain set of states. This is called composite hypothesis testing. While certain special cases had been investigated in the literature, in Chapter \ref{CompHypo} we provide a very general composite quantum Stein's Lemma for arbitrary convex combinations of tensor power states from a freely chosen set. Here, the setting of asymmetric hypothesis testing leading to Stein's Lemma is particularly interesting as we can show that the regularized formula we get in the composite setting is actually optimal, in the sense that a simpler version with optimization over only a single copy of the state is not true. This we are able to prove by connecting the hypothesis testing result to recent results in the field of entropy inequalities and recoverability, establishing a close relation between the two areas of research. 

Thus, entropy inequalities and recoverability will be the topic of the second part of this thesis. 
While many different types of entropy inequalities exist and have proven useful in quantum information theory, the subfield of recoverability inequalities has attracted particular interest in recent years. Motivated by a conjecture in~\cite{Winterconj} the first breakthrough was achieved by Fawzi and Renner~\cite{FR14}, where they show that the conditional quantum mutual information can be lower bounded by a positive term given by the fidelity between the original state and a recovered version of the same state. Here, the latter refers to applying a quantum channel, the so called recovery map, to the state after losing (tracing out) a subsystem. Soon after, Brandao \etal ~\cite{BHOS15} showed that the bound can be further strengthened to either involve a regularized relative entropy or an unregularized measured relative entropy, again each between the same states as in the fidelity bound. In subsequent works, the bounds received further improvement showing properties such as universality (independence of the recovery map of a certain subsystem of the state) and making the channel explicit. An interesting question that remained open was whether the regularized relative entropy bound could be simplified to one that is optimized only on a single copy of the quantum system, until recently Fawzi and Fawzi~\cite{FF17} provided a counter example that shows that such a simplification is indeed not possible. 

In chapter \ref{recoverability} will discuss these recoverability bounds in more detail, with particular focus on the entropic bounds and the connection to hypothesis testing. First we show how the mentioned counterexample also leads to the described result that in the composite quantum Stein's Lemma it can not be sufficient to optimize over a single system. 
Next we investigate the operational interpretation of the regularized relative entropy of recovery and show that such can indeed be given as the optimal rate in certain composite discrimination scenarios. Finally we show how tools developed in the last part, for proving the composite Stein's Lemma, can be used to give a novel recovery lower bound on conditional quantum mutual information based on the regularized relative entropy featuring an explicit and universal recovery map. 

Next, we will turn to a somewhat different type of entropy inequalities which aim to bound the conditional entropy of the sum of random variables when some side information is available. When the side information is classical, those inequalities are well known as \textit{bounds on information combining}. Here, we are interested in the more intricate scenario where the side information is quantum. As we will see, neither does the direct generalization of the classical case hold here nor can any of the classical proof techniques be applied. While this makes it generally very hard to prove any bounds, the afore discussed recoverability bounds come to our rescue. Using the fidelity type lower bound on the conditional quantum mutual information, we prove a non-trivial lower bound on the information gain under combination. This bound, while not optimal, captures some of the important features which the classical bounds possess. Aside from the recoverability bounds, the main techniques we use are duality of classical-quantum channels and novel bounds on the concavity of the von Neumann entropy in terms of the fidelity. Apart from the proven bound, we also conjecture what we believe to be the optimal bounds in the quantum case and provide ample numerical evidence along with some analytical arguments. 
The chapter closes by showing applications of our proven and conjectured lower bounds for investigating a particular class of quantum codes called polar codes. 

Most of the topics presented so far are focused on finite dimensional systems. The final Chapter \ref{Covariance} of the thesis will instead consider entropy inequalities in infinite dimensions. In particular we will look at the practically relevant set of gaussian states. For this class we argue that in many cases it can be useful to consider a different type of entropy, namely the \mbox{R\'enyi-$2$} entropy. Quantities in terms of the \mbox{R\'enyi-$2$} entropies can be written in terms of the logarithm of the determinant of the covariance matrix corresponding to a gaussian state. This allows us to take a novel matrix analytic approach to entropy inequalities. Most of the chapter will consider general (not necessarily quantum) covariance matrices and their properties, with a closer look into questions concerned with recoverability and bounds on entropic terms. In the final section we will then turn to quantum covariance matrices and apply our previous results to the classification of correlations, in particular entanglement and steerability.  We investigate gaussian \mbox{R\'enyi-$2$} version of well known quantities such as the entanglement of formation and the squashed entanglement and explore their properties, ultimately culminating in the insight that both quantities turn out to be equal. This gives a strong contrast to the standard von Neumann case. 

In summary, the thesis is organized as follows. In the remainder of this chapter we will introduce some notation and often used quantities. Then we begin the first part on asymptotic hypothesis testing with an introduction to the most relevant settings in Chapter~\ref{hypoTesting}. In Chapter \ref{CompHypo}, we will discuss results on composite hypothesis testing and in Chapter \ref{discPower}, we provide optimal bounds on the discrimination power of quantum detectors. Then, we will move to the second part of the thesis, which is recoverability and entropy inequalities. Here, in Chapter \ref{recoverability}, we first discuss recoverability and its connection to hypothesis testing. Using recoverability inequalities we then prove bounds on information combining in Chapter \ref{infoCombining}. In Chapter \ref{Covariance} we discuss log-det inequalities in infinite dimension. We then end with some final thoughts and open problems in Chapter~\ref{theEnd}.

The various chapters are based on the following papers and preprints. 
\begin{description}
\item[Chapter~\ref{CompHypo}] \cite{BBH17} M. Berta, F. G. S. L. Brandao, C. Hirche, \textit{On Composite Quantum Hypothesis Testing}, preprint (2017), arXiv 1709.07268. 
\item[Chapter~\ref{discPower}] \cite{HHBC17} C. Hirche, M. Hayashi, E. Bagan, J. Calsamiglia, \textit{Discrimination Power of a Quantum Detector}, Phys. Rev. Lett. 118, 160502, (2017), arxiv 1610.07644. 
\item[Chapter~\ref{recoverability}]  \cite{BBH17} and \cite{CHMOSWW16} T. Cooney, C. Hirche, C. Morgan, J. P. Olson, K. P. Seshadreesan, J. Watrous, M. M. Wilde, \textit{Operational meaning of quantum measures of recovery}, Phys. Rev. A 94, 022310, (2016), arXiv 1509.07127. 
\item[Chapter~\ref{infoCombining}] \cite{HR17} C. Hirche, D. Reeb, \textit{Bounds on Information Combining with Quantum Side Information}, preprint (2017), arXiv 1706.09752. 
\item[Chapter~\ref{Covariance}] \cite{LHAW16} L. Lami, C. Hirche, G. Adesso, A. Winter, \textit{Schur complement inequalities for covariance matrices and monogamy of quantum correlations} Phys. Rev. Lett. 117, 220502 (2016), arXiv:1607.05285 \\
and \cite{LHAW17} L. Lami, C. Hirche, G. Adesso, A. Winter, \textit{From log-determinant inequalities to Gaussian entanglement via recoverability theory} IEEE Trans. Inf. Theory, 63, 11, 7553-7568 (2017), arXiv:1703.06149. 
\end{description}
Finally, during my PhD I contributed to a number of papers that are not directly covered in this thesis, which are listed here for completeness. 
\begin{itemize}
\item \cite{NHKW17} Y. Nakata, C. Hirche, M. Koashi, A. Winter, \textit{Efficient Quantum Pseudorandomness with Nearly Time-Independent Hamiltonian Dynamics}, Phys. Rev. X 7, 021006, (2017), arXiv 1609.07021. 
\item \cite{NHMW15} Y. Nakata, C. Hirche, C. Morgan, A. Winter, \textit{Decoupling with random diagonal unitaries}, Quantum 1, 18, (2017), arXiv 1509.05155. 
\item \cite{NHMW17} Y. Nakata, C. Hirche, C. Morgan, A. Winter, \textit{Unitary 2-designs from random X- and Z-diagonal unitaries}, J. Math. Phys. 58, 052203, (2017), arXiv 1502.07514. 
\item \cite{FHP17} A. J. Ferris, C. Hirche, D. Poulin, \textit{Convolutional Polar Codes}, preprint (2017), arXiv:1704.00715. 
\end{itemize}

\section{Some Preliminaries}\label{Pre}

In this section we will introduce the necessary notation along with some definitions and simple observations. 
For most of this thesis all inner product spaces $\cH$ are finite-dimensional, except when stated otherwise, and $S(\cH)$ denotes the set of positive semi-definite linear operators on $\cH$ of trace one, namely quantum states. Generally, for two positive semidefinite operators we use the Loewner order $S\geq T$, meaning that $S-T$ is positive semidefinite. 
The systems in question will often be modeled by random variables, where we denote classical random variables usually by $X,Y,Z$ and quantum ones based on quantum states by $A,B,C$. Often we associate random variables with inputs and outputs of channels. Here, $X$ usually models a classical input, while $Y$ and $Z$ are classical outputs and $B$ denotes quantum outputs. The channels themselves will usually be denoted with calligraphic letters $\mathcal W$. 

All the matrix inverses in this work are understood as generalized inverses.

Most of this thesis will be based on the investigation of entropic quantities. The most commonly used one is the von Neumann entropy which, for a quantum state $\rho_A$ on a quantum system $A$, which is defined by
\begin{equation}
H(A) = -\tr \rho \log\rho.
\end{equation}
This, reduces to the Shannon entropy in the case of classical states (those which are diagonal in the computational basis). 
Throughout this work, we leave the base of the logarithm unspecified, unless stated otherwise, so that the resulting statements are valid in any base (like binary, or natural); our figures, however, use the natural logarithm. The particular case of the Shannon entropy for a binary probability distribution is called the binary entropy and denoted as $h_2(p)=-p\log{p} - (1-p)\log{(1-p)}$. In the following we will often use the inverse of this function
\begin{equation}
h_2^{-1}:[0,\log2]\to[0,1/2]. 
\end{equation}
Starting from the underlying von Neumann entropy we can define many more entropic quantities of interest, such as the conditional entropy of a bipartite quantum state $\rho_{AB}$ on a quantum system $AB$, which is defined by 
\begin{equation}
H(A|B) = H(AB) - H(B). 
\end{equation}
Whenever the conditioning system is classical, i.e. the state is of the form $\sum_{i=1}^{d} p_i \rho_i^{A}\otimes\ketbra{Y}{i}{i}$, we can state the following important property,
\begin{equation}
H(A|Y) = \sum_{y} p(y) H(A|Y=y).
\end{equation}
This obviously holds also for the Shannon entropy, but most importantly, we cannot write down such a decomposition when the conditioning system is quantum.

Often we want to stress the duality of classical-quantum states to classical-quantum channels. In this case, for a given channel $\mathcal W$ with input modeled by a random variable $X$ and the output by $B$, we write equivalently 
\begin{equation}
H(X|B) = H(\mathcal W). \label{Hchannel}
\end{equation}
Usually we assume here the uniform distribution over input values $X$. An additional useful entropic quantity is the mutual information defined as 
\begin{equation}
I(A:B) = H(A) + H(B) - H(AB).
\end{equation} 
Again for a channel $\mathcal W$ with classical input, we write 
\begin{equation}
I(\mathcal W)= I(X:B).
\end{equation}
When fixing $X$ to be a binary random variable distributed according to the uniform probability distribution, this is also the \emph{symmetric capacity} of that channel and furthermore we have
\begin{equation}
I(\mathcal W) = I(X:B) = \log 2 - H(X|B)  = \log 2 - H(\mathcal W).
\end{equation}
A particular class of channels for which uniformly distributed inputs indeed optimize the capacity are \emph{symmetric channels}. We call a binary channel symmetric if the two output states are related by a unitary transformation. 

A different entropic quantity is the quantum relative entropy, which is defined in~\cite{U62} as
\begin{equation}
D(\rho\|\sigma):=\begin{cases} \tr\big[\rho\left(\log\rho-\log\sigma\right)\big] \quad & \supp(\rho)\subseteq\supp(\sigma)\\ \infty & \text{otherwise}\end{cases}.
\end{equation}
This, in the case of classical probability distributions, is the Kullback-Leibler divergence. The relative entropy is of particular importance as it can serve as a parent quantity for many of the entropies defined so far. 
As an intermediate step between the classical and quantum relative entropy we will need the measured relative entropy defined as~\cite{donald86,HP91}
\begin{align}\label{eq:measured}
D_{M}(\rho\|\sigma):=\sup_{(\mathcal{X},M)}D\left(
\sum_{x\in\mathcal{X}}\tr\left[M_x\rho\right]|x\rangle\langle x|\middle\|\sum_{x\in\mathcal{X}}\tr\left[M_x\rho\right]|x\rangle\langle x|\right)\,,
\end{align}
where the optimization is over finite sets $\mathcal{X}$ and POVMs $M$ on $\cH$ with $\tr\left[M_x\rho\right]$ being a measure on $\mathcal{X}$ for any $x \in \mathcal{X}$. Each POVM element $M_{x}\geq 0$ corresponds to an outcome $x$ and  $\sum_{x}M_{x}=\id$. It is known that we can restrict the a priori unbounded supremum to rank-one projective measurements~\cite[Thm.~2]{BFT15}.

Furthermore we will use the fidelity \cite{U73} defined as
\begin{equation}
F(\rho,\sigma) = \| \sqrt{\rho}\sqrt{\sigma} \|_1 = \tr\left[\sqrt{\sqrt{\rho}\,\sigma\sqrt{\rho}}\right],
\end{equation}
and the Chernoff quantity~\cite{ACMBMAV07, NS06} 
\begin{equation}
\phi(s|\rho \| \sigma):=\log\tr{\rho^s\sigma^{1-s}},
\end{equation}
which, for classical probability distributions simply becomes
\begin{equation}
\phi(s|P\| Q ):=\log\sum_k P_k^s Q_k{}^{\!\!1-s}. \label{clCB}
\end{equation}

\part{Asymptotic Hypothesis testing}

\chapter{The Testing Settings}\label{hypoTesting}

Hypothesis testing is arguably one of the most fundamental primitives in quantum information theory. As such it has found many applications, for example, in quantum channel coding for providing an alternative proof of the classical capacity of a quantum channel~\cite{HN03,WR12PRL}, in quantum reading~\cite{Pirandola2011}, quantum illumination~\cite{Lloyd2008,Tan2008,Wilde17} or for giving an operational interpretation to abstract quantities~\cite{BP10,HT14,CHMOSWW16}. 

The underlying problem is to differentiate between the assigned hypotheses. Here, the so called null, $H_0$, and alternative, $H_1$, hypotheses refer respectively to two possible states,~$\rho$, $\sigma$, of a quantum system~$\cH_S$. In~quantum hypothesis testing one is confronted with the task of deciding which  hypothesis holds by performing a measurement on the quantum system $\cH_S$. 

Beginning from this basic problem we ask questions about the optimal behavior of hypothesis testing given certain resources. In the quantum case the most natural settings are either when we have two given states and we look for optimal discrimination, thus we optimize over measurements, or when we have a given measurement and we want to assess its power to discriminate, where we then optimize over quantum states. 

In the next two sections we will discuss each of these settings in more detail.


\section{Optimizing over measurements}\label{discrimination}

A particular hypothesis testing setting is that of quantum state discrimination where quantum states are assigned to each of the hypotheses and we aim to determine which state is actually given, by optimizing over all possible measurements. The distinct scenarios of interest differ in the priority given to different types of error or in how many copies of a system are given to aid the discrimination. 

For the asymptotic regime we investigate the setting where the goal is to discriminate between two $n$-party quantum states $\rho_n$ and $\sigma_n$ living on the $n$-fold tensor product of some finite-dimensional inner product space $\cH^{\otimes n}$. That is, we are optimizing over all two-outcome positive operator valued measures (POVMs) with $\{M_n,(1-M_n)\}$ and associate $M_n$ with accepting $\rho_n$ as well as $\left(1-M_n\right)$ with accepting $\sigma_n$. This naturally gives rise to the two possible errors
\begin{align}\label{eq:errorsI+II}
&\alpha_n(M_n):=\tr\big[\rho_n(1-M_n)\big]\;&\text{Type 1 error,} \\ 
&\beta_n(M_n):= \tr\big[\sigma_nM_n\big]\;&\text{Type 2 error.}
\end{align}
Depending on the problem at hand, one may need to know either~$\beta_n$ for a maximum allowed value of~$\alpha_n$ or the average error probability $p_{\textrm err}= p \alpha_n+(1-p) \beta_n$, where $p$ is some probability giving priors for $H_0$ and $H_1$.
The first option is called asymmetric hypothesis testing and in its most direct setting we minimize the type 2 error as
\begin{align}
\beta_n(\eps):=\inf_{0\leq M_n\leq1}\big\{\beta_n(M_n)\big|\alpha_n(M_n)\leq\eps\big\},
\end{align}
while we require the type 1 error not to exceed a small constant $\eps\in(0,1)$. We are then interested in finding the optimal asymptotic error exponent (whenever the limit exists)
\begin{align}
\zeta(\eps):=\lim_{n\to\infty}-\frac{\log\beta_n(\eps)}{n}
\end{align}
and correspondingly 
\begin{align}
\zeta(0):=\lim_{\eps\to0}\zeta(\eps).
\end{align}
A well studied discrimination setting is that between fixed independent and identical (\iid) states $\rho^{\otimes n}$ and $\sigma^{\otimes n}$ where the error exponent $\zeta_{SL}(\eps)$ is determined by the \textbf{quantum Stein's lemma}~\cite{HP91,ON00,Audenaert2008} in terms of the quantum relative entropy. $\forall\eps\in(0,1)$ we have
\begin{align}\label{eq:brandao_old}
\zeta_{SL}(\eps)= D(\rho\|\sigma).
\end{align} 
Alternatively one might require a more strict bound on the type 1 error, for instance that it decays exponential in $n$ at a rate $r$. Then the goal becomes to minimize 
\begin{align}
\beta_n(r):=\inf_{0\leq M_n\leq1}\big\{\beta_n(M_n)\big|\alpha_n(M_n)\leq e^{-rn}\big\}.
\end{align}
This leads to the \textbf{quantum Hoeffdings bound}~\cite{H07a,OH04,N06} which is given as 
\begin{eqnarray}
\zeta_{\textrm HB}\!=\sup_{0\le s\le 1}\!\!{-s r - \phi( s|\rho\|\sigma)\over 1-s}. 
\label{other}
\end{eqnarray}
The second option is known as  symmetric hypothesis testing  and leads to \textit{minimum error state discrimination}~\cite{H1976}, where  $p_{\textrm err}$ is minimized over all POVMs~$\mathscr M = \{M_n,(1-M_n)\}$, here for $p=\frac{1}{2}$,
\begin{align}
p^*_{\textrm err} &= \inf_{0\leq M_n\leq1}\frac{1}{2} \left( 1 + \tr[M_n( \sigma -  \rho)]\right) \label{p_err} \\
&= \frac{1}{2}\left( 1 + \frac{1}{2}\norm{\sigma -  \rho}_1 \right), 
\end{align}
where, $\norm{\cdot}_1$ is the trace norm. This, in the asymptotic limit, leads to the \textbf{quantum Chernoff bound}~\cite{ACMBMAV07, NS06}:
\begin{equation}
\zeta_{\textrm CB}=-\min_{0\leq s\leq 1}\phi(s | \rho \| \sigma).
\label{Cbound}
\end{equation}
Note that the quantum Chernoff bound holds for any choice of $p$ and is independent of its value. 
It has been conjectured \cite{calsamiglia_local_2010}, with the support of some numerical evidence, that a collective non-separable measurement is required to attain this bound. 

There are several ways of generalizing the problem discussed. In multiple hypothesis testing the problem discussed is generalized to more than two hypothesis, whereas in composite hypothesis testing the knowledge about the states is limited only to membership in a certain set.
Multiple hypothesis testing is of particular interest in the symmetric setting where a breakthrough has been recently achieved by Li \cite{Li16} showing that the optimal error rate when discriminating between $n$ hypothesis $\{ \sigma_i \}_i^n$ is given by the multiple Chernoff bound 
\begin{align}
\zeta_{\textrm mCB} = - \max_{i,j} \min_{0\leq s\leq 1}\phi(s | \rho \| \sigma).
\end{align}
On the other hand, the case of composite hypothesis testing has only been solved in certain special cases. We give a general solution in Chapter \ref{CompHypo}. 

In all settings the errors and rates for classical hypothesis testing can be recovered from those above by simply taking the matrices $\rho$ and  $\sigma$ to be diagonal with entries given by two probability distributions $P=\{P_k\}$, $Q=\{Q_k\}$, associated to~$H_0$ and~$H_1$ respectively.


\section{Optimizing over states}\label{OptOverStates}

In experiments, we often encounter situations where measurement devices (e.g., Stern Gerlach apparatus, heterodyne detectors, photon counters, fluorescence spectrometers)  are a given. A natural question is then to ask about the ability or power of those devices to perform certain quantum information-processing tasks. 
The informational power of a measurement has been addressed in several ways~\cite{OCMB11}, such as via the ``intrinsic data" it provides~\cite{w03} or the capacity of the quantum-classical channel it defines~\cite{OCMB11,DDS11,D15,h12a,BRW14}, or via some associated entropic quantities~\cite{s14,s15,s16,DBO14}. 
 
Here we aim to explore how well a quantum measurement device can discriminate two hypotheses. 
This problem is dual to that of exploring how well two given quantum states can be discriminated~\cite{H1976} and is of practical interest since preparing appropriate probe states is often easier than tailoring optimal measurements for a given state~pair. 


The basic task is again to discriminate two states $\rho$ and $\sigma$ using a given measurement device. We are interested in the scenario where the device can be used $n$ number of times. The given measurement device is the only means of extracting classical data from the quantum system. However, for better performance, one is free to apply any trace preserving quantum operation to the system prior to the measurement.  Likewise, we view data processing also as a free operation. We are then aiming to find what is the minimum 
error probability of discriminating between $\rho$ and $\sigma$, with that fixed measurement. 

We can go a step further and minimize the error probability over all state pairs. This characterizes an intrinsic limitation on the discrimination performance of the measurement device since, in general, a device cannot perfectly discriminate two hypothesis, not even when they are given by orthogonal states. 
This characterization is  of practical relevance since it sets the ultimate limit on the successful identification of two arbitrary states when one is bound to a given type of measurement apparatus.

Let us formalize the problem by first exploring the version where the measurement can be used once. We wish to assess the discrimination power of a device given by a specific POVM, ${\mathscr E}=\{E_k\}_{k=1}^m$. Let us assume that the positive operators $E_k$ (generically non-orthogonal) act on a finite $d$-dimensional Hilbert space, $\cH_d$, of the quantum system $S$. For simplicity and to ease up the notation we will assume a POVM with a finite number of outcomes. However the results hold for any POVM, including those with continuous outcomes.
First, using free operations, we need to produce a valid POVM, ${\mathcal F} = \{F_0, F_1\}$, out of~${\mathscr E}$, to discriminate two states $\rho$ and $\sigma$. This can be achieved~\cite{OCMB11} by grouping (post-processing) the measurement outcomes,~$\{1,2,\dots,m\}$, into two disjoint sets $a$, $\bar a$,  and defining 
\begin{align}
E^{a}:=\sum_{k\in a} E_k\quad\text{and}\quad E^{\bar a}:=\sum_{k\in \bar a} E_k=\id-E^a.
\end{align} 
Then, ${\mathscr F}=\{E_{\mathscr{M}}^a,E_{\mathscr{M}}^{\bar a}\}$, where $E^{a}_{\mathscr{M}}=\mathscr{M}^{\dagger}(E^{a})$ (likewise for $E_{\mathscr{M}}^{\bar a}$), for a suitable trace preserving quantum operation    $\mathscr{M}$  (pre-processing). 
 The error probabilities thus read $\alpha=
\tr(E_{\mathscr{M}}^{\bar a} \rho)$ and~$\beta=
\tr(E_{\mathscr{M}}^{ a} \sigma)$.

For now, we will focus on the symmetric hypothesis testing setting. In this single-shot scenario, we can now quantify the discrimination power of~$\mathscr E$ by the minimum average error probability. It~can be written as
\begin{equation}
p^*_{\textrm err} = \min_a\min_{(\rho,\sigma)}\frac{1}{2} \big\{1 + \tr[E^{a}( \sigma -  \rho)]\big\},
\label{p_err}
\end{equation}
where the minimization is over all partitions $\{a,\bar a\}$ of the outcome set (over all post-processing operations) and over all state pairs $(\rho,\sigma)$, so $\mathscr{M}$ can be dropped in the minimization.
One can readily check~\cite{OCMB11} that the minimum single-shot error probability is given by the spread of $E^{a}$, 
\begin{align}
p^*_{\textrm err} =1/2- \min_a (\lambda^{a}_{\mathrm{max}}-\lambda^{a}_{\mathrm{min}})/2.
\end{align} 
This value is attained when~$\rho$ and $\sigma$ are the eigenstates of~$E^{a}$ corresponding to its maximum and minimum eigenvalue, respectively.  As mentioned above, this problem and quantum hypothesis testing are ``dual", in the sense that state pairs and measurements swap roles.

The single-shot scenario above is too restrictive since one can easily envision discrimination settings where the measurement $\mathscr E$ is performed~$n$ number of times, which might lead to a lower error when using more complicated inputs such as entangled states. In the most general setting, a system consisting of $n$ copies of $S$ is prepared (by, say, Alice) in one of the states of the pair~$(\rho^n,\sigma^n)$, corresponding respectively to hypotheses~$H_0$ and~$H_1$. Here, $\rho^n, \sigma^n\in\setS{\cH_d^{\otimes n}}$ can be fully general, not just of the form $\rho^{\otimes n}$, $\sigma^{\otimes n}$. The measurer's (say, Bob's) goal is to tell which hypothesis is true by  performing~$n$ measurements, all of them given by the POVM~$\mathscr E$. 
Free operations include pre-processing of  $(\rho^n,\sigma^n)$ and post-processing of the classical data gathered after each measurement. 
%
As in Equation~(\ref{p_err}), when minimizing over state pairs, it is enough to choose the discriminating POVM as \mbox{${\mathscr F}=\{E^a,E^{\bar a}=\id-E^a\}$}, where $E^a$ now has the form
\begin{equation}
E^{a}= \sum_{{\textbf k}^n \in a}E_{{\textbf k}^n}:= \sum_{{\textbf k}^n \in a} \bigotimes_{i=1}^n E_{k_i}.
\label{Ea n}
\end{equation}
Here ${\textbf k}^r:=\{k_1,k_2,\dots,k_r\}$ denotes a sequence of outcomes of length $r$ (${\textbf k}^0:=\emptyset$), so ${\textbf k}^n$ is obtained after completing all measurements.  The two disjoint sets~$a$ and~$\bar a$ now contain all the sequences  assigned to the hypotheses~$H_0$ and~$H_1$ respectively.
\mbox{Type 1} and type 2 error probabilities are $\alpha_{n} = \tr(E^{\bar a} \rho^{n})$ and $\beta_{n} = \tr(E^{ a} \sigma^{n})$, respectively, and the error probability for symmetric hypothesis testing can be written as~$p_{\textrm err}=\min_{a}(\alpha_n+\beta_n)/2$. Note that we take the priors $p$ and $(1-p)$ assigned to the hypothesis to be equally $\frac{1}{2}$, as in the state discrimination setting, for simplicity. 

It is not hard to see that the errors fall off exponentially with $n$ \cite{cover_2006,hayashi_quantum_2016}. It is then natural to quantify the discrimination power of $\mathscr E$ by the optimal asymptotic exponential rate of $p_{\textrm err}$, which is defined as
\begin{equation}
\xi_{\textrm CB} =- \min_{(\rho^n,\sigma^n)} \;\lim_{n\rightarrow\infty} \frac{1}{n} \log{p_{\textrm err}} .
\label{xiCB}
\end{equation}

Although $p_{\textrm err}$ can still be written as the spread of the optimal grouping, the number of groupings grows super-exponentially with $n$. Moreover, very little is known about the spectrum of operator sums such as those in Eq.~(\ref{Ea n}) and their eigenvectors (i.e.,~$\rho^n$ and~$\sigma^n$). 

So far we have focused on the problem of symmetric hypothesis testing. Similar to the state discrimination setting we can also look at several asymmetric hypothesis testing scenarios. Here it is done for the settings dual to Stein's and Hoeffding's bound. 
The corresponding asymptotic rates are defined as
\begin{eqnarray}
\xi_{\textrm SL}\!&=&-\!\!\lim_{n\to\infty}\min_{(\rho^n,\sigma^n)}{\log \beta_n\over n}\ \mbox{\textrm subject to $\alpha_n\le\epsilon$},\label{xiSL}\\
\xi_{\textrm HB}\!&=&-\!\!\lim_{n\to\infty}\min_{(\rho^n,\sigma^n)}{\log \beta_n\over n}\ \mbox{\textrm subject to $\alpha_n\le{\textrm e}^{-nr}$}. \label{xiHB}
\end{eqnarray}

Simple expressions for the asymptotic rates in all of these settings, symmetric and asymmetric, have previously been unknown and we will provide them in Chapter~\ref{discPower}.

\chapter{Composite Hypothesis testing}\label{CompHypo}

In this chapter we will come back to the hypothesis testing scenario of state discrimination. In Chapter~\ref{hypoTesting} we introduced the basic setting. Nevertheless in many applications, we aim to solve more general discrimination problems. A prominent example of these are composite hypotheses -- here we attempt to discriminate between different sets of states. In this case a particularly interesting setting is that of asymmetric hypothesis testing leading to the quantum Stein's bounds. Here, the case of composite \iid null hypotheses $\rho^{\otimes n}$ with $\rho\in\cS$ and fixed alternative hypothesis was previously investigated in~\cite{hayashi2002,BDKSSS05} leading to the natural error exponent $\forall\eps\in(0,1)$
\begin{align}
\zeta_{\cS,\sigma}(\eps)=\inf_{\rho\in \cS} D(\rho\|\sigma)\,.
\end{align}
On the other hand the problem of composite alternative hypotheses seems to be more involved, unless the states in the alternative set commute. In case the set of alternative hypotheses $\sigma_n\in\cT_n$ for $n\in\NN$ fulfills certain axioms motivated by the framework of resource theories, it was shown in~\cite{BP10} that the error exponent $\zeta_{\rho,\cT}(\eps)$ can be written in terms of the regularized relative entropy distance.
\begin{thm}[Theorem 1 in~\cite{BP10}]\label{CompBrandao}
For any family of sets $\{\cT_n\}_{n\in\NN}$, with $\cT_n \subseteq \setS{\cH^{\otimes n}}$ satisfying the following conditions:
\begin{enumerate}
\item Each $\cT_n$ is convex and closed.
\item Each $\cT_n$ contains $\sigma^{\otimes n}$, for a full rank state $\sigma\in\setS{\cH}$.
\item If $\rho\in\cT_{n+1}$, then $\tr_k(\rho)\in\cT_n$, for every $k\in\{ 1.\dots, n+1\}$. 
\item If $\rho\in\cT_{n}$ and $\nu\in\cT_{m}$, then $\rho\otimes\nu\in\cT_{n+m}$.
\item If $\rho\in\cT_{n}$, then $P_\pi\rho P_\pi\in\cT_{n}$ for every $\pi\in S_n$.
\end{enumerate}
we get that $\forall\eps\in(0,1)$, the following error rate is achievable for asymmetric hypothesis testing between $\rho^{\otimes n}$ and a sequence of states $\sigma_n\in\cT_n$,
\begin{align}
\zeta_{\rho,\cT}(\eps)=\lim_{n\to\infty}\frac{1}{n}\inf_{\sigma_n\in\cT_n} D\left(\rho^{\otimes n}\|\sigma_n\right)\,.
\end{align}
\end{thm}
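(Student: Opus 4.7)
The plan is to prove matching upper (converse) and lower (achievability) bounds on $\zeta_{\rho,\cT}(\eps)$, showing both coincide with $R \coloneqq \lim_{n\to\infty}\frac{1}{n}\inf_{\sigma_n\in\cT_n}D(\rho^{\otimes n}\|\sigma_n)$. A preliminary observation is that this limit is well-defined: by condition~4 and the additivity of $D$ on tensor products, the sequence $a_n \coloneqq \inf_{\sigma_n\in\cT_n}D(\rho^{\otimes n}\|\sigma_n)$ is subadditive (if $\sigma_n,\sigma_m$ are near-optimal, then $\sigma_n\otimes\sigma_m\in\cT_{n+m}$ and $D(\rho^{\otimes (n+m)}\|\sigma_n\otimes\sigma_m)=a_n+a_m$), so Fekete's lemma applies. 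Conditions~1 and~2 together guarantee that each $a_n$ is finite (use the full-rank $\sigma^{\otimes n}\in\cT_n$).

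For the converse direction I would fix any test $\{M_n, 1-M_n\}$ with $\alpha_n(M_n)\leq\eps$ and any $\sigma_n\in\cT_n$, and apply data processing of the quantum relative entropy under the binary measurement, yielding the standard bound $-\log\beta_n(M_n)\leq(D(\rho^{\otimes n}\|\sigma_n)+h_2(\eps))/(1-\eps)$. Minimizing over $\sigma_n\in\cT_n$ on the right and over admissible tests on the left, dividing by $n$, and letting $n\to\infty$ produces $\zeta_{\rho,\cT}(\eps)\leq R/(1-\eps)$; a standard strong-converse or blocking argument then removes the spurious factor $1/(1-\eps)$, producing the $\eps$-independent bound characteristic of Stein-type theorems.

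The heart of the proof is the achievability direction: for fixed $\delta>0$ and all sufficiently large $n$, one must construct a single test $\{T_n, 1-T_n\}$ with $\tr[(1-T_n)\rho^{\otimes n}]\leq\eps$ and $\tr[T_n\sigma_n]\leq e^{-n(R-\delta)}$ simultaneously for \emph{every} $\sigma_n\in\cT_n$. My approach is to fix a block size $m$ and a near-optimal $\sigma_m^\star\in\cT_m$ with $\tfrac1m D(\rho^{\otimes m}\|\sigma_m^\star)\leq R+\delta/3$, extract a block-level Neyman--Pearson test via the direct part of the quantum Stein's lemma applied to $\rho^{\otimes m}$ versus $\sigma_m^\star$, and combine $k=\lfloor n/m\rfloor$ such tests. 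The principal obstacle is that this test is tuned to the single \iid alternative $(\sigma_m^\star)^{\otimes k}$ and a priori performs poorly against non-product $\sigma_n\in\cT_n$. To close the gap I would invoke conditions~3--5: averaging the Type~II error over the symmetric group, together with property~5 and convexity (property~1), reduces to permutation-invariant alternatives $\sigma_n\in\cT_n$; an exponential de Finetti-type approximation then expresses such states as convex mixtures of \iid states $\omega^{\otimes k}$ drawn from a polynomially-sized net on the single-block state space, so that a universal test covering the net achieves rate $R$ up to $o(n)$ corrections. Quantifying this composite-to-\iid reduction sharply enough that no rate is lost is the main technical challenge, and reflects precisely why the regularization in $n$ inside the definition of $R$ cannot in general be removed.
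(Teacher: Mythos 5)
First, a framing remark: the thesis does not prove Theorem~\ref{CompBrandao} at all --- it is quoted verbatim from~\cite{BP10} (the generalized quantum Stein's lemma) and used as a black box, so there is no in-paper proof to compare against; I am judging your attempt against what is actually required. Your preliminaries are fine (Fekete's lemma via condition~4, finiteness via condition~2), and your converse is the standard data-processing bound (it is the same argument as Proposition~\ref{lem:converse_lemma} of the thesis). Note, however, that the claimed equality for \emph{every} $\eps\in(0,1)$ is a strong-converse statement, and ``a standard blocking argument'' does not remove the $1/(1-\eps)$ factor here: blocking works for a fixed \iid alternative, whereas for a regularized, composite alternative the strong converse is itself a nontrivial part of~\cite{BP10}.

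The genuine gap is in your achievability step. The symmetrization is legitimate (properties~1 and~5 let you assume $\sigma_n$ permutation invariant once the test is), but the claim that an exponential de Finetti argument ``expresses such states as convex mixtures of \iid states $\omega^{\otimes k}$ drawn from a polynomially-sized net'' is false, and no correct substitute does what you need. Permutation-invariant states are not convex mixtures of \iid states (consider the normalized projector onto the antisymmetric subspace). What is true is the operator inequality $\sigma_n\leq\poly(n)\int\omega^{\otimes n}\,\mathrm{d}\mu(\omega)$, but there $\mu$ ranges over \emph{all} of $\setS{\cH}$, not over states compatible with $\cT$; in particular $\omega=\rho$ lies in its support, and any test with Type~I error at most $\eps$ accepts $\rho^{\otimes n}$ with probability at least $1-\eps$, so a ``universal test over the net'' would drive your achievable rate to $0$, not to $R$. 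Renner's exponential de Finetti theorem only approximates the \emph{reduced} state on $n-k$ systems by mixtures of almost-power states, and converting that into a single rate-$R$ test valid against all of $\cT_n$ is precisely the delicate core of~\cite{BP10} (their Lemma~III.9 --- the very step in which a gap was later identified, the theorem being re-established only recently by quite different methods). Separately, your test is tuned to the minimizer $\sigma_m^\star$ and therefore controls the Type~II error only against $(\sigma_m^\star)^{\otimes k}$; to obtain one test that works against every $\sigma_n\in\cT_n$ simultaneously you need a minimax exchange (cf.\ the Sion argument in Section~\ref{sec:coherence}, which succeeds only in special cases where the relevant R\'enyi quantities are additive), and your sketch does not supply it.
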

This regularization is in general needed as known from the case of the relative entropy of entanglement~\cite{Vollbrecht01}. This might not be surprising since the set of alternative hypotheses is not required to be \iid in general.
In what follows we will often come back to this theorem for comparison with our results.  

For our main result we consider the setting where null and alternative hypotheses are both composite and given by convex combinations of $n$-fold tensor powers of states from given sets $\rho\in\cS$ and $\sigma\in\cT$ (see Sect.~\ref{CompStein} for the precise definition). We show that the corresponding asymptotic error exponent $\zeta_{\cS,\cT}(0)$ can be written as
\begin{align}\label{eq:main_first}
\zeta_{\cS,\cT}(0)=\lim_{n\to\infty}\frac{1}{n}\inf_{\substack{\rho\in\cS\\\mu\in\cT}}D\left(\rho^{\otimes n}\|\int\sigma^{\otimes n}\;\mathrm{d}\mu(\sigma)\right),
\end{align}
where in a slight abuse of notation we use $\mu\in\cT$ meaning normalized measures on the set $\cT$. We note that even in the case of a fixed null hypothesis $\cS=\{\rho\}$, our setting is not a special case of the previous results~\cite{BP10}, as our sets of alternative hypotheses are not closed under tensor product -- $\sigma_m\in\cT_m,\;\sigma_n\in\cT_n\nRightarrow\sigma_m\otimes\sigma_n\in\cT_{mn}$ -- which is one of the properties required for the result of~\cite{BP10}. Moreover, we show that the regularization in Equation~\eqref{eq:main_first} is needed, i.e.~in contrast to the classical case~\cite{LM02,BHLP14} in general
\begin{align}\label{eq:reg_needed}
\zeta_{\cS,\cT}(0)\neq\inf_{\substack{\rho\in\cS\\\sigma\in\cT}}D(\rho\|\sigma)\,.
\end{align}
Nevertheless, there exist non-commutative cases in which the regularization is not needed and we discuss several such examples. In particular, we give a novel operational interpretation of the relative entropy of coherence in terms of hypothesis testing. The proofs of our results are transparent in the sense that we start from the composite Stein's lemma for classical probability distributions and then lift the result to the non-commutative setting by only using elementary properties of entropic measures.

\section{A composite quantum Stein's Lemma}\label{CompStein}

For $n\in\mathbb{N}$ we attempt the following discrimination problem.
\hyptest{\begin{description}
\item[Null hypothesis:] the convex sets of iid states  \\ $\cS_n:=\left\{\int\rho^{\otimes n}\;\mathrm{d}\nu(\rho)\middle|\rho\in\cS\right\}$ with $\cS\subseteq S(\cH)$
\item[Alternative hypothesis:] the convex sets of iid states \\ $\cT_n:=\left\{\int\sigma^{\otimes n}\;\mathrm{d}\mu(\sigma)\middle|\sigma\in\cT\right\}$ with $\cT\subseteq S(\cH)$
\end{description}}{}

For $\eps\in(0,1)$ the goal is the quantification of the optimal asymptotic error exponent for composite asymmetric hypothesis testing. As we will see, the following limits exist
\begin{align}
&\zeta_{\cS,\cT}^n(\eps):=-\frac{1}{n}\log\inf_{0\leq M_n\leq1}\left\{\sup_{\mu\in\cT}\tr\left[M_n\sigma_n(\mu)\right]\middle|\sup_{\nu\in\cS}\tr\left[(1-M_n)\rho_n(\nu)\right]\leq\eps\right\}\\
&\zeta_{\cS,\cT}(\eps):=\lim_{n\to\infty}\zeta_{\cS,\cT}^n(\eps)\quad\mathrm{and}\quad\zeta_{\cS,\cT}(0):=\lim_{\eps\to0}\zeta_{\cS,\cT}(\eps)\,,
\end{align}
where we set
\begin{align}
\text{$\rho_n(\nu):=\int\rho^{\otimes n}\mathrm{d}\nu(\rho)$ and $\sigma_n(\mu):=\int\sigma^{\otimes n}\mathrm{d}\mu(\sigma)$}
\end{align}
for the sake of notational simplicity, and $\mu\in\cS$ and $\nu\in\cT$ stand for measures over $\cS$ and $\cT$, respectively. The following is the main result of this section.

\begin{thm}\label{thm:main}
For the discrimination problem as above, we have
\begin{align}\label{eq:main}
\zeta_{\cS,\cT}(0)=\lim_{n\to\infty}\frac{1}{n}\inf_{\substack{\rho\in\cS\\\mu\in\cT}}D\left(\rho^{\otimes n}\|\sigma_n(\mu)\right)\,.
\end{align}
\end{thm}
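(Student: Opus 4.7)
The plan is to establish matching bounds: a converse $\zeta_{\cS,\cT}(0)\leq\mathrm{RHS}$ via the data-processing inequality, and an achievability $\zeta_{\cS,\cT}(0)\geq\mathrm{RHS}$ obtained by reducing to the classical composite Stein's lemma via measurements and then invoking the Hiai--Petz asymptotic equivalence of the measured and standard quantum relative entropies on tensor powers.

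For the converse, fix any $\rho\in\cS$ and $\mu\in\cT$. Since $\rho^{\otimes n}\in\cS_n$ and $\sigma_n(\mu)\in\cT_n$, every admissible test $\{M_n,\id-M_n\}$ with composite Type 1 error at most $\eps$ satisfies $\tr[(\id-M_n)\rho^{\otimes n}]\leq\eps$, and its composite Type 2 error is bounded below by $\tr[M_n\sigma_n(\mu)]$. Applying the data-processing inequality for the quantum relative entropy under the binary POVM $\{M_n,\id-M_n\}$ gives the standard bound
\begin{equation*}
-\log\tr[M_n\sigma_n(\mu)]\leq\frac{1}{1-\eps}\bigl(D(\rho^{\otimes n}\|\sigma_n(\mu))+h_2(\eps)\bigr).
\end{equation*}
Dividing by $n$, passing to the limits $n\to\infty$ and $\eps\to 0$, and finally taking the infimum over $\rho$ and $\mu$, yields the converse.

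For achievability, the plan is to reduce the quantum problem to a classical composite Stein problem via block measurements and to take an asymptotic limit in the block size. Choose a block size $k$ and a POVM $\mathscr M^{(k)}=\{M^{(k)}_x\}$ on $\cH^{\otimes k}$, and set $n=kL$. Measuring each of the $L$ blocks independently turns the quantum null $\cS_n$ and alternative $\cT_n$ into classical iid mixtures $\int P_\rho^{\otimes L}\,\mathrm{d}\nu(\rho)$ and $\int Q_\sigma^{\otimes L}\,\mathrm{d}\mu(\sigma)$ on the outcome alphabet, with $P_\rho(x)=\tr[M^{(k)}_x\rho^{\otimes k}]$ and $Q_\sigma(x)=\tr[M^{(k)}_x\sigma^{\otimes k}]$. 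Since the composite errors are linear in $\nu$ and $\mu$, their worst case is attained at extremal Dirac measures, and the classical Bahadur--Brown / Levitan--Merhav composite Stein's lemma \cite{LM02} then yields an achievable per-block exponent $\inf_{\rho\in\cS,\sigma\in\cT}D(P_\rho\|Q_\sigma)$, that is, a per-system rate $\frac{1}{k}\inf_{\rho,\sigma}D(P_\rho^{\mathscr M^{(k)}}\|Q_\sigma^{\mathscr M^{(k)}})$. Maximising over $\mathscr M^{(k)}$ through a Sion-type minimax exchange together with the Hiai--Petz equivalence adapted to the composite alternative $\sigma_k(\mu)$ brings the per-system rate, in the limit $k\to\infty$, up to the target $\lim_{n\to\infty}\frac{1}{n}\inf_{\rho,\mu}D(\rho^{\otimes n}\|\sigma_n(\mu))$.

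The main obstacle is this last step: extending the Hiai--Petz asymptotic equivalence from the iid case $\sigma^{\otimes k}$ to the composite alternative $\sigma_k(\mu)=\int\sigma^{\otimes k}\,\mathrm{d}\mu(\sigma)$, and verifying that the per-system rate of the classical lift tends to the regularised quantity in the theorem rather than to the a priori larger single-letter expression $\inf_{\rho,\sigma}D(\rho\|\sigma)$. This is precisely where non-commutativity forces the appearance of the regularisation, which cannot be removed in general in contrast to the classical case (cf.~the counterexamples discussed later in Chapter~\ref{recoverability}). The only non-trivial technical tools required throughout are joint convexity and the data-processing inequality for $D$ and $D_M$, together with the convexity and compactness of $\cS$ and of the set of probability measures on $\cT$ needed for the minimax exchange.
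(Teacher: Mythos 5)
Your converse coincides with the paper's (Proposition~\ref{lem:converse_lemma}): monotonicity of $D$ under the binary POVM plus a rearrangement, so that part is fine. Your achievability also follows the paper's overall route --- measure, invoke the classical composite Stein's lemma, exchange $\sup_{\mathcal{M}}$ with the infimum over states, and pass from the measured to the quantum relative entropy --- but it stops exactly where the real work begins. The step you defer as ``the main obstacle'' is the entire technical content of this direction: without it you are left with a lower bound in terms of the measured relative entropy $\DM$, not $D$. The paper closes this gap with Lemma~\ref{pinching}: for \emph{any} permutation-invariant $\sigma_n$ (not only iid ones) one has $\DM(\rho_n\|\sigma_n)\geq D(\rho_n\|\sigma_n)-\log\poly(n)$, because the pinching inequality $\mathcal{P}_{\sigma_n}[X]\geq X/|\mathrm{spec}(\sigma_n)|$ reduces the claim to bounding $|\mathrm{spec}(\sigma_n)|$, and Schur--Weyl duality gives $|\mathrm{spec}(\sigma_n)|\leq\poly(n)$ for permutation-invariant states. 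Since $\sigma_n(\mu)=\int\sigma^{\otimes n}\,\mathrm{d}\mu(\sigma)$ is permutation invariant, this is precisely the ``Hiai--Petz equivalence adapted to the composite alternative'' that you ask for but do not supply.

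There is a second, unacknowledged gap in your minimax step. You cannot exchange $\sup_{\mathcal{M}}$ with $\inf_{\rho\in\cS,\,\sigma\in\cT}$ while the infimum ranges over tensor powers: $\rho\mapsto\rho^{\otimes k}$ is not affine, so the objective is not convex in $(\rho,\sigma)$ and the sets $\{\rho^{\otimes k}:\rho\in\cS\}$ are not convex, which is what Sion's theorem needs. The paper first enlarges the infimum to the convex sets of mixtures $\rho_k(\nu)$, $\sigma_k(\mu)$ (this only weakens the bound), applies the minimax Lemma~\ref{applySion} to obtain $\inf_{\nu,\mu}\DM\left(\rho_k(\nu)\middle\|\sigma_k(\mu)\right)$, and then, after the pinching step, must undo the relaxation on the \emph{null} hypothesis to recover the $\inf_{\rho\in\cS}$ over iid states that appears in the converse. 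That de-mixing is Lemma~\ref{MinOut} and requires Carath\'eodory (Lemma~\ref{carat}, giving $N\leq\poly(n)$ components) together with the quasi-convexity bound $H(\sum_ip_i\rho_i)\leq\sum_ip_iH(\rho_i)+\log N$ (Lemma~\ref{caratentropy}); your proposal is silent on both. The blocking $n=kL$ you introduce is consistent with how the classical result is invoked, but it does not by itself address either issue.
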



In principle Theorem~\ref{thm:main} leaves the possibility that Equation~\ref{eq:main} might actually be identical to the relative entropy optimized on a single copy of the quantum system open. This would simplify the error rate above significantly. Unfortunately, we can show that this is indeed not possible. The proof is based on the close relation between hypothesis testing and recoverability entropy inequalities. Therefore we postpone giving the details to the second part of the thesis where we will discuss these topics. The proof can be found in Section~\ref{RegNeeded}. 

\begin{rem}\label{remSup}
Before we start with the proof, let us take a look at the case where above equation diverges. Consider the case where $\supp(\rho)\nsubseteq\supp(\sigma)$ for all $\rho\in\cS,\sigma\in\cT$. In this setting one can always find an appropriate measurement such that the two sets can be distinguished perfectly in a finite number of steps. Therefore when the left hand side diverges, this agrees with the right hand side which diverges as well due to the definition of the relative entropy. Hence, throughout the following argument, we assume that there exist $\rho\in\cS,\sigma\in\cT$ such that $\supp(\rho)\subseteq\supp(\sigma)$.
\end{rem}

We first prove the $\leq$ bound, i.e.~the converse direction, which follows from the following proposition.

\begin{prop}\label{lem:converse_lemma}
For $\rho\in\cS$, $\mu\in\cT$, and $\eps\in(0,1)$ we have
\begin{align}
-\frac{1}{n}\log\inf_{0\leq M_n\leq1}&\Big\{\tr\left[M_n\sigma_n(\mu)\right]\Big|\tr\left[(1-M_n)\rho^{\otimes n}\right]\leq\eps\Big\} \nonumber\\
&\leq\frac{1}{n}\cdot\frac{D\left(\rho^{\otimes n}\middle\|\sigma_n(\mu)\right)+\log2}{1-\eps}\,.
\end{align}
\end{prop}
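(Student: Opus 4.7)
The plan is to prove this converse-type estimate in the standard way, by applying the data-processing inequality (DPI) of the quantum relative entropy to the single two-outcome POVM $\{M_n, 1-M_n\}$ and then reducing the resulting binary KL divergence to an expression involving only $-\log\tr[M_n\sigma_n(\mu)]$, which is the quantity we ultimately want to control. No additional machinery should be necessary.

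Concretely, fix any feasible $M_n$, i.e.\ $0\leq M_n\leq 1$ with $\tr[(1-M_n)\rho^{\otimes n}]\leq\eps$, and set
\[
p\coloneqq\tr\!\left[M_n\rho^{\otimes n}\right]\geq 1-\eps,\qquad q\coloneqq\tr\!\left[M_n\sigma_n(\mu)\right].
\]
DPI applied to the binary measurement $\{M_n,1-M_n\}$ yields
\[
D\!\left(\rho^{\otimes n}\middle\|\sigma_n(\mu)\right)\;\geq\; d(p\|q)\;=\;-h_2(p)-p\log q-(1-p)\log(1-q),
\]
where $d(\cdot\|\cdot)$ denotes the binary KL divergence. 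Using the two elementary bounds $h_2(p)\leq\log 2$ and $-(1-p)\log(1-q)\geq 0$ (the latter because $q\in[0,1]$), and rearranging, I would obtain
\[
-p\log q\;\leq\; D\!\left(\rho^{\otimes n}\middle\|\sigma_n(\mu)\right)+\log 2.
\]
Dividing by $p\geq 1-\eps$ produces $-\log q\leq (D(\rho^{\otimes n}\|\sigma_n(\mu))+\log 2)/(1-\eps)$.

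Finally, since this bound holds for every feasible $M_n$ and $-\log(\cdot)$ is monotonically decreasing, taking the supremum over feasible $M_n$ on the left-hand side is precisely $-\log$ of the infimum appearing in the statement; dividing by $n$ yields the claimed inequality. I do not anticipate any genuine obstacle: the only edge cases are when the relative entropy is infinite (the bound is then trivial) or when the two hypotheses can be perfectly distinguished in finitely many copies (as discussed in Remark~\ref{remSup}, the relative entropy diverges as well, so the inequality again holds).
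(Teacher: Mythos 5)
Your proof is correct and follows essentially the same route as the paper: both apply the data-processing inequality to the two-outcome POVM $\{M_n,1-M_n\}$, lower-bound the resulting binary divergence by $-\log 2 - p\log q$ using $h_2(p)\le\log 2$ and nonnegativity of the cross term, and divide by $p\ge 1-\eps$. The paper merely writes the same computation in terms of the error probabilities $\alpha_n=1-p$ and $\beta_n=q$.
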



\begin{proof}
We follow the original converse proof of the quantum Stein's lemma~\cite{HP91} for the states $\rho^{\otimes n}$ and $\sigma_n(\mu)$. By the monotonicity of the quantum relative entropy~\cite{Lindblad1975} under POVMs $\{M_n,(1-M_n)\}$ we have
\begin{align}
D\left(\rho^{\otimes n}\middle\|\sigma_n(\mu)\right) &\geq\alpha_n(M_n)\log\frac{\alpha_n(M_n)}{1-\beta_n(M_n)}+(1-\alpha_n(M_n))\log\frac{1-\alpha_n(M_n)}{\beta_n(M_n)} \nonumber\\ 
&\geq-\log2-(1-\alpha_n(M_n))\log\beta_n(M_n)\,, \label{rearrang}
\end{align}
where we used the notation from Equation~\eqref{eq:errorsI+II}. The claim then follows by a simple rearrangement of Equation \ref{rearrang}.
\end{proof}

By taking the appropriate infima as well as the limits $n\to\infty$ and $\eps\to0$ in Prop.~\ref{lem:converse_lemma} we find
\begin{align}\label{eq:converse}
\zeta_{\cS,\cT}(0)\leq\liminf_{n\to\infty}\frac{1}{n}\inf_{\substack{\rho\in\cS\\\mu\in\cT}}D\left(\rho^{\otimes n}\|\sigma_n(\mu)\right)\,.
\end{align}
For the $\geq$ bound, i.e.~the achievability direction, we show the following statement.

\begin{prop}\label{lem:achievability}
For the discrimination problem as above with $n\in\mathbb{N}$ and $\eps\in(0,1)$, we have
\begin{align}
\zeta_{\cS,\cT}^n(\eps)\geq\frac{1}{n}\inf_{\substack{\rho\in\cS\\\mu\in\cT}}D\left(\rho^{\otimes n}\|\sigma_n(\mu)\right)-\frac{\log\poly(n)}{n}\,,
\end{align}
where $\poly(n)$ stands for terms of order at most polynomial in $n$.
\end{prop}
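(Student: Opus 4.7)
The plan is to lift the classical composite Stein's lemma to the quantum setting by using the measured relative entropy $\DM$ as an intermediate object, with its gap to the quantum relative entropy controlled by the polynomial-dimensional structure arising from the permutation symmetry of the alternative states.

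As a first step, I would establish that
\begin{equation*}
D\left(\rho^{\otimes n}\|\sigma_n(\mu)\right)\leq \DM\left(\rho^{\otimes n}\|\sigma_n(\mu)\right)+\log\poly(n)\,,
\end{equation*}
uniformly in $\rho\in\cS$ and $\mu\in\cT$. This follows from Hayashi's pinching inequality applied to the spectral projectors of $\sigma_n(\mu)$: the pinched $\rho^{\otimes n}$ commutes with $\sigma_n(\mu)$, giving a measured upper bound on $D$, while the number of distinct eigenvalues of any permutation-symmetric state on $n$ systems is polynomial in $n$ thanks to the Schur-Weyl decomposition of $\cH^{\otimes n}$. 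Observe that all states in $\cS_n$ and $\cT_n$ are manifestly permutation-invariant.

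As a second step, I would show that the rate $\frac{1}{n}\inf_{\rho,\mu}\DM\left(\rho^{\otimes n}\|\sigma_n(\mu)\right)$ is achievable. By the variational characterization of $\DM$ in~\eqref{eq:measured}, measuring with a POVM that nearly optimizes this quantity converts the quantum problem into a composite classical hypothesis test whose null and alternative families are still convex mixtures of \iid distributions --- namely the measured images of $\rho^{\otimes n}$ and $\sigma^{\otimes n}$, inheriting the convex-combination structure because the measurement channel is linear. The classical composite Stein's lemma of Levitan-Merhav \cite{LM02} and Brandao-Harrow-Lee-Peres \cite{BHLP14}, combined with a Neyman-Pearson test, attains the corresponding exponent up to an $O(\log\poly(n)/n)$ overhead. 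Chaining with the pinching bound from the first step yields the claimed lower bound on $\zeta_{\cS,\cT}^n(\eps)$.

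The main obstacle is that both the pinching in the first step and the $\DM$-optimizing POVM in the second a priori depend on the specific pair $(\rho,\mu)$, whereas the final test must be chosen uniformly over $(\rho,\mu)\in\cS\times\cT$ --- and the type-1 constraint must hold uniformly over $\nu\in\cS$ as well. I expect this to be handled by replacing all state-dependent constructions with a universal symmetric measurement, for instance a refinement of the Schur-Weyl POVM, that simultaneously approximates $\DM$ for every permutation-symmetric alternative at the cost of another $\log\poly(n)$ factor. An alternative route is a $\delta$-net discretization of $\cS$ and $\cT$ together with a union bound; choosing $\delta$ inverse polynomial in $n$ keeps the number of tests polynomial, with the resulting blow-up again absorbed into the $\log\poly(n)$ correction.
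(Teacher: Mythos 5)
Your overall architecture --- measure, invoke the classical composite Stein's lemma, and control the gap between $D$ and $\DM$ via pinching plus Schur--Weyl --- is the same as the paper's, and your first step is essentially the paper's Lemma~\ref{pinching}. But there are two genuine gaps. The first is the one you flag yourself and then leave unresolved: which single POVM to use. The paper settles this neither with a universal measurement nor with a net, but with a minimax theorem for the measured relative entropy (Lemma~\ref{applySion}, from~\cite{BHLP14}): for any \emph{fixed} POVM $\mathcal{M}_n$ the classical composite lemma already supplies one test that is uniform over the measured families, so the only remaining issue is exchanging $\sup_{\mathcal{M}_n}$ with $\inf_{\nu,\mu}$, turning $\sup_{\mathcal{M}_n}\inf_{\nu,\mu}D\left(\mathcal{M}_n(\rho_n(\nu))\middle\|\mathcal{M}_n(\sigma_n(\mu))\right)$ into $\inf_{\nu,\mu}\DM\left(\rho_n(\nu)\middle\|\sigma_n(\mu)\right)$; the convexity of $\cS_n$ and $\cT_n$ is exactly what makes Sion's theorem applicable. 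Your $\delta$-net alternative does not obviously work: combining $\poly(n)$ measurements (say by randomizing over them and recording the label) only guarantees a $1/\poly(n)$ \emph{fraction} of the best pairwise divergence --- a multiplicative, not additive, loss that destroys the linear rate --- and the type-1 error of a Neyman--Pearson test tailored to one pair is uncontrolled under the other null hypotheses, so the union bound does not close.

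The second gap you do not mention. Because the minimax step forces you to enlarge to the convex hulls, what the chain actually yields is $\frac{1}{n}\inf_{\nu,\mu}D\left(\rho_n(\nu)\middle\|\sigma_n(\mu)\right)-\frac{\log\poly(n)}{n}$, with the null infimum running over \emph{mixtures} $\rho_n(\nu)=\int\rho^{\otimes n}\,\mathrm{d}\nu(\rho)$. By joint convexity this infimum is in general \emph{smaller} than the infimum over \iid nulls $\rho^{\otimes n}$ appearing in the statement, so an additional argument is required to pass back; this is the paper's Lemma~\ref{MinOut}, which uses Carath\'eodory's theorem (Lemma~\ref{carat}) to reduce the mixture to a $\poly(n)$-term convex combination and the quasi-convexity bound $H\left(\sum_i p_i\rho_i\right)\leq\sum_i p_iH(\rho_i)+\log N$ (Lemma~\ref{caratentropy}) to absorb the difference into the $\log\poly(n)$ correction. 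Without this step your argument proves a weaker inequality than the one claimed.
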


The basic idea for the proof of Prop.~\ref{lem:achievability} is to start from the corresponding composite Stein's lemma for classical probability distributions and lift the result to the non-commutative setting by solely using properties of quantum entropy. We now prove Prop.~\ref{lem:achievability} in several steps and start with an achievability bound in terms of the measured relative entropy.

\begin{lemma}\label{MRelEnt}
For the discrimination problem as above with $n\in\mathbb{N}$ and $\eps\in(0,1)$, we have
\begin{align}
\zeta_{\cS,\cT}^n(\eps)\geq\frac{1}{n}\inf_{\substack{\nu\in\cS\\\mu\in\cT}}\DM\left(\rho_n(\nu)\middle\|\sigma_n(\mu)\right)\,.
\end{align}
\end{lemma}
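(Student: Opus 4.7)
The plan is to reduce the quantum composite hypothesis testing problem to a classical one via a single measurement, and then invoke a classical composite Stein bound combined with the variational definition of the measured relative entropy. For any POVM $M_n=\{(M_n)_x\}_{x\in\cX}$ on $\cH^{\otimes n}$ with finite outcome set $\cX$, applying $M_n$ to $\rho_n(\nu)$ and $\sigma_n(\mu)$ induces classical distributions $P^{M_n}_\nu(x):=\tr[(M_n)_x\,\rho_n(\nu)]=\int\tr[(M_n)_x\,\rho^{\otimes n}]\,d\nu(\rho)$ and $Q^{M_n}_\mu(x):=\tr[(M_n)_x\,\sigma_n(\mu)]$ on $\cX$, forming convex families indexed by $\nu\in\cS$ and $\mu\in\cT$. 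Any acceptance set $A\subseteq\cX$ lifts to a binary quantum test $\{\sum_{x\in A}(M_n)_x,\,\id-\sum_{x\in A}(M_n)_x\}$ whose type-I and type-II error probabilities on $(\rho_n(\nu),\sigma_n(\mu))$ coincide with the classical errors of the test $A$ on $(P^{M_n}_\nu,Q^{M_n}_\mu)$. Hence $\zeta^n_{\cS,\cT}(\eps)$ is lower bounded by the optimal rate of the induced classical composite problem.

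Next, I apply the classical composite Stein bound for convex combinations of iid distributions (cf.~\cite{LM02,BHLP14}) to this induced problem, yielding an achievable rate of at least $\frac{1}{n}\inf_{\nu,\mu}D(P^{M_n}_\nu\|Q^{M_n}_\mu)$ (in the appropriate finite-$n$ form). By the definition of the measured relative entropy, $\sup_{M_n}D(P^{M_n}_\nu\|Q^{M_n}_\mu)=\DM(\rho_n(\nu)\|\sigma_n(\mu))$, with the supremum attainable by rank-one projective POVMs. Interchanging the supremum over $M_n$ with the infimum over $(\nu,\mu)$ via Sion's minimax theorem -- applicable because the objective is bilinear and the feasible sets of POVMs and measures are convex and (after standard compactifications) compact -- then produces the claimed bound $\zeta^n_{\cS,\cT}(\eps)\geq\frac{1}{n}\inf_{\nu,\mu}\DM(\rho_n(\nu)\|\sigma_n(\mu))$.

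The main obstacle I anticipate is the uniformity of the measurement across the composite sets: a single POVM $M_n$ and acceptance set $A$ must perform well for every pair $(\nu,\mu)$ simultaneously. This is precisely what the minimax exchange accomplishes, leveraging the convexity of the POVM set and measure sets together with the bilinear structure of the trace expressions. A secondary technicality is ensuring a finite-$n$ form of the classical composite Stein bound that propagates cleanly through the quantum-to-classical reduction; this can be handled by standard type-theoretic covering arguments on the outcome space $\cX$, with any sub-polynomial corrections absorbed at the level of Proposition~\ref{lem:achievability}, where $\frac{\log\poly(n)}{n}$ terms are already accommodated.
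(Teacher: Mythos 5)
Your proposal is correct and follows essentially the same route as the paper: measure with a fixed POVM, invoke the classical composite Stein result of~\cite{LM02,BHLP14} on the induced distributions, and then exchange the supremum over measurements with the infimum over $(\nu,\mu)$ to recover the measured relative entropy. The only caveat is that the objective $D\left(\mathcal{M}_n(\rho_n(\nu))\middle\|\mathcal{M}_n(\sigma_n(\mu))\right)$ is not literally bilinear in the POVM and the measures, so the minimax step should be justified by the dedicated result for the measured relative entropy (Lemma~\ref{applySion}, from~\cite{BHLP14}) rather than by a direct appeal to Sion's theorem for bilinear functions.
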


\begin{proof}
Analogous to Remark \ref{remSup} it it sufficient to consider the case where there exist $\rho\in\cS,\sigma\in\cT$ such that $\supp(\rho)\subseteq\supp(\sigma)$, otherwise both sides of above equation become infinite by definition. 

For sets of classical probability distributions $P\in\cS$ and $Q\in\cT$ we know from the corresponding commutative result~\cite{LM02,BHLP14} that for $\eps\in(0,1)$
\begin{align}\label{eq:classical}
\zeta_{\cS,\cT}(\eps)=\inf_{\substack{P\in\cS\\Q\in\cT}}D(P\|Q)\,.
\end{align}
Now, the strategy is to first measure the quantum states and then invoke the classical achievability result in Equation~\eqref{eq:classical} for the resulting probability distributions. For that we fix $n\in\mathbb{N}$ and a POVM $\mathcal{M}_n$ on $\cH^{\otimes n}$. For testing the probability distributions $P_n:=\mathcal{M}_n\left(\rho^{\otimes n}\right)$ vs.~$Q_n:=\mathcal{M}_n\left(\sigma^{\otimes n}\right)$ we get an achievability bound
\begin{align}
\zeta_{\cS,\cT}^n(\eps) &\geq\frac{1}{n}\inf_{\substack{\rho\in\cS\\\sigma\in\cT}}D\left(\mathcal{M}_n\left(\rho^{\otimes n}\right)\middle\|\mathcal{M}_n\left(\sigma^{\otimes n}\right)\right) \\ &\geq\frac{1}{n}\inf_{\substack{\nu\in\cS\\\mu\in\cT}}D\left(\mathcal{M}_n\left(\rho_n(\nu)\right)\middle\|\mathcal{M}_n\left(\sigma_n(\mu)\right)\right)\,,
\end{align}
where the second inequality follows since the infimum is taken over a larger set. The claim then follows from applying a minimax theorem for the measured relative entropy (see Appendix, Lemma~\ref{applySion})
\begin{align}
\sup_{\mathcal{M}_n}\inf_{\substack{\nu\in\cS\\\mu\in\cT}}D\left(\mathcal{M}_n\left(\rho_n(\nu)\right)\middle\|\mathcal{M}_n\left(\sigma_n(\mu)\right)\right)=\inf_{\substack{\nu\in\cS\\\mu\in\cT}}\DM\left(\rho_n(\nu)\middle\|\sigma_n(\mu)\right)\,.
\end{align}
\end{proof}

Next, we argue that the measured relative entropy can in fact be replaced by the quantum relative entropy by only paying an asymptotically vanishing penalty term. For this we need the following lemma which can be seen as a generalization of the original technical argument in the proof of quantum Stein's lemma~\cite{HP91}.

\begin{lemma}\label{pinching}
Let $\rho_n,\sigma_n\in S\left(\cH^{\otimes n}\right)$ with $\sigma_n$ permutation invariant. Then, we have
\begin{align}
D\left(\rho_n\middle\|\sigma_n\right)-\log\poly(n)\leq \DM\left(\rho_n\middle\|\sigma_n\right)\leq D\left(\rho_n\middle\|\sigma_n\right)\,.
\end{align}
\end{lemma}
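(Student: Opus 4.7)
The plan is to settle the upper bound by data processing and to prove the lower bound through the pinching channel associated with $\sigma_n$, the only genuinely non-trivial ingredient being a polynomial bound on the spectrum of a permutation-invariant operator on $\cH^{\otimes n}$. Throughout I may assume $\supp(\rho_n)\subseteq\supp(\sigma_n)$, as in Remark~\ref{remSup}, so that all logarithms are finite.

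For the upper bound, any POVM $(\mathcal{X},M)$ appearing in \eqref{eq:measured} induces a CPTP measurement channel $\rho\mapsto\sum_x\tr[M_x\rho]|x\rangle\langle x|$, and monotonicity of the quantum relative entropy under CPTP maps yields $\DM(\rho_n\|\sigma_n)\le D(\rho_n\|\sigma_n)$ upon taking the supremum.

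For the lower bound, let $\mathcal{P}_{\sigma_n}$ denote the pinching channel in the eigenbasis of $\sigma_n$. Since $\log\sigma_n$ is diagonal in this basis, $\tr[\rho_n\log\sigma_n]=\tr[\mathcal{P}_{\sigma_n}(\rho_n)\log\sigma_n]$, and hence
\begin{align}
D(\rho_n\|\sigma_n)-D(\mathcal{P}_{\sigma_n}(\rho_n)\|\sigma_n)=H(\mathcal{P}_{\sigma_n}(\rho_n))-H(\rho_n)=D(\rho_n\|\mathcal{P}_{\sigma_n}(\rho_n)).
\end{align}
Hayashi's pinching inequality provides $\mathcal{P}_{\sigma_n}(\rho_n)\ge\rho_n/N$, where $N$ is the number of distinct eigenvalues of $\sigma_n$; by operator monotonicity of the logarithm, $\log\mathcal{P}_{\sigma_n}(\rho_n)\ge\log\rho_n-(\log N)\id$ on the common support, and taking the trace against $\rho_n$ gives $D(\rho_n\|\mathcal{P}_{\sigma_n}(\rho_n))\le\log N$. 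On the other hand, $\mathcal{P}_{\sigma_n}(\rho_n)$ and $\sigma_n$ commute, so choosing any rank-one refinement of the eigenprojectors of $\sigma_n$ that also diagonalises $\mathcal{P}_{\sigma_n}(\rho_n)$ yields a projective POVM in \eqref{eq:measured} that realises $D(\mathcal{P}_{\sigma_n}(\rho_n)\|\sigma_n)$ exactly as the classical KL divergence of the outcome distributions; hence $D(\mathcal{P}_{\sigma_n}(\rho_n)\|\sigma_n)\le\DM(\rho_n\|\sigma_n)$. Chaining these gives $D(\rho_n\|\sigma_n)\le\DM(\rho_n\|\sigma_n)+\log N$.

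The step I expect to be the main obstacle is the polynomial bound $N\le\poly(n)$, which is precisely where permutation invariance enters. By Schur--Weyl duality, $\cH^{\otimes n}=\bigoplus_\lambda V_\lambda^{S_n}\otimes V_\lambda^{U(d)}$ with $d=\dim\cH$ and $\lambda$ ranging over Young diagrams with $n$ boxes and at most $d$ rows; a permutation-invariant operator commutes with the $S_n$-action and so, by Schur's lemma, acts as $\id_{V_\lambda^{S_n}}\otimes A_\lambda$ on each isotypic component, whence $N\le\sum_\lambda\dim V_\lambda^{U(d)}$. The number of Young diagrams is at most $(n+1)^{d-1}$ and each Weyl dimension is at most $(n+d)^{d(d-1)/2}$, so $N\le\poly(n)$ as required.
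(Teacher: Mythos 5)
Your proof is correct and follows essentially the same route as the paper's: data processing for the upper bound, and for the lower bound the pinching map with respect to $\sigma_n$ combined with Hayashi's pinching inequality and a Schur--Weyl count of $|\mathrm{spec}(\sigma_n)|$. The only difference is cosmetic: you spell out the chain $D(\rho_n\|\sigma_n)=D(\mathcal{P}_{\sigma_n}(\rho_n)\|\sigma_n)+D(\rho_n\|\mathcal{P}_{\sigma_n}(\rho_n))$ and the operator-monotonicity bound $D(\rho_n\|\mathcal{P}_{\sigma_n}(\rho_n))\leq\log N$, which the paper delegates to a citation, and you correctly state the relation $D(\mathcal{P}_{\sigma_n}(\rho_n)\|\sigma_n)\leq \DM(\rho_n\|\sigma_n)$ as the inequality that is actually needed.
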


\begin{proof}
Again we can restrict ourselves to the case where there exist $\rho\in\cS,\sigma\in\cT$ such that $\supp(\rho)\subseteq\supp(\sigma)$, since, otherwise all relative entropy terms evaluate to infinity by definition. 

The second inequality follows directly from the definition of the measured relative entropy in Equation~\eqref{eq:measured} together with the fact that the quantum relative entropy is monotone under completely positive trace preserving maps~\cite{Lindblad1975}. We now prove the first inequality with the help of asymptotic spectral pinching~\cite{hayashi2002}. The pinching map with respect to $\omega\in S(\cH)$ is defined as
\begin{align*}
\mathcal{P}_\omega(\cdot):=\sum_{\lambda\in\mathrm{spec}(\omega)} P_\lambda (\cdot)P_\lambda\;\text{with the spectral decomposition $\omega=\sum_{\lambda\in\mathrm{spec}(\omega)}\lambda P_\lambda$.}
\end{align*}
Crucially, we have the pinching operator inequality $\mathcal{P}_\omega[X]\geq\frac{X}{|\mathrm{spec}(\omega)|}$~\cite{hayashi2002}. From this we can deduce that (see, e.g., \cite[Lemma~4.4]{tomamichel2015quantum})
\begin{align}
D\left(\rho_n\middle\|\sigma_n\right)-\log\left|\mathrm{spec}\left(\sigma_n\right)\right|\leq D\left(\mathcal{P}_{\sigma_n}\left(\rho_n\right)\middle\|\sigma_n\right)=\DM\left(\rho_n\middle\|\sigma_n\right)\,,
\end{align}
where the equality follows since $\mathcal{P}_{\sigma_n}\left(\rho_n\right)$ and $\sigma_n$ are diagonal in the same basis. It remains to show that $\left|\mathrm{spec}\left(\sigma_n\right)\right|\leq\poly(n)$. However, since $\sigma_n$ is permutation invariant, the Schur-Weyl duality shows (see, e.g., \cite[Sect.~5]{harrow_phd}) that in the Schur basis
\begin{align}
\sigma_n=\bigoplus_{\lambda\in\Lambda_n}\sigma_{Q_\lambda}\otimes1_{P_\lambda}\, ,
\end{align}
with $|\Lambda_n|\leq\poly(n)$ and $\text{dim} \left[\sigma_{Q_\lambda}\right]\leq\poly(n)$.
This implies the claim of the lemma.
\end{proof}

By combining Lemma~\ref{MRelEnt} together with Lemma~\ref{pinching} we immediately find that
\begin{align}\label{eq:intermediate}
\zeta_{\cS,\cT}^n(\eps)\geq\frac{1}{n}\inf_{\substack{\rho\in\cS\\\mu\in\cT}}D\left(\rho_n(\nu)\|\sigma_n(\mu)\right)-\frac{\log\poly(n)}{n}\,.
\end{align}
Hence, it remains to argue that the infimum over states $\rho_n(\nu)$ can, without loss of generality, be restricted to \iid states $\rho^{\otimes n}$ with $\rho\in\cS$.

\begin{lemma}\label{MinOut}
For the same definitions as before and some $\omega_n\in S\left(\cH^{\otimes n}\right)$, we have
\begin{align}
\frac{1}{n}\inf_{\nu\in\cS}D\left(\rho_n(\nu)\middle\|\omega_n\right)\geq\frac{1}{n}\inf_{\rho\in\cS}D\left(\rho^{\otimes n}\middle\|\omega_n\right)-\frac{\log\poly(n)}{n}\,.
\end{align}
\end{lemma}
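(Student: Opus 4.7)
The proof rests on a Schur--Weyl analysis parallel to that of Lemma~\ref{pinching}. Splitting
\begin{equation*}
D(\rho_n(\nu)\|\omega_n) = -H(\rho_n(\nu)) - \int \tr\big[\rho^{\otimes n}\log\omega_n\big]\,d\nu(\rho),
\end{equation*}
the second term is already linear in $\nu$, so the whole problem reduces to proving the \emph{reverse-concavity} bound $H(\rho_n(\nu)) \leq \int H(\rho^{\otimes n})\,d\nu(\rho) + \log\poly(n)$. This is the key technical step: naive concavity of the von Neumann entropy gives the opposite inequality, so the $\log\poly(n)$ slack must be wrung out of the built-in permutation symmetry of $\rho_n(\nu)$.

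To do this I would invoke the Schur--Weyl decomposition $\cH^{\otimes n} \cong \bigoplus_{\lambda\in\Lambda_n} Q_\lambda \otimes P_\lambda$, for which (with $d=\dim\cH$ fixed) $|\Lambda_n|\leq\poly(n)$ and $\dim Q_\lambda\leq\poly(n)$. Since $\rho^{\otimes n}$ commutes with the symmetric group action, Schur's lemma forces the block form $\rho^{\otimes n} = \bigoplus_\lambda p_\lambda^\rho\,\rho_\lambda \otimes \id_{P_\lambda}/\dim P_\lambda$, and averaging over $\nu$ preserves this shape: $\rho_n(\nu) = \bigoplus_\lambda p_\lambda^\nu\,\sigma_\lambda^\nu \otimes \id_{P_\lambda}/\dim P_\lambda$ with $p_\lambda^\nu = \int p_\lambda^\rho\,d\nu(\rho)$ and $\sigma_\lambda^\nu$ a state on $Q_\lambda$. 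A block-wise entropy computation then gives
\begin{equation*}
H(\rho_n(\nu)) = H(p^\nu) + \sum_\lambda p_\lambda^\nu\big[H(\sigma_\lambda^\nu)+\log\dim P_\lambda\big] \leq \log\poly(n) + \sum_\lambda p_\lambda^\nu\log\dim P_\lambda,
\end{equation*}
using $H(p^\nu)\leq\log|\Lambda_n|$ and $H(\sigma_\lambda^\nu)\leq\log\dim Q_\lambda$, both of which are $\log\poly(n)$.

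Applying the same block decomposition to $H(\rho^{\otimes n}) = nH(\rho)$ yields $\sum_\lambda p_\lambda^\rho\log\dim P_\lambda \leq nH(\rho)$ for every $\rho$; integrating against $\nu$ gives $\sum_\lambda p_\lambda^\nu\log\dim P_\lambda \leq \int H(\rho^{\otimes n})\,d\nu(\rho)$, closing the reverse-concavity bound. Combining,
\begin{equation*}
D(\rho_n(\nu)\|\omega_n) \geq \int D(\rho^{\otimes n}\|\omega_n)\,d\nu(\rho) - \log\poly(n) \geq \inf_{\rho\in\cS} D(\rho^{\otimes n}\|\omega_n) - \log\poly(n),
\end{equation*}
and dividing by $n$ and taking the infimum over $\nu$ produces the claim. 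The main obstacle is precisely this reverse-concavity step: it would fail for generic convex combinations, and only the permutation invariance of $\rho_n(\nu)$ --- quantitatively, the $\poly(n)$ bounds on $|\Lambda_n|$ and $\dim Q_\lambda$ --- allows such a small penalty.
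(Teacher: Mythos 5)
Your proof is correct, but it takes a genuinely different route from the paper's. The paper first discretizes the measure: by a Carath\'eodory-type argument (Lemma~\ref{carat}, which itself relies on the polynomial dimension of the span of $\{\rho^{\otimes n}\}$), the state $\rho_n(\nu)$ is rewritten as a finite convex combination $\sum_{i=1}^N p_i\rho_i^{\otimes n}$ with $N\leq\poly(n)$, and then the elementary mixing bound $H\bigl(\sum_i p_i\rho_i^{\otimes n}\bigr)\leq\sum_i p_i H\bigl(\rho_i^{\otimes n}\bigr)+\log N$ (Lemma~\ref{caratentropy}) supplies the $\log\poly(n)$ penalty; the rest of the chain is identical to yours. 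You instead skip the discretization entirely and prove the reverse-concavity bound $H(\rho_n(\nu))\leq\int H(\rho^{\otimes n})\,\mathrm{d}\nu(\rho)+\log\poly(n)$ directly from the Schur--Weyl block structure of permutation-invariant states, using $|\Lambda_n|\leq\poly(n)$ and $\dim Q_\lambda\leq\poly(n)$ together with the observation that $\sum_\lambda p_\lambda^\rho\log\dim P_\lambda\leq H(\rho^{\otimes n})$. Both arguments ultimately rest on the same structural fact (the polynomial size of the Schur--Weyl data for fixed $d$), but yours handles arbitrary measures without any reduction to finite combinations and reuses the machinery already deployed for Lemma~\ref{pinching}, while the paper's is more modular, with two standalone elementary lemmas. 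Your steps all check out: the block form of $\rho^{\otimes n}$ follows from Schur's lemma applied to the commutant of the $S_n$ action, averaging over $\nu$ preserves it, and the entropy decomposition and the subsequent bounds are each valid.
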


\begin{proof}
We observe the following chain of arguments for $\nu\in\cS$
\begin{align}
\frac{1}{n} D\left(\rho_n(\nu)\middle\|\omega_n\right)&= \frac{1}{n} D\left(\sum_{i=1}^Np_i\rho_i^{\otimes n}\middle\|\omega_n\right)\notag\\
&=-\frac{1}{n}H\left(\sum_{i=1}^Np_i\rho_i^{\otimes n}\right)-\frac{1}{n}\sum_{i=1}^Np_i\tr\left[\rho_i^{\otimes n}\log{\omega_n}\right]\notag\\
&\geq-\frac{1}{n}\sum_{i=1}^N p_i H\left(\rho_i^{\otimes n}\right)-\frac{\log{\poly{(n)}}}{n}-\frac{1}{n}\sum_{i=1}^Np_i\tr\left[\rho_i^{\otimes n}\log{\omega_n}\right]\notag\\
&\geq\min_{\rho_i}\frac{1}{n}D\left(\rho^{\otimes n}_i\middle\|\omega_n\right)-\frac{\log{\poly{(n)}}}{n}\notag\\
&\geq\inf_{\rho\in\cS}\frac{1}{n}D\left(\rho^{\otimes n}\middle\|\omega_n\right)-\frac{\log{\poly{(n)}}}{n}\,,
\end{align}
where the first equality holds by an application of Carath\'eodory's theorem with $N\leq\poly(n)$ (Appendix, Lemma~\ref{carat}), and the first inequality by a quasi-convexity property of the von Neumann entropy (Appendix, Lemma~\ref{caratentropy}). (All other steps are elementary.) Since the above argument holds for all $\nu\in\cS$ the claim follows.
\end{proof}

Combining Lemma~\ref{MinOut} with Equation~\eqref{eq:intermediate} leads to Proposition~\ref{lem:achievability} and then taking the limits $n\to\infty$ and $\eps\to0$ we find
\begin{align}
\zeta_{\cS,\cT}(0)\geq\limsup_{n\to\infty}\frac{1}{n}\inf_{\substack{\rho\in\cS\\\mu\in\cT}}D\left(\rho^{\otimes n}\|\sigma_n(\mu)\right)\,.
\end{align}
Together with the converse from Equation~\eqref{eq:converse} we get
\begin{align*}
\liminf_{n\to\infty}\frac{1}{n}\inf_{\substack{\rho\in\cS\\\mu\in\cT}}D\left(\rho^{\otimes n}\|\sigma_n(\mu)\right)\geq\zeta_{\cS,\cT}(0)\geq\limsup_{n\to\infty}\frac{1}{n}\inf_{\substack{\rho\in\cS\\\mu\in\cT}}D\left(\rho^{\otimes n}\|\sigma_n(\mu)\right)\; 
\end{align*}
\begin{align}
\Rightarrow\zeta_{\cS,\cT}(0)=\lim_{n\to\infty}\frac{1}{n}\inf_{\substack{\rho\in\cS\\\mu\in\cT}}D\left(\rho^{\otimes n}\|\sigma_n(\mu)\right)\,,
\end{align}
which finishes the proof of Thm.~\ref{thm:main}. \qed


\section{Examples and Extensions}\label{sec:examples}

Here we discuss several concrete examples of composite discrimination problems -- those we present here all have a single-letter solution.
First we give discrimination problems that have the relative entropy of coherence as optimal error rate in the composite Stein's setting. Then we go to problems where the mutual information turns out to be optimal. Here, aside from the case based on our composite Stein's Lemma in the last section, we also give a setting which goes beyond this setting but still yields the mutual information as an optimal rate. 
Later in Chapter \ref{recoverability} we will also discuss one scenario where the rate cannot be written as a single-letter formula, giving an operational interpretation to the regularized relative entropy of recovery. 


\subsection{Relative entropy of coherence}\label{sec:coherence}

Following the literature around~\cite{BCP14}, the set of states diagonal in a fixed basis $\{|c\rangle\}$ is called incoherent and is denoted by $\cC\subseteq \setS{\cH}$. For clarity, we sometimes denote by $\cC_n$ the set of incoherent $n$-party states. The relative entropy of coherence of $\rho\in \setS{\cH}$ is defined as
\begin{align}
D_{\cC}(\rho):=\inf_{\sigma\in\cC} D(\rho\|\sigma)\,.
\end{align}
Using the result from Sect.~\ref{CompStein} we can characterize the following discrimination problem.
\hyptest{\begin{description}
\item[Null hypothesis:] the fixed state $\rho^{\otimes n}$
\item[Alternative hypothesis:] the convex sets of \iid incoherent states \\ $\bar{\cC}_n:=\left\{\int\sigma^{\otimes n}\;\mathrm{d}\mu(\sigma)\middle|\sigma\in\cC\right\}$
\end{description}}{}

Namely, as a special case of Theorem~\ref{thm:main} we immediately find
\begin{align}\label{eq:coherence}
\zeta_{\rho,\bar{\cC}_n}(0)=\lim_{n\to\infty}\frac{1}{n}\inf_{\mu\in\cC}D\left(\rho^{\otimes n}\|\int\sigma^{\otimes n}\;\mathrm{d}\mu(\sigma)\right)=D_{\cC}(\rho)\,,
\end{align}
where the last equality follows from Lemma~\ref{relEntropyGasym} in the Appendix. In fact there is even a single-letter solution for the following less restricted discrimination problem.
\hyptest{\begin{description}
\item[Null hypothesis:] the fixed state $\rho^{\otimes n}$
\item[Alternative hypothesis:] the convex set of incoherent states $\sigma_n\in\cC_n$
\end{description}}{}

It is straightforward to check that this hypothesis testing problem fits the general framework of~\cite{BP10} and therefore Theorem~\ref{CompBrandao} leads to
\begin{align}\label{eq:coherence_brandao}
\zeta_{\rho,\cC}(0)=\lim_{n\to\infty}\frac{1}{n}\inf_{\sigma_n\in\cC_n}D\left(\rho^{\otimes n}\|\sigma_n\right)=D_{\cC}(\rho)\,,
\end{align}
where the last step again follows from Lemma~\ref{relEntropyGasym}. 
We have therefore two a priori different hypothesis testing scenarios which generally would lead to two different error rates. It is only due to the fact that both satisfy the conditions of Lemma~\ref{relEntropyGasym}, that they turn out to be equal and therefore both give an operational interpretation to the relative entropy of coherence. We remark that our results also easily extend to the relative entropy of frameness~\cite{GMS09}.

In the following we give a simple self-contained proof of Equation~\eqref{eq:coherence_brandao} that is different from the proof in~\cite{BP10} and follows ideas from~\cite{Audenaert2008,HT14,TH15}. The goal is the quantification of the optimal asymptotic error exponent (as we will see the following limit exists)
\begin{align}
\zeta_{\rho,\cC}^n(\eps):=-\frac{1}{n}\log\inf_{\substack{0\leq M_n\leq1\\ \tr\left[M_n\rho^{\otimes n}\right]\geq1-\eps}}\sup_{\sigma_n\in\cC_n}\tr\left[M_n\sigma_n\right]\quad \\ 
\mathrm{with}\quad\zeta_{\rho,\cC}(\eps):=\lim_{n\to\infty}\zeta_{\rho,\cC}^n(\eps)\quad\mathrm{and}\quad\zeta_{\rho,\cC}(0):=\lim_{\eps\to0}\zeta_{\rho,\cC}(\eps)\,.
\end{align}

\begin{prop}\label{thm:convex_achievability}
For the discrimination problem as above we have 
\begin{align} 
\zeta_{\rho,\cC}(0)=D_{\cC}(\rho).
\end{align}
\end{prop}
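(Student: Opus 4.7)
My plan is to establish both directions of the equality by specializing the machinery built for Theorem~\ref{thm:main} in Section~\ref{CompStein}; the extra structure of $\cC_n$ (namely, that every incoherent state is diagonal in a fixed product basis) is what allows a single-letter formula.

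For the converse $\zeta_{\rho,\cC}(0)\leq D_\cC(\rho)$, I would pick any $\sigma^\star\in\cC$ attaining $D_\cC(\rho)$, which exists because $\cC$ is convex and closed and $D(\rho\|\cdot)$ is lower semi-continuous; an explicit choice is $\sigma^\star=\Delta(\rho)$, the dephasing of $\rho$ in the incoherent basis, as the decomposition
\begin{align}
D(\rho\|\sigma)=D(\rho\|\Delta(\rho))+D(\Delta(\rho)\|\sigma),\qquad\sigma\in\cC,
\end{align}
makes clear (the first term equals $H(\Delta(\rho))-H(\rho)=D_\cC(\rho)$ and the second is minimized at $\sigma=\Delta(\rho)$). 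Since $(\sigma^\star)^{\otimes n}\in\cC_n$, every test feasible for the composite problem also controls the type-$2$ error against $(\sigma^\star)^{\otimes n}$, and Proposition~\ref{lem:converse_lemma} applied to the latter simple problem yields $\zeta_{\rho,\cC}(0)\leq D(\rho\|\sigma^\star)=D_\cC(\rho)$.

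For the achievability $\zeta_{\rho,\cC}(0)\geq D_\cC(\rho)$, I would replay the measure-then-classical-Stein argument that proves Lemma~\ref{MRelEnt}, now with null $\{\rho\}$ and alternative $\cC_n$: after applying a POVM $\mathcal{M}_n$ to both sides, invoking classical composite Stein~\eqref{eq:classical} against the convex set $\{\mathcal{M}_n(\sigma_n):\sigma_n\in\cC_n\}$, and passing through the minimax argument of Lemma~\ref{applySion}, I obtain
\begin{align}
\zeta_{\rho,\cC}^n(\eps)\;\geq\;\frac{1}{n}\inf_{\sigma_n\in\cC_n}\DM\!\left(\rho^{\otimes n}\middle\|\sigma_n\right).
\end{align}
To apply Lemma~\ref{pinching} and convert $\DM$ into $D$ one needs a permutation-invariant $\sigma_n$; symmetrizing $\sigma_n\in\cC_n$ as $\bar\sigma_n=\frac{1}{n!}\sum_{\pi\in S_n}P_\pi\sigma_nP_\pi^\dagger$ stays in $\cC_n$ by permutation-closure of the incoherent cone, and by joint convexity of $\DM$ together with permutation invariance of $\rho^{\otimes n}$ cannot increase $\DM(\rho^{\otimes n}\|\cdot)$. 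Lemma~\ref{pinching} then yields the lower bound $\tfrac{1}{n}\inf_{\sigma_n\in\cC_n}D(\rho^{\otimes n}\|\sigma_n)-\tfrac{\log\poly(n)}{n}$.

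The remaining and conceptually central step is the single-letter lower bound
\begin{align}
D\!\left(\rho^{\otimes n}\middle\|\sigma_n\right)\;\geq\;n\,D_\cC(\rho)\qquad\forall\,\sigma_n\in\cC_n.
\end{align}
Since $\sigma_n$ is diagonal in the product incoherent basis, $\tr[\rho^{\otimes n}\log\sigma_n]=\tr[\Delta^{\otimes n}(\rho^{\otimes n})\log\sigma_n]=\tr[\Delta(\rho)^{\otimes n}\log\sigma_n]$, so combining additivity of $H$ on tensor products with $D\geq 0$ gives
\begin{align}
D\!\left(\rho^{\otimes n}\middle\|\sigma_n\right)=-nH(\rho)+nH(\Delta(\rho))+D\!\left(\Delta(\rho)^{\otimes n}\middle\|\sigma_n\right)\geq n\,D_\cC(\rho),
\end{align}
with equality at $\sigma_n=\Delta(\rho)^{\otimes n}$. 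Taking $n\to\infty$ and then $\eps\to0$ matches the converse. The main obstacle I anticipate is not the single-letter step but the $\DM$-to-$D$ detour: a direct Neyman--Pearson test built from $\rho^{\otimes n}$ versus $\Delta(\rho)^{\otimes n}$ need not have its dephased version of small operator norm, and so it cannot on its own control $\sup_{\sigma_n\in\cC_n}\tr[M_n\sigma_n]$; the symmetrization-plus-pinching route therefore appears unavoidable.
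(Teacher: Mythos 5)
Your converse is exactly the paper's: Proposition~\ref{lem:converse_lemma} applied to the simple pair $\rho^{\otimes n}$ vs.\ $\Delta(\rho)^{\otimes n}$, with the Pythagorean identity $D(\rho\|\sigma)=D(\rho\|\Delta(\rho))+D(\Delta(\rho)\|\sigma)$ (this is Lemma~\ref{relEntropyGasym}) doing the single-letterization. Your final single-letter step in the achievability direction is also correct and clean -- it is the same identity lifted to $n$ copies, using that every $\sigma_n\in\cC_n$ is diagonal in the product incoherent basis.

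The gap is in the first step of your achievability argument. You invoke the classical composite Stein's lemma, Equation~\eqref{eq:classical}, ``against the convex set $\{\mathcal{M}_n(\sigma_n):\sigma_n\in\cC_n\}$'', but that result (and Lemma~\ref{MRelEnt}, which is built on it) concerns alternative families generated by convex combinations of tensor powers of a \emph{fixed single-letter set}; after measuring with $\mathcal{M}_n$ you are left with an arbitrary convex set of distributions at a single blocklength, with no product structure for any Stein-type asymptotics to act on. The natural patch -- blocking $m$ copies of the length-$n$ problem and using that $\cC_n^{\otimes m}\subseteq\cC_{nm}$ -- fails in the needed direction: a test designed against the iid-generated subset $\mathrm{conv}\{\sigma_n^{\otimes m}\}$ need not control $\sup_{\sigma_{nm}\in\cC_{nm}}\tr[M_{nm}\sigma_{nm}]$, which is what $\zeta^{nm}_{\rho,\cC}$ demands. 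Likewise, swapping $\inf_{M_n}\sup_{\sigma_n}$ by Sion's minimax and then bounding each fixed $\sigma_n$ via a one-shot bound in terms of $D$ or $\DM$ does not close the argument, because the one-shot gap between the hypothesis-testing divergence and the relative entropy grows like $\sqrt{n}$ and the optimizing $\sigma_n$ varies with $n$. This is precisely why the paper's proof takes a different route: it applies Sion's minimax (Equation~\eqref{eq:sion_applied}), then for each fixed $\sigma_n\in\cC_n$ uses the Audenaert test $M_n(\lambda_n)=\{\rho^{\otimes n}-2^{\lambda_n}\sigma_n\}_+$ with Lemma~\ref{lem:audenaert}, obtaining an $n$-independent correction $\frac{s}{1-s}\log\frac{1}{\eps}$ in terms of the Petz divergence $D_s$, and then exploits the additivity $D_{s,\cC}(\rho^{\otimes n})=nD_{s,\cC}(\rho)$ from Equation~\eqref{eq:coherence_additivity} before taking $s\to1$. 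Your symmetrization-plus-pinching detour (which is sound as far as it goes, and would indeed be needed to pass from $\DM$ to $D$) can only be deployed once the reduction to a measured-relative-entropy bound over all of $\cC_n$ is actually established; for that you would have to either reprove a composite Stein's lemma for general convex alternative sets satisfying the axioms of Theorem~\ref{CompBrandao} (which $\cC_n$ does), or adopt the R\'enyi/Audenaert route the paper uses.
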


The converse direction $\leq$ follows exactly as in Lemma~\ref{lem:converse_lemma}, together with Lemma~\ref{relEntropyGasym} to make the expression single-letter. For the achievability direction $\geq$ we make use of a general family of quantum R\'enyi entropies: the Petz divergences~\cite{petz_statbook}. For $\rho,\sigma\in \setS{\cH}$ and $s\in(0,1)\cup(1,\infty)$ they are defined as
\begin{align}
D_s\left(\rho\middle\|\sigma\right):=\frac{1}{s-1}\log\tr\left[\rho^s\sigma^{1-s}\right]\,,
\end{align}
whenever either $s<1$ and $\rho$ is not orthogonal to $\sigma$ in the Hilbert-Schmidt inner product or $s>1$ and the support of $\rho$ is contained in the support of $\sigma$. (Otherwise we set $D_s(\rho\|\sigma):=\infty$.) The corresponding R\'enyi relative entropies of coherence are given by~\cite{Chitambar16}
\begin{align}\label{eq:coherence_petz}
D_{s,\cC}(\rho):=\inf_{\sigma\in\cC}D_s(\rho\|\sigma)
\end{align} 
with the additivity property 
\begin{align}\label{eq:coherence_additivity}
D_{s,\cC}\left(\rho^{\otimes n}\right)=n D_{s,\cC}(\rho)\,.
\end{align} 
Proposition~\ref{thm:convex_achievability} follows by taking the limits $n\to\infty$, $s\to1$, and $\eps\to0$ in the following lemma. (This is independent of what the support of $\rho$ is since the set $\cC$ includes full rank states.)

\begin{lemma}\label{thm:general_convex}
For the discrimination problems as above with $n\in\mathbb{N}$ and $\eps\in(0,1)$ we have for $s\in(0,1)$ that
\begin{align}\label{eqn:general_convex}
\zeta_{\cC}^n(\eps)\geq D_{s,\cC}(\rho)-\frac{1}{n} \frac{s}{1-s}\log\frac{1}{\eps}\,.
\end{align}
\end{lemma}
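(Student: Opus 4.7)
The plan is to reduce the composite problem to a pointwise hypothesis test via a minimax argument, then apply a standard Nagaoka-type R\'enyi converse bound together with the additivity~\eqref{eq:coherence_additivity}.

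\emph{Step 1: minimax reduction.} The objective $(M_n,\sigma_n)\mapsto\tr[M_n\sigma_n]$ is bilinear, the constrained test set $\{M_n:\,0\leq M_n\leq\id,\,\tr[M_n\rho^{\otimes n}]\geq 1-\eps\}$ is convex and compact, and $\cC_n\subseteq\setS{\cH^{\otimes n}}$ is also convex and compact (being a closed bounded subset of the trace-one diagonal operators in the fixed basis). Sion's minimax theorem therefore yields
\begin{align*}
\inf_{M_n}\sup_{\sigma_n\in\cC_n}\tr\!\left[M_n\sigma_n\right]=\sup_{\sigma_n\in\cC_n}\inf_{M_n}\tr\!\left[M_n\sigma_n\right],
\end{align*}
and hence
\begin{align*}
\zeta_{\rho,\cC}^n(\eps)=\frac{1}{n}\inf_{\sigma_n\in\cC_n}D_H^\eps\!\left(\rho^{\otimes n}\middle\|\sigma_n\right),
\end{align*}
where $D_H^\eps(\tau\|\omega):=-\log\inf_{M:\tr[M\tau]\geq 1-\eps}\tr[M\omega]$ is the hypothesis-testing divergence against a single alternative.

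\emph{Step 2: Nagaoka-type R\'enyi converse.} For any states $\tau,\omega$ with $\supp(\tau)\subseteq\supp(\omega)$ and any $s\in(0,1)$, one has the standard bound
\begin{align*}
D_H^\eps(\tau\|\omega)\geq D_s(\tau\|\omega)-\frac{s}{1-s}\log\frac{1}{\eps}.
\end{align*}
This is a one-line consequence of the Audenaert--Mosonyi-type inequality $\tr[(\id-M)\tau]+\lambda\tr[M\omega]\leq\lambda^{1-s}\tr[\tau^{s}\omega^{1-s}]$ applied to the Neyman--Pearson test with threshold $\lambda=\eps^{1/(1-s)}\exp[D_s(\tau\|\omega)]$, and is finite in our setting since $\cC$ contains full-rank states.

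\emph{Step 3: combine and use additivity.} Substituting Step~2 into Step~1 and taking the infimum over $\sigma_n\in\cC_n$ yields
\begin{align*}
\zeta_{\rho,\cC}^n(\eps)\geq\frac{1}{n}\inf_{\sigma_n\in\cC_n}D_s\!\left(\rho^{\otimes n}\middle\|\sigma_n\right)-\frac{1}{n}\cdot\frac{s}{1-s}\log\frac{1}{\eps}=\frac{1}{n}D_{s,\cC}\!\left(\rho^{\otimes n}\right)-\frac{1}{n}\cdot\frac{s}{1-s}\log\frac{1}{\eps},
\end{align*}
and the additivity~\eqref{eq:coherence_additivity} replaces $\tfrac{1}{n}D_{s,\cC}(\rho^{\otimes n})$ by $D_{s,\cC}(\rho)$, which is exactly~\eqref{eqn:general_convex}.

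The principal obstacle is Step~1: the composite alternative $\cC_n$ is \emph{not} of the \iid convex-combination form $\sigma_n(\mu)=\int\sigma^{\otimes n}\mathrm{d}\mu(\sigma)$ handled by Lemma~\ref{MRelEnt}, so one cannot simply quote that lemma and must verify Sion's hypotheses directly. Once the minimax exchange is justified, the remaining ingredients, namely the Nagaoka--Petz converse and the additivity of $D_{s,\cC}$ from~\cite{Chitambar16}, are standard and plug in immediately.
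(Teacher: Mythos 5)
Your proof is correct and follows essentially the same route as the paper's: Sion's minimax theorem to reduce to a single alternative, Audenaert's inequality applied to the Neyman--Pearson test $\left\{\rho^{\otimes n}-2^{\lambda_n}\sigma_n\right\}_+$ at the same threshold, and the additivity of $D_{s,\cC}$ from~\cite{Chitambar16}. The ``Nagaoka-type R\'enyi converse'' you invoke in Step~2 is exactly what the paper derives in Equations~\eqref{eq:audenaert_one} and~\eqref{eq:audenaert_two}, just packaged as a standalone lemma.
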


\begin{proof}
It is straightforward to check with Sion's minimax theorem (Lemma~\ref{Sion}) that
\begin{align}\label{eq:sion_applied}
\inf_{\substack{0\leq M_n\leq1\\ \tr\left[M_n\rho^{\otimes n}\right]\geq1-\eps}}\sup_{\sigma_n\in\cC_n}\tr\left[M_n\sigma_n\right]=\sup_{\sigma_n\in\cC_n}\inf_{\substack{0\leq M_n\leq1\\ \tr\left[M_n\rho^{\otimes n}\right]\geq1-\eps}}\tr\left[M_n\sigma_n\right]\,.
\end{align}
Now, for $\lambda_n\in\mathbb{R}$ with $n\in\mathbb{N}$ we choose $M_n(\lambda_n):=\left\{\rho^{\otimes n}-2^{\lambda_n}\sigma_n\right\}_+$ where $\{\cdot\}_+$ denotes the projector on the eigenspace of the positive spectrum. We have $0\leq M_n(\lambda_n)\leq 1$ and by Audenaert's inequality, Lemma~\ref{lem:audenaert} in the Appendix, with $s\in(0,1)$ we get
\begin{align}\label{eq:audenaert_one}
\tr\left[(1-M_n(\lambda_n))\rho^{\otimes n}\right]\leq2^{(1-s)\lambda_n}\tr\left[\left(\rho^{\otimes n}\right)^s\sigma_n^{1-s}\right]=2^{(1-s)\left(\lambda_n-D_s\left(\rho^{\otimes n}\middle\|\sigma_n\right)\right)}\,.
\end{align}
Moreover, again Audenaert's inequality for $s\in(0,1)$ implies that
\begin{align}\label{eq:audenaert_two}
\tr\left[M_n(\lambda_n)\sigma_n\right]\leq2^{-s\lambda_n}\tr\left[\left(\rho^{\otimes n}\right)^s\sigma_n^{1-s}\right]=2^{-s\lambda_n-(1-s)D_s\left(\rho^{\otimes n}\middle\|\sigma_n\right)}\,.
\end{align}
Hence, choosing
\begin{align}
\text{$\lambda_n:=D_s\left(\rho^{\otimes n}\middle\|\sigma_n\right)+\log\eps^{\frac{1}{1-s}}$ with $M_n:=M_n(\lambda_n)$},
\end{align}
together with Equation~\eqref{eq:audenaert_one}, leads to $\tr\left[M_n\rho^{\otimes n}\right]\geq1-\eps$. Finally, Equation~\eqref{eq:sion_applied} together with Equation~\eqref{eq:audenaert_two} and the additivity property from Equation~\eqref{eq:coherence_additivity} leads to the claim of the lemma.
\end{proof}

A more refined analysis along the lines of above calculation also allows to determine the corresponding Hoeffding bound as well as the strong converse exponent (cf.~\cite{Audenaert2008,HT14}). The former gives an operational interpretation to the R\'enyi relative entropy of coherence $D_{s,\cC}(\rho)$, whereas the latter gives an operational interpretation to the sandwiched R\'enyi relative entropies of coherence~\cite{Chitambar16}
\begin{align}
\tilde{D}_{s,\cC}(\rho):=\inf_{\sigma\in\cC}\tilde{D}_s(\rho\|\sigma)\end{align} 
with the sandwiched R\'enyi relative entropies 
\begin{align}
\tilde{D}_s(\rho\|\sigma):=\frac{1}{s-1}\log\tr\left[\left(\sigma^{\frac{1-s}{2s}}\rho\sigma^{\frac{1-s}{2s}}\right)^s\right]
\end{align}
whenever either $s<1$ and $\rho$ is not orthogonal to $\sigma$ in Hilbert-Schmidt inner product or $s>1$ and the support of $\rho$ is contained in the support of $\sigma$~\cite{muller2013quantum,WWY14}. (Otherwise we set $D_s(\rho\|\sigma):=\infty$.) The crucial insight for the proof is again the additivity property $\tilde{D}_{s,\cC}\left(\rho^{\otimes n}\right)=n\cdot\tilde{D}_{s,\cC}(\rho)$ that was already shown in~\cite{Chitambar16}.


\subsection{Quantum mutual information}\label{ssec:QMI}

We will now discuss some discrimination problems that lead to an optimal error rate given by the quantum mutual information.
Using our main result from Section~\ref{CompStein} we find a solution to the following discrimination problem.
\hyptest{
\begin{description}
\item[Null hypothesis:] the state $\rho_{AB}^{\otimes n}$
\item[Alternative hypothesis:] the convex set of iid states \\ 
$\bar{\cT}_{A^n:B^n}:=\left\{\rho_A^{\otimes n}\otimes\int\sigma_B^{\otimes n}\;\mathrm{d}\mu(\sigma)\middle|\sigma_B\in S(\cH_B)\right\}$
\end{description}}{}

Namely, we have
\begin{align}
\bar{\zeta}_{\rho,\bar{\cT}}(0)=\lim_{n\to\infty}\frac{1}{n}\inf_{\mu\in\bar{\cT}}D\left(\rho^{\otimes n}_{AB}\middle\|\rho_A^{\otimes n}\otimes\int\sigma_B^{\otimes n}\;\mathrm{d}\mu(\sigma)\right)=I(A:B)_\rho\,.
\end{align}
Here the last equality follows from the easily checked identity
\begin{align}\label{eq:mutualinfo_identity}
I(A:B)_\rho=\inf_{\sigma\in S(\cH)}D(\rho_{AB}\|\sigma_A\otimes\sigma_B)\,.
\end{align}
More general composite discrimination problems leading to the quantum mutual information were solved in~\cite{HT14}. From Equation \ref{eq:mutualinfo_identity} one might expect that also the general problem of discriminating $\rho_{AB}^{\otimes n}$ against arbitrary $\sigma_{A^n}\otimes\sigma_{B^n}$ leads to the quantum mutual information; it is not known whether this holds indeed. We approach this question by further extending previous results to the following problem (cf.~the classical work~\cite{TH15}).
\hyptest{\begin{description}
\item[Null hypothesis:] the state $\rho_{AB}^{\otimes n}$
\item[Alternative hypothesis:] the set of states \\ 
$\cT_{A^n:B^n}:=\left\{\sigma_{A^n}\otimes\sigma_{B^n}\in S\left(\cH_{AB}^{\otimes n}\right)\middle|\sigma_{A^n}\vee\sigma_{B^n}\;\text{permutation inv.}\right\}$.
\end{description}}{}
The goal is again the quantification of the optimal asymptotic error exponent (as we will see the following limit exists)
\begin{align}
&\zeta_{\rho,\cT}^n(\eps):=-\frac{1}{n}\cdot\log\inf_{\substack{0\leq M_n\leq1\\ \tr\left[M_n\rho^{\otimes n}\right]\geq1-\eps}}\sup_{\sigma\otimes\sigma\in\cT_n}\tr\left[M_{A^nB^n}\sigma_{A^n}\otimes\sigma_{B^n}\right]\label{eq:error_mutual}\\
&\mathrm{with}\quad\zeta_{\rho,\cT}(\eps):=\lim_{n\to\infty}\zeta_{\rho,\cT}^n(\eps)\quad\mathrm{and}\quad\zeta_{\rho,\cT}(0):=\lim_{\eps\to0}\zeta_{\rho,\cT}(\eps)\,.
\end{align}
Note that the sets $\cT_{A^nB^B}$ are not convex and hence the minimax technique used in Section~\ref{sec:coherence} does not work here. However, following the ideas in~\cite{HT14,TH15} we can exploit the permutation invariance and use de Finetti reductions of the form~\cite{Hayashi2009,christandl09} to find the following.

\begin{prop}\label{prop:quantum_mutual}
For the discrimination problem as above we have 
\begin{align}
\zeta_{\rho,\cT}(0)=I(A:B)_\rho.
\end{align}
\end{prop}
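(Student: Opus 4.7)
The plan is to prove $\zeta_{\rho,\cT}(0) = I(A:B)_\rho$ via matching converse and achievability bounds, with the latter requiring a de Finetti reduction to bypass the non-convexity of $\cT_{A^n:B^n}$. The overall structure mimics the scheme used for the relative entropy of coherence in Proposition~\ref{thm:convex_achievability}.

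For the converse direction $\zeta_{\rho,\cT}(0) \leq I(A:B)_\rho$, I would observe that for any fixed $\sigma_A \in S(\cH_A)$ and $\sigma_B \in S(\cH_B)$, the iid state $\sigma_A^{\otimes n}\otimes\sigma_B^{\otimes n}$ is permutation invariant and therefore lies in $\cT_{A^n:B^n}$. Applying the relative-entropy/POVM monotonicity argument underlying Proposition~\ref{lem:converse_lemma} to the pair $(\rho_{AB}^{\otimes n},\sigma_A^{\otimes n}\otimes\sigma_B^{\otimes n})$, using additivity of the quantum relative entropy, and minimising over $(\sigma_A,\sigma_B)$, one obtains
\begin{align*}
\zeta_{\rho,\cT}^n(\eps) \leq \frac{\inf_{\sigma_A,\sigma_B}D(\rho_{AB}\|\sigma_A\otimes\sigma_B)+\tfrac{\log 2}{n}}{1-\eps}.
\end{align*}
The identity \eqref{eq:mutualinfo_identity} collapses the infimum to $I(A:B)_\rho$, so the converse follows after letting $n\to\infty$ and $\eps\to 0$.

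For achievability the plan is to reduce to Theorem~\ref{thm:main} via a de Finetti reduction~\cite{christandl09,Hayashi2009}: for every permutation-invariant state $\sigma_{X^n}\in S(\cH_X^{\otimes n})$ there exists a probability measure $\mu_X$ such that
\begin{align*}
\sigma_{X^n} \leq \poly(n)\int \tau_X^{\otimes n}\,\mathrm{d}\mu_X(\tau_X).
\end{align*}
Applying this bound to both factors and using the natural identification $\tau_A^{\otimes n}\otimes\tau_B^{\otimes n}\cong(\tau_A\otimes\tau_B)^{\otimes n}$ on $(\cH_A\otimes\cH_B)^{\otimes n}$, every element of $\cT_{A^n:B^n}$ is dominated, up to a $\poly(n)$ prefactor, by a convex combination of iid product states $(\tau_A\otimes\tau_B)^{\otimes n}$. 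This places the problem within the scope of Theorem~\ref{thm:main} with $\cS=\{\rho_{AB}\}$ and $\cT=\{\sigma_A\otimes\sigma_B : \sigma_A\in S(\cH_A),\,\sigma_B\in S(\cH_B)\}$; Lemma~\ref{relEntropyGasym} then collapses the regularised rate to the single-letter quantity $\inf_{\sigma_A,\sigma_B}D(\rho_{AB}\|\sigma_A\otimes\sigma_B)=I(A:B)_\rho$, while the $\log\poly(n)/n$ penalty introduced by the de Finetti bound vanishes asymptotically.

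I expect the main obstacle to be setting up the de Finetti step cleanly on both factors simultaneously: one has to ensure that the two bounds can be tensored without destroying the product structure nor the permutation-invariance hypothesis used to invoke them, which is precisely what permits the reduction to the convex composite alternative of Theorem~\ref{thm:main} and hence to a single-letter mutual information rate.
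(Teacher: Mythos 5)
Your converse is fine and matches the paper's. The achievability direction, however, has two genuine gaps. First, the alternative hypothesis $\cT_{A^n:B^n}$ only requires that $\sigma_{A^n}$ \emph{or} $\sigma_{B^n}$ be permutation invariant, so you cannot apply the de Finetti domination to both factors: the other factor is a completely arbitrary state, for which no bound of the form $\sigma_{B^n}\leq\poly(n)\int\tau_B^{\otimes n}\,\mathrm{d}\mu$ is available. This is exactly the obstacle you flag at the end, and it is not a technicality --- resolving it is the substance of the paper's argument. The paper does not route achievability through Theorem~\ref{thm:main} at all; it directly takes the test $M_{A^nB^n}(\lambda_n)=\{\rho_{AB}^{\otimes n}-2^{\lambda_n}\omega_{A^n}\otimes\omega_{B^n}\}_+$ built from the universal symmetric states $\omega_{A^n},\omega_{B^n}$, controls the type~I error by Audenaert's inequality (Lemma~\ref{lem:audenaert}), bounds the permutation-invariant factor by $\sigma_{A^n}\leq\poly(n)\,\omega_{A^n}$ via Schur--Weyl, and then uses the permutation invariance of the test itself to replace the arbitrary factor $\sigma_{B^n}$ by its symmetrization, after which the same Schur--Weyl bound applies. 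Without an analogous symmetrization step, your reduction only covers the subfamily of $\cT_{A^n:B^n}$ in which both marginals are permutation invariant.

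Second, even on that subfamily your single-letterization is incorrect as stated: Lemma~\ref{relEntropyGasym} concerns relative entropy distances to the image of an idempotent, unital, trace-preserving map, and the set of product states $\{\sigma_A\otimes\sigma_B\}$ is not such an image (it is not even convex). The regularized rate delivered by Theorem~\ref{thm:main} therefore does not collapse by that lemma. The paper instead single-letterizes at the R\'enyi level, proving additivity of $I_s(A:B)_\rho=\inf_{\sigma_A\otimes\sigma_B}D_s(\rho_{AB}\|\sigma_A\otimes\sigma_B)$ on tensor products via the quantum Sibson identity \eqref{sibson}, and only then takes $s\to1$ together with \eqref{eq:mutualinfo_identity}. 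If you want to keep your route, you must replace the appeal to Lemma~\ref{relEntropyGasym} by a direct argument, e.g.\ $\int(\sigma_A\otimes\sigma_B)^{\otimes n}\,\mathrm{d}\mu\leq\poly(n)\,\omega_{A^n}\otimes\omega_{B^n}$ followed by operator monotonicity of the logarithm and the estimate $D(\rho_{AB}^{\otimes n}\|\omega_{A^n}\otimes\omega_{B^n})\geq n\,I(A:B)_\rho$; that computation is necessary and is not supplied by anything you cite.
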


The converse direction $\leq$ follows exactly as in Lemma~\ref{lem:converse_lemma}, together with Equation~\eqref{eq:mutualinfo_identity} to make the expression single-letter. The achievability direction $\geq$ follows from the following lemma by taking the limits $n\to\infty$, $s\to1$, $\eps\to0$ and then applying Equation~\eqref{eq:mutualinfo_identity}.

\begin{lemma}
For the discrimination problem as above with $n\in\mathbb{N}$ and $\eps\in(0,1)$ we have for $s\in(0,1)$ that
\begin{align}
\zeta_{\rho,\cT}^n(\eps)\geq\inf_{\sigma\in S(\cH)}D_s\left(\rho_{AB}\middle\|\sigma_A\otimes\sigma_B\right)-\frac{1}{n}\cdot\frac{s}{1-s}\log\frac{1}{\eps}-\frac{\log\poly(n)}{n}\,.
\end{align}
\end{lemma}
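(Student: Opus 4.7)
The idea is to adapt the proof of Lemma~\ref{thm:general_convex} by inserting a de Finetti reduction that handles the non-convexity of $\cT_{A^n:B^n}$, following the route indicated in the text (cf.~\cite{HT14,TH15}). First, I would reduce to the case where both $\sigma_{A^n}$ and $\sigma_{B^n}$ are permutation invariant. Since $\rho_{AB}^{\otimes n}$ is invariant under joint permutations $P_\pi^A\otimes P_\pi^B$, the optimal test $M_n$ may be chosen jointly permutation invariant (with type-I error unchanged). Then, for any alternative with, say, $\sigma_{A^n}$ permutation invariant, averaging the $B$ systems over permutations inside the trace $\tr[M_n\,\sigma_{A^n}\otimes\sigma_{B^n}]$ leaves the value unchanged (joint invariance of $M_n$ together with single invariance of $\sigma_{A^n}$) while symmetrizing $\sigma_{B^n}$. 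Hence WLOG both factors are permutation invariant.

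Next, I would invoke a de Finetti reduction~\cite{christandl09,Hayashi2009} applied separately to each factor to obtain the operator bound
\begin{align*}
\sigma_{A^n}\otimes\sigma_{B^n}\leq\poly(n)\int\!\!\int\tau_A^{\otimes n}\otimes\tau_B^{\otimes n}\,d\nu_A(\tau_A)\,d\nu_B(\tau_B),
\end{align*}
reducing the problem to testing $\rho_{AB}^{\otimes n}$ against the convex set $\bar\cT_n:=\big\{\int\tau_A^{\otimes n}\otimes\tau_B^{\otimes n}\,d\mu\big\}$ at a $\poly(n)$ cost. Exactly as in Lemma~\ref{thm:general_convex}, Sion's minimax (Lemma~\ref{Sion}) together with Audenaert's inequality (Lemma~\ref{lem:audenaert}) applied to $M_n(\lambda):=\{\rho_{AB}^{\otimes n}-2^\lambda\bar\sigma\}_+$ with $\lambda:=D_s(\rho^{\otimes n}\|\bar\sigma)+\frac{\log\eps}{1-s}$ then gives the bound
\begin{align*}
\zeta_{\rho,\cT}^n(\eps)\geq\frac{1}{n}\inf_{\bar\sigma\in\bar\cT_n}D_s\!\left(\rho_{AB}^{\otimes n}\middle\|\bar\sigma\right)-\frac{1}{n}\cdot\frac{s}{1-s}\log\frac{1}{\eps}-\frac{\log\poly(n)}{n}.
\end{align*}

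The main obstacle is the final \emph{single-letterization} step, namely showing that the right-hand-side infimum reduces to $\inf_{\sigma\in S(\cH)}D_s(\rho_{AB}\|\sigma_A\otimes\sigma_B)$ up to a $\log\poly(n)/n$ correction. Because $D_s$ is convex in its second argument for $s\in(0,1)$, convex mixtures of iid extreme points can a priori decrease the divergence, so additivity on tensor powers $D_s(\rho^{\otimes n}\|\tau_A^{\otimes n}\otimes\tau_B^{\otimes n})=nD_s(\rho_{AB}\|\tau_A\otimes\tau_B)$ does not by itself deliver the claimed inequality. I expect this step to proceed by combining the tensor-power additivity at the extreme points with a pinching argument against a universal de Finetti product state $\omega_n^A\otimes\omega_n^B$ whose spectrum has only $\poly(n)$ distinct components (by Schur--Weyl duality, as in Lemma~\ref{pinching}), so that the joint loss from mixing and pinching can be absorbed into the $\log\poly(n)/n$ term, in analogy with the commutative treatment of~\cite{TH15}. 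The claimed bound then follows, and Prop.~\ref{prop:quantum_mutual} is recovered on taking $n\to\infty$, $s\to 1$, $\eps\to 0$ and invoking Equation~\eqref{eq:mutualinfo_identity}.
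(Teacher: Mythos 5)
Your overall architecture --- symmetrize, dominate the permutation-invariant alternatives by the universal de Finetti product state $\omega_{A^n}\otimes\omega_{B^n}$ at a $\poly(n)$ cost, then run Audenaert's inequality on a test of the form $\{\rho_{AB}^{\otimes n}-2^{\lambda_n}\omega_{A^n}\otimes\omega_{B^n}\}_+$ --- matches the paper's proof. The paper simply skips your Sion minimax detour: since the test is built from the universal state it works uniformly over all alternatives, which is precisely how the paper sidesteps the non-convexity of $\cT_{A^n:B^n}$ that it explicitly points out.

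The genuine gap is the single-letterization, which you correctly flag as the main obstacle but then resolve only by hope. After the de Finetti domination you are left with $D_s\left(\rho_{AB}^{\otimes n}\middle\|\omega_{A^n}\otimes\omega_{B^n}\right)$ (or, on your route, with $\inf_{\bar\sigma}D_s(\rho_{AB}^{\otimes n}\|\bar\sigma)$ over mixtures of iid products), and $\omega_{A^n}\otimes\omega_{B^n}$ is a product of two permutation-invariant states that is \emph{not} of iid form, so ``tensor-power additivity at the extreme points'' says nothing about it. Likewise, the pinching mechanism of Lemma~\ref{pinching} converts $D$ into $\DM$ for permutation-invariant second arguments; it is not a tool for turning an $n$-copy $D_s$ quantity into $n$ times a single-letter one, and after pinching the first argument would no longer be iid anyway. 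What is actually needed --- and what the paper uses --- is the additivity on tensor products of the R\'enyi mutual information $I_s(A:B)_\rho=\inf_{\sigma_A\otimes\sigma_B}D_s(\rho_{AB}\|\sigma_A\otimes\sigma_B)$, where the infimum at blocklength $n$ runs over \emph{arbitrary} (not necessarily iid) product states $\sigma_{A^n}\otimes\sigma_{B^n}$; this yields $\inf_{\sigma\otimes\sigma\in\cT_n}D_s\left(\rho_{AB}^{\otimes n}\middle\|\sigma_{A^n}\otimes\sigma_{B^n}\right)\geq n\,I_s(A:B)_\rho$ and closes the argument. That additivity is proved via the quantum Sibson identity $D_s(\rho_{AB}\|\sigma_A\otimes\sigma_B)=D_s(\rho_{AB}\|\sigma_A\otimes\bar\sigma_B)+D_s(\bar\sigma_B\|\sigma_B)$ with an explicit optimizer $\bar\sigma_B$, exactly as in the commutative treatment of~\cite{TH15} that you invoke ``in analogy'' --- the analogy is the right one, but the Sibson identity is the concrete tool doing the work there too, and without identifying it your argument does not close.
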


\begin{proof}
We choose
\begin{align}
M_{A^nB^n}(\lambda_n):=\left\{\rho_{AB}^{\otimes n}-2^{\lambda_n}\omega_{A^n}\otimes\omega_{B^n}\right\}_+\quad \nonumber\\ 
\mathrm{with}\quad\omega_{A^n}:={n+|A|^2-1\choose n}^{-1}\cdot\tr_{\tilde{A}^n}\left[P^{\mathrm{Sym}}_{A^n\tilde{A}^n}\right]\,,
\end{align}
where $P^{\mathrm{Sym}}_{A^n\tilde{A}^n}$ denotes the projector onto the symmetric subspace of $\cH_A^{\otimes n}\otimes\cH_{\tilde{A}}^{\otimes n}$ with $|A|=|\tilde{A}|$ (denoting the dimension of $\cH_A$ by $|A|$), and similarly for $B^n$. Since $\omega_{A^n}\otimes\omega_{B^n}$ is permutation invariant we get together with Audenaert's inequality (Appendix Lemma~\ref{lem:audenaert}) that
\begin{align}
\tr\left[(1-M_{A^nB^n}(\lambda_n))\rho^{\otimes n}_{AB}\right]&\leq2^{(1-s)\lambda_n}\tr\left[\left(\rho^{\otimes n}_{AB}\right)^s\left(\omega_{A^n}\otimes\omega_{B^n}\right)^{1-s}\right] \nonumber \\ &\leq2^{(1-s)\left(\lambda_n-\inf_{\sigma\otimes\sigma\in\cT_n}D_s\left(\rho^{\otimes n}_{AB}\middle\|\sigma_{A^n}\otimes\sigma_{B^n}\right)\right)}\,.
\end{align}
Let's assume for the reminder of the proof that $\sigma_{A^n}$ is the permutation invariant state (the proof in the other case works identically). 
Now, we have by Schur-Weyl duality that $\sigma_{A^n}\leq{n+|A|^2-1\choose n}\cdot\omega_{A^n}$ for all permutation invariant $\sigma_{A^n}$ (see, e.g., \cite[Lemma~1]{HT14}). The idea is to make also $\sigma_{B^n}$ permutation invariant, by using the fact that the measurement itself is permutation invariant. Then we can again use Audenaert's inequality (Appendix Lemma~\ref{lem:audenaert}) and we find
\begin{align}
&\tr\left[M_{A^nB^n}(\lambda_n)\left(\sigma_{A^n}\otimes\sigma_{B^n}\right)\right] \\ 
&=\tr\left[M_{A^nB^n}(\lambda_n)\left(\sigma_{A^n}\otimes\left(\sum_{\pi\in S_n}U_{B^n}(\pi)\sigma_{B^n}U_{B^n}^\dagger(\pi)\right)\right)\right]\notag\\
&\leq\underbrace{{n+|A|^2-1\choose n}{n+|B|^2-1\choose n}}_{=:\;p(n)\;\leq\;\poly(n)}\cdot\tr\left[M_{A^nB^n}(\lambda_n)\left(\omega_{A^n}\otimes\omega_{B^n}\right)\right]\notag\\
&\leq p(n)\cdot2^{-s\lambda_n}\tr\left[\left(\rho^{\otimes n}_{AB}\right)^s\left(\omega_{A^n}\otimes\omega_{B^n}\right)^{1-s}\right]\notag\\
&\leq p(n)\cdot2^{-s\lambda_n-(1-s)\inf_{\sigma\otimes\sigma\in\cT_n}D_s\left(\rho^{\otimes n}_{AB}\middle\|\sigma_{A^n}\otimes\sigma_{B^n}\right)}\,,\label{eq:mutual_error-last}
\end{align}
where $S_n$ denotes the symmetric group. \\
We now choose
\begin{align}
\text{$\lambda_n:=\inf_{\sigma\otimes\sigma\in\cT_n}D_s\left(\rho^{\otimes n}_{AB}\middle\|\sigma_{A^n}\otimes\sigma_{B^n}\right)+\log\eps^{\frac{1}{1-s}}$ with $M_{A^nB^n}:=M_{A^nB^n}(\lambda_n)$,}
\end{align}
from which we get $\tr\left[M_{A^nB^n}\rho_{AB}^{\otimes n}\right]\geq1-\eps$ and together with Equation~\eqref{eq:error_mutual} and Equation~\eqref{eq:mutual_error-last} that
\begin{align}
\zeta_{\rho,\cT}^n(\eps)\geq\inf_{\sigma\otimes\sigma\in\cT_n}D_s\left(\rho_{AB}^{\otimes n}\middle\|\sigma_{A^n}\otimes\sigma_{B^n}\right)-\frac{1}{n}\cdot\frac{s}{1-s}\log\frac{1}{\eps}-\frac{\log p(n)}{n}\,.
\end{align}
To deduce the claim it is now sufficient to argue that the R\'enyi quantum mutual information\footnote{This definition is slightly different from the R\'enyi quantum mutual information discussed in~\cite{HT14}.}
\begin{align}
I_s(A:B)_\rho:=\inf_{\sigma\otimes\sigma\in S(\cH)}D_s\left(\rho_{AB}\middle\|\sigma_A\otimes\sigma_B\right)
\end{align}
is additive on tensor product states. This, however, follows exactly as in the classical case~\cite[App.~A-C]{TH15} from the (quantum) Sibson identity~\cite[Lemma~3]{SW13a}
\begin{align}\label{sibson}
D_s\left(\rho_{AB}\middle\|\sigma_A\otimes\sigma_B\right)=D_s\left(\rho_{AB}\middle\|\sigma_A\otimes\bar{\sigma}_B\right)+D_s\left(\bar{\sigma}_B\middle\|\sigma_B\right)\quad\\ 
\mathrm{with}\quad\bar{\sigma}_B:=\frac{\left(\tr_A\left[\rho_{AB}^s\sigma_A^{1-s}\right]\right)^\frac{1}{s}}{\tr\left[\left(\tr_A\left[\rho_{AB}^s\sigma_A^{1-s}\right]\right)^\frac{1}{s}\right]}\,.\nonumber
\end{align}
\end{proof}

A more refined analysis of the above calculation, along the work of~\cite{HT14}, also allows to determine the Hoeffding bound for the product testing discrimination problem as above. However, for the strong converse exponent, we are missing the additivity of the sandwiched R\'enyi quantum mutual information
\begin{align}
\tilde{I}_s(A:B)_\rho:=\inf_{\sigma\otimes\sigma\in S(\cH)}\tilde{D}_s\left(\rho_{AB}\middle\|\sigma_A\otimes\sigma_B\right)
\end{align}
on tensor product states.


\section{The symmetric case}

So far we have focused on asymmetric hypothesis testing in the setting leading to a composite Stein's Lemma. 
Another closely related problem is that of composite symmetric hypothesis testing where, it is well known that in the case of fixed iid states $\rho^{\otimes n}$ vs.~$\sigma^{\otimes n}$, the optimal error exponent is given by the quantum Chernoff bound~\cite{ACMBMAV07,NS09} (see also Section~\ref{discrimination})
\begin{align}
C(\rho,\sigma)=\sup_{0\leq s\leq1}-\log\tr\left[\rho^s\sigma^{1-s}\right]\,.
\end{align}
However, the discrimination problem of testing convex combinations of iid states $\rho^{\otimes n}$ with $\rho\in\cS$ against convex combinations of iid states $\sigma^{\otimes n}$ with $\sigma\in\cT$ is still unsolved and it was conjectured~\cite{AM14} that, as in the commutative case, we have
\begin{align}\label{eq:conjCB}
C_{\cS,\cT}=\inf_{\substack{\rho\in\cS\\\sigma\in\cT}}C(\rho,\sigma)\,.
\end{align}
The most recent progress~\cite{AM14} states that in the case of a fixed null hypothesis $\cS=\{\rho\}$ the rate in Equation~\eqref{eq:conjCB} is achievable up to a factor of two 
\begin{align}
C_{\cS,\cT}\geq \frac{1}{2} \inf_{\substack{\rho\in\cS\\\sigma\in\cT}}C(\rho,\sigma)\,.
\end{align}
A very related problem that allows for an exact single-letter solution is that of multiple state discrimination, with more than two hypothesis (see Section~\ref{discrimination} and  also~\cite{Li16}). We note that extending the proof of the fixed state iid setting one can show that the following rate is achievable in the composite setting (assuming that the limit exists)
\begin{align}\label{eq:symmetric_regularized}
C_{\cS,\cT} \geq \sup_{0\leq s\leq 1} \lim_{n\to\infty}\frac{1}{n}\inf_{\substack{\nu\in\cS\\\mu\in\cT}}-\log\tr\left[\left(\int\rho^{\otimes n}\;\mathrm{d}\nu(\rho)\right)^s\left(\int\sigma^{\otimes n}\;\mathrm{d}\mu(\sigma)\right)^{1-s}\right]\,. 
\end{align}
However, our results about composite asymmetric hypothesis testing raise the question whether it is indeed possible to simplify Equation~\eqref{eq:symmetric_regularized} to the conjecture in Equation~\eqref{eq:conjCB} or whether the regularized version is already optimal. 
In Section~\ref{RegNeeded} we show that optimization over a single system cannot suffice in the case of the composite Stein's Lemma by connecting the asymmetric hypothesis setting to problems on recoverability inequalities and therefore to a recently found counterexample. Generally, it is very difficult to prove that regularizations are necessary and finding a similar connection for symmetric hypothesis testing might be very useful. Finding such a connection remains an interesting open problem.


\chapter{Discrimination power of a quantum detector}\label{discPower}


In this chapter, we will discuss the optimal rates giving the discrimination power of a quantum measurement, as it was defined in Section~\ref{OptOverStates}. Aiming for the ultimate rate, we will allow for asymptotically many uses~$n$ of the measurement device and optimize over all possible input strategies, this can include entangled input states as well as adaptively chosen ones. 
We will prove that, in the regime of an asymptotically large $n$, pairs of entangled states provide no advantage over \iid states of the form $\rho^{\otimes n}$. This is in sheer contrast with the dual problem of state discrimination where the measurement is optimized for fixed \iid states; there, we have strong numerical evidence \cite{calsamiglia_local_2010} that  collective non-separable measurements are required to attain the corresponding optimal exponential rate of the error probability, given by the quantum Chernoff bound~\cite{ACMBMAV07,CMMAB08} (see also Chapter \ref{hypoTesting}). Furthermore, we will also show that while adaptively chosen inputs provide an advantage for a finite $n$, for $n$ going to infinity adaptive strategies do not help. In the proof, we approach state discrimination as a communication problem and allow for adaptive protocols. We argue that these adaptive strategies are general enough to include all those strategies that use entangled input states. Finally we use a result of classical channel discrimination to show that asymptotically \iid states are optimal. 
The optimal rates and their proof can be found in the next section, Section \ref{optRates}.
In Section \ref{SomeProp} we will discuss some properties of the rates, in particular their behavior under mixing of the POVMs. 
Next, we will discuss the difficulties of finding the hypothesis testing errors in the finite $n$ case in Section \ref{finiteNumber}. This will include some concrete examples where adaptive strategies do outperform \iid ones. Finally we will calculate the discrimination power for some example POVMs, namely covariant measurements and the noisy Stern-Gerlach measurement, in Section \ref{DPexamples}.

\section{The optimal rates for discrimination power}\label{optRates}

We will now turn to giving the optimal error rates promised earlier. As described in Section \ref{OptOverStates} the direct approach of calculating the errors involved becomes very quickly infeasible (because of the many possible combinations when using a measurement $n$ times, see also the example in Section~\ref{finiteNumber}). To evaluate the error rates defined in Equations~\eqref{xiCB}, \eqref{xiSL} and \eqref{xiHB} we will thus follow an alternative route.
For most of this section, we will focus on symmetric hypothesis testing, but later we will also give the rates in the asymmetric case and we remark that the proof follows exactly the same arguments as the one presented here.

We are now ready to state our main result. 
\begin{thm}\label{DualCBtheorem}
The optimal exponential rate defined in Equation~\eqref{xiCB} is given by the classical Chernoff Bound:
\begin{equation}
\xi_{\textrm CB}=-\min_{(\rho,\sigma)}\;\min_{0\leq s\leq 1}\phi(s | P\|{\overline P}), 
\label{xi=}
\end{equation}
where $P_k=\tr(E_k \rho)$ and ${\overline P}_k =\tr(E_k \sigma)$ are the outcome probability distributions of  (a single use of)  the POVM~${\mathscr E}=\{E_k\}_{k=1}^m$. This rate can be attained using i.i.d. states, $\rho^{\otimes n}$ and $\sigma^{\otimes n}$.
\end{thm}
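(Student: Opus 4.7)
The plan splits into two steps: achievability using iid probes, and a matching converse that rules out any advantage from entanglement, intermediate quantum operations, or adaptivity.

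For achievability, I would probe with $\rho^{\otimes n}$ and $\sigma^{\otimes n}$ and apply $\mathscr E$ independently on each copy. The outcome statistics are then iid with law $P_k=\tr(E_k\rho)$ under $H_0$ and $\overline P_k=\tr(E_k\sigma)$ under $H_1$. The classical Chernoff bound for iid symmetric hypothesis testing yields that $p_{\textrm err}$ decays with exponent $-\min_{0\le s\le 1}\phi(s|P\|\overline P)$, and taking the best pair $(\rho,\sigma)$ delivers the right-hand side of~\eqref{xi=} as a lower bound on $\xi_{\textrm CB}$.

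The nontrivial direction is the converse. The key observation is that any quantum strategy using $n$ copies of $\mathscr E$, with possibly entangled inputs and arbitrary intermediate operations, is statistically equivalent to a purely classical adaptive protocol at the level of outcome strings. Indeed, the chain rule
\begin{equation*}
\tr\big[(E_{k_1}\otimes\cdots\otimes E_{k_n})\,\rho^n\big]=\prod_{i=1}^n\tr\big[E_{k_i}\,\tilde\rho_i^{(k_1,\dots,k_{i-1})}\big],
\end{equation*}
where $\tilde\rho_i^{(k_1,\dots,k_{i-1})}$ is the normalized post-measurement marginal on the $i$-th subsystem, shows that the joint outcome distribution from an entangled $\rho^n$ can be generated by an adaptive sequence of single-system probe states chosen as a function of past outcomes. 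The same holds under $H_1$ for $\sigma^n$, and any intermediate completely positive map can be absorbed into the probe at the following step. The discrimination task thus reduces to distinguishing the two classical channels $W_j(k\mid(\rho,\sigma))=\tr(E_k\tau_j)$ with $\tau_0=\rho,\tau_1=\sigma$, on input alphabet $\setS{\cH_d}\times\setS{\cH_d}$, by adaptively choosing input pairs based on past outputs.

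Finally, I would invoke the known result on adaptive discrimination of two classical channels (Hayashi and coauthors), which states that the optimal asymptotic symmetric error exponent is the single-letter quantity
\begin{equation*}
\max_{(\rho,\sigma)}\bigl[-\min_{0\le s\le 1}\phi(s|P\|\overline P)\bigr],
\end{equation*}
already attained by non-adaptive iid use of the optimal input. This matches the achievability bound and establishes~\eqref{xi=}; the asymmetric rates $\xi_{\textrm SL}$ and $\xi_{\textrm HB}$ in~\eqref{xiSL}--\eqref{xiHB} follow by the same reduction, replacing the classical Chernoff result by the corresponding classical Stein or Hoeffding statement for channel discrimination. The main obstacle is the invocation of the classical channel-discrimination theorem: the reduction itself is bookkeeping via the chain rule, but proving that adaptive input choice cannot beat iid probing asymptotically is a genuine classical strong-converse argument, and it is this ingredient, rather than any quantum-specific step, that carries the analytical weight of the theorem.
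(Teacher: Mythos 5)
Your proposal is correct and follows essentially the same route as the paper: reduce entangled (and arbitrarily pre-processed) inputs to an adaptive single-system probing protocol via the post-measurement marginals (your chain-rule identity is exactly the paper's Lemma on adaptive simulation), recast the problem as adaptive discrimination of two classical channels with input alphabet $\setS{\cH_d}\times\setS{\cH_d}$, and invoke Hayashi's result that adaptivity gives no asymptotic advantage over a fixed i.i.d.\ input, with the same remark carrying over to the Stein and Hoeffding variants. No gaps.
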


The main ingredient of the proof of Theorem~\ref{DualCBtheorem} is to show that our problem is a particular case of {\em classical} channel discrimination. This will allow us to complete the proof using a result by Hayashi~\cite{H08} on the asymptotics of classical channel discrimination with adaptive strategies.

To this end, we momentarily broaden the scope of our original problem. First, we view hypothesis testing as a communication protocol where Alice (the state preparer) sends one of two possible messages, $H_0$,~$H_1$, to~Bob (the measurer) using suitable states in~$\setS{\cH_d^{\otimes n}}$. Bob is allowed to perform~$n$ measurements with his detector to identify with minimum error which of the messages Alice sent. 
Second, in this communication context it is natural to allow classical feedback from Bob to Alice after each measurement. This enables an adaptive protocol (see Figure~\ref{Fig:adap1}) in which
Alice sends one state at a time to Bob's detector and waits for him to provide feedback on the obtained outcome. Alice uses this information to prepare the succeeding state in a way that minimizes the identification error. 
Such protocols are widely used in quantum information theory~\cite{DFLS16, BSST99,PL16,takeoka2016optimal,GM2000,hayashi_comparison_2011}, particularly in quantum channel discrimination~\cite{CDP08,HHLW09,CMW14,PL16,takeoka2016optimal}, where we know that adaptive strategies can improve the performance in the finite repetition case as well as asymptotically (in the sense that there exist channels that can be perfectly discriminated with a finite number of copies adaptively, but require an infinite number with a non-adaptive strategy~\cite{HHLW09}), while for classical channels an improvement is given for finite repetitions but not asymptotically~\cite{H08}. 
%
%
The next Lemma follows from the structure of~$E_{{\textbf k}^n}$ in Equation~\eqref{Ea n}. 
\begin{lemma}\label{lemma}
For {\em any} (possibly entangled) \mbox{$\rho^n\in\setS{\cH_d^{\otimes n}}$} (analogously for $\sigma^n$) there is an adaptive protocol that gives the same outcome probability distribution when applying $n$ copies of a fixed measurement. 
\end{lemma}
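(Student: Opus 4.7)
The plan is to exploit the product structure $E_{{\textbf k}^n} = \bigotimes_{i=1}^n E_{k_i}$ from Eq.~(\ref{Ea n}) together with the chain rule for joint probabilities. For any (possibly entangled) $\rho^n \in \setS{\cH_d^{\otimes n}}$, the full outcome distribution under $\mathscr E^{\otimes n}$ reads $P({\textbf k}^n) = \tr[E_{{\textbf k}^n}\rho^n]$, which factorises as $P({\textbf k}^n) = \prod_{i=1}^n P(k_i\,|\,k_1,\ldots,k_{i-1})$.

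The main step is to realise each conditional probability as a single-copy Born rule. I would define, at round $i$, the one-system state
\begin{equation*}
\rho_i(k_1,\ldots,k_{i-1}) := \frac{\tr_{\overline{i}}\!\bigl[\bigl(E_{k_1}\otimes\cdots\otimes E_{k_{i-1}}\otimes\id^{\otimes(n-i+1)}\bigr)\rho^n\bigr]}{P(k_1,\ldots,k_{i-1})},
\end{equation*}
where $\tr_{\overline{i}}$ denotes the partial trace over every tensor factor except the $i$-th. A standard cyclicity argument (write $E_{k_j}=E_{k_j}^{1/2}E_{k_j}^{1/2}$ and move one square root to the right of $\rho^n$) shows that the numerator is a positive operator, so $\rho_i(k_1,\ldots,k_{i-1})$ is a bona fide density operator on $\cH_d$ (on zero-probability histories it can be set to any fixed state without consequence). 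A direct expansion of $\tr[E_{k_i}\,\rho_i(k_1,\ldots,k_{i-1})]$, combining the partial trace and the single-system trace into the full trace, then yields exactly $P(k_1,\ldots,k_i)/P(k_1,\ldots,k_{i-1}) = P(k_i\,|\,k_1,\ldots,k_{i-1})$.

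This defines the adaptive protocol: after receiving from Bob the classical feedback $k_1,\ldots,k_{i-1}$, Alice prepares the single-copy state $\rho_i(k_1,\ldots,k_{i-1})$ and sends it to Bob, who measures it with the fixed POVM $\mathscr E$ and produces the outcome $k_i$. By construction the joint distribution of outcomes of the protocol equals $\prod_{i=1}^n P(k_i\,|\,k_1,\ldots,k_{i-1}) = P({\textbf k}^n)$, matching the one obtained by measuring the entangled $\rho^n$ with $\mathscr E^{\otimes n}$; the identical construction applied to $\sigma^n$ handles the other hypothesis. There is no substantive obstacle beyond the positivity bookkeeping for the reduced operator, which is settled by the cyclicity remark above.
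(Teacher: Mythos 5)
Your proposal is correct and follows essentially the same route as the paper: at each round Alice prepares the reduced state of $\rho^n$ on the current system conditioned on the previous outcomes, which reproduces the conditional Born probabilities and hence, by the chain rule, the full joint distribution. The only cosmetic difference is that the paper works with the unnormalized operators $\rho'_{{\textbf k}^{s-1}}=\tr_{[n]\setminus s}[(E_{{\textbf k}^{s-1}}\otimes\id)\rho^n]$, letting the history probability sit in the normalization, whereas you normalize explicitly; your added remark on positivity via the square-root/cyclicity argument is a correct piece of bookkeeping the paper leaves implicit.
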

\begin{proof}
To prove Lemma~\ref{lemma}, we define $\rho'_\emptyset:=\tr_{[n] \setminus 1}(\rho^n)$, where we denote by $[n]\!\!\setminus\!\! s$ the set $\{1,2,\dots,s-1,s+1,\dots n\}$, $s=1,2,\dots n$. Then,~$\rho^n$ and  $\rho'_\emptyset$ give the same probability distribution to the outcomes of Bob's first measurement: 
\begin{align}
P({k_1}|\rho^n):=\tr[(E_{k_1}\otimes\id)\rho^n]=\tr(E_{k_1}\rho'_\emptyset).
\end{align} 
With Bob's feedback (the value of $k_1$), Alice can next prepare the second (unnormalized) state as $\rho'_{k_1}:=\tr_{[n]\setminus 2} [(E_{k_1}\otimes\id)\rho^n]$. So, $\rho^n$ and $\rho'_{k_1}$ give the same outcome probabilities up to Bob's second  measurements: 
\begin{align}
P({{\textbf k}^2}|\rho^n):=\tr [(E_{{\textbf k}^2}\otimes\id) \rho^n]=\tr(E_{k_2}\rho'_{k_1}).
\end{align} 
Note that the probabilities of previous outcomes are implicit in the normalization of $\rho'_{k_1}$.
We readily see that if Alice's preparation at an arbitrary step $s$ is
\begin{equation}
\rho'_{{\textbf k}^{s-1}}:=\tr_{[n]\setminus s}\left[\left( E_{{\textbf k}^{s-1}}\otimes\id\right) \rho^n\right],
\label{rho'}
\end{equation}
where we used the convention $E_{{\textbf k}^0}=E_{\emptyset}:=\id$, then 
\begin{align}
P({{\textbf k}^{s}}|\rho^n)=\tr [(E_{{\textbf k}^{s}}\otimes\id)\rho^n]
=
\tr (E_{k_s}\rho'_{{\textbf k}^{s-1}}),
\end{align}
for $s=1,2,\dots,n$ (obviously, the analogous relation holds for $\sigma^n$, $\sigma'_{{\textbf k}^{s-1}}$). This completes the proof of the lemma.
\end{proof}
 
Adaptive protocols are thus more general than those in which $\rho^n$ is entangled, so the optimal protocol can be chosen to be adaptive with no loss of generality.

\begin{figure}
\includegraphics[width=26em]{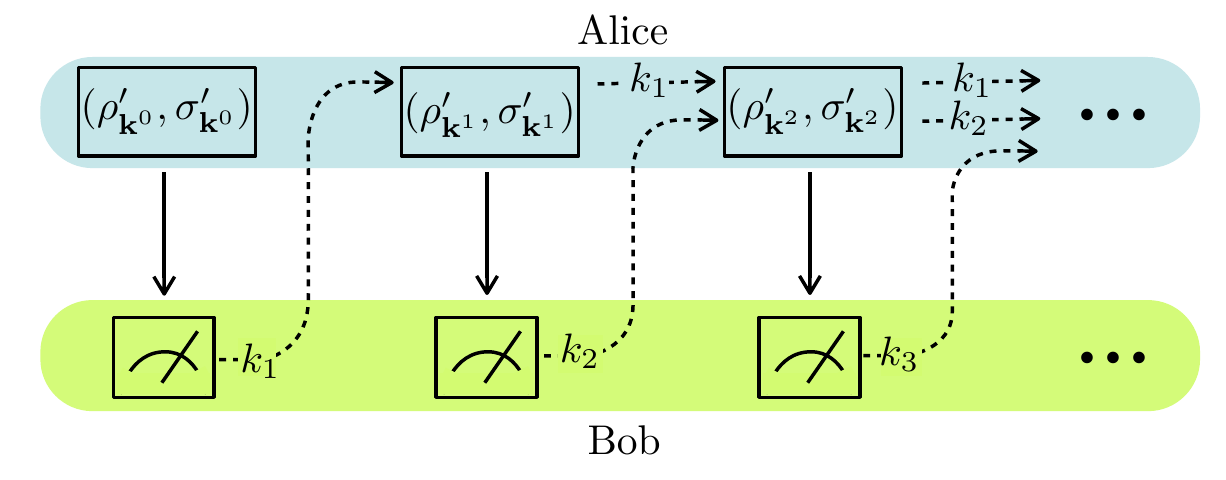}
\vspace{0em}
\caption{Adaptive protocol. At each step (left to right), Alice sends to Bob (solid arrows) the state in Equation~(\ref{rho'}), which she has  prepared using Bob's feedback (dashed arrows).}\label{Fig:adap1}
\end{figure}

Next, to prove Theorem~\ref{DualCBtheorem}, we show that the adaptive communication protocols introduced above, the optimal one in particular, 
can be cast as discrimination of two classical channels. 
To  this end, we choose the classical (continuous) input alphabet as ${\mathcal X}=\setS{\cH_d}\times\setS{\cH_d}$,
where each letter $x=(\rho,\sigma)\in\mathcal{X}$  is a classical description of the pair of states, e.g.,  the two Bloch vectors of $\rho$ and~$\sigma$.
The output alphabet  $\mathcal Y$  is naturally given by the outcome labels of our fixed measurement (i.e., the POVM $\mathscr E$): ${\mathcal Y}=\{1,2,\dots m\}$. 

We can then associate the null and alternate hypothesis $H_0$ and $H_1$ to the classical channels $W_x(k):= \tr (E_k \rho)$ and ${\overline W}_x(k):= \tr (E_k \sigma)$, respectively, where $(\rho,\sigma)=x\in{\mathcal X}$ and  $k\in{\mathcal Y}$.  These channels reproduce the same conditional probabilities, 
 $P_{k}$ and $\bar P_{k}$, that arise in our original problem. Hence the (single-shot) optimal state discrimination is formally equivalent to the optimal channel discrimination obtained by minimizing over the inputs~$x\in\mathcal X$.

This analogy holds also for our general multiple-shot problem. The adaptive protocol defined by the states $\rho'_{{\textbf k}^{s-1}},\, \sigma'_{{\textbf k}^{s-1}}\in\setS{\cH_d}$, $s=1,2,\dots, n$, 
translates into an adaptive channel discrimination strategy 
with~$n$ uses of either $W$ or~$\overline W$, where at each step $s$ we feed the channel with an input letter $x_{{\textbf k}^{s-1}}\in \mathcal X$, conditional on the previous outcomes~${\textbf k}^{s-1}=\{k_1,k_2,\dots,k_{s-1}\}$,  $k_{i}\in \mathcal Y$.
%

We can now invoke the main result in~\cite{H08}, which states that for the problem of asymptotic classical channel discrimination no adaptive strategy can outperform the best non-adaptive or even fixed strategy.  More precisely, it states that the optimal error rate can be attained  by the simple sequence where all the letters are equal,~$x_1=x_2=\dots=x_n$. We hence conclude that the optimal error rates for our original problem can be achieved by i.i.d. state pairs, $(\rho^{\otimes n},\sigma^{\otimes n})$. 
This holds for the Chernoff bound, Hoeffding's bound and for Stein's Lemma (see Equations~\eqref{other} and~\eqref{other0} below).

Computing the exponent rate 
in Equation~(\ref{xiCB}) is now identical to computing the analogous rate for the classical hypothesis testing problem of discriminating between the probability distribution  $P_k=\tr(E_k \rho)$ and ${\overline P}_k =\tr(E_k \sigma)$ after $n$ samplings,  which is given by the classical Chernoff Bound.
This completes the proof of Theorem~\ref{DualCBtheorem}.

In general, it might be very difficult to find the states that optimize the dual Chernoff bound. Limiting the set of states to optimize over, could significantly simplify the computation of the error exponent. A natural conjecture would be that the optimal states are orthogonal to each other, simply because one could expect that perfectly distinguishable states can be more easily distinguished by any POVM. This intuition might depend on the distinguishability measure. For instance, for the Bhattacharyya distance (which coincides with \eqref{clCB} if we fix $s=1/2$, instead of minimizing over it) one can find examples were the  signal states that optimize this distance are not orthogonal.  As far as the discrimination power is concerned, all the gathered evidence so far, which includes analytical results for some particularly symmetric POVMs and numerical results for generic qubit POVMs, seems to indicate that orthogonal signals optimize the  dual Chernoff quantity. In Section \ref{DPexamples} it shown explicitly that in the case of covariant qubit POVM's orthogonal states are optimal for the dual Chernoff bound by direct calculation. Here we also calculate the error exponent for this POVM to be $\frac{\pi}{4}$.

Though in this letter we have focused on the problem dual to symmetric hypothesis testing, which led us to Theorem~\ref{DualCBtheorem}, but the very same arguments concerning the optimality of i.i.d. state pairs apply to the dual Stein's lemma and Hoeffding's bound as well. It follows from our analysis that they can be computed simply as
\begin{align}
\xi_{\textrm SL}\!&=\max_{(\rho,\sigma\!)} D(P\|{\overline P}), \label{other0}\\
\xi_{\textrm HB}\!&=\max_{(\rho,\sigma\!)}\,\sup_{0\le s\le 1}\!\!{-s r - \phi( s|P\|{\overline P})\over 1-s}.
\label{other}
\end{align}

We will now turn to the next section where we investigate some particular properties of the discrimination power. 


\section{Discrimination power under mixing of POVMs}\label{SomeProp}

In this section we derive some properties of the optimal error exponents. 
To this end, we investigate the function
\begin{align}
C_{\mathscr E}&=\min_{(\rho,\sigma)}\min_{0\leq s\leq 1} \sum_{i=1}^{m} \left[\tr(\rho E_i)\right]^s \left[\tr(\sigma E_i)\right]^{1-s},
\label{C_e}
\end{align}
for a given $m$-element POVM, ${\mathscr E}=\{E_i\}_{i=1}^m$. We will denote the corresponding error exponents in the three settings discussed in the last section by $\xi^{\mathscr E}_{\textrm CB}$, $\xi^{\mathscr E}_{\textrm SL}$ and $\xi^{\mathscr E}_{\textrm HB}$ to make the dependence on the measurement explicit.   
Let us investigate the behavior of the discrimination power under mixing of POVMs.
\begin{lemma}
Let ${\mathscr E}$ be a POVM with $m$ elements $E_i$ and ${\mathscr G}$ be a POVM with $n$ elements $G_i$. Define a mixed POVM with $m+n$ elements, $\widehat{\mathscr E}=\{\widehat E_i\}_{i=1}^{m+n}$, through
\begin{equation}
\widehat{ E}_i=\left\{
\begin{array}{rcl}p E_i&\mbox{if}& 1\le i\le m,\\[.5em]
(1-p)G_{i-m} &\mbox{if}& m< i\le m+n .
\end{array}
\right.
\end{equation}
Then, we can upper bound $\xi_{\textrm CB}^{\widehat {\mathscr E}}$ by
\begin{equation}
\xi_{\textrm CB}^{\widehat {\mathscr E}} \leq p \xi_{\textrm CB}^{\mathscr E} + (1-p) \xi_{\textrm CB}^{\mathscr G}
\end{equation}
and lower bound it by 
\begin{equation}
\xi_{\textrm CB}^{\widehat {\mathscr E}} \geq - \log \min \Big\{ p C_{\mathscr E} + (1-p) , p + (1-p) C_{\mathscr G} \Big\}.
\label{lw CB}
\end{equation}
%
Furthermore we can state the following relations for the dual Stein's bound and dual Hoeffding's bound
\begin{align}
\xi_{\textrm SL}^{\widehat {\mathscr E}} \leq p\, \xi_{\textrm SL}^{\mathscr E} + (1-p) \xi_{\textrm SL}^{\mathscr G}, \\[.5em]
\xi_{\textrm SL}^{\widehat {\mathscr E}} \geq \max \left\{ p\, \xi_{\textrm SL}^{\mathscr E} ,  (1-p) \xi_{\textrm SL}^{\mathscr G} \right\}, \\[.5em]
\xi_{\textrm HB}^{\widehat {\mathscr E}} \leq p\, \xi_{\textrm HB}^{\mathscr E} + (1-p) \xi_{\textrm HB}^{\mathscr G}.
\end{align}
\end{lemma}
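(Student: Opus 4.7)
The plan is to reduce everything to a direct calculation of how the output statistics of $\widehat{\mathscr E}$ decompose, after which each of the four inequalities will follow from one elementary convexity fact applied to the relevant entropic quantity. First I would fix a state pair $(\rho,\sigma)$ and write $P_i=\tr(\rho E_i)$, $\bar P_i=\tr(\sigma E_i)$, $Q_j=\tr(\rho G_j)$, $\bar Q_j=\tr(\sigma G_j)$; then the outcome distributions of $\widehat{\mathscr E}$ are simply the scaled concatenations $\widehat P=(pP,(1-p)Q)$ and $\bar{\widehat P}=(p\bar P,(1-p)\bar Q)$. The scaling factors split cleanly through the relevant functionals, yielding the three identities
\begin{align}
\sum_i \widehat P_i^{s}\,\bar{\widehat P}_i^{\,1-s}&=p\sum_i P_i^{s}\bar P_i^{1-s}+(1-p)\sum_j Q_j^{s}\bar Q_j^{1-s},\\
D(\widehat P\,\|\,\bar{\widehat P})&=pD(P\|\bar P)+(1-p)D(Q\|\bar Q),\\
e^{\phi(s|\widehat P\|\bar{\widehat P})}&=p\,e^{\phi(s|P\|\bar P)}+(1-p)\,e^{\phi(s|Q\|\bar Q)}.
\end{align}

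For the Chernoff inequalities I denote by $A(\rho,\sigma,s)$ and $B(\rho,\sigma,s)$ the two summands in the first identity. The upper bound on $\xi_{\textrm CB}^{\widehat{\mathscr E}}$ follows from weighted AM--GM, $pA+(1-p)B\geq A^{p}B^{1-p}$: minimising the right-hand side factor by factor yields $C_{\widehat{\mathscr E}}\geq C_{\mathscr E}^{\,p}C_{\mathscr G}^{\,1-p}$, which after taking $-\log$ gives the claim. For the lower bound on $\xi_{\textrm CB}^{\widehat{\mathscr E}}$ I would exhibit two explicit feasible points. Choosing $(\rho,\sigma,s)$ to be the optimiser for $\mathscr E$ gives $A=C_{\mathscr E}$, while $B\leq 1$ by weighted AM--GM $x^{s}y^{1-s}\leq sx+(1-s)y$, so $C_{\widehat{\mathscr E}}\leq pC_{\mathscr E}+(1-p)$; symmetrically one gets $C_{\widehat{\mathscr E}}\leq p+(1-p)C_{\mathscr G}$, and taking $-\log$ of the minimum yields \eqref{lw CB}.

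The Stein inequalities are the easiest: the second identity shows that $D(\widehat P\|\bar{\widehat P})$ is an affine combination of the two classical Kullback--Leibler divergences. Taking $\max_{(\rho,\sigma)}$ and using $\max(pf+(1-p)g)\leq p\max f+(1-p)\max g$ gives the upper bound, while plugging in the optimiser for $\mathscr E$ (respectively $\mathscr G$) together with non-negativity of relative entropy gives $\xi_{\textrm SL}^{\widehat{\mathscr E}}\geq p\xi_{\textrm SL}^{\mathscr E}$ (respectively $\xi_{\textrm SL}^{\widehat{\mathscr E}}\geq(1-p)\xi_{\textrm SL}^{\mathscr G}$), and the maximum of the two. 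For Hoeffding, the third identity and concavity of $\log$ (Jensen) give $\phi(s|\widehat P\|\bar{\widehat P})\geq p\phi(s|P\|\bar P)+(1-p)\phi(s|Q\|\bar Q)$. Dividing the Hoeffding functional by $1-s>0$ distributes the $-sr$ term with weights summing to one, so the functional is pointwise (in $s$ and $(\rho,\sigma)$) bounded above by the corresponding affine combination for $\mathscr E$ and $\mathscr G$; applying $\sup_{s}$ and $\max_{(\rho,\sigma)}$, each of which satisfies the same subadditivity under convex combinations, delivers the claimed bound.

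There is no genuinely hard step. The only place requiring a little care is bookkeeping for Hoeffding: one must verify that the $-sr$ term redistributes correctly as $p\frac{-sr}{1-s}+(1-p)\frac{-sr}{1-s}$ before invoking the max-of-sum inequality, and one must note that $1-s>0$ is what preserves the direction of the inequality after Jensen. Beyond that, the lemma is a direct consequence of the three splitting identities above combined with, in order, weighted AM--GM, non-negativity of the quantities involved, and concavity of the logarithm.
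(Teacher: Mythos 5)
Your proof is correct and follows essentially the same route as the paper: the outcome statistics of $\widehat{\mathscr E}$ split into $p$- and $(1-p)$-weighted blocks, the Chernoff bounds follow by separating the minimization over the two blocks together with AM--GM and by evaluating at the optimizer of one POVM while bounding the other block by $1$, and the Stein/Hoeffding bounds follow from affinity of the split, non-negativity of the relative entropy, and concavity of the logarithm. The only cosmetic difference is that you bound the cross term $\sum_j Q_j^{s}\bar Q_j^{1-s}\leq 1$ by termwise weighted AM--GM where the paper invokes H\"older's inequality on the sums; both give exactly the same estimate.
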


\begin{proof}
Define ${\mathscr E}$, ${\mathscr G}$ and $\widehat {\mathscr E}$ as above. Let us first give a lower bound for $C_{\widehat{\mathscr E}}$.
\begin{alignat}{2}
C_{\widehat {\mathscr E}} &= \min_{(\rho,\sigma)}\min_{0\leq s\leq 1} &&\sum_{i=1}^{m+n} \left[\tr(\rho \widehat E_i)\right]^s \left[ \tr(\sigma \widehat E_i)\right]^{1-s} \nonumber\\[.5em]
&\geq \min_{(\rho,\sigma)}\min_{0\leq s\leq 1}&& \sum_{i=1}^{m} \left[\tr(\rho \widehat E_i)\right]^s \left[ \tr(\sigma \widehat E_i)\right]^{1-s}  \\ 
&&&+ \min_{(\rho,\sigma)}\min_{0\leq s\leq 1} \sum_{i=m+1}^{m+n} \left[\tr(\rho \widehat E_i)\right]^s \left[\tr(\sigma \widehat E_i)\right]^{1-s}  \nonumber\\[.5em]
&= \min_{(\rho,\sigma)}\min_{0\leq s\leq 1}&& \sum_{i=1}^{m} p\left[ \tr(\rho E_i)\right]^s \left[\tr(\sigma  E_i)\right]^{1-s} \\ 
&&&+ \min_{(\rho,\sigma)}\min_{0\leq s\leq 1} \sum_{i=1}^{n} (1-p) \left[\tr(\rho G_i)\right]^s\left[ \tr(\sigma G_i)\right]^{1-s}  \nonumber\\[.5em]
&= p C_{\mathscr E} + (1-&&p) C_{\mathscr G}.
\end{alignat}
We continue by giving an upper bound to $C_{\widehat{\mathscr E}}$. 
Let $s^*$, $\rho^*$ and $\sigma^*$ be respectively the value of $s$ and the states~$\rho$ and $\sigma$ that attain the minimum value on the right hand side of Equation~\eqref{C_e} for the POVM $\mathscr E$. Then, 
\begin{align}
C_{\widehat {\mathscr E}}&= \min_{(\rho,\sigma)}\min_{0\leq s\leq 1} \sum_{i=1}^{m+n} \left[\tr(\rho \widehat E_i)\right]^s \left[\tr(\sigma \widehat E_i)\right]^{1-s} \\
&\leq p\, C_{\mathscr E} + \sum_{i=m+1}^{m+n}\left[ \tr(\rho^* \widehat E_i)\right]^{s^*} \left[\tr(\sigma^* \widehat E_i)\right]^{1-s^*}.
\end{align}
We prove one of the lower bounds by bounding the second term as 
\begin{align}
\sum_{i=m+1}^{m+n}&\left[ \tr(\rho^* \widehat E_i)\right]^{s^*} \left[\tr(\sigma^* \widehat E_i)\right]^{1-s^*} \nonumber \\
&\leq  (1-p) \left[ \sum_{i=1}^{n} \tr(\rho^* G_i)\right]^{s^*} \left[ \sum_{i=1}^{n} \tr(\sigma^* G_i) \right]^{1-s^*} 
= 1-p,
\end{align}
where the inequality follows from the definition of $\widehat {\mathscr E}$ and H\"{o}lder's inequality. An analogous bound follows by choosing  $s^*$, $\rho^*$ and $\sigma^*$ to be  the value of $s$ and the states $\rho$ and $\sigma$ that attain the minimum value on the right hand side of Equation~(\ref{C_e}) for the POVM $\mathscr G$.  Hence, 
$C_{\widehat {\mathscr E}} \le \min\left\{p\, C_{\mathscr E}+(1-p),p+(1-p)C_{\mathscr G}\right\}$, and Equation~(\ref{lw CB}) follows.
The inequalities for Stein's and Hoeffding's bounds are proven similarly, additionally using the fact that the relative entropy is lower bounded by zero and  that the logarithm is concave. 
\end{proof}

The above shows that by mixing a pair of POVMs one can never increase the discrimination power of the best POVM of the pair. 

Furthermore, it is easy to see that applying additional CPTP maps before performing the measurement does not increase the discrimination power either. 
This follows directly from the fact that the image of a CPTP map is always at most the input state space itself. 


\section{Example for finite number of measurements}\label{finiteNumber}

So far,  special emphasis has been placed on the asymptotics of the problem at hand. 
It is illustrative to examine with a few examples the difficulties arising for finite $n$, where some of the asymptotic results do not hold. Let us focus on two-element POVMs, ${\mathscr E}=\{E_1,E_2=\id-E_1\}$. In this case, $E_1$ and $E_2$ commute and can be diagonalized simultaneously. In the multiple-shot  scenario, the groupings $\{E^{a},E^{\bar a}\}$ will also be diagonal in the very same local basis that diagonalize $E_1$ and $E_2$ and thus each state of the optimal pair, $(\rho^n,\sigma^n)$, in necessarily a product state of elements of that basis.  In this case, however, one can show that i.i.d. states are not necessarily optimal. 
Here we give a concrete example for~$n=3$ where the optimal states are~$\rho^3=|001\rangle\langle001|$ and  $\sigma^3=|110\rangle\langle110|$, rather than~$|000\rangle\langle000|$ and $|111\rangle\langle111|$. Furthermore, we also show that there exists an adaptive protocol with yet a smaller error rate, 
thus outperforming the optimal non-adaptive protocol for $n=3$. 

Let us consider the simple example where the POVM is
\begin{equation} \label{POVMexam}
{\mathscr E} = \left\{ E_1 = \left(
\begin{array}{cc}
0.4&0\\
0&0.2
\end{array}\right), E_2 = 1-E_1 = \left(
\begin{array}{cc}
0.6&0\\
0&0.8
\end{array}\right)\right\},
\end{equation}
and $n=3$ (the measurement defined by $\mathscr E$ is performed 3 times). It follows from the diagonal form of $E_1$ and $E_2$ that the optimal input states are tensor products of pure states and also diagonal in the given basis. From the symmetry of the problem it should be clear that there are only two possible ways to achieve the optimal error rate: 
(i)~use the pair $(\rho_0^{\otimes 3},\rho_1^{\otimes 3})$, or (ii)~use $(\rho_0^{\otimes 2}\otimes \rho_1,\rho_1^{\otimes 2}\otimes \rho_0 )$, where 
\begin{equation}\label{rho 0 rho 1}
\rho_0 =
\begin{pmatrix}
1&0\\
0&0
\end{pmatrix},
\qquad 
\rho_1 = 
\begin{pmatrix}
0&0\\
0&1
\end{pmatrix}.
\end{equation}
We next compute the error probability in both cases. For (i) it can be checked that the optimization over groupings gives
\begin{equation} 
F^{\textrm (i)}_1 = E_2 \otimes E_2 \otimes E_2,\quad  F^{\textrm (i)}_0 =\id-F^{\textrm (i)}_1,
\end{equation}
so $F^{\textrm (i)}_0$ is the sum of the remaining seven tensor products. 
One can easily check that the error probability is
\begin{equation}
p_{\textrm err}^{\textrm(i)} = 0.352. 
\end{equation}
For (ii), the optimization over groupings  gives now
\begin{align} 
F_0^{\textrm(ii)} = E_1 \otimes E_1 \otimes E_1 + E_1 \otimes E_1 \otimes E_2
+ E_1 \otimes E_2 \otimes E_2  + E_2 \otimes E_1 \otimes E_2 ,
\end{align}
and $F_1^{\textrm(ii)}  = 1 - F_0^{\textrm(ii)} $, being the sum of the remaining four products.
This gives
\begin{equation}\label{err1eq}
p_{\textrm err}^{\textrm(ii)} = 0.344. 
\end{equation}
We see that (ii) is optimal. The optimal state pair in this example for finite $n$ is not of the form $(\rho^{\otimes n},\sigma^{\otimes n})$, in contrast with what we found in the asymptotic limit of large $n$. 

Furthermore, we next show that the error rate given in Equation~\eqref{err1eq} can be lowered by an adaptive protocol as follows. We choose the first two input pairs to be, as in the previous examples, $(\rho_0^{\otimes 2},\rho_1^{\otimes 2})$.
 If the first and second measurement return $1$ (i.e., if  ${\textbf k}^2=\{1,1\}$),  the preparation of the third state pair is $(\rho_0,\rho_1)$, if not, we swap the preparations (i.e., the third pair is $(\rho_1,\rho_0)$, as in~(ii)).
One can check that the optimal grouping is 
\begin{align} 
F^{\textrm ad}_0 = &E_1 \otimes E_1 \otimes E_1 + E_1 \otimes E_1 \otimes E_2 
+ E_2 \otimes E_2 \otimes E_1 \nonumber\\ &+ E_1 \otimes E_2 \otimes E_2  + E_2 \otimes E_1 \otimes E_2 ,
\end{align}
and $F^{\textrm ad}_1 = 1 - F^{\textrm ad}_0$.
This results in an error probability of
\begin{equation}\label{err1eq2}
p_{\textrm err}^{\textrm ad} = 0.336. 
\end{equation}

\begin{figure}[t!]
\centering
\includegraphics[scale=0.75]{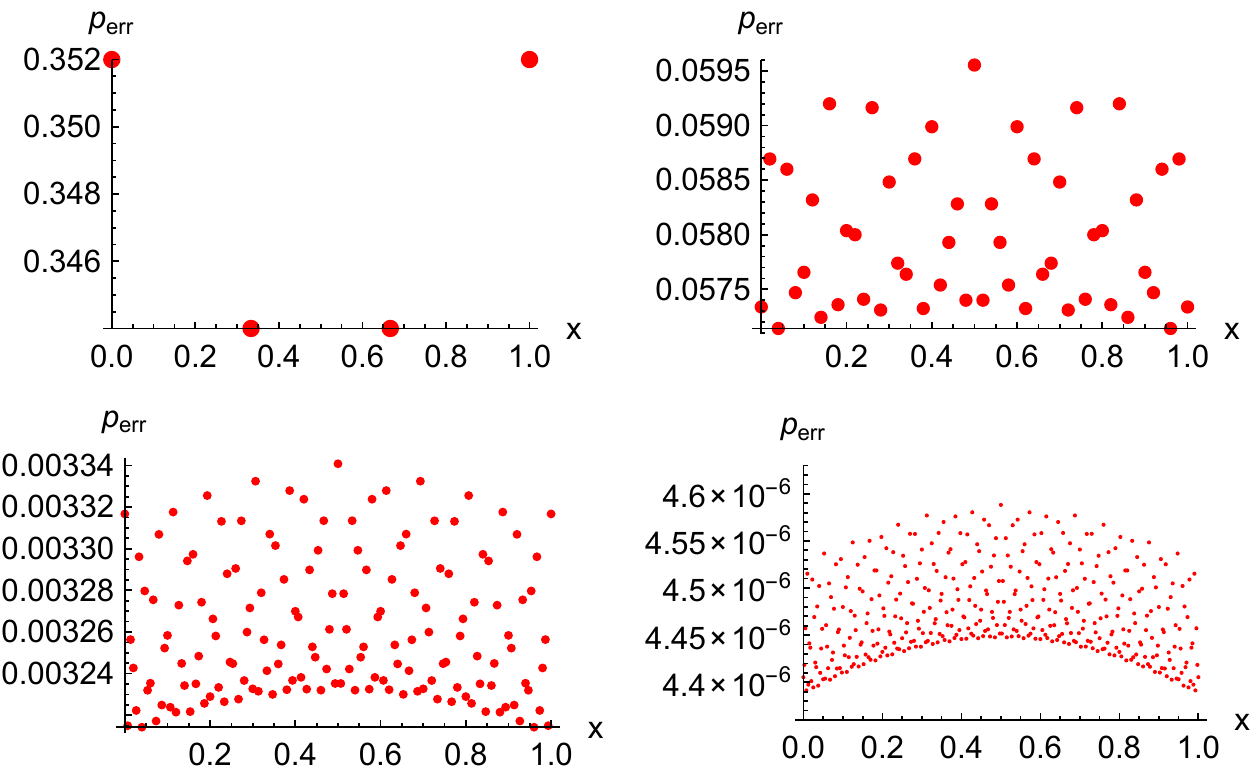}
\caption{Error probability vs. $x$ for state pairs of the form $(\rho_0^{\otimes xn}\otimes\rho_1^{\otimes(1-x)n},\rho_1^{\otimes xn}\otimes\rho_0^{\otimes(1-x)n})$. The values of $n$ are~$3$, $50$, $150$, and $400$ (from top left to lower right), with the POVM given as in Equation~\eqref{POVMexam}.  \label{fig:POVMnm} }
\end{figure}

Equation~\eqref{err1eq} provides an explicit example that non \iid states can outperform \iid ones for finite number of measurements.
Nevertheless, we proved in Section~\ref{optRates} that asymptotically the choice $(\rho_0^{\otimes n},\rho_1^{\otimes n} )$ is optimal. Figure~\ref{fig:POVMnm} illustrates that this is indeed the case by showing plots of the error probability for all possible state pairs at a given choice of $n$. Since  permutations among subsystems do not affect the error probability, the state pairs can be taken to be without loss of generality of the following form:
\begin{equation}\label{rho x}
(\rho^n,\sigma^n)=\left(\rho_0^{\otimes xn}\otimes\rho_1^{\otimes(1-x)n},\rho_1^{\otimes xn}\otimes\rho_0^{\otimes(1-x)n}\right).
\end{equation}
As $n$ increases, we see that the minimum error probability becomes a convex function of $x$.


\section{Examples}\label{DPexamples}

In this section we illustrate our results by computing the discrimination power of the qubit covariant POVM, ${\mathscr E}=\{\id+\mbox{\boldmath $n$}\cdot \mbox{\boldmath$\sigma$}\}_{\mbox{\boldmath\scriptsize$n$}\in{\mathbb S}^2}$, where~$\mbox{\boldmath$\sigma$}$ is the vector of Pauli matrices and ${\mathbb S}^2$ is the unit 2-sphere. The result is $\xi_{\textrm CB}=-\log(\pi/4)$ which can be compared to $\zeta_{\textrm CB}=-(1/2)\log(1-r^2)$, corresponding to a noisy Stern-Gerlach of purity $r$. We see that  ${\mathscr E}$ has the same discrimination power as  a Stern-Gerlach with purity $r \approx 0.62$.

\subsection{Covariant measurements}\label{DPexamplesCovMeas}

In this section we prove that orthogonal states are optimal for the qubit covariant POVM,
\begin{equation}
{\mathscr E}=\left\{E_{\scriptsize\vec n}:= \id+\vec n\cdot\vec\sigma\right\}_{{\scriptsize\vec n}\in{\mathbb S}^2},
\end{equation}
and we compute its discrimination power according to the dual of the Chernoff bound.
For the covariant measurement $\mathscr E$, we have
\begin{equation}
C_{\mathscr E}=\min_{0\le s\le 1}C_s,
\end{equation}
 where
\begin{equation}
C_s=\min_{(\rho,\sigma)}\int_{{\mathbb S}^2} dn\left[\tr\left(\rho E_{\scriptsize\vec n}\right)\right]^s\left[\tr\left(\sigma E_{\scriptsize\vec n}\right)\right]^{1-s}=\min_{\mbox{\scriptsize$(\vec m_1$},\mbox{\scriptsize$\vec m_2)$}} C,
\end{equation}
and
\begin{equation}
C:=\int   dn  \left( {1 +\cos\widehat{\vec n\,\vec m}_1}  \right)^{  1-s}   \left( {1 +\cos\widehat{\vec n\,\vec m}_2} \right)^{  s} .
\end{equation}
Here $\vec m_1$ and $\vec m_2$ are the Bloch vectors of $\sigma$ and $\rho$ respectively.
Choosing with no loss of generality
\begin{equation}
\vec m_1=\hat z,\quad 
\vec m_2= \hat z \sin\alpha + \hat x \cos\alpha,
\end{equation}
where $\hat z$ and $\hat x$ are the unit vectors pointing in the $z$ and $x$ direction
($\vec m_1$ and $\vec m_2$ on the $xz$-plane), we have
\begin{equation}
C=\int_0^\pi \sin\theta \int_0^{2\pi}{d\phi\over4\pi} \left(1+\cos\theta \right)^{1-s}
\left(1+\cos\theta\cos\alpha+\cos\phi\sin\theta\sin\alpha \right)^{s}. \label{covCa}
\end{equation}
where $\theta$ and $\phi$ are the polar and azimuthal angles of the unit vector $\vec n$.
After substantially rewriting Equation \eqref{covCa} one can prove that $dC/d\alpha\le 0$, and it only vanishes at $\alpha=0,\pi$. We provide the detailed calculation in Appendix~\ref{appendixCBcm}.  It follows that $C$ has a maximum at $\alpha=0$ and a minimum at $\alpha=\pi$.

Substituting $\alpha=0$ in the definition of $C$, we can write
\begin{eqnarray}
C_s&=&\int_{-1}^{1} du \left(1+u\right)^{1-s}{\sin(s\pi)\over2\pi}\int_0^\infty dx\, x^{s-1}{1-u\over1+x-u} \nonumber\\
&=&{1\over2}\int_{-1}^{1} du \left(1+u\right)^{1-s}(1-u)^s\nonumber\\
&=&2\int_0^1dt\, t^{1-s}(1-t)^s=2B(2-s,1+s),
\end{eqnarray}
where we have used the relation
\begin{equation}
a^s={\sin(s\pi)\over\pi}\int_0^\infty dx{a x^{s-1}\over a+x}\, , 
\end{equation} we have changed variables as $u\to t=(1+u)/2$, and used the definition of the Euler Beta function $B(a,b)$. Additionally, we know that
\begin{equation}
B(2-s,1+s)={\Gamma(2-s)\Gamma(1+s)\over \Gamma(3)}
={s(1-s)\over2}\Gamma(1-s)\Gamma(s)
={s(1-s)\pi\over2\sin(s\pi)} ,
\end{equation}
where $\Gamma$ is the Gamma function. 
We finally have the result that the Chernoff bound is
$C_{\mathscr E}={\pi/4}$,
as $C_s$ has its minimum at $s=1/2$. It follows that the error exponent corresponding to the dual of the Chernoff bound is
\begin{equation}
\xi_{\textrm CB}=-\log(\pi/4).
\end{equation}


\subsection{Noisy Stern-Gerlach}

The noisy Stern-Gerlach measurement of purity $r$ is defined by
\begin{equation}
{\mathscr E}=\left\{{\id+ r \sigma_z\over2},{\id-r \sigma_z\over2}\right\}.
\end{equation}
Then, if $\vec m_1$, $\vec m_2$ are the Bloch vectors of $\sigma$ and $\rho$ respectively, one has
\begin{equation}
C={1\over2}\left(1+r \cos\theta_1\right)^{1-s}\left(1+r \cos\theta_2\right)^s+{1\over2}\left(1-r \cos\theta_1\right)^{1-s}\left(1-r\cos\theta_2\right)^s.
\end{equation}
One can easily check that $C$ attains its minima over the pair $(\theta_1,\theta_2)$ at $(0,\pi)$ and $(\pi,0)$, thus
\begin{equation}
C_s=\min_{\mbox{\scriptsize$(\vec m_1$},\mbox{\scriptsize$\vec m_2)$}}C=
{1\over2}\left(1\pm r \right)^{1-s}\left(1\mp r \right)^s+{1\over2}\left(1\mp r\right)^{1-s}\left(1\pm r\right)^s.
\end{equation}
The minimum over $s$ is at $s=1/2$, so we obtain
\begin{equation}
C_{\mathscr E}=\sqrt{1-r^2},\qquad \zeta_{\textrm CB}=-{1\over2}\log(1-r^2).
\end{equation}
Note that $C_{\mathscr E}$ vanishes and therefore $\xi_{\textrm CB}$ is going to $\infty$ for a noiseless Stern-Gerlach apparatus, i.e.,  for $r=1$,  as $E_1$ and $E_{2}$, become orthogonal projectors and thus can be used for perfect discrimination.

Both of the examples in this section are optimized by an orthogonal state pair. As mentioned previously it remains an interesting open problem to prove whether this is indeed the case for all measurements. 

We are ready now to move to the second part of this thesis where we will discuss entropy inequalities with a particular focus on recoverability and information combining. Nevertheless we will not leave the topic of hypothesis testing behind as it will provide us with several insights into the problems of the next chapter, showing the close connection between both topics.

\part{Recoverability and entropy inequalities}

\chapter{Recoverability}\label{recoverability}

In this chapter we will start our investigation of entropy inequalities with a focus on recoverability. There are many facets of quantum science in which the notion of quantum state recovery is deeply embedded.  This is particularly true for quantum error correction~\cite{Gai08,LB13} and quantum key distribution~\cite{SBCDLP09}, where the primary goal is that of recovery.  In the former, the task is to reconstruct a quantum state where some part of the state has undergone noise or loss; in the latter, the task is to keep a message secure against an eavesdropper attempting a similar reconstruction.  In either case, the success or failure of a protocol often hinges on whether the particular state in question is recoverable at all, or if the state is beyond repair. 

A particularly important class of states are those that constitute a Markov chain.  A classical Markov chain can be understood as a memory-less random process, i.e., a process in which the state transition probability depends only on the current state and not on past states.  If random variables $X$, $Y$, and $Z$ form a classical Markov chain, denoted as $X\rightarrow Y \rightarrow Z$, then the classical conditional mutual information 
\begin{equation}
I(X;Z|Y)=0.
\end{equation}
Classical Markov chains model an impressive number of natural processes in physics and many other sciences~\cite{Norris97}.

In an attempt at understanding a quantum generalization of these ideas the authors of \cite{HJPW04}\ defined a quantum Markov chain, in analogy with the classical case mentioned above, to be a tripartite state $\rho_{ABC}$\ for which the conditional quantum mutual information (CQMI) $I(A;B|C)_{\rho}$\ is equal to zero. However, later work in \cite{ILW08} (see also \cite{E15}) realized that these notions 
made sense only in the exact case by demonstrating that large deviations from a quantum Markov state as defined in \cite{HJPW04} can sometimes lead to only small increases of the CQMI. 

Meanwhile, it has been known for some time that an equivalent description for the exact case $I(A;B|C)_{\rho} = 0$ exists in terms of recoverability. The work of Petz
\cite{Petz1986,Petz1988}\ implies that there exists a recovery channel
$\mathcal{R}_{C\rightarrow AC}$ such that $\rho_{ABC}=\mathcal{R}_{C\rightarrow AC}(\rho_{BC})$ if and only if $I(A;B|C)_{\rho} = 0$. In fact, it can be shown that the channel can always be chosen as the so called Petz recovery channel: 
\begin{equation}
\cR_{C\to AC}(\cdot):=\rho_{AC}^{\frac{1}{2}}\left(\rho_C^{-\frac{1}{2}}(\cdot)\rho_C^{-\frac{1}{2}}\right)\rho_{AC}^{\frac{1}{2}}. \label{defPetzMap}
\end{equation}
This is in perfect agreement with the exact
classical case mentioned above:\ for a state satisfying $I(A;B|C)_{\rho} = 0$, one could lose the system $A$ and recover it back
from $C$ alone. In other words, all correlations between systems $A$ and $B$ are mediated through system $C$ for quantum Markov chain states. Recoverability in this sense is thus intimately connected to Markovianity and represents a method for handling the approximate case, different from that given in \cite{HJPW04}.

To measure non-Markovianity in the approximate case, the general approach outlined in \cite{SW14} was to quantify the ``distance'' from $\rho_{ABC}$ to its closest recovered version. The main measure on which \cite{SW14} focused was the \textit{fidelity of recovery}, defined as
\begin{equation}
F(A;B|C)_{\rho}\equiv\sup_{\mathcal{R}_{C\rightarrow AC}}F(\rho_{ABC},\mathcal{R}_{C\rightarrow AC}(\rho_{BC})). \label{eq:fidrec}
\end{equation}
The optimization in \eqref{eq:fidrec} is with respect to quantum channels $\mathcal{R}_{C\rightarrow AC}$\ acting on the system $C$ and producing an output on the systems $A$ and $C$. A related measure, defined in \cite[Remark 6]{SW14}, is the \textit{relative entropy of recovery}:
\begin{equation}
D(A;B|C)_{\rho} \equiv \inf_{\mathcal{R}_{C\rightarrow AC}}D(\rho_{ABC}\Vert\mathcal{R}_{C\rightarrow AC}(\rho_{BC})). \label{REORdef}
\end{equation}

From the main result of \cite{FR14}, which established that
\begin{equation}
I(A;B|C)_{\rho} \geq - 2 \log F(A;B|C)_{\rho}, \label{FRlowerbound}
\end{equation}
it is now understood that the CQMI itself is a measure of non-Markovianity as well. Before \cite{FR14}, an operational interpretation for the CQMI had already been given in \cite{DY08, YD09} as twice the optimal rate of quantum communication needed for a sender to transfer one share of a tripartite state to a receiver (generally shared entanglement is required for this task). Here, the decoder at the receiving end of this protocol has the role of a recovery channel, an interpretation later used in \cite{BHOS15}. 
  
Defining the regularized relative entropy of recovery as \cite{BHOS15}
\begin{equation}
D^{\infty}(A;B|C)_{\rho} \equiv  \lim_{n\rightarrow\infty}\frac{1}{n}D(A^n;B^n|C^n)_{\rho^{\otimes n}}
\end{equation}
and the measured relative entropy of recovery
\begin{equation}
D_M(A;B|C)_{\rho} \equiv \inf_{\mathcal{R}_{C\rightarrow AC}}D_M(\rho_{ABC}\Vert\mathcal{R}_{C\rightarrow AC}(\rho_{BC})).
\end{equation}
It was previously shown in \cite{BHOS15} that
\begin{equation}\label{recIneqChain}
I(A;B|C)_\rho\geq D^\infty(A;B|C)_\rho \geq D_M(A;B|C)_{\rho} \geq  - 2 \log F(A;B|C)_{\rho}.
\end{equation}
All the lower bounds on the CQMI are clearly well motivated measures of recovery and non-Markovianity, but hitherto they have been lacking concrete operational interpretations. We will address this point  in a later section with particular focus on the regularized relative entropy of recovery. As a remark, at the end of the chapter we will also briefly discuss an operational interpretation for the fidelity of recovery in complexity theory. 

A wave of recent work \cite{BCY11,Winterconj,K13conj,Z14,BSW14,SBW14,LW14,BLW14,DW15,BT15,SOR15,W15,DW15a,STH16,JRSWW15} on this topic has added to the results from \cite{FR14}, solidifying what appears to be the right notion of quantum Markovianity. 
An important focus of these subsequent improvements has been the structure of the optimal recovery map. A natural conjecture would be that the Petz recovery map, as defined in Equation \eqref{defPetzMap}, can be used instead of the optimization, since that is true in the case where the CQMI is zero. In this exact form it still remains an open problem whether that is actually true for any of the inequalities in Equation \eqref{recIneqChain}. Nevertheless, significant progress has been made. 
Two desirable features for the lower bounds are that the recovery map should be explicit (meaning that no optimization is needed)  and that it should be universal (when recovering system $A$ from system $C$, the map should not depend on system $B$). A particular bound including both of these features has been recently given in~\cite[Thm.~4.1]{SBT16} and is based on the measured relative entropy
\begin{align}\label{eq:cqmi_meas__lowerbound}
I(A:B|C)_\rho\geq  D_M\left(\rho_{ABC}\middle\|\int\beta_0(t)\;\mathrm{d}t\left(\cI_A\otimes\cR^{[t]}_{C\to BC}(\rho_{AC})\right)\right),
\end{align}
for a particular universal probability distribution $\beta_0(t)$ and the rotated Petz recovery maps $\cR^{[t]}_{C\to BC}$ (for the precise definitions see Theorem~\ref{thm:cqmi} in the next section and for a comparison to other bounds see Corollary \ref{cor:cqmi}).

Another natural conjecture that was recently disproved was that we might be able to avoid the regularization in the first lower bound in Equation \eqref{recIneqChain} and instead use the relative entropy of recovery itself as defined in Equation \eqref{REORdef}. This has been recently proven false by a counterexample in~\cite{FF17}. Later in this chapter, we will review the given counterexamples and provide additional ones which show that even in a special case where some of the systems are classical, the conjectured bound does not hold. This will also have direct implications for the topic of Chapter~\ref{infoCombining}. 

In Section \ref{ImprovedLowerBound} we show how to transfer these properties to the regularized relative entropy bound. Then, in Section~\ref{RecOpInt} we will discuss operational interpretations of the recoverability quantities. In Section~\ref{CounterExample} we will have a closer look at the previously mentioned counterexamples. Finally in Section~\ref{RegNeeded} we will show that the composite hypothesis testing problem from the first part of this thesis indeed requires a regularization by connecting it to recoverability.


\section{An improved Markov type lower bound}\label{ImprovedLowerBound}

In this section we apply the techniques developed in the first part of the thesis to strengthen the previously best known quantum relative entropy Markov type lower bound on the conditional quantum mutual information $I(A:B|C)_\rho$ (see Equation~\eqref{eq:cqmi_meas__lowerbound} and~\cite{SBT16}). We find that
\begin{align}\label{eq:cqmi_lowerbound}
I(A:B|C)_\rho\geq\lim_{n\to\infty}\frac{1}{n}D\left(\rho_{ABC}^{\otimes n}\middle\|\int\beta_0(t)\;\mathrm{d}t\left(\cI_A\otimes\cR^{[t]}_{C\to BC}(\rho_{AC})\right)^{\otimes n}\right)
\end{align}
for a particular universal probability distribution $\beta_0(t)$ and the rotated Petz recovery maps $\cR^{[t]}_{C\to BC}$. In contrast to the previously known bounds in terms of quantum relative entropy distance~\cite{STH16,BHOS15}, the recovery map in Equation~\eqref{eq:cqmi_lowerbound} takes a specific form depending only on the reduced state on the systems  $BC$. Note that the regularization in Equation~\eqref{eq:cqmi_lowerbound} cannot go away, using the relative entropy distance, as recently shown in~\cite{FF17}. 

We will now give a formal statement and proof of the lower bound on the conditional quantum mutual information from Equation~\eqref{eq:cqmi_lowerbound}. Then we give a detailed overview on how all known Markov type lower bounds on the conditional quantum mutual information compare and we present the argument that Equation~\eqref{eq:cqmi_lowerbound} represents the last possible strengthening.

\begin{thm}\label{thm:cqmi}
For $\rho_{ABC}\in \setS{\cH_{ABC}}$ we have
\begin{align}\label{eq:cqmi}
I(A:B|C)_\rho\geq\limsup_{n\to\infty}\frac{1}{n}D\left(\rho_{ABC}^{\otimes n}\middle\|\int\beta_0(t)\left(\cI_A\otimes\cR^{[t]}_{C\to BC}(\rho_{AC})\right)^{\otimes n}\mathrm{d}t\right)\,,
\end{align}
where $\beta_0(t):=\frac{\pi}{2}\left(\cosh(\pi t)+1\right)^{-1}$ \\
and $\cR^{[t]}_{C\to BC}(\cdot):=\rho_{BC}^{\frac{1+it}{2}}\left(\rho_C^{\frac{-1-it}{2}}(\cdot)\rho_C^{\frac{-1+it}{2}}\right)\rho_{BC}^{\frac{1-it}{2}}$.
\end{thm}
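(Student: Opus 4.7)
The plan is to bootstrap the single-letter measured-relative-entropy bound of~\cite{SBT16} recalled in~\eqref{eq:cqmi_meas__lowerbound} to a regularized quantum relative entropy statement, using exactly the asymptotic spectral pinching argument that already powered the composite Stein's lemma in Chapter~\ref{CompHypo}. In short, the strategy is: apply the measured bound to $n$ copies, recognize that the recovery map there tensorizes, then use Lemma~\ref{pinching} to trade $D_M$ for $D$ at sub-linear cost.

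First I would apply~\eqref{eq:cqmi_meas__lowerbound} not to $\rho_{ABC}$ but to its $n$-fold tensor power $\rho_{ABC}^{\otimes n}$. By the standard additivity of the von Neumann conditional mutual information on product states, $I(A^n:B^n|C^n)_{\rho^{\otimes n}} = n\,I(A:B|C)_\rho$, and hence
\begin{align*}
n\,I(A:B|C)_\rho \;\geq\; D_M\!\left(\rho_{ABC}^{\otimes n}\;\Big\|\;\int\beta_0(t)\,\big(\cI_{A^n}\otimes\cR^{[t]}_{C^n\to B^nC^n}\big)\!\left(\rho_{AC}^{\otimes n}\right)\mathrm{d}t\right).
\end{align*}

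Next comes the crucial observation that the rotated Petz map tensorizes on i.i.d.~states: since $(\rho_{BC}^{\otimes n})^{(1\pm it)/2} = (\rho_{BC}^{(1\pm it)/2})^{\otimes n}$ and likewise for $\rho_C^{\otimes n}$, a direct computation from the definition of $\cR^{[t]}$ gives
\begin{align*}
\big(\cI_{A^n}\otimes\cR^{[t]}_{C^n\to B^nC^n}\big)\!\left(\rho_{AC}^{\otimes n}\right) \;=\; \big(\cI_A\otimes\cR^{[t]}_{C\to BC}(\rho_{AC})\big)^{\otimes n}.
\end{align*}
Therefore the operator
\begin{align*}
\sigma_n \;:=\; \int\beta_0(t)\,\big(\cI_A\otimes\cR^{[t]}_{C\to BC}(\rho_{AC})\big)^{\otimes n}\mathrm{d}t
\end{align*}
that appears on the right-hand side of~\eqref{eq:cqmi} is an integral average of tensor-power states, and in particular is permutation invariant under the symmetric group $S_n$.

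This permutation invariance is exactly the hypothesis needed to invoke Lemma~\ref{pinching} and upgrade the measured relative entropy to a genuine quantum relative entropy at the cost of a sub-linear penalty:
\begin{align*}
D_M\!\left(\rho_{ABC}^{\otimes n}\,\big\|\,\sigma_n\right) \;\geq\; D\!\left(\rho_{ABC}^{\otimes n}\,\big\|\,\sigma_n\right) - \log\poly(n).
\end{align*}
Dividing by $n$, taking $\limsup_{n\to\infty}$, and using $\tfrac{1}{n}\log\poly(n)\to 0$ then yields exactly the bound~\eqref{eq:cqmi}.

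Essentially all of the non-trivial work is imported: the single-copy measured bound from~\cite{SBT16} and the pinching estimate of Lemma~\ref{pinching}. The only genuinely new ingredient is the tensorization of $\cR^{[t]}_{C^n\to B^nC^n}$ across copies, which follows immediately from $(\rho^{\otimes n})^{\alpha}=(\rho^{\alpha})^{\otimes n}$ for complex $\alpha$. I do not foresee a real obstacle; the point to be careful about is simply that $\sigma_n$ must be exhibited in the permutation-invariant tensor-power form \emph{before} invoking pinching, which is why the factorization of the rotated Petz map across i.i.d.~copies is the indispensable step.
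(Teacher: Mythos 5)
Your proposal is correct and follows essentially the same route as the paper's own proof: apply the measured-relative-entropy bound of~\cite{SBT16} to $\rho_{ABC}^{\otimes n}$, use additivity of $I(A:B|C)$ on product states and the tensorization of the rotated Petz maps to write the recovery state as a permutation-invariant average of tensor powers, and then invoke Lemma~\ref{pinching} to pass from $\DM$ to $D$ at cost $\log\poly(n)$ before taking $n\to\infty$. The only detail the paper adds is a remark that the support conditions are handled as in the proof of~\cite{SBT16}; your explicit verification that $\cR^{[t]}_{C^n\to B^nC^n}$ factorizes across i.i.d.\ copies is left implicit there but is the same content.
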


\begin{proof}
We start from the lower bound~\cite[Thm.~4.1]{SBT16} (see Equation \eqref{eq:cqmi_meas__lowerbound}) applied to $\rho_{ABC}^{\otimes n}$ (with the support conditions taken care of as in the corresponding proof)
\begin{align}
\text{$I(A:B|C)_\rho\geq\frac{1}{n}D_{M}\left(\rho_{ABC}^{\otimes n}\middle\|\sigma_{A^nB^nC^n}\right)$}
\end{align}
with
\begin{align}
\text{$\sigma_{A^nB^nC^n}:=\int\beta_0(t)\left(\sigma_{ABC}^{[t]}\right)^{\otimes n}\mathrm{d}t$ and $\sigma_{ABC}^{[t]}:=\left(\cI_A\otimes R^{[t]}_{C\to BC}\right)(\rho_{AC})$,}
\end{align}
where we have used that the conditional quantum mutual information is additive on tensor product states. Now, we simply observe that $\sigma_{A^nB^nC^n}$ is permutation invariant and hence the claim can be deduced from Lemma~\ref{pinching} together with the limit $n\to\infty$.
\end{proof}

In the proof, we have mainly used a Lemma previously proven in order to help us investigate asymptotic hypothesis testing, which might serve as a first hint that both topics are closely related at least on a technical basis. 
Now, together with previous work, we find the following corollary that encompasses all known Markov type lower bounds on the conditional quantum mutual information.

\begin{cor}\label{cor:cqmi}
For $\rho_{ABC}\in \setS{\cH_{ABC}}$ the conditional quantum mutual information $I(A:B|C)_\rho$ is lower bounded by the three incomparable bounds
\begin{align}
-\int\beta_0(t)\log F\left( \rho_{ABC}, \sigma_{ABC}^{[t]}\right)^2\;\mathrm{d}t, \nonumber\\ 
D_{\mathcal{M}}\left(\rho_{ABC}\middle\|\int\beta_0(t)\sigma_{ABC}^{[t]}\;\mathrm{d}t\right),\; \label{incomp}\\ 
\limsup_{n\to\infty}\frac{1}{n}D\left(\rho_{ABC}^{\otimes n}\middle\|\int\beta_0(t)\left(\sigma_{ABC}^{[t]}\right)^{\otimes n}\mathrm{d}t\right)\,. \nonumber
\end{align}
In contrast to the second and third bound, the first lower bound is not tight in the commutative case but has the advantage that the average over $\beta_0(t)$ stands outside of the distance measure used. All the lower bounds are typically strict -- whereas in the commutative case the second and third bound both become equalities.
\end{cor}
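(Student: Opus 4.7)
The plan is to collect three bounds from distinct sources and then justify the accompanying claims about incomparability, commutative behavior, and generic strictness. The middle inequality is exactly~\cite[Thm.~4.1]{SBT16}, recorded as Equation~\eqref{eq:cqmi_meas__lowerbound}, while the bottom (regularized) inequality is the content of Theorem~\ref{thm:cqmi} just proven. The top (fidelity) inequality has been established in the universal-recovery literature (e.g.~\cite{JRSWW15}) via a Sutter--Fawzi--Renner-type pinching argument after identifying $\sigma_{ABC}^{[t]}$ with the rotated Petz recovery of $\rho_{BC}$; I would simply cite it rather than reproduce the proof.

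For the incomparability claim, the key structural observation is that the three bounds differ in where the convex combination $\int\beta_0(t)\,\mathrm{d}t$ sits: outside the distance in the fidelity bound, and inside the argument of the relative entropy in the other two. The single-pair chain $-\log F^2\leq D_M\leq D$ therefore cannot be invoked directly to order them, since in the fidelity bound $-\log F^2$ is applied to each $(\rho_{ABC},\sigma_{ABC}^{[t]})$ separately and then averaged, whereas in the other two bounds the averaging happens inside the relative entropy. The second and third bounds are likewise not directly comparable: measured relative entropy is dominated by the quantum relative entropy between the same states, but the third bound is regularized over $n$ copies against a non-\iid alternative and can be strictly larger. Exhibiting small qubit examples that realize each of the possible orderings should suffice as explicit confirmation of incomparability.

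For the commutative case I would argue as follows: when $\rho_{ABC}$ is classical, the imaginary powers in $\cR^{[t]}_{C\to BC}$ act trivially, so $\sigma_{ABC}^{[t]}\equiv\rho_{BC}\rho_C^{-1}\rho_{AC}$ is $t$-independent, and a direct classical calculation yields $D(\rho_{ABC}\|\rho_{BC}\rho_C^{-1}\rho_{AC})=I(A:B|C)_\rho$. Since $D_M=D$ on mutually commuting states, the second bound becomes equality; since the alternative is \iid after tensoring, the regularization in the third bound is trivial and it is tight as well. The first bound remains strict, because $-\log F(\rho,\sigma)^2<D(\rho\|\sigma)$ whenever $\rho\neq\sigma$. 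For generic non-commutative states, strictness of all three bounds follows from the counterexamples of~\cite{FF17}, which already witness a gap for the strongest (regularized) bound and hence, via the chain in Equation~\eqref{recIneqChain}, for the two weaker ones. The main subtlety I anticipate is the $D_M=D$ identification in the commutative case, where one must handle the support of $\rho_C$ carefully; beyond that, the argument is essentially a matter of collating known results.
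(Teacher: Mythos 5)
Your sourcing of the three bounds is exactly what the paper does: the fidelity bound is taken from \cite[Sect.~3]{JRSWW15}, the measured relative entropy bound from \cite[Thm.~4.1]{SBT16}, and the regularized bound is Theorem~\ref{thm:cqmi}. Your treatment of the commutative case is in fact more detailed than the paper's (which asserts tightness of the second and third bounds there without proof), and the observation that $\sigma_{ABC}^{[t]}$ becomes $t$-independent for commuting states is a reasonable way to justify it.

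However, your argument for generic strictness does not work. First, the counterexample of \cite{FF17} shows that the \emph{unregularized} relative entropy of recovery $\inf_{\cR}D\left(\rho_{ABC}\middle\|(\cI_A\otimes\cR_{C\to BC})(\rho_{AC})\right)$ can \emph{exceed} $I(A:B|C)_\rho$; it does not exhibit a strict gap between $I(A:B|C)_\rho$ and the regularized quantity of Theorem~\ref{thm:cqmi}, which is a valid lower bound. Second, even granting such a gap, you cannot propagate it to the other two bounds via the chain in Equation~\eqref{recIneqChain}: that chain orders the \emph{optimized} recovery quantities, whereas the corollary's three bounds use fixed universal maps with the average $\int\beta_0(t)\,\mathrm{d}t$ placed in different positions, and the corollary itself asserts these are mutually incomparable, so no domination of the first two by the third is available. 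The paper instead supports typical strictness by numerical evidence (e.g.~\cite{BHOS15}). Relatedly, for incomparability the paper gives a sharper argument than your deferral to unspecified qubit examples: if $\beta_0(t)$ could be pulled outside the relative entropy in the second or third bound, the quantum Stein's lemma would convert that into the unregularized relative-entropy-of-recovery lower bound on $I(A:B|C)_\rho$, contradicting \cite{FF17}; this is precisely what forces the structural separation between the first bound and the other two.
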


\begin{proof}
The first bound was shown in~\cite[Sect.~3]{JRSWW15}, the second one in~\cite[Thm.~4.1]{SBT16}, and the third one is Theorem~\ref{thm:cqmi}. To see that the bounds are incomparable, notice that the distribution $\beta_0(t)$ cannot be taken outside the relative entropy measure in the second and the third bound since the quantum Stein's lemma would then lead to a contradiction to a recent counterexample from~\cite[Sect.~5]{FF17} (for more details see also Section~\ref{CounterExample}). 
The fact that the lower bounds are typically strict can be seen from numerical work (see, e.g., \cite{BHOS15}).
\end{proof}

It seems that the only remaining conjectured strengthening that is not known to be wrong is the first lower bound in Equation \eqref{incomp} in terms of the non-rotated Petz map~\cite[Sect.~8]{BSW14}
\begin{align}
I(A:B|C)_\rho\geq-\log F\left(\rho_{ABC},\sigma_{ABC}^{[0]}\right)^2\,. \label{conjFidelity}
\end{align}
We refer to~\cite{Lemm17} for the latest progress in that direction. All the same arguments as in the proof of Theorem~\ref{thm:cqmi}, can also be applied to lift the strengthened monotonicity of the relative entropy from~\cite[Cor.~4.2]{SBT16}. For $\rho\in \setS{\cH}$, $\sigma$ a positive semi-definite operator on $\cH$, and $\mathcal{N}$ a completely positive trace preserving map on the same space, this leads to
\begin{align}
D(\rho\|\sigma)-D(\mathcal{N}(\rho)\|\mathcal{N}(\sigma))\geq\limsup_{n\to\infty}\frac{1}{n}D\left(\rho^{\otimes n}\middle\|\int\beta_0(t)\left(\cR^{[t]}_{\sigma,\mathcal{N}}(\rho)\right)^{\otimes n}\mathrm{d}t\right),
\end{align}
where $\cR^{[t]}_{\sigma,\mathcal{N}}(\cdot):=\sigma^{\frac{1+it}{2}}\mathcal{N}^{\dagger}\left(\mathcal{N}(\sigma)^{\frac{-1-it}{2}}(\cdot)\mathcal{N}(\sigma)^{\frac{-1+it}{2}}\right)\sigma^{\frac{1-it}{2}}$. Together with~\cite[Sect.~3]{JRSWW15} and~\cite[Cor.~4.2]{SBT16} we then again have three incomparable lower bounds as in Corollary~\ref{cor:cqmi}.


\section{Recoverability quantities from hypothesis testing $\qquad$}\label{RecOpInt}

In this section, we will discuss how hypothesis testing gives an operational interpretation to the relative entropy of recovery. 
It follows from the concerns in recovery applications that one may have to systematically decide whether a given tripartite quantum state is recoverable or not. In this paper, we discuss two concrete scenarios in which this is the case. Both involve many copies of the state $\rho_{ABC}$---for both settings, the goal is to decide whether a given tripartite state is recoverable.

We will give two different discrimination problems and we will show, that for both problems the regularized relative entropy of recovery gives the optimal error rate. In the first part, we discriminate against states retrieved via a global recovery map and we show that it fulfills the requirements of a result by Brandao et al. (see Theorem~\ref{CompBrandao}) which leads to the desired result. In the second part we discriminate against convex combination of locally recovered states, showing the result via our composite Stein's Lemma in Section~\ref{CompStein}. 

\paragraph{Global recovery map}

Suppose either the state $\rho_{ABC}^{\otimes n}$ or the state $\mathcal{R}_{C^n \to A^n C^n}(\rho_{BC}^{\otimes n})$, where $\mathcal{R}_{C^n \to A^n C^n}$ is some arbitrary collective recovery channel acting on all $n$ of the $C$ systems, is prepared. The goal is then to determine which state has been prepared by performing a collective measurement on all of the systems $A^n B^n C^n$. This gives us a hypothesis testing scenario, for which we prove that
$D^{\infty}(A;B|C)_{\rho}$ is equal to  
the optimal exponent for the Type II error if the Type I error is constrained to be no larger than a constant $\varepsilon \in (0,1)$, that is the Stein's Lemma kind of setting introduced in Chapter \ref{hypoTesting}. Thus, our result establishes a concrete operational interpretation of the regularized relative entropy of recovery in this hypothesis testing experiment.

Since one of the states is not fixed, but allows for an arbitrary recovery map to be applied before measuring, this is an instance of a general composite hypothesis testing problem of discriminating between a state $\rho^{\otimes n}$ and a set $\mathcal{S}^{\left( n\right)  }$ of states, where in our case:
\hyptest{\begin{description}
\item[Null hypothesis:] the fixed state $\rho^{\otimes n}    =\rho_{ABC}^{\otimes n}$
\item[Alternative hypothesis:] the set \\ $\mathcal{S}^{\left(  n\right)  }  =\left\{  \mathcal{R}_{C^{n}\rightarrow A^{n}C^{n}}(  \rho_{BC}^{\otimes n})  :\mathcal{R}\in\text{CPTP}\right\}$.
\end{description}}{} 
with CPTP denoting the set of quantum channels from $C^n$ to $A^n C^n$.
To handle this composite discrimination setting we use Theorem~\ref{CompBrandao} in the beginning of Chapter~\ref{CompHypo} (originally from \cite{BP10}).  For the readers convenience we again state the necessary conditions on the alternative hypothesis here:

\begin{enumerate}
\item Convexity -- $\mathcal{S}^{\left(  n\right)  }$ is convex and closed for all $n$.

\item Full Rank -- There exists a full rank state $\sigma$ such that each $\mathcal{S}^{\left(  n\right)  }$ contains $\sigma^{\otimes n}$.

\item Reduction -- For each $\sigma\in\mathcal{S}^{\left(  n\right)  }$, Tr$_{n}\{\sigma\}  \in\mathcal{S}^{\left(  n-1\right)  }$.

\item Concatenation -- If $\sigma_{n}\in\mathcal{S}^{\left(  n\right)  }$ and $\sigma_{m}\in\mathcal{S}^{\left(  m\right)  }$, then $\sigma_{n}\otimes\sigma_{m} \in\mathcal{S}^{\left(  n+m\right)  }$.

\item Permutation invariance -- $\mathcal{S}^{\left(  n\right)  }$ is closed under permutations.
\end{enumerate}

We now verify that the set $\mathcal{S}^{\left(  n\right)  }$ as defined above, satisfies the necessary properties.

\textbf{Convexity}. Let $\mathcal{R}_{C^{n}\rightarrow A^{n}C^{n}}^{1}(  \rho _{BC}^{\otimes n})  ,\mathcal{R}_{C^{n}\rightarrow A^{n}C^{n}}%
^{2}(  \rho_{BC}^{\otimes n})  \in\mathcal{S}^{\left(  n\right)  } $. Then for all $\lambda\in\left[  0,1\right]  $, we have
\begin{equation*}
\lambda\mathcal{R}_{C^{n}\rightarrow A^{n}C^{n}}^{1}(  \rho_{BC}^{\otimes n})  +\left(  1-\lambda\right)  \mathcal{R}_{C^{n}\rightarrow A^{n}%
C^{n}}^{2}(  \rho_{BC}^{\otimes n})  \in\mathcal{S}^{\left( n\right)  }\, ,
\end{equation*}
because $\lambda\mathcal{R}_{C^{n}\rightarrow A^{n}C^{n}}^{1}+\left(1-\lambda\right)  \mathcal{R}_{C^{n}\rightarrow A^{n}C^{n}}^{2}$ is a quantum channel if
$\mathcal{R}_{C^{n}\rightarrow A^{n}C^{n}}^{1}$ and $\mathcal{R}_{C^{n}\rightarrow A^{n}C^{n}}
^{2}$ are. Furthermore, the set of all CPTP\ maps is closed. \\

\textbf{Full Rank}. Without loss of generality, we can assume that $\rho_{B}$ is a full rank state. A particular recovery channel is one which traces out system $C$ and replaces it with the maximally mixed state on $AC$. Taking $n$ copies of such a state gives a full-rank state in $\mathcal{S}^{\left(  n\right)  }$. \\

\textbf{Reduction}. Let $\mathcal{R}_{C^{n}\rightarrow A^{n}C^{n}}(  \rho_{BC}^{\otimes n})  \in\mathcal{S}^{\left(  n\right)  }$. Consider that%
\begin{align}
\tr_{A_{n}B_{n}C_{n}}\{  \mathcal{R}_{C^{n}\rightarrow A^{n}C^{n}%
}(  \rho_{BC}^{\otimes n})  \}  =  \text{Tr}_{A_{n}C_{n}%
}\{  \mathcal{R}_{C^{n}\rightarrow A^{n}C^{n}}(  \rho_{BC}^{\otimes n-1}\otimes\rho_{C})  \}  .
\end{align}
This state is in $\mathcal{S}^{\left(  n\right)  }$ because the recovery channel for $\rho_{BC}^{\otimes n-1}$ could consist of tensoring in $\rho_{C}$, applying $\mathcal{R}_{C^{n}\rightarrow A^{n}C^{n}}$, and tracing out systems $A_{n}C_{n}$. \\

\textbf{Concatenation}. Let $\mathcal{R}_{C^{n}\rightarrow A^{n}C^{n}}^{1}(  \rho _{BC}^{\otimes n})  \in\mathcal{S}^{\left(  n\right)  }$ and $\mathcal{R}_{C^{m}\rightarrow A^{m}C^{m}}^{2}(  \rho_{BC}^{\otimes m})  \in\mathcal{S}^{\left(  m\right)  }$. Then
\begin{equation}
\mathcal{R}%
_{C^{n}\rightarrow A^{n}C^{n}}^{1}(  \rho_{BC}^{\otimes n}) \otimes\mathcal{R}_{C^{m}\rightarrow A^{m}C^{m}}^{2}(  \rho_{BC}^{\otimes m})  \in\mathcal{S}^{\left(  n+m\right)  },
\end{equation}
 because
\begin{multline}
\mathcal{R}_{C^{n}\rightarrow A^{n}C^{n}}^{1}(  \rho_{BC}^{\otimes n})  \otimes\mathcal{R}_{C^{m}\rightarrow A^{m}C^{m}}^{2}(
\rho_{BC}^{\otimes m})  = \\ \left(  \mathcal{R}_{C^{n}\rightarrow A^{n}C^{n}}^{1}\otimes\mathcal{R}_{C^{m}\rightarrow A^{m}C^{m}}^{2}\right)
(  \rho_{BC}^{\otimes n+m})  ,
\end{multline}
so that the recovery channel consists of the parallel concatenation of
$\mathcal{R}_{C^{n}\rightarrow A^{n}C^{n}}^{1}$ and $\mathcal{R}%
_{C^{m}\rightarrow A^{m}C^{m}}^{2}$. \\

\textbf{Permutation invariance}. Here, we need to show that for $\sigma\in\mathcal{S}^{\left(  n\right) }$, we have $\pi\sigma\pi^{\dag}\in\mathcal{S}^{\left(  n\right)  }$ for
all permutations $\pi$ of the $n$ systems. Let $\mathcal{R}_{C^{n}\rightarrow A^{n}C^{n}}(  \rho_{BC}^{\otimes n})  \in\mathcal{S}^{\left(
n\right)  }$. Then
\begin{align}
&  \pi_{A^{n}B^{n}C^{n}}\mathcal{R}_{C^{n}\rightarrow A^{n}C^{n}}( \rho_{BC}^{\otimes n})  \left(  \pi_{A^{n}B^{n}C^{n}}\right)  ^{\dag
}\nonumber\\
&  =\left(  \pi_{A^{n}}\otimes\pi_{B^{n}}\otimes\pi_{C^{n}}\right) \mathcal{R}
(  \rho_{BC}^{\otimes n})
\left(  \pi_{A^{n}}\otimes\pi_{B^{n}}\otimes\pi_{C^{n}}\right)  ^{\dag}\nonumber\\
&  =\left(  \pi_{A^{n}}\otimes\pi_{C^{n}}\right)  \mathcal{R}
(  \pi_{B^{n}}\rho_{BC}^{\otimes n}\pi_{B^{n}%
}^{\dag})  \left(  \pi_{A^{n}}\otimes\pi_{C^{n}}\right)  ^{\dag}\nonumber\\
&  =\left(  \pi_{A^{n}}\otimes\pi_{C^{n}}\right)  \left[  \mathcal{R}%
(  \pi_{C^{n}}^{\dag}\rho_{BC}^{\otimes n}\pi_{C^{n}})  \right]  \left(  \pi_{A^{n}}\otimes\pi_{C^{n}}\right)
^{\dag}  \nonumber\\
& \in\mathcal{S}^{\left(  n\right)  },
\end{align}
where the second equality follows because the permutation of the $B$ systems commutes with the recovery channel. The third equality follows because $\rho_{BC}^{\otimes n}$ is a permutation invariant state, and the last line because a potential recovery consists of applying the permutation $\pi_{C^{n}}^{\dag}$, followed by $\mathcal{R}_{C^{n}%
\rightarrow A^{n}C^{n}}$, followed by the permutation $\pi_{A^{n}}\otimes \pi_{C^{n}}$. \\

By employing Theorem~\ref{CompBrandao} and the above observations, we can conclude that
\begin{equation}
\zeta_{\rho,\mathcal{S}^{\left( n\right) }}(0)=D^{\infty}(  A;B|C)_{\rho}, \label{eq:key-identity}%
\end{equation}
for all $\varepsilon\in\left(  0,1\right)  $.
As claimed, this gives an operational interpretation of $D^{\infty}(  A;B|C)  _{\rho}$ as the optimal Type II error exponent in a composite asymmetric state discrimination setting.

\bonus{

\begin{overpic}[clip, scale=0.23]{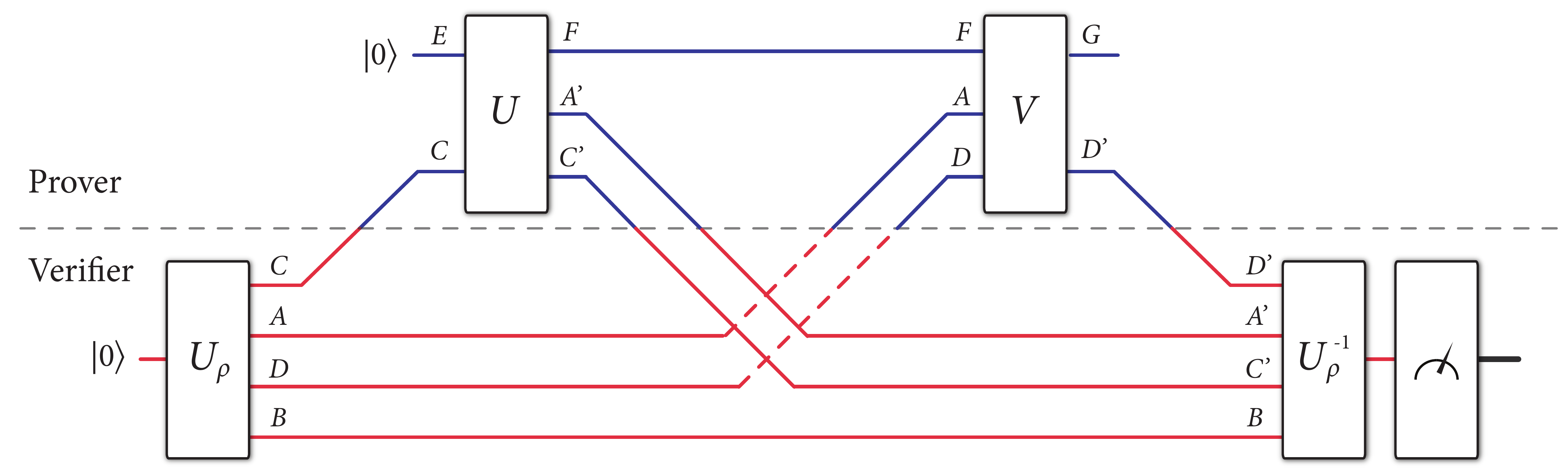}   
\end{overpic}
\captionof{figure}{\label{fig:for}The quantum interactive proof system showing that \textsc{FoR} is in \textsc{QIP}.}

In \cite{CHMOSWW16} we also discussed an operational interpretation for the fidelity of recovery. In that scenario we ask: given a description of a quantum circuit that prepares a state $\rho_{ABC}$, what is the maximum probability with which someone could be convinced that the state is recoverable? Also, how difficult is the task of deciding if the state meets some criteria of recoverability when $A$ is lost?  We address these questions by defining the associated decision problem, called \textsc{FoR} for ``fidelity of recovery.'' Using ideas from quantum complexity theory \cite{W09,VW15}, we show that the fidelity of recovery is equal to the maximum probability with which a verifier can be convinced that $\rho_{ABC}$ is recoverable from $\rho_{BC}$ by acting on system $C$ alone. The quantum interactive proof system to establish this operational meaning for the fidelity of recovery
is depicted in Figure~\ref{fig:for} and
follows intuitively from the duality property of the fidelity of recovery, which was originally established in \cite{SW14}.
It also proves that \textsc{FoR} is contained in the complexity class \textsc{QIP} \cite{W09,VW15}.

However, the proof system in Figure~\ref{fig:for} requires the exchange of four messages  between
the verifier and 
the prover. From a computational complexity theoretic perspective, it is desirable to reduce the number of messages exchanged. In fact, this is certainly possible because a general procedure, which reduces any quantum interactive proof system to an equivalent one which has only three messages exchanged, is already known \cite{QIPAmp00}. We contribute a different proof system for \textsc{FoR} which requires
the exchange of only two messages  between the verifier and the prover. The main idea is that the verifier can force the prover to perform his actions in superposition, and 
the result is that the \textsc{FoR} decision problem is in \textsc{QIP}(2).
 We also argue that \textsc{FoR} is hard for \textsc{QSZK} \cite{W02,W09zkqa}, by building on earlier work in \cite{v011a003}.
Note that both \textsc{QSZK} and \textsc{QIP}(2) contain problems believed to be difficult to solve by a quantum computer. 
For details we refer to \cite{CHMOSWW16}. 
}{Operational interpretation for the Fidelity of Recovery}


\paragraph{Local recovery maps}\label{sec:hypothesis_recovery}

In contrast, using our result from Section~\ref{CompStein} we can also cover the following discrimination problem.
\hyptest{\begin{description}
\item[Null hypothesis:] the fixed state $\rho_{ABC}^{\otimes n}$
\item[Alternative hypothesis:] the convex sets of iid states \\ $\bar{\cR}^n:=\left\{\int\left((\cI_A\otimes\cR_{C\to BC})(\rho_{AC})\right)^{\otimes n}\;\mathrm{d}\mu(\cR)\right\}$ with $\cR_{C\to BC}\in\text{CPTP}$.
\end{description}}{}

This gives us the asymptotic error rate 
\begin{align}
\zeta_{\rho,\bar{\cR}^n}(0)=\lim_{n\to\infty} \frac{1}{n} \inf_{\mu\in\cR} D\left(\rho_{ABC}^{\otimes n}\middle\|\int\big((\cI_A\otimes\cR_{C\to BC})(\rho_{AC})\big)^{\otimes n}\;\mathrm{d}\mu(\cR)\right)\,. 
\end{align}
Interestingly, we can show that both rates are identical.

\begin{prop}
For the discrimination problems in this section we have $\zeta_{\rho,\mathcal{S}^{\left( n\right) }}(0)=\zeta_{\rho,\bar{\cR}^n}(0)$.
\end{prop}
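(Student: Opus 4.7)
The plan is to show that both $\zeta_{\rho,\mathcal{S}^{(n)}}(0)$ and $\zeta_{\rho,\bar{\cR}^n}(0)$ coincide with the common regularized rate $D^\infty(A;B|C)_\rho$. The first equality was just established via Theorem~\ref{CompBrandao}. For the second, I would invoke Theorem~\ref{thm:main} (the composite Stein's lemma of Section~\ref{CompStein}) applied to the single-state null $\cS=\{\rho_{ABC}\}$ and the convex alternative set $\cT=\{(\cI_A\otimes\cR_{C\to BC})(\rho_{AC}) : \cR_{C\to BC}\in\text{CPTP}\}$. Convexity of $\cT$ is immediate, and a full-rank element sits in $\cT$ via the channel $X_C\mapsto\tr[X_C]\cdot\id_{BC}/|BC|$ (assuming WLOG that $\rho_A$ is full rank), so the theorem applies and yields
\begin{equation*}
\zeta_{\rho,\bar{\cR}^n}(0) = \lim_{n\to\infty}\frac{1}{n}\inf_{\mu}D\left(\rho_{ABC}^{\otimes n}\,\middle\|\,\int\big((\cI_A\otimes\cR)(\rho_{AC})\big)^{\otimes n}\mathrm{d}\mu(\cR)\right).
\end{equation*}

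For the direction $\zeta_{\rho,\bar{\cR}^n}(0) \geq \zeta_{\rho,\mathcal{S}^{(n)}}(0)$ I would combine a set-inclusion argument with the $A\leftrightarrow B$ relabeling symmetry of the testing problem. Any measure $\mu$ on single-shot local recoveries produces a valid global CPTP map $\int\cR^{\otimes n}\mathrm{d}\mu(\cR):C^n\to B^nC^n$, embedding $\bar{\cR}^n$ into the analogous global-recovery-of-$B$ alternative set $\mathcal{S}^{(n)}_B:=\{(\cI_{A^n}\otimes\cR^{(n)}_{C^n\to B^nC^n})(\rho_{AC}^{\otimes n}) : \cR^{(n)}\in\text{CPTP}\}$. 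The five conditions of Theorem~\ref{CompBrandao} are verified for $\mathcal{S}^{(n)}_B$ by the same argument given in the preceding subsection with the labels $A$ and $B$ exchanged, giving rate $D^\infty(B;A|C)_\rho$. The symmetry under interchange of $A$ and $B$ in the hypothesis testing problem (inherited from the $A\leftrightarrow B$ symmetry of the CQMI) then identifies this with $D^\infty(A;B|C)_\rho=\zeta_{\rho,\mathcal{S}^{(n)}}(0)$, and the set inclusion yields the desired inequality.

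The reverse inequality $\zeta_{\rho,\bar{\cR}^n}(0) \leq \zeta_{\rho,\mathcal{S}^{(n)}}(0)$ is the main obstacle: convex combinations of IID local recoveries form a strictly smaller class than arbitrary global recoveries, and no single-letter collapse is available in the non-commutative setting. My plan is to exploit permutation symmetry in the same spirit as Lemma~\ref{pinching}. Given a near-optimal global recovery $\cR^{(n)}$ achieving rate within $\eps$ of $D^\infty(B;A|C)_\rho$, I would first symmetrize as $\frac{1}{n!}\sum_\pi\pi\,\cR^{(n)}\pi^{-1}$, which does not increase the relative entropy by joint convexity and produces a permutation covariant recovery whose output on $\rho_{AC}^{\otimes n}$ is permutation invariant on $A^nB^nC^n$. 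A de Finetti or post-selection type reduction then approximates this permutation invariant alternative by a convex combination of IID states at the cost of a $\poly(n)$ factor that is absorbed in the regularized rate, exactly as in the proof of Theorem~\ref{thm:main}. The delicate technical step, which is the main remaining task, is to certify that the IID components produced by the de Finetti reduction can be taken (or absorbed into) single-shot states of the restricted form $(\cI_A\otimes\cR)(\rho_{AC})$, which is handled by tracking the marginal structure imposed by the fact that the recovery map acts only on $C^n$ and exhibiting an explicit single-shot CPTP map reproducing the required state up to controlled error.
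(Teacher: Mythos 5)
You follow the same overall route as the paper: the easy inequality $\zeta_{\rho,\mathcal{S}^{(n)}}(0)\leq\zeta_{\rho,\bar{\cR}^n}(0)$ comes from observing that $\int\cR_{C\to BC}^{\otimes n}\,\mathrm{d}\mu(\cR)$ is itself an admissible global recovery channel, and the hard inequality is attacked by symmetrizing the global recovery and invoking a de Finetti-type reduction whose $\poly(n)$ cost is absorbed by the regularization. The inclusion argument and the symmetrization step are fine. However, the step you defer as ``the main remaining task'' is not a technicality to be tidied up later --- it is the entire content of the hard direction, and the route you sketch for it does not work. A de Finetti or post-selection reduction applied to the permutation-invariant \emph{output state} $\cR^{(n)}(\rho_{AC}^{\otimes n})$ yields an upper bound by $\poly(n)$ times a convex combination of iid states $\int\sigma^{\otimes n}\,\mathrm{d}\nu(\sigma)$ in which the single-shot states $\sigma$ range over all of $S(\cH_{ABC})$; nothing in that decomposition remembers that the state was produced by a map acting only on $C^n$, so there is no way to ``track the marginal structure'' and certify that each component has the form $\left((\cI_A\otimes\cR)(\rho_{AC})\right)^{\otimes n}$. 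Being the output of a single-shot recovery map is a far stronger structural constraint than having a prescribed marginal, and a state-level de Finetti decomposition gives you no handle on it.

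What closes the gap in the paper is a de Finetti reduction at the level of \emph{channels} rather than states (Lem.~8 of \cite{BHOS15}, first derived in \cite{FR14}): for every permutation covariant $\cR_{C^n\to B^nC^n}$ there exists a measure $\nu$ over single-shot CPTP maps $\cR_{C\to BC}$ such that $\cR_{C^n\to B^nC^n}(\omega_{C^n})\leq\poly(n)\cdot\int\cR_{C\to BC}^{\otimes n}(\omega_{C^n})\,\mathrm{d}\nu(\cR)$ holds for \emph{every} input $\omega_{C^n}$. Evaluating this at $\omega_{C^n}=\rho_{AC}^{\otimes n}$ produces, by construction, $\poly(n)$ times an element of $\bar{\cR}^n$, and operator monotonicity of the logarithm together with joint convexity converts the operator inequality into $D\left(\rho_{ABC}^{\otimes n}\middle\|\cR^{(n)}(\rho_{AC}^{\otimes n})\right)\geq D\left(\rho_{ABC}^{\otimes n}\middle\|\int\left((\cI_A\otimes\cR)(\rho_{AC})\right)^{\otimes n}\mathrm{d}\nu(\cR)\right)-\log\poly(n)$, after which the $\frac{1}{n}\log\poly(n)$ vanishes in the regularized rate. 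Unless you import this channel-level statement (or prove an equivalent), your argument for $\zeta_{\rho,\bar{\cR}^n}(0)\leq\zeta_{\rho,\mathcal{S}^{(n)}}(0)$ remains open.
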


\begin{proof}
The support of $\rho_{ABC}$ lies in the support of at least one state in $\bar{\cR}$ iff it does so for $\mathcal{S}$. If this is not the case, $\zeta_{\rho,\mathcal{S}^{\left( n\right) }}(0)$ and $\zeta_{\rho,\bar{\cR}^n}(0)$ evaluate to infinity. Therefore we can restrict the proof to the former case. 

By definition we have $\zeta_{\rho,\mathcal{S}^{\left( n\right) }}(0)\leq\zeta_{\rho,\bar{\cR}^n}(0)$ and for the other direction we use a de Finetti reduction for quantum channels from~\cite[Lem.~8]{BHOS15} (first derived in~\cite{FR14}). Namely, for $\omega_{C^n}\in \setS{\cH_C^{\otimes n}}$ and permutation invariant $\cR_{C^n\to B^nC^n}$ we have
\begin{align}
\cR_{C^n\to B^nC^n}\left(\omega_{C^n}\right)\leq\poly(n)\cdot\int\left(\cR_{C\to BC}\right)^{\otimes n}\left(\omega_{C^n}\right)\mathrm{d}\nu(\cR)
\end{align}
for some measure $\nu(\cR)$ over the completely positive and trace preserving maps on $C\to BC$. As explained in the proof of~\cite[Prop.~9]{BHOS15}, the joint convexity of the quantum relative entropy together with the operator monotonicity of the logarithm then imply that
\begin{align}
&D\left(\rho_{ABC}^{\otimes n}\middle\|\cR_{C^n\to B^nC^n}\left(\rho_{AC}^{\otimes n}\right)\right)\qquad \nonumber\\ 
&\geq D\left(\rho_{ABC}^{\otimes n}\middle\|\int\big((\cI_A\otimes\cR_{C\to BC})(\rho_{AC})\big)^{\otimes n}\;\mathrm{d}\nu(\cR)\right)-\log\poly(n)\,.
\end{align}
By inspection this leads to $\zeta_{\rho,\mathcal{S}^{\left( n\right) }}(0)\geq\zeta_{\rho,\bar{\cR}^n}(0)$ and hence implies the claim.
\end{proof}
We have thus given two a priori different discrimination problems, leading to the same exponential error rate, which is given by the relative entropy of recovery and therefore giving it an operational interpretation via hypothesis testing.


\section{Counterexamples to the relative entropy conjecture}\label{CounterExample}

In the introduction of this chapter, we discussed several lower bounds on the conditional quantum mutual information, including one based on the regularized relative entropy of recovery (see e.g. Equation \eqref{recIneqChain}). A particularly interesting question that was left open for a long time was whether a similar bound could hold without regularizing the relative entropy, until recently counterexamples were provided in \cite{FF17}. We will now briefly review that counterexample: there exists $\theta\in\left[0,\pi/2\right]$ such that
\begin{align}
&I(A:C|B)_\rho\ngeq\inf_{\cR}D\left(\rho_{ABC}\middle\|(\cI_A\otimes\cR_{B\to BC})(\rho_{AC})\right)\;\label{eq:example_fawzi} \end{align}
for the pure state $\rho_{ABC}=|\rho\rangle\langle\rho|_{ABC}$ with 
\begin{align}
|\rho\rangle_{ABC}=&\frac{1}{\sqrt{2}}|0\rangle_A\otimes|0\rangle_B\otimes|0\rangle_C \nonumber\\
&+\frac{1}{\sqrt{2}}\left(\cos(\theta)|0\rangle_A\otimes|1\rangle_C+\sin(\theta)|1\rangle_A\otimes|0\rangle_C\right)\otimes|1\rangle_B.
\end{align}  
While the above example is already surprisingly simple, one might wonder whether the conjecture can also be proven wrong in even simpler settings. We know that when $\rho$ is chosen to be a completely classical state, Equation \eqref{eq:example_fawzi} always holds with equality. Therefore it would be interesting to look at an intermediate case and that is exactly what we will do in the following paragraph.

\begin{figure}[t!]
\centering
\begin{overpic}[trim=1cm 7cm 1cm 6.5cm, clip, scale=0.5]{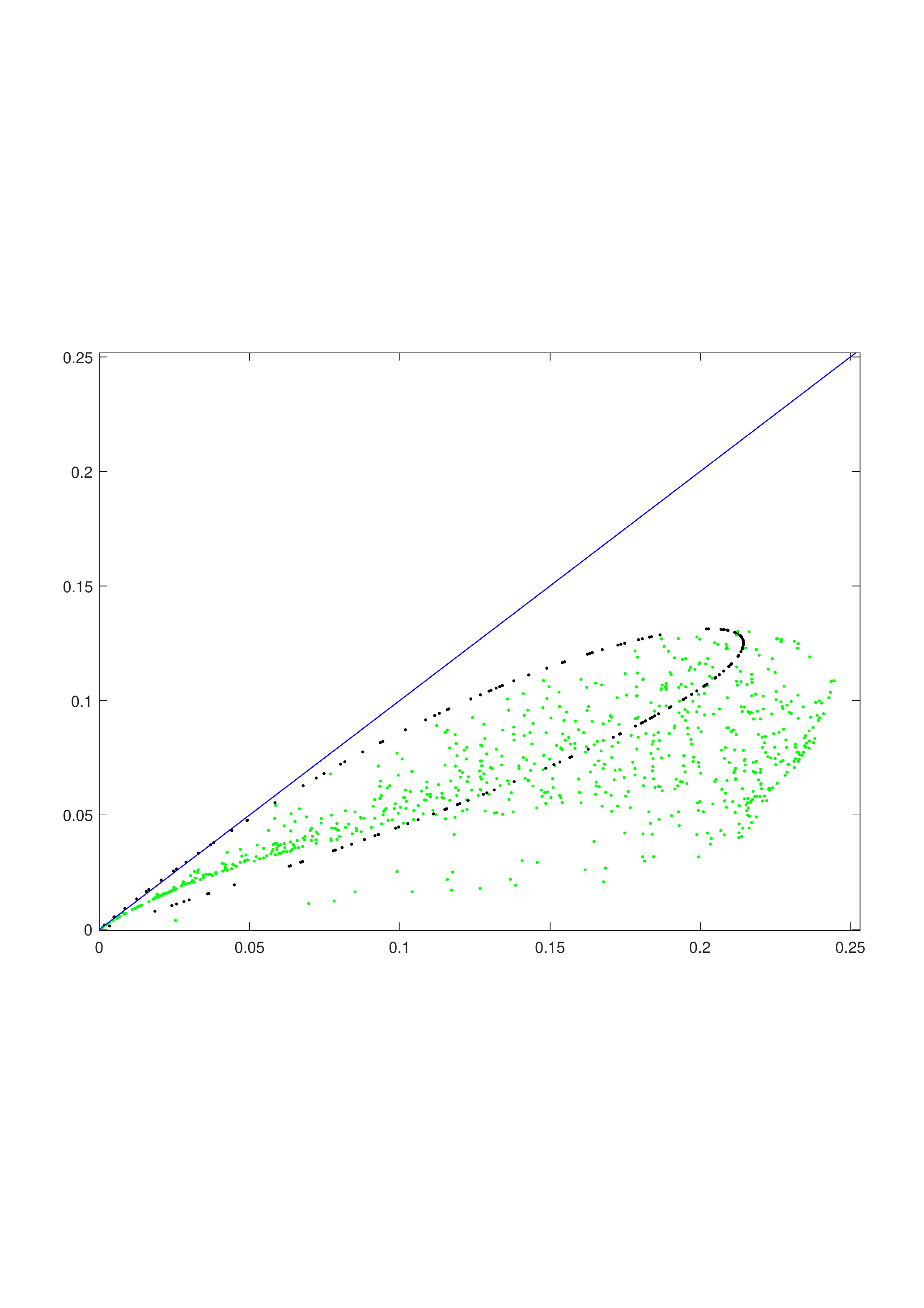}   
\put (-10,81) {$D(A:C|B_1B_2)$}
\put (88,0) {$I(A:C|B_1B_2)$}
\end{overpic}
\caption{
In this figure we plot the conditional quantum mutual information against the relative entropy of recovery for states of the form in Equation~\eqref{ccqTest}. For the green data points the states $\sigma_0$ and $\sigma_1$ are randomly drawn mixed states, while for the black ones we only draw from pure states. The blue line shows where both quantities would be equal. Although the violation is rather small, it can be seen that certain data points are above the blue line and therefore violate the conjectured relative entropy lower bound in the case of classical $A$ and $C$ systems. \label{IvsD}}
\end{figure}

We consider the case where $A$ and $C$ are classical systems and $B$ is quantum. Consider the following state
\begin{align}
\tau_{ACB} &=\frac{1}{4}{\ketbra{}{0}{0}}_A\otimes{\ketbra{}{0}{0}}_C\otimes\sigma_0^{B_1}\otimes\sigma_0^{B_2}+\frac{1}{4}{\ketbra{}{1}{1}}_A\otimes{\ketbra{}{0}{0}}_C\otimes\sigma_1^{B_1}\otimes\sigma_0^{B_2}\nonumber\\
&\quad+\frac{1}{4}{\ketbra{}{1}{1}}_A\otimes{\ketbra{}{1}{1}}_C\otimes\sigma_0^{B_1}\otimes\sigma_1^{B_2}+\frac{1}{4}{\ketbra{}{0}{0}}_A\otimes{\ketbra{}{1}{1}}_C\otimes\sigma_1^{B_1}\otimes\sigma_1^{B_2}. \label{ccqTest}
\end{align}
Note that this state is identical to the state in Equation~\eqref{InfoCombCNOT}, which plays an important role in Chapter~\ref{infoCombining}. We will use the same numerical tools as~\cite{FF17} to investigate this state, namely a recent semidefinite approximation of the matrix logarithm that was put forward in~\cite{FSP17}. A general overview over the the numerics can be seen in Figure~\ref{IvsD}. 

\begin{figure}[t!]
\centering
\begin{overpic}[trim=1cm 9cm 1cm 9cm, clip, scale=0.5]{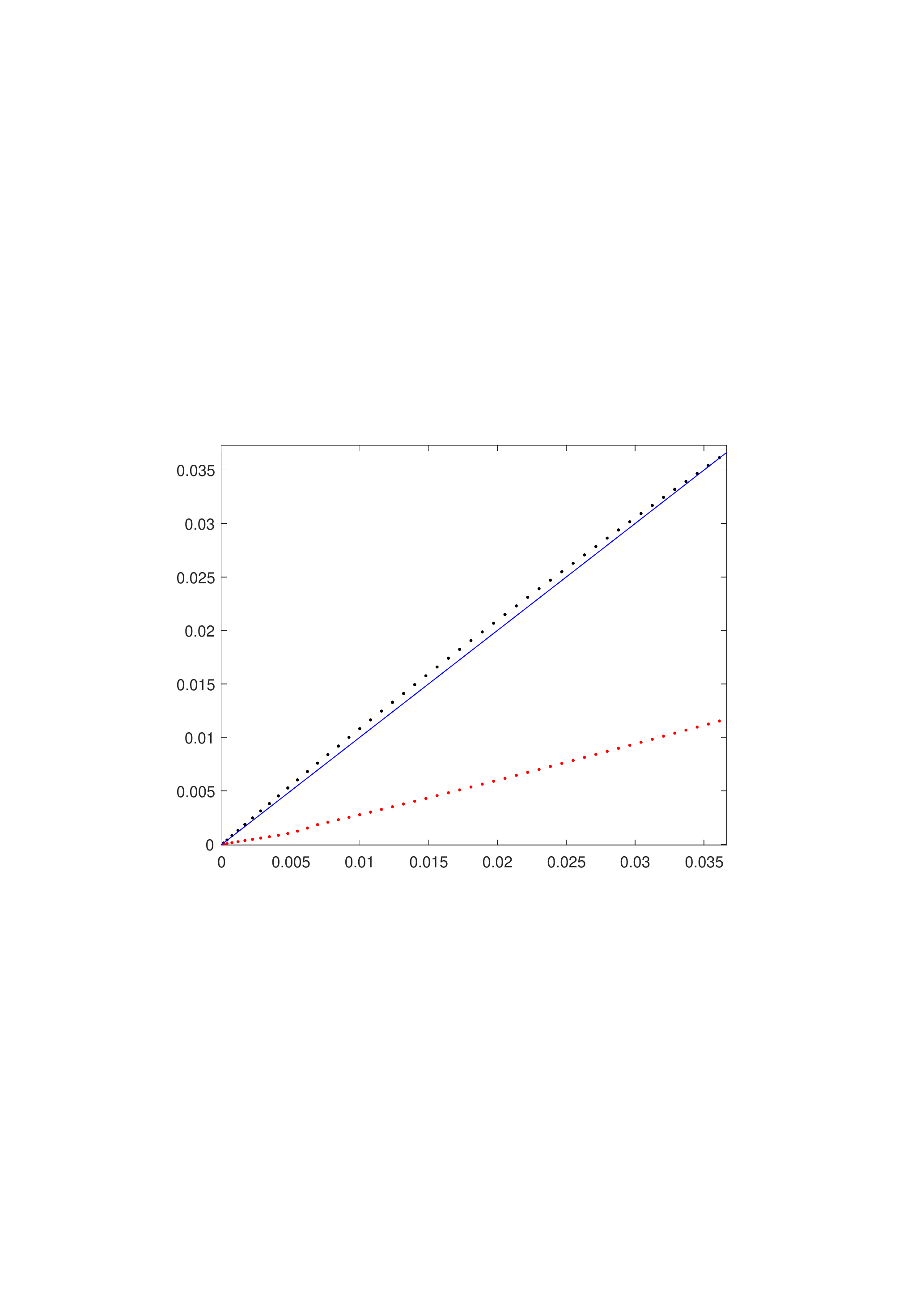}   
\put (-10,60) {$D(A:C|B_1B_2)$ and \textcolor{red}{$D_M(A:C|B_1B_2)$}}
\put (88,0) {$I(A:C|B_1B_2)$}
\end{overpic}
\caption{
In this figure we show a close up on the region of Figure~\ref{IvsD} where the conjecture is violated. The black data points correspond to plotting the relative entropy of recovery against the conditional quantum mutual information for the states defined in Equation~\ref{violationStates}. For comparison we also plot the measured relative entropy of recovery for the same states in red. \label{IvsDclose}}
\end{figure}

Since the violations of the conjecture are more difficult to find than in the general case, we provide a concrete example of a state that has higher relative entropy of recovery than  conditional quantum mutual information. We therefore write $\sigma_0$ and $\sigma_1$ in their corresponding Bloch representation
\begin{align}
\sigma_i = \frac{1}{2}\left( \id + \vec{a_i}\vec{\sigma}\right) = \frac{1}{2}\begin{pmatrix} 1+w_i & u_i-iv_i \\ u_i+iv_i & 1-w_i\end{pmatrix},
\end{align}
where $\vec{a_i} = \left(u_i,v_i,w_i\right)$ is the Bloch vector parameterizing the state $\sigma_i$ and $\vec{\sigma}$ is the vector of Pauli matrices. 
Now, consider the family of pairs of states with 
\begin{align}
\vec{a_0} = \left( -0.9, v_0, 0.4 \right)\;\text{and}\; \vec{a_1} = \left( u_0 + 0.01 x, v_1, w_0 + 0.0001 \right), \label{violationStates}
\end{align}
with $v_i = \sqrt{1 - u^2_i - v^2_i}$ to ensure purity of the state. By careful numerical evaluation of these states we find that at least for all $x\in [1,9]$ the conjecture is violated. In Figure~\ref{IvsDclose} we show the corresponding numerical values for different values of $x$.


\section{Regularization needed a for general composite Stein's lemma}\label{RegNeeded}

We are now in the position to easily show, as promised earlier, that the regularization in the composite Steins lemma in Theorem~\ref{thm:main} is actually necessary. 
Towards that goal we use our bound on the conditional quantum mutual information in Theorem~\ref{thm:cqmi} (see also~\cite{BDKSSS05}). That is, we give a proof for Equation~\eqref{eq:reg_needed}. From Theorem~\ref{thm:cqmi} we have, (alternatively we could use the implicitly stated bound from~\cite[Eq.~38]{BHOS15}.)
\begin{align}
I(A:B|C)_\rho&\geq\limsup_{n\to\infty}\frac{1}{n}D\left(\rho_{ABC}^{\otimes n}\middle\|\int\beta_0(t)\left(\cI_A\otimes\cR^{[t]}_{C\to BC}(\rho_{AC})\right)^{\otimes n}\mathrm{d}t\right)\\
&\geq\limsup_{n\to\infty}\frac{1}{n}\inf_{\mu\in\cR}D\left(\rho_{ABC}^{\otimes n}\middle\|\int\left(\cR_{C\to BC}\left(\rho_{AC}\right)\right)^{\otimes n}\mathrm{d}\mu(\cR)\right)\,.
\end{align}
From the second composite discrimination problem described in Section~\ref{sec:hypothesis_recovery} we see that the latter quantity is equal to the optimal asymptotic error exponent $\bar{\zeta}_{A:B|C}(0)$ for testing $\rho_{ABC}^{\otimes n}$ against $\int\left((\cI_A\otimes\cR_{B\to BC})(\rho_{AC})\right)^{\otimes n}\;\mathrm{d}\mu(\cR)$. Now, if the regularization in the formula for $\bar{\zeta}_{A:B|C}(0)$ would actually not be needed, this would imply that
\begin{align}
I(A:B|C)_\rho\geq\inf_{\cR}D\left(\rho_{ABC}\|(\cI_A\otimes\cR_{B\to BC})(\rho_{AC})\right)\,,
\end{align}
However, this is in contradiction with the counterexample from~\cite[Sect.~5]{FF17} as discussed in Section~\ref{CounterExample}. Hence, we conclude that the regularization for composite convex iid testing is needed in general. \qed


\chapter{Bounds on information combining with quantum side information}\label{infoCombining}

\pgfdeclarelayer{background}
\pgfdeclarelayer{firstbackground}
\pgfdeclarelayer{secondbackground}
\pgfsetlayers{secondbackground,firstbackground,background,main}

\pgfooclass{stamp}{ 
    \method stamp() { 
    }
    \method apply(#1,#2,#3) { 
	\draw (#1+2,#2) -- (#1,#2) -- (#1,#2+1) -- (#1+2,#2+1) -- cycle;
          \node[font=\tiny] at (#1+1,#2+0.5) {#3};
    }
    \method box(#1,#2,#3,#4,#5) { 
	\filldraw[fill=#5] (#1+#3,#2) -- (#1,#2) -- (#1,#2+#4) -- (#1+#3,#2+#4) -- cycle;
   }
 \method cnot(#1,#2,#3,#4) { 
	\draw (#1,#2) -- (#1-#3,#2) -- (#1-#3,#2-#4) -- (#1,#2-#4)-- (#1-#3-2,#2-#4);
	\draw (#1-#3,#2) circle (0.2);
	\draw (#1-#3,#2+0.2) -- (#1-#3,#2) -- (#1-#3-0.25,#2);
	\draw (#1-#3-0.25,#2) -- (#1-#3-2,#2);
   }
}
\pgfoonew \mystamp=new stamp()

\pgfooclass{block}{ 
    \method block() { 
    }
    \method basic(#1,#2,#3) { 
	\mystamp.apply(#1,#2,#3)
	\draw (#1-1,#2+0.5) -- (#1,#2+0.5);
	\draw (#1+2,#2+0.5) -- (#1+3,#2+0.5);
    }
}

\pgfoonew \myblock=new block()

In this chapter, we will move to the seemingly different topic of information combining, while keeping our focus on entropy inequalities. Nevertheless, we will be able to utilize the topics discussed in the previous chapters to aid us with the problems we encounter here; in particular, the recoverability lower bounds on the conditional quantum mutual information will play an important role in what follows. Let us start by introducing the setting of this chapter. 
Many of the tasks in classical and quantum information theory are concerned with the evolution of random variables and their corresponding entropies under certain ``combining operations''. 
A particularly elementary example is the addition of two independent classical random variables. In this case the entropy can be easily computed since we know that the addition of two random variables has a probability distribution which corresponds to the convolution of the probability distributions of the individual random variables. 
The picture changes when we have random variables \emph{with} side information. Now, we are interested in the entropy of the sum conditioned on all the available side information. Evaluating this is substantially more difficult, already in the case of classical side information. 

The field of \textit{bounds on information combining} is concerned with finding optimal entropic bounds on the conditional entropy in ``information combining'' scenarios such as this. 
A particular basic setting is that of binary random variables for which an optimal lower bound, the well known \textit{Mrs. Gerber's Lemma}, was given by Wyner and Ziv in \cite{WZ73}. This bound immediately found many applications (see e.g. \cite{GKbook}). 

Following these results, additional approaches to the problem have been found which also led to an \emph{upper} bound on the conditional entropy of the combined random variables. One method of proof and several additional applications can be found in \cite{RU08}, along with the optimal upper bound. 

However, we are interested in above setting, but with \emph{quantum} -- rather than classical -- side information. Unfortunately, it turns out that none of the classical proof techniques apply in this quantum setting, since conditioning on quantum side information does not generally correspond to a convex combination over unconditional situations. We will review the classical proofs in Section~\ref{clInfoComb}. In the following we are concerned with investigating the optimal entropic bounds under quantum side information and report partial progress along with some conjectures. 

An alternative way of looking at the problem is by associating the random variables along with the side information to channels, where the random variable models the input of the channel leading to a known output given by the side information. This analogy is especially useful when investigating coding problems for classical channels. Recently, Arikan~\cite{A09} introduced the first example of constructive capacity achieving codes with efficient encoding and decoding, called polar codes. The elementary idea of polar codes is to combine two channels by a CNOT gate at their (classical) input, which means that the input of the second channel gets added to the input of the first channel. This adds noise on the first input, but provides assistance when decoding the second channel. 
To evaluate the performance of these codes, the Mrs.\ Gerber's Lemma provides an essential tool to tracking the evolution of the entropy through the coding steps (see e.g.\ \cite{AT14,GX15}, which we will build on below).
Following their introduction in the classical setting, polar codes have been generalized to classical-quantum channels~\cite{WG13}. In this chapter we show that finding good bounds on information combining with quantum side information can therefore also be very useful for proving important properties of classical-quantum polar codes.

As the main result of this chapter, we provide a lower bound on the conditional entropy of added random variables with quantum side information in Section~\ref{lower-bound}, using novel bounds on the concavity of the von Neumann entropy (see the details in Appendix~\ref{BoundsOnConc}), the improvements of strong subadditivity by Fawzi and Renner~\cite{FR14} discussed in Chapter~\ref{recoverability}, and results on channel duality by Renes \etal~\cite{RSH14, R17}. Furthermore, we will provide conjectures on the optimal inequalities (upper and lower bounds) in the quantum case. Finally, we discuss applications of our technical results to other problems in information theory and coding; in particular, we show how to use our results to prove sub-exponential convergence of classical-quantum polar codes to capacity, and that polarization takes place even for non-stationary classical-quantum channels. 
But before we start, we briefly discuss the relation of our problem to the well known entropy power inequalities. 

\paragraph{Entropy power inequalities}\label{EPI}
Bounds on information combining are generalizations of a family of entropic inequalities that are called \emph{entropy power inequalities} (for historic reasons). The first and paradigmatic of these inequalities was suggested by Shannon in the second part of his original paper on information theory~\cite{S48}, stating that
\begin{align}\label{shannon-epi}
 e^{2h(X_1)/n}+e^{2h(X_2)/n}\leq e^{2h(X_1+X_2)/n},
\end{align}
where $X_1$ and $X_2$ are random variables with values in ${\mathbb R}^n$ and $h(X) \coloneqq -\int d^n x\, p_A(x) \ln p_A(x)$ denotes the differential entropy (each of the three terms in Equation \eqref{shannon-epi} is called the \emph{entropy power} of the respective random variable $X_1$, $X_2$, and $X_1+X_2$); rigorous proofs followed later~\cite{S59}. We will give a detailed introduction to the special case of entropies of gaussian random variables in Chapter \ref{Covariance}. Clearly, the inequality in \eqref{shannon-epi} gives a lower bound on the entropy $h(X_1+X_2)$ of the sum $X_1+X_2$ given the individual entropies $h(X_1),h(X_2)$, and it is easy to see that the bound is tight (namely, for Gaussian $X_1,X_2$).

Similar lower bounds on the entropy of a sum of two (or more) random variables with values in a group $(G,+)$ have also been termed entropy power inequalities, see e.g.\ \cite{SW90}. For the simplest group $G={\mathbb Z}_2$, the optimal lower bound follows from a famous theorem in information theory, called \emph{Mrs.\ Gerber's Lemma} \cite{WZ73}, which we will describe below in more detail. For the group $G={\mathbb Z}$ of integers, entropy power inequalities in the form of lower bounds on the entropy have emerged~\cite{HAT14, Tao10} after a combinatorial version of the question had been investigated in the field of arithmetic (in particular, additive) combinatorics for a long time.

Most of the above entropy power inequality-like lower bounds remain valid when classical side information $Y_i$ is available for each of the random variables $X_i$, so that for example the entropic terms in \eqref{shannon-epi} are replaced by $h(X_1|Y_1)$, $h(X_2|Y_2)$, and $h(X_1+X_2|Y_1Y_2)$, respectively. This is due to typical convexity properties of these lower bounds along with a representation of the conditional Shannon entropy as a convex combination of unconditional entropies (see our description of the classical conditional Mrs.\ Gerber's Lemma in Section~\ref{clInfoComb}).

Entropy power inequalities have recently been investigated in the quantum setting \cite{KS14, PMG14,ADO15}, with the action of addition replaced by some quantum combining operation, such as a beamsplitter operation on continuous-variable states or a partial swap. These inequalities also hold under conditioning on \emph{classical} side information.
 
However, when the side information is \emph{quantum} in nature, i.e.\ each $(X_i,Y_i)$ is a classical-quantum state \cite{book2000mikeandike} (for the classical entropy power inequalities) or a fully quantum state (for the quantum entropy power inequalities), the proofs do not go though in the same way anymore. Actually, as we will see in the remainder of this chapter, the inequalities that hold under classical side information can sometimes be violated in the presence of quantum side information.

The only lower bounds available under quantum side information so far can be found in~\cite{K15, dPT17}, where for (Gaussian) quantum states an entropic lower bound was proven for the beamsplitter interaction. No general results for all classical-quantum states have been obtained so far.

In light of these developments, our contribution can be seen as the natural entry point into investigating the influence of \emph{quantum} side information in entropy power inequalities and information combining: For the ``information part'' we concentrate on the simplest scenario, namely \emph{classical} random variables $X_i$ that are \emph{binary-valued}, i.e.\ valued in the simplest non-trivial group $({\mathbb Z}_2,+)$. For the side information $Y_i$, however, we allow any general quantum system and states. Our question, therefore, highlights the added difficulties coming from the quantum nature of side information.

We will now start the main part of this chapter by reviewing the classical bounds on information combining. 

\section{Bounds on information combining in classical information theory}\label{clInfoComb}

For the sake of better understanding the goals and general problems that come with the task of finding inequalities in the quantum setting, we will first have a closer look at the optimal classical inequalities. For a classical binary random variable $X_i$, we can associate a probability distribution $p_i$, for which then $H(X_i) = h_2(p_i)$. Now, it is well known that when we sum two random variables, the corresponding probability distribution is the convolution of the original probability distributions. The binary convolution is defined for binary probability distributions $\{a,1-a\}$ and $\{b,1-b\}$ as $a\ast b := a(1-b) + (1-a)b$. It easily follows that
\begin{align*}
H(X_1+X_2) &= h_2( p_1 \ast p_2) \\ 
&= h_2(h_2^{-1}(H(X_1))\ast h_2^{-1}(H(X_2))).
\end{align*}

In classical information theory the topic of bounds on information combining describes a number of results concerned with what happens, in particular to the entropy of the involved objects, when random variables get combined. This is especially interesting when we have side information for these random variables, due to the analogy with channel problems.
The name of this field goes back to~\cite{LHHH05} where such bounds were used for repetition codes. Later on, many more results were found, also under the name \textit{Extremes of information combining}~\cite{SSZ05}, for MAP decoding and LDPC codes. 

Examples of particular importance are the combinations at the variable and check nodes in belief propagation~\cite{RU08}, and the transformation to \textit{better} and \textit{worse} channels in polar coding~\cite{A09}. 
In the first setting we are concerned with the entropy of the sum $X_1+X_2$ given the side information $Y_1Y_2$, which corresponds to check nodes in belief propagation and the worse channel in polar coding (see Figure~\ref{Fig:Basic}). In the channel picture this can be seen as channel combination
\begin{equation}\label{cl:minus}
(\mathcal W_1\boxast \mathcal W_2)(y_1y_2|u_1) = \frac{1}{2}\sum_{u_2} \mathcal W_1(y_1 | u_1 \oplus u_2)\mathcal W_2(y_2|u_2),
\end{equation} 
and is therefore given by (compare Equation \ref{Hchannel})
\begin{equation}
H(X_1+X_2|Y_1Y_2) = H(\mathcal W_1\boxast \mathcal W_2).
\end{equation}
In the second setting, we are interested in the entropy evolution at a variable node with output states given by
\begin{equation}\label{cl:plus}
(\mathcal W_1\varoast \mathcal W_2) (y_1y_2|u_2)  = \mathcal W_1(y_1 | u_2)\mathcal W_2(y_2|u_2).
\end{equation} 
It turns out that for symmetric channels the combined channel can be reversibly transformed (see e.g.~\cite{R16bp}) into a channel with the output states
\begin{equation}\label{cl:plus2}
u_2\rightarrow \frac{1}{2} \mathcal W(y_1 | u_1 \oplus u_2)\mathcal W(y_2|u_2),
\end{equation}  
which is equivalent to decoding the second input to two channels combined by a CNOT gate given the side information $Y_1Y_2$ but additionally $X_1+X_2$. This again is equal to the generation of a \textit{better} channel when studying polar codes (see again Figure~\ref{Fig:Basic}).
Therefore we are interested in the entropy
\begin{equation}\label{eq:varo-minus}
H(X_2 | X_1+X_2,Y_1Y_2) = H(\mathcal W_1\varoast \mathcal W_2).
\end{equation}

Lower and upper bounds on both of these quantities have many applications in classical information theory, for example in coding theory giving exact bounds on EXIT charts~\cite{RU08} and, of course, the investigation of polar codes~\cite{AT14,GX15}. \\
In classical information theory, the optimal bounds are well known as follows:
\begin{align}
h_2(h_2^{-1}(H_1)\ast h_2^{-1}(H_2)) &\leq H(X_1+X_2|Y_1Y_2) \nonumber\\
&\leq \log 2 - \frac{(\log 2 - H_1)(\log 2 - H_2)}{\log 2}, \label{minus-bounds}
\end{align}
where the lower bound is called the Mrs. Gerber's Lemma, and
\begin{align}
\frac{H_1 H_2}{\log 2} &\leq H(X_2 | X_1+X_2,Y_1Y_2) \nonumber\\
&\leq H_1 + H_2 - h_2(h_2^{-1}(H_1)\ast h_2^{-1}(H_2)),  \label{plus-bounds}
\end{align}
with $H_1 = H(X_1|Y_1)$ and $H_2 = H(X_2|Y_2)$. 

In many situations, it is more intuitive to look at the special case where the two underlying entropies are equal $\left( H = H_1 = H_2\right)$. In that case we can state the following inequalities
\begin{align}
0.799\, \frac{H(\log2-H)}{\log2} + H &\leq h_2(h_2^{-1}(H_1)\ast h_2^{-1}(H_2)) \label{HHlowerbound}\\
&\leq H(X_1+X_2|Y_1Y_2) \\
&\leq  \frac{H(\log2-H)}{\log2} + H, \label{HHupperbound}
\end{align}
where the first is an additional convenient lower bound from~\cite{GX15} and the other two follow from Equation~\ref{minus-bounds}. Note that this special case is also of practical interest, for example for polar codes when two identical channels are combined.  

Additionally, a well known fact is that 
\begin{equation}\label{additiveentropies}
H(X_1+X_2|Y_1Y_2) + H(X_2 | X_1+X_2,Y_1Y_2) = H(X_1|Y_1) + H(X_2|Y_2),
\end{equation}
which we can equivalently write as
\begin{equation}\label{additiveentropies2}
H(\mathcal W_1\boxast \mathcal W_2) + H(\mathcal W_1\varoast \mathcal W_2) = H(\mathcal W_1) + H(\mathcal W_2).
\end{equation}
From this it follows that it is sufficient to prove the inequalities for either Equation~\eqref{minus-bounds} or ~\eqref{plus-bounds}. We will therefore mostly focus on the setting leading to Equation~\eqref{minus-bounds}. 

Moreover, it is even known for which channels equality is achieved in the above equations (see e.g. \cite{RU08}). For the lower bound in Equation \eqref{minus-bounds} this is the binary symmetric channel (BSC) and for the upper bound it is the binary erasure channel (BEC).
Therefore these channels are sometimes called the most and least informative channels. 

Later in this work we will be particularly interested in the lower bound in \eqref{minus-bounds} (and equivalently the upper bounds in \eqref{plus-bounds}). We will review the proofs  of these inequalities in the remainder of this section, with a particular focus on showing difficulties when translating these inequalities to the quantum setting.

\begin{figure}
\begin{minipage}[ht]{0.35\textwidth}
\begin{tikzpicture}[scale=0.5]
	\myblock.basic(0, 0, $\mathcal W_1$)
	\myblock.basic(0, 2, $\mathcal W_2$)
	\mystamp.cnot(0,2.5,1,2)
	\node[font=\tiny] at (-3.5,0.5) {$X_2$};
	\node[font=\tiny] at (-3.5,2.5) {$X_1$};
        \node[font=\tiny] at (3.5,2.5) {$B_1$};
        \node[font=\tiny] at (3.5,0.5) {$B_2$};
\end{tikzpicture}
\end{minipage} \\
\begin{minipage}[ht]{0.43\textwidth}
\begin{tikzpicture}[scale=0.5]
	\myblock.basic(0, 0, $\mathcal W_1$)
	\myblock.basic(0, 2, $\mathcal W_2$)
	\mystamp.cnot(0,2.5,1,2)
    \begin{pgfonlayer}{secondbackground}
	\mystamp.box(-3.75,-0.5,6.25,4,blue!3);
    \end{pgfonlayer}
        \node[font=\tiny] at (-3,4) {$\mathcal W_1\boxast \mathcal W_2$};
	\node[font=\tiny] at (-3.25,0.5) {$X_2$};
	\node[font=\tiny] at (-4.5,2.5) {$X_1$};
	\draw (-4,2.5) -- (-2,2.5);
        \node[font=\tiny] at (3.5,2.5) {$B_1$};
        \node[font=\tiny] at (3.5,0.5) {$B_2$};
\end{tikzpicture}
\end{minipage}
\begin{minipage}[ht]{0.43\textwidth}
\begin{tikzpicture}[scale=0.5]
	\myblock.basic(0, 0, $\mathcal W_1$)
	\myblock.basic(0, 2, $\mathcal W_2$)
	\mystamp.cnot(0,2.5,1,2)
    \begin{pgfonlayer}{secondbackground}
	\mystamp.box(-3.75,-0.5,6.25,4,blue!3);
    \end{pgfonlayer}
        \node[font=\tiny] at (-3,4) {$\mathcal W_1\varoast \mathcal W_2$};
	\node[font=\tiny] at (-4.5,0.5) {$X_2$};
	\node[font=\tiny] at (-3.25,2.5) {$X_1$};
	\draw (-4,0.5) -- (-2,0.5);
	\draw (-0.5,2.5) -- (0.2,4.5) -- (2.7,4.5);
        \node[font=\tiny] at (3.5,2.5) {$B_1$};
        \node[font=\tiny] at (3.5,0.5) {$B_2$};
        \node[font=\tiny] at (3.8,4.5) {$X_1+X_2$};
\end{tikzpicture}
\end{minipage}
\caption{\label{Fig:Basic}A useful figure to understand the concept of information combining is to look at two channels $\mathcal W_1$ and $\mathcal W_2$ which get combined by a CNOT gate, as in the figure in the top diagram. From this we can, in analogy to polar coding, generate two types of channels depicted at the bottom, which are given by $\mathcal W_1\boxast \mathcal W_2$ and $\mathcal W_1\varoast \mathcal W_2$. Both are directly related since the overall entropy is conserved under combining channels in this way (see Eq.\ \eqref{additiveentropies}).  }
\end{figure}
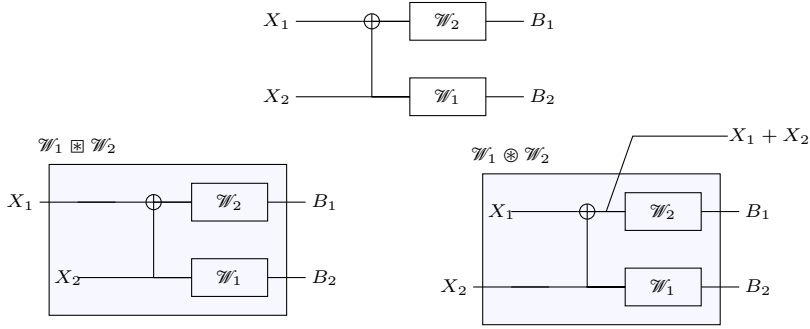

\paragraph{Proof techniques for the classical bounds}
In this paragraph, we will review the classical Mrs.\ Gerber's Lemma~\cite{WZ73} and a corresponding upper bound for combining of classical information (see Equation~\eqref{minus-bounds}), in order to contrast these results and proofs with our later results, where conditioning on \emph{quantum} side information is allowed. (As observed earlier, due to Equation~\eqref{additiveentropies}, this is equivalent to considering the bounds in Equation~\eqref{plus-bounds}.) The following proof sketches illustrate that the classical proofs, which crucially use the fact that the conditional Shannon entropy is affine under conditioning, cannot be easily extended to the case of quantum side information.

\begin{lemma}[Mrs. Gerber's Lemma]
Let $(X_1, Y_1)$ and $(X_2,Y_2)$ be independent pairs of classical random variables, with $X_1$ and $X_2$ being binary. Then:
\begin{equation}
H(X_1 + X_2 | Y_1Y_2) \geq h_2(h_2^{-1}(H(X_1 | Y_1))\ast h_2^{-1}(H(X_2 | Y_2))).
\end{equation}
\end{lemma}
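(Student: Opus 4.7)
The crucial feature of the hypothesis is that the side information $Y_1Y_2$ is classical, so the conditional entropy splits as a convex combination of unconditional entropies. Writing $p_i(y_i) := \Pr[X_i = 1 \mid Y_i = y_i]$ and using independence of the two pairs,
\begin{align*}
H(X_1+X_2 \mid Y_1Y_2) = \sum_{y_1,y_2} p_{Y_1}(y_1)\, p_{Y_2}(y_2)\, h_2\bigl(p_1(y_1) \ast p_2(y_2)\bigr),
\end{align*}
since conditional on $\{Y_1{=}y_1,\, Y_2{=}y_2\}$ the bits $X_1$ and $X_2$ remain independent Bernoullis with biases $p_1(y_1),\, p_2(y_2)$, and their mod-$2$ sum then has bias given by the binary convolution. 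Likewise $H(X_i\mid Y_i) = \sum_{y_i} p_{Y_i}(y_i)\, h_2(p_i(y_i))$ for $i = 1,2$.

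The next step is to recast the claim as a Jensen-type inequality. Introduce the function $g:[0,\log 2]^2 \to [0,\log 2]$ defined by $g(u,v) := h_2\bigl(h_2^{-1}(u) \ast h_2^{-1}(v)\bigr)$. The lemma then asks for
\begin{align*}
\sum_{y_1,y_2} p_{Y_1}(y_1)\, p_{Y_2}(y_2)\, g\!\left(h_2(p_1(y_1)),\, h_2(p_2(y_2))\right) \;\geq\; g\bigl(H(X_1|Y_1),\, H(X_2|Y_2)\bigr).
\end{align*}
Because $Y_1$ and $Y_2$ are independent, the distribution on $(y_1,y_2)$ factorizes and one can apply Jensen in two successive steps -- first averaging over $y_2$ at each fixed $y_1$, then over $y_1$. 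Both steps only require convexity of $g$ in \emph{each} of its arguments separately, not joint convexity, which is a strictly weaker property and considerably easier to verify.

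The main analytical task, and the main obstacle, is therefore this partial convexity: for every fixed $q \in [0,\tfrac12]$, the map $u \mapsto h_2\bigl(h_2^{-1}(u) \ast q\bigr)$ is convex on $[0,\log 2]$. I would prove this by the original Wyner--Ziv calculation. Parametrize by $p = h_2^{-1}(u) \in [0,\tfrac12]$, use $h_2'(p) = \log\tfrac{1-p}{p}$, $h_2''(p) = -\tfrac{1}{p(1-p)}$, and the linearity $p \ast q = q + (1-2q)p$, and expand the second derivative in $u$ via the chain rule. After simplification the sign of $\partial_u^2\, h_2(p \ast q)$ reduces to a one-variable inequality comparing $h_2'$ and $h_2''$ evaluated at $p$ and at $p \ast q$, which can be checked using the monotonicity and concavity properties of $x \mapsto \log\tfrac{1-x}{x}$ on $(0,\tfrac12]$. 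By symmetry $g$ is then convex in its second argument at each fixed first argument as well. Two applications of Jensen then yield the intermediate bound $\sum_{y_1} p_{Y_1}(y_1)\, g\bigl(h_2(p_1(y_1)),\, H(X_2|Y_2)\bigr)$ and finally the full claim. I emphasize that the very first displayed equation of this plan, the convex decomposition of the conditional entropy, is used essentially -- and this is precisely the step that breaks once the $Y_i$ are allowed to be quantum, motivating the rest of the chapter.
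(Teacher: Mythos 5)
Your proposal is correct and follows essentially the same route as the paper: decompose the conditional entropy as a convex combination over the classical side information, identify the conditional entropy of the sum with $h_2$ of the binary convolution of the conditional biases, and apply Jensen twice using the convexity of $g(u,v)=h_2(h_2^{-1}(u)\ast h_2^{-1}(v))$ in each argument separately (a property the paper asserts with reference to Wyner--Ziv and the information bottleneck function, and which you additionally sketch how to verify). Your closing remark that the convex decomposition is exactly the step that fails for quantum side information is also precisely the point the paper makes.
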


An important ingredient in the proof of Wyner and Ziv~\cite{WZ73} is the observation that the function
\begin{equation}
g_c (H_1,H_2) := h_2(h_2^{-1}(H_1)\ast h_2^{-1}(H_2))
\end{equation}
is convex in $H_1\in[0,\log2]$ for each fixed $H_2\in[0,\log2]$, and, by symmetry, convex in $H_2$ for each fixed $H_1$.  These convexity properties can also be understood as a special case of the convexity of the so-called information bottleneck function, see~\cite{tishbyIBM, WW1975, GNT03}. We recently showed the convexity of its quantum generalization, the quantum information bottleneck function, in~\cite{datta2018convexity}. 

The convexity of $g_c (H_1,H_2)$, together with the representation of the conditional Shannon entropy as an average over unconditioned Shannon entropies, gives a proof of the lemma as follows:
\begin{align}
H&(X_1 + X_2 | Y_1Y_2) \nonumber\\
&= \sum_{y_1,y_2} p(Y_1=y_1) p(Y_2=y_2) H(X_1 + X_2 | Y_1=y_1 Y_2=y_2) \label{ccond}\\
&= \sum_{y_1,y_2} p(Y_1=y_1) p(Y_2=y_2) h_2(h_2^{-1}(H(X_1 | Y_1=y_1))\ast h_2^{-1}(H(X_2 | Y_2=y_2))) \\
&\geq \sum_{y_1} p(Y_1=y_1) h_2(h_2^{-1}(H(X_1 | Y_1=y_1))\ast h_2^{-1}(H(X_2 | Y_2))) \\
&\geq h_2(h_2^{-1}(H(X_1 | Y_1))\ast h_2^{-1}(H(X_2 | Y_2))).
\end{align}
Note that the way in which conditioning is handled by the equality \eqref{ccond} plays a crucial role in the proof. Unfortunately, this equality does generally not hold for the conditional entropy with quantum side information, i.e.\ when $Y_1$, $Y_2$ are quantum systems; in this case it is not even clear what the correct generalization of the right-hand-side of \eqref{ccond} may be. Understanding conditioning on quantum systems is an important but apparently difficult question in quantum information theory, as is illustrated by the much higher difficulty in proving the strong subadditivity property for quantum entropy~\cite{Lieb2002} compared to Shannon entropy. Better understanding of conditioning on quantum side information would not only help for bounds on information combining but for many other open problems as well, like the related question of conditional entropy power inequalities (see the beginning of this chapter) or even quantum cryptography~\cite{DFR16}.

In the proof for the upper bound in Equation~\eqref{minus-bounds} we encounter a very similar problem in handling quantum conditional information. The important inequality for the upper bound is the fact that the function $g_c (H_1,H_2)$ defined above can be bounded by an expression that is affine in both $H_1$ and $H_2$ separately:
\begin{equation}
g_c (H_1,H_2) \leq \log 2 - \frac{(\log 2 - H_1)(\log 2 - H_2)}{\log 2}.
\end{equation}
This follows immediately from the convexity of $g_c$ in $H_1$ and the fact that the inequality holds with equality for each fixed $H_2$ at the two endpoints $H_1\in\{0,\log2\}$, see e.g.\ \cite{LHHH05}. From here, the proof of the classical inequality proceeds in a similar fashion as for the lower bound, using again the expression of the conditional Shannon entropy:
\begin{alignat}{2}
&H(X_1 + &&X_2 | Y_1Y_2) \nonumber\\
&= \sum_{y_1,y_2}&& p(Y_1=y_1) p(Y_2=y_2) h_2(h_2^{-1}(H(X_1 | Y_1=y_1))\ast h_2^{-1}(H(X_2 | Y_2=y_2))) \\
&\leq \sum_{y_1,y_2}&& p(Y_1=y_1) p(Y_2=y_2)\,\dots \nonumber\\ 
&&&\dots\left[ \log 2 - \frac{(\log 2 - H(X_1 | Y_1=y_1))(\log 2 - H(X_2 | Y_2=y_2))}{\log 2} \right] \\
& =  \log 2 &&- \frac{(\log 2 - H(X_1 | Y_1))(\log 2 - H(X_2 | Y_2))}{\log 2}.
\end{alignat}
These two proofs show why finding similar inequalities in the quantum case might be very difficult. Nevertheless we will start in the next section by considering the case of quantum side information. 

\section{Information combining with quantum side information}\label{qcombining}

In this section we introduce the generalized scenario of information combining with quantum side information. 
The main ingredients are generalizations of the channel combinations in Equations~\eqref{cl:minus} and~\eqref{cl:plus} to the case of quantum outputs. Now we are combining two classical-quantum channels, with uniformly distributed binary inputs $\{0,1\}$. Again we will look at both, variable and check nodes under belief propagation and better and worse channels in polar coding.  Since the inputs are classical we can investigate the same combination procedure via CNOT gates. Belief propagation for quantum channels has been recently introduced in~\cite{R16bp}, for polar coding the resulting channels can be seen as special case of those in~\cite{WG13}.
 
The generalization of Equation~\eqref{cl:minus}, where we look at a check node or equivalently try to decode the input of the first channel while not knowing that of the second becomes a channel with output states
\begin{equation}\label{qu:minus}
\mathcal W_1\boxast \mathcal W_2 : u_1 \rightarrow \frac{1}{2}\sum_{u_2} \rho_{u_1 \oplus u_2}^{B_1}\otimes\rho_{u_2}^{B_2}.
\end{equation} 
Similarly the generalization of Equation~\eqref{cl:plus} for a variable node is given by
\begin{equation}\label{qu:plus}
\mathcal W_1\varoast \mathcal W_2: u_2 \rightarrow \rho_{u_2}^{B_1}\otimes\rho_{u_2}^{B_2},
\end{equation} 
which for symmetric channels, by a similar argument then for the classical case, is equivalent up to unitaries to the polar coding setting where we try to decode the second bit while assuming the first bit to be known. This becomes a channel with output states
\begin{equation}\label{qu:plus2}
u_2 \rightarrow  \frac{1}{2}\sum_{u_1}\ketbra{U_1}{u_1}{u_1}\otimes \rho_{u_1 \oplus u_2}^{B_1}\otimes\rho_{u_2}^{B_2},
\end{equation} 
where the additional classical register $U_1$ is used to make the input of the first channel available to the decoder. 

Our goal now is to find bounds on the conditional entropy of those combined channels
\begin{equation}\label{qu:boxast}
H(X_1+X_2|B_1B_2) = H(\mathcal W_1\boxast \mathcal W_2),
\end{equation}
and
\begin{equation}\label{qu:varoast}
H(X_2 | X_1+X_2,B_1B_2) = H(\mathcal W_1\varoast \mathcal W_2),
\end{equation} 
 in terms of the entropies of the original channels, analog to the bounds on information combining in the classical case. 
An important relation between these two entropies can be directly translated to the setting with quantum side information~\cite{WG13}
\begin{equation}\label{qchain}
H(X_1+X_2|B_1B_2) + H(X_2 | X_1+X_2,B_1B_2) = H(X_1|B_1) + H(X_2|B_2). 
\end{equation}
From here it follows that, as in the classical case, proving bounds on the entropy in Equation~\eqref{qu:boxast} automatically also gives bounds on the one in Equation~\eqref{qu:varoast}. 

In the remainder of this section, we will introduce the concept of channel duality and discuss its application to channel combining, which will help us find better bounds on above quantities.

\paragraph{Duality of classical and classical-quantum channels}\label{duality}
The essential idea is to embed a classical channel into a classical-quantum channel, take its complementary channel and apply it to inputs in the conjugate basis. In the way we introduce it here it has been first used in \cite{WR12b} to extend classical polar codes to quantum channels and then has been refined in \cite{RSH14} to investigate properties of polar codes for classical channels. A comprehensive overview with some new applications has recently been given in~\cite{R17}.
We explain the procedure here by applying it to a general binary classical channel $\mathcal W$ with transition probabilities $\mathcal W(y|x)$. 
The first step is to embed the channels into a quantum state 
\begin{equation}
\varphi_x = \sum_{y\in Y} \mathcal W(y|x) \ketbra{}{y}{y}
\end{equation}
and then choose a purification of this state with
\begin{equation}
\ket{}{\varphi_x} = \sum_{y\in Y} \sqrt{\mathcal W(y|x)}\ket{}{y}\ket{}{y}.
\end{equation}
Now we can define our classical quantum channel by an isometry acting as follows
\begin{equation}
U \ket{}{x} = \ket{}{\varphi_x}\ket{}{x}.\label{Udual}
\end{equation}
The dual channel is now defined by the isometry acting on states of the form $\ket{}{\tilde x} = \frac{1}{\sqrt{2}}\sum_z (-1)^{xz} \ket{}{z}$, 
\begin{align}
U \ket{}{\tilde x} &= \frac{1}{\sqrt{2}}\sum_{z\in\{0,1\}} (-1)^{xz} \ket{}{\varphi_z} \ket{}{z} \\
&= \frac{1}{\sqrt{2}}\sum_{\mathclap{\substack{y\in Y \\ z\in\{0,1\}}}} (-1)^{xz} \sqrt{\mathcal W(y|z)} \ket{}{y}\ket{}{y}\ket{}{z}.
\end{align}
Finally the output states are given by tracing out the initial output system 
\begin{equation}\label{Eq:gen}
\sigma_x = \frac{1}{2} \sum_{\mathclap{\substack{y\in Y \\ z,z'\in\{0,1\}}}} (-1)^{x(z+z')} \sqrt{\mathcal W(y|z) \mathcal W(y|z')} \ket{}{y}\ketbra{}{z}{y}\bra{}{z'}.
\end{equation}
We denote the channel dual to $\mathcal W$ as $\mathcal W^{\bot}$.
Note that we can equivalently define the duality via the \textit{channel state} given by
\begin{align}
\ket{}{\Psi_{\mathcal W}} &= \frac{1}{2} \sum_{z\in\{ 0,1\}} \ket{}{z} \ket{}{\varphi_z} \ket{}{z} \\
&= \frac{1}{2} \sum_{x\in\{ 0,1\}} \ket{}{\tilde x} \ket{}{\sigma_x},
\end{align}
where $\ket{}{\sigma_x} = \frac{1}{2} \sum_{z\in\{ 0,1\}} (-1)^{xz}\ket{}{\varphi_z} \ket{}{z}$ is a purification of $\sigma_x$. From $\Psi_{\mathcal W}$ the output states of the channel and its dual can both easily be recovered. 
In the same manner we can define dual channels for arbitrary classical-quantum channels following the steps above starting from Equation~\eqref{Udual} with the $\ket{}{\varphi_z}$ being purifications of the output states of the given channel. \\

This now allows us to calculate the duals of specific channels and also for combinations of channels. We state one result in the following Lemma, which is Theorem 1 in \cite{R17}. 
\begin{lemma}\label{lem:boxvaro}
Let $\mathcal W_1$ and $\mathcal W_2$ be two binary input cq-channels, then the following holds
\begin{align}
\mathcal W_1^{\bot}\boxast \mathcal W_2^{\bot} &= (\mathcal W_1 \varoast \mathcal W_2)^{\bot} \\
\mathcal W_1^{\bot}\varoast \mathcal W_2^{\bot} &= (\mathcal W_1 \boxast \mathcal W_2)^{\bot}.
\end{align}
\end{lemma}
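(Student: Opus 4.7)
The plan is to verify both identities directly at the level of channel states, exploiting the algebraic fact that the CNOT gate is self-dual under conjugation by $H^{\otimes 2}$ with control and target exchanged — which is precisely the operational content of the $\varoast$/$\boxast$ interchange.

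For the first identity, I would start from Equation \eqref{qu:plus} and write the channel state of $\mathcal W_1 \varoast \mathcal W_2$ as
\begin{equation*}
\ket{}{\Psi_{\mathcal W_1 \varoast \mathcal W_2}} \;\propto\; \sum_{u}\ket{}{u}\,\ket{}{\varphi^{(1)}_{u}}\ket{}{\varphi^{(2)}_{u}}\,\ket{}{u},
\end{equation*}
with $\ket{}{\varphi^{(i)}_{u}}$ purifying the output states of $\mathcal W_i$. Applying the duality recipe of Equations \eqref{Udual}--\eqref{Eq:gen} amounts to expressing the input register in the conjugate basis, $\ket{}{u} = \tfrac{1}{\sqrt 2}\sum_{z}(-1)^{zu}\ket{}{\tilde z}$, and reading off the output states conditional on $\tilde z$. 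In parallel, I would write down $\ket{}{\Psi_{\mathcal W_1^\bot \boxast \mathcal W_2^\bot}}$ from Equation \eqref{qu:minus} using purifications $\ket{}{\sigma^{(i)}_x}$ of the dual outputs $\sigma^{(i)}_x$ given in Equation \eqref{Eq:gen}. Expanding both expressions in the conjugate basis and applying the elementary identity $\sum_{u}(-1)^{u(x_1\oplus x_2)} = 2\,\delta_{x_1 x_2}$ — which enforces exactly the ``copy'' pattern characteristic of a $\varoast$-combination — I would verify that the two channel states coincide up to an isometry on the purifying registers, and hence define the same cq-channel.

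For the second identity, I would invoke involutivity of the duality construction, $(\mathcal W^\bot)^\bot = \mathcal W$ up to an isometry on the auxiliary output, which is immediate from Equations \eqref{Udual}--\eqref{Eq:gen} (a second Hadamard basis change undoes the first, since $H^2 = \id$). Substituting $\mathcal W_i \mapsto \mathcal W_i^\bot$ in the first identity and then taking the dual of both sides yields $(\mathcal W_1 \boxast \mathcal W_2)^\bot = \mathcal W_1^\bot \varoast \mathcal W_2^\bot$, as desired.

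The main obstacle is bookkeeping: there are several purifying and classical-copy registers, and one must confirm that the identification of the two channel states is implemented by a genuine isometry on the auxiliary systems rather than by a noisier CPTP map, which would produce a different channel. This is exactly why the argument is naturally packaged as a manipulation of channel states — where any such isometry is manifest — rather than of output states directly, following the systematic treatment in \cite{RSH14,R17}.
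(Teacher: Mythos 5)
First, a point of comparison: the thesis does not prove this lemma at all --- it is imported verbatim as Theorem~1 of \cite{R17} --- so your proposal has to be measured against the proof in that reference. Your treatment of the \emph{first} identity follows exactly that route: compare channel states, use that conjugating $\mathrm{CNOT}_{1\to2}$ by $H\otimes H$ reverses control and target, and check that the leftover purifying registers are related by an isometry. That outline is sound, modulo the bookkeeping you flag yourself (and modulo fixing the convention that $\varoast$ here is the polar-coding version~\eqref{qu:plus2} carrying the extra classical register $U_1$, which is what makes the chain rule~\eqref{qchain} and the entropy relation~\eqref{Hbv} consistent for non-symmetric channels).

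The second identity is where there is a genuine gap. You derive it from the first by invoking involutivity, $(\mathcal W^{\bot})^{\bot}=\mathcal W$ up to an output isometry, claimed to be ``immediate since $H^2=\id$''. This is false for general binary-input cq-channels, and the text surrounding the lemma says so explicitly: in general $(\mathcal W^{\bot})^{\bot}\neq\mathcal W$, and only the entropy identity $H((\mathcal W^{\bot})^{\bot})=H(\mathcal W)$ survives, with equality of channels holding for \emph{symmetric} $\mathcal W$. Concretely, running the construction~\eqref{Udual}--\eqref{Eq:gen} twice (with the canonical choice of purification in the second step) on a channel with outputs $\rho_0,\rho_1$ yields a channel of the form $w\mapsto \tfrac12\bigl(\rho_w\otimes\ketbra{}{+}{+}+\rho_{w\oplus1}\otimes\ketbra{}{-}{-}\bigr)$, i.e.\ $\mathcal W$ preceded by a uniformly random but revealed input flip; if $\rho_0$ is pure and $\rho_0\neq\rho_1$ this output is mixed, so no isometry on the output alone recovers $\mathcal W$. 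The Hadamard does square to the identity on the input register, but the second dualization also appends a fresh conjugate-basis copy of \emph{its own} input and traces out a different subsystem, and that is what breaks involutivity. As a result, substituting $\mathcal W_i\mapsto\mathcal W_i^{\bot}$ into the first identity only gives $(\mathcal W_1^{\bot})^{\bot}\boxast(\mathcal W_2^{\bot})^{\bot}=(\mathcal W_1^{\bot}\varoast\mathcal W_2^{\bot})^{\bot}$, and neither side simplifies to the claimed statement. The fix is to prove the second identity by the same direct channel-state computation as the first, with the roles of the computational and conjugate bases exchanged (this is how \cite{R17} handles it); your shortcut does go through for the symmetric channels to which the lemma is applied later in the chapter, but then what you have proved is strictly weaker than the lemma as stated.
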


We want to combine above Lemma~\ref{lem:boxvaro} with an observation made in \cite{RB08,WR12a}, which states that for any $\mathcal W$
\begin{equation}\label{II1}
I(\mathcal W) + I(\mathcal W^{\bot}) = \log 2, 
\end{equation} 
which leads us to 
\begin{equation}\label{Hbv}
H( \mathcal W_1 \varoast \mathcal W_2) = \log 2 - H(\mathcal W_1^{\bot}\boxast \mathcal W_2^{\bot}).
\end{equation}
 Note that in general $(\mathcal W^{\bot})^{\bot} \neq \mathcal W$~\cite{R17}, although this relation becomes an equality if $\mathcal W$ is symmetric, but in either case from Equation~\eqref{II1} we can directly conclude that 
\begin{equation}
H((\mathcal W^{\bot})^{\bot})=H(\mathcal W).
\end{equation}

From the above arguments we can directly make an important observation. 
Namely, let $\mathcal W_j$ be the channels corresponding to the states $\rho^{X_jB_j}$ ($j=1,2$), which in particular means $H(\mathcal W_j)=H(X_j|B_j)=H_j$. Then we have the following chain of equalities:
\begin{align}
H(X_1+X_2|B_1B_2)&=H(\mathcal W_1\boxast \mathcal W_2)\nonumber\\[2pt]
&=H(\mathcal W_1)+H(\mathcal W_2)-H(\mathcal W_1\varoast \mathcal W_2)\nonumber\\[2pt]
&=H_1+H_2-H((\mathcal W_1^{\bot}\boxast \mathcal W_2^{\bot})^{\bot})\nonumber\\[2pt]
&=H_1+H_2-\log2+H(\mathcal W_1^{\bot}\boxast \mathcal W_2^{\bot})\label{eqntermafterperpequation}
\end{align}
where the first line is by definition of $\boxast$, the second line the chain rule for mutual information (conservation of entropy), the third line follows from Lemma \ref{lem:boxvaro}, and the fourth line follows from Equation \eqref{II1}.

In particular this can be rewritten, using Equation~\eqref{Hbv}, as
\begin{equation}
H(\mathcal W_1\varoast \mathcal W_2)  - \left( H(\mathcal W_1)+H(\mathcal W_2) \right)/2 =  H(\mathcal W_1^{\bot}\varoast \mathcal W_2^{\bot}) - \left( H(\mathcal W_1^{\bot})+H(\mathcal W_2^{\bot}) \right)/2. 
\end{equation}
This is especially interesting, because it follows directly that due to the additional uncertainty relation given by Equation~\eqref{II1}, the lower bound in the quantum setting has an additional symmetry w.r.t. the transformation $H_i\mapsto\log2-H_i$, which the classical bound does not have. Therefore one can also easily see that there must exist states with quantum side information that violate the classical bound. \\

Finally we will give two particular examples of duals to classical channels (already provided in \cite{RSH14}), which state that the dual of every binary symmetric channel is a channel with pure state outputs and that the dual of a binary erasure channel (BEC) is again a BEC. 
\begin{ex}{Binary symmetric channel (Example 3.8 in~\cite{RSH14}).}
Let $\mathcal W$ be the classical BSC($p$). For every $p$, the output states of the dual channel are of the form
\begin{equation}
\sigma_x = \ketbra{}{\theta_x}{\theta_x},  \label{BSCdual}
\end{equation}
with $\ket{}{\theta_x} = Z^x \left( p\ket{}{0} + (1-p)\ket{}{1}\right)$, where $Z$ is the Pauli-Z matrix.
\end{ex}
\begin{ex}{Binary erasure channel (Example 3.7 in~\cite{RSH14}).}
Let $\mathcal W$ be the classical BEC($p$). For every $p$, the dual channel is again a binary erasure channel, now with erasure probability $1-p$. 
\end{ex}
Recall that the BSC and the BEC correspond exactly to the channels which achieve the classical lower and upper bounds with equality. These examples will thus become useful again when discussing our conjectured optimal bound. 
 
\section{Nontrivial bound for special case of Mrs.\ Gerber's Lemma with quantum side information}\label{lower-bound}

For general \emph{quantum} side information, we prove nontrival lower bounds akin to the classical Mrs.\ Gerber's Lemma, albeit only for the special case when the a priori probabilities are uniform, i.e.\ $p(X_1=0)=p(X_2=0)=1/2$. This case is relevant for several applications, as we show in later sections. A conjecture of the optimal bound, also covering the case of nonuniform probabilities, is made in Section \ref{main}.

\begin{thm}[Mrs.\ Gerber's Lemma with quantum side information for uniform probabilities]\label{QuantumMrsGerberTheorem2DifferentStates}
Let $\rho^{X_1B_1}$ and $\rho^{X_2B_2}$ be \emph{independent} and possibly different classical-quantum states carrying uniform a priori classical probabilites on the binary variables $X_1$, $X_2$, i.e.\
\begin{equation}
\rho^{X_jB_j}=\frac{1}{2}{\ketbra{}{0}{0}}_{X_j}\otimes\sigma_0^{B_j}+\frac{1}{2}{\ketbra{}{1}{1}}_{X_j}\otimes\sigma_1^{B_j},
\end{equation}
where $\sigma_i^{B_j}\in\setS{\cH_{d_j}}$ are quantum states on a $d_j$-dimensional Hilbert space ($i,j=1,2$). We denote their conditional entropies by $H_1 = H(X_1|B_1)$ and $H_2 = H(X_2|B_2)$, respectively. Then the following entropy inequality holds:
\begin{equation}
\begin{split}\label{firstFGmainresult}
&H(X_1+X_2|B_1B_2) \geq\max \Big\{\dots  \nonumber\\
& H_1-2\log\cos\left[\frac{1}{2}\arccos[(1-2h_2^{-1}(\log2-H_1))(e^{H_2}-1)]-\frac{1}{2}\arccos[e^{H_2}-1]\right]\,,\\
& H_2-2\log\cos\left[\frac{1}{2}\arccos[(1-2h_2^{-1}(\log2-H_2))(e^{H_1}-1)]-\frac{1}{2}\arccos[e^{H_1}-1]\right]\,,\\
& H_2-2\log\cos\left[\frac{1}{2}\arccos[(1-2h_2^{-1}(H_1))(2e^{-H_2}-1)]-\frac{1}{2}\arccos[2e^{-H_2}-1]\right]\,,\\
& H_1-2\log\cos\left[\frac{1}{2}\arccos[(1-2h_2^{-1}(H_2))(2e^{-H_1}-1)]-\frac{1}{2}\arccos[2e^{-H_1}-1]\right]\Big\}\,.
\end{split}
\end{equation}
\end{thm}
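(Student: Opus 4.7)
The structure of the bound, with four terms in the maximum, strongly suggests that the real content is a single inequality applied in four ways: swapping the roles of channels $1$ and $2$ produces pairs $(1){\leftrightarrow}(2)$ and $(3){\leftrightarrow}(4)$, while the substitution $H_j \mapsto \log 2 - H_j$ together with $e^{H_j}-1 \mapsto 2e^{-H_j}-1$ is exactly the signature of passing to dual channels via Equation~\eqref{II1} and Lemma~\ref{lem:boxvaro}. The plan therefore is: prove one of the four terms from scratch, deduce the second by interchanging the labels $1$ and $2$ (which leaves the claim invariant), and obtain the remaining two by running the same argument for the dual channels $\mathcal{W}_j^{\bot}$ and then using Equation~\eqref{eqntermafterperpequation} to convert a $\boxast$-statement about the duals into a $\boxast$-statement about the originals.

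For the core inequality, the first step is to rewrite the target quantity through the conservation identity \eqref{qchain}, so that lower-bounding $H(\mathcal{W}_1\boxast\mathcal{W}_2)=H(X_1{+}X_2|B_1B_2)$ is the same as upper-bounding $H(\mathcal{W}_1\varoast\mathcal{W}_2)=H(X_2|X_1{+}X_2,B_1B_2)$. The latter can be viewed as $H_2 - I(X_2:X_1{+}X_2,B_1\,|\,B_2)$, so what we really need is a \emph{lower} bound on the conditional quantum mutual information attached to the state on systems $(X_2, B_2, X_1{+}X_2 B_1)$. This is precisely the setting of the Fawzi--Renner inequality \eqref{FRlowerbound}, which gives $I(A:B|C)\geq -2\log F(\rho_{ABC},\mathcal{R}_{C\to AC}(\rho_{BC}))$ with an explicit (Petz-like) recovery map. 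Choosing $A=X_2$, $C=B_2$, and $B=(X_1{+}X_2,B_1)$ produces the characteristic $-2\log\cos(\cdot)$ on the right-hand side of the theorem, once the fidelity is expressed in the Bloch parameterization $F=\cos(\theta/2)$.

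The next task is to evaluate the recovered state and its fidelity with $\rho$. Because the input distributions are uniform binary and the states are classical-quantum, the relevant fidelity reduces to a fidelity between two binary mixtures of product output states of the form $\tfrac12(\sigma_0\otimes\sigma_0+\sigma_1\otimes\sigma_1)$ against $\tfrac12(\sigma_0\otimes\sigma_1+\sigma_1\otimes\sigma_0)$, and related expressions involving the recovered state on $B_2$. These fidelities factor nicely: $F(\sigma_0\otimes\sigma_0,\sigma_1\otimes\sigma_1)=F(\sigma_0,\sigma_1)^2$, which is exactly why the bound involves angles \emph{halved} and subtracted inside the cosine. The terms $e^{H_j}-1$ and $(1-2h_2^{-1}(\cdot))$ then emerge from translating the entropy constraints $H_j=H(X_j|B_j)$ into constraints on the fidelity (equivalently, the Bloch-vector angle) between the two channel outputs, via the Fannes/concavity-of-entropy bounds developed in the referenced Appendix~\ref{BoundsOnConc}. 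The two choices of which endpoint to pair with which entropy constraint account for the two terms (1) and (2) in the max.

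Finally, the dual terms (3) and (4) are obtained by running the entire argument above with $\mathcal{W}_j$ replaced by $\mathcal{W}_j^{\bot}$, using $H(\mathcal{W}_j^{\bot})=\log 2 - H_j$ from Equation~\eqref{II1} and $(\mathcal{W}_1\boxast\mathcal{W}_2)^{\bot}=\mathcal{W}_1^{\bot}\varoast\mathcal{W}_2^{\bot}$ from Lemma~\ref{lem:boxvaro}; Equation~\eqref{eqntermafterperpequation} then transports the resulting bound back to a bound on $H(\mathcal{W}_1\boxast\mathcal{W}_2)$, which explains the $H_j\mapsto \log 2-H_j$ and $e^{H_j}-1\mapsto 2e^{-H_j}-1$ substitutions. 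The main obstacle I expect is the last step of paragraph three: the Fawzi--Renner bound only gives fidelity of \emph{some} recovery, whereas we need fidelity to a recovered state whose entropy can be controlled in closed form by $H_1,H_2$ alone. This requires choosing the recovery map explicitly (a rotated Petz map as in Theorem~\ref{thm:cqmi}) and then using tight concavity-of-entropy-versus-fidelity estimates to eliminate all state-dependence in favour of the conditional entropies $H_1,H_2$; keeping these estimates sharp enough to produce the trigonometric form stated, rather than a weaker convex relaxation, is where the real work lies.
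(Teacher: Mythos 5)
Your skeleton matches the paper's: the paper likewise reduces the claim to one core inequality (its first max-term), obtains the second by swapping the two states, and derives the third and fourth by running the argument on the dual channels and transporting the result back with \eqref{eqntermafterperpequation}; it also starts from the Fawzi--Renner bound \eqref{FRlowerbound} applied to $I(X_1{+}X_2 : X_2\,|\,B_1B_2) = H(X_1{+}X_2|B_1B_2)-H_1$ (your $I(X_2: X_1{+}X_2, B_1\,|\,B_2)$ is the same quantity, merely bookkept through \eqref{qchain}), and it converts between fidelity and conditional entropy via Theorem~\ref{tightRelationFHtheorem} exactly as you anticipate.

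The genuine gap sits at the step you yourself flag as ``where the real work lies'', and the fix you propose does not work. You suggest resolving the dependence on the unknown recovery map by ``choosing the recovery map explicitly (a rotated Petz map as in Theorem~\ref{thm:cqmi})''. But the fidelity bound \eqref{FRlowerbound} with an explicitly fixed Petz-type map is precisely the open conjecture \eqref{conjFidelity}, not a theorem, so it cannot be invoked; the integral-averaged rotated-Petz bounds that \emph{are} theorems would require a separate derivation you do not supply. The paper instead keeps $\mathcal{R}$ completely abstract and eliminates it by a metric argument: since the state is block-diagonal in the classical register $X_2$, the fidelity in \eqref{FRlowerbound} splits into the average $\frac12 F\bigl(\omega_0^{AB_1}\otimes\sigma_0^{B_2},\mathcal{R}(\sigma_0^{B_2})\bigr)+\frac12 F\bigl(\omega_1^{AB_1}\otimes\sigma_1^{B_2},\mathcal{R}(\sigma_1^{B_2})\bigr)$; concavity of $\arccos$ turns this into an average of geodesic distances $A(\cdot,\cdot)=\arccos F(\cdot,\cdot)$; and the triangle inequality along the path $\omega_0^{AB_1}\otimes\sigma_0^{B_2}\to\mathcal{R}(\sigma_0^{B_2})\to\mathcal{R}(\sigma_1^{B_2})\to\omega_1^{AB_1}\otimes\sigma_1^{B_2}$, combined with data processing in the form $A(\mathcal{R}(\sigma_0^{B_2}),\mathcal{R}(\sigma_1^{B_2}))\le A(\sigma_0^{B_2},\sigma_1^{B_2})$, removes $\mathcal{R}$ entirely, leaving $\frac12\arccos[fg]-\frac12\arccos g$ as in \eqref{lowerboundWithfg}. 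Note also that the halved angles arise from this $\frac12$-weighted average of distances, not from $F(\sigma_0\otimes\sigma_0,\sigma_1\otimes\sigma_1)=F(\sigma_0,\sigma_1)^2$ as you assert; multiplicativity of the fidelity only supplies the product $fg$ inside the $\arccos$. Without the block decomposition, the concavity-of-$\arccos$ step, and the triangle/data-processing elimination of the recovery map, your argument cannot reach the stated trigonometric form.
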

\begin{proof}We first prove that $H(X_1+X_2|B_1B_2)$ is not smaller than the first expression in the $\max$ in (\ref{firstFGmainresult}). To begin with, note the following:
\begin{align}
I(&X_1+X_2:X_2|B_1B_2) \nonumber\\[2pt]
&=H\left(X_1+X_2|B_1B_2\right)+H(X_2|B_1B_2)-H(X_1+X_2,X_2|B_1B_2)\nonumber\\[2pt]
&=H(X_1+X_2|B_1B_2)+H(X_2|B_1B_2)-H(X_1,X_2|B_1,B_2)\nonumber\\[2pt]
&=H(X_1+X_2|B_1B_2)+H(X_2|B_2)-H(X_1|B_1)-H(X_2|B_2)\nonumber\\[2pt]
&=H(X_1+X_2|B_1B_2)-H_1,\label{writewithIcondmutualinfo}
\end{align}
where the first equality is just the definition, the second uses the fact that there is a bijective (or unitary) relation between $(X_1+X_2,X_2)$ and $(X_1,X_2)$ (namely, a CNOT gate), and the third uses (twice) that $X_1B_1$ and $X_2B_2$ are independent.

While the strong subadditivity property of the von Neumann entropy~\cite{Lieb2002, book2000mikeandike} guarantees generally that $I(X_1+X_2:X_2|B_1B_2)\geq 0$, and therefore $H(X_1+X_2|B_1B_2)-H_1$ is nonnegative, we employ the recently established breakthrough result by Fawzi and Renner \cite{FR14}, discussed in Chapter \ref{recoverability}, in order to derive our inequality (\ref{firstFGmainresult}). The result in \cite{FR14} provides a lower bound based on the so called Fidelity of Recovery defined in Equation \ref{eq:fidrec} and is stated in Equation \ref{FRlowerbound}. At the end of this section, we will also briefly comment on the potential to improve our final result by using the stronger recoverability inequalities discussed in chapter \ref{recoverability}.

To apply the inequality from Equation \ref{recIneqChain}, we introduce the quantum state $\tau_{ACB}$ with binary (classical) registers $A=X_1+X_2$ and $C=X_2$, and a quantum register $B=B_1B_2$:
\begin{align}
\tau_{ACB} &\equiv \tau_{(X_1+X_2)(X_2)(B_1B_2)} := {\textrm CNOT}_{(X_1,X_2)\mapsto(X_1+X_2,X_2)}\big(\rho^{X_1B_1}\otimes\rho^{X_2B_2}\big) \nonumber\\
&=\frac{1}{4}{\ketbra{}{0}{0}}_A\otimes{\ketbra{}{0}{0}}_C\otimes\sigma_0^{B_1}\otimes\sigma_0^{B_2}+\frac{1}{4}{\ketbra{}{1}{1}}_A\otimes{\ketbra{}{0}{0}}_C\otimes\sigma_1^{B_1}\otimes\sigma_0^{B_2}\nonumber\\
&\quad+\frac{1}{4}{\ketbra{}{1}{1}}_A\otimes{\ketbra{}{1}{1}}_C\otimes\sigma_0^{B_1}\otimes\sigma_1^{B_2}+\frac{1}{4}{\ketbra{}{0}{0}}_A\otimes{\ketbra{}{1}{1}}_C\otimes\sigma_1^{B_1}\otimes\sigma_1^{B_2}\nonumber\\
&=\frac{1}{2}{\ketbra{}{0}{0}}_C\otimes\omega_0^{AB_1}\otimes\sigma_0^{B_2}+\frac{1}{2}{\ketbra{}{1}{1}}_C\otimes\omega_1^{AB_1}\otimes\sigma_1^{B_2}, \label{InfoCombCNOT}
\end{align}
where we defined $\omega_0^{AB_1}:=\frac{1}{2}({\ketbra{}{0}{0}}_A\otimes\sigma^{B_1}_0+{\ketbra{}{1}{1}}_A\otimes\sigma^{B_1}_1)$ and $\omega_1^{AB_1}:=\frac{1}{2}({\ketbra{}{0}{0}}_A\otimes\sigma^{B_1}_1+{\ketbra{}{1}{1}}_A\otimes\sigma^{B_1}_0)$ for later convenience. The lower bound on the conditional quantum mutual information in terms of the fidelity in Equation \ref{FRlowerbound} now says that there exists a quantum channel ${\mathcal R}'_{B\to AB}$ such that the following inequality holds:
\begin{alignat}{2}
H(X_1+X_2|B_1B_2)-H_1&=I(A:&&C|B)_\tau\nonumber\\
&\geq-2 \log&& F(\tau_{ACB}, \mathcal R'_{B\rightarrow AB}(\tau_{CB}))\label{fawzi-renner-bound-in-derivation} \\
&=-2\log&&\left[\frac{1}{2}F(\omega_0^{AB_1}\otimes\sigma_0^{B_2},{\mathcal R}'_{B\to AB}(\overline{\sigma}^{B_1}\otimes\sigma_0^{B_2})) \right. \nonumber \\ 
&&&\left.+\frac{1}{2}F(\omega_1^{AB_1}\otimes\sigma_1^{B_2},{\mathcal R}'_{B\to AB}(\overline{\sigma}^{B_1}\otimes\sigma_1^{B_2}))\right]\nonumber  \\
&= -2\log&&\left[\frac{1}{2} F(\omega_0^{AB_1} \otimes \sigma_0^{B_2} , \mathcal R_{B_2\to AB}(\sigma^{B_2}_0))\right. \nonumber \\ 
&&&\left. + \frac{1}{2}F(\omega_1^{AB_1} \otimes \sigma_1^{B_2} , \mathcal R_{B_2\to AB}(\sigma^{B_2}_1)) \right]. \label{minus2logAvgF}
\end{alignat}
Here we introduced $\overline{\sigma}^{B_1}:=\frac{1}{2}(\sigma_0^{B_1}+\sigma_1^{B_1})$ and used the fact that both $\tau_{ACB}$ and ${\mathcal R}'_{B\to AB}(\tau_{CB})$ are block-diagonal on the $C$-system to partially evaluate the fidelity in the third line, and defined the quantum channel ${\mathcal R}_{B_2\to AB}(\sigma_{B_2}):={\mathcal R}'_{B\to AB}(\overline{\sigma}^{B_1}\otimes\sigma_{B_2})$ in the fourth line.

To obtain a nontrivial lower bound on $H(X_1+X_2|B_1B_2)-H_1$, we now derive a nontrivial upper bound on the expression in the square brackets in (\ref{minus2logAvgF}). Our derivation will involve a triangle inequality on the set of quantum states in order to ``join'' the two states ${\mathcal R}_{B_2\to AB}(\sigma_{0,1}^{B_2})$ occurring in this expression. There are various ways to turn the quantum fidelity $F$ into a metric (in particular, to satisfy the triangle inequality)~\cite{tomamichel2015quantum}, e.g.\ the \emph{geodesic distance} $A(\rho,\sigma):=\arccos F(\rho,\sigma)$~\cite{book2000mikeandike}, the \emph{Bures metric} $B(\rho,\sigma):=\sqrt{1-F(\rho,\sigma)}$~\cite{B69}, or the \emph{purified distance} $P(\rho,\sigma):=\sqrt{1-F(\rho,\sigma)^2}$~\cite{GLN05,R02}. The following derivation can be done analogously with either of the three, but in the end the best bound will follow via the geodesic distance $A$, which we therefore use.

Using the concavity of the $\arccos$ function on the interval $[0,1]$ in the first step and abbreviating ${\mathcal R}:={\mathcal R}_{B_2\to AB}$, we obtain:
\begin{equation}
\begin{split}\label{concavity-triangle-monotonicity}
&\!\!\!\!\!\!\!\!\!\!\!\arccos\left[\frac{1}{2} F(\omega_0^{AB_1} \otimes \sigma_0^{B_2} , \mathcal R(\sigma^{B_2}_0)) + \frac{1}{2}F(\omega_1^{AB_1} \otimes \sigma_1^{B_2} , \mathcal R(\sigma^{B_2}_1)) \right]\\[2pt]
&\geq \frac{1}{2} A(\omega_0^{AB_1} \otimes \sigma^{B_2}_0 , \mathcal R(\sigma^{B_2}_0)) +\frac{1}{2} A(\omega^{AB_1}_1 \otimes \sigma^{B_2}_1 , \mathcal R(\sigma^{B_2}_1)) \\[2pt]
&\geq \frac{1}{2} A(\omega_0^{AB_1} \otimes \sigma_0^{B_2} , \omega_1^{AB_1} \otimes \sigma_1^{B_2}) - \frac{1}{2}A(\mathcal R(\sigma_0^{B_2}), \mathcal R(\sigma_1^{B_2}))\\[2pt]
&\geq \frac{1}{2} \arccos[ F(\omega_0^{AB_1},\omega_1^{AB_1}) F(\sigma_0^{B_2},\sigma_1^{B_2}) ] - \frac{1}{2}A(\sigma_0^{B_2},\sigma_1^{B_2}) \\[2pt]
&=\frac{1}{2}\arccos[F(\sigma_0^{B_1},\sigma_1^{B_1}) F(\sigma_0^{B_2},\sigma_1^{B_2}) ] - \frac{1}{2}A(\sigma_0^{B_2},\sigma_1^{B_2}) \\[2pt]
&= \frac{1}{2} \arccos[fg]  - \frac{1}{2} \arccos g,
\end{split}
\end{equation}
where in the third line we used the triangle inequality along the path $\omega_0^{AB_1}\otimes\sigma_0^{B_2}\to{\mathcal R}(\sigma_0^{B_2})\to{\mathcal R}(\sigma_1^{B_2})\to\omega_1^{AB_1}\otimes\sigma_1^{B_2}$, in the fourth line we used the fact that the fidelity is nondecreasing under quantum channels and multiplicative on tensor product states, and in the last two lines we evaluted and abbreviated $F(\omega_0^{AB_1},\omega_1^{AB_1})=F(\sigma_0^{B_1},\sigma_1^{B_1})=:f$ and $F(\sigma_0^{B_2},\sigma_1^{B_2})=:g$. Since the $\arccos$ function is nonincreasing in $[0,1]$, the last chain of inequalities yields an upper bound on the expression in square brackets in (\ref{minus2logAvgF}), and therefore:
\begin{align}
H(X_1+X_2|B_1B_2)-H_1\geq-2\log\cos\left[\frac{1}{2}\arccos[fg]-\frac{1}{2}\arccos g\right].\label{lowerboundWithfg}
\end{align}

As the last step, it is easy to verify that the right-hand-side of the inequality (\ref{lowerboundWithfg}) is monotonically decreasing in $f\in[0,1]$ for each fixed $g\in[0,1]$, and monotonically increasing in $g$ for each fixed $f$. Therefore, in order to continue the lower bound (\ref{lowerboundWithfg}), we can replace $f$ by an upper bound on $F(\sigma_0^{B_1},\sigma_1^{B_1})$ that is consistent with the given value of $H_1=H(X_1|B_1)$; and similarly replace $g$ by a lower bound on $F(\sigma_0^{B_2},\sigma_1^{B_2})$ consistent with $H_2=H(X_2|B_2)$. Exactly such upper and lower bounds are given in Theorem \ref{tightRelationFHtheorem}, following from bounds on the concavity of the von Neumann entropy, and result in
\begin{align}
&H(X_1+X_2|B_1B_2)- H_1 \nonumber\\
&\,\geq-2\log\cos\left[\frac{1}{2}\arccos[(1-2h_2^{-1}(\log2-H_1))(e^{H_2}-1)]-\frac{1}{2}\arccos[e^{H_2}-1]\right],\label{first-asymmetric-lower-bound-eqn}
\end{align}
showing that the first expression in the $\max$ in (\ref{firstFGmainresult}) is indeed a lower bound on $H(X_1+X_2|B_1B_2)$.

The same reasoning with $\rho^{X_1B_1}$ and $\rho^{X_2B_2}$ interchanged shows that the second expression in the $\max$ in (\ref{firstFGmainresult}) is a lower bound on $H(X_1+X_2|B_1B_2)$ as well.

To show that the third expression in the $\max$ in (\ref{firstFGmainresult}) is a lower bound on $H(X_1+X_2|B_1B_2)$, we exploit the symmetries of binary input classical-quantum channels and their dual channels under the channel combination. For this, we recall from Section~\ref{duality} that 
\begin{equation}\label{eqntermafterperpequationrepeat}
H(X_1+X_2|B_1B_2)=H(\mathcal W_1\boxast \mathcal W_2)=H_1+H_2-\log2+H(\mathcal W_1^{\bot}\boxast \mathcal W_2^{\bot}).
\end{equation}
Thus, we can obtain another lower bound on $H(X_1+X_2|B_1B_2)$ by bounding the term $H(\mathcal W_1^{\bot}\boxast \mathcal W_2^{\bot})$ from below using the first expression in the $\max$ in (\ref{firstFGmainresult}). This gives the following lower bound:
\begin{align}
&H(X_1+X_2|B_1B_2) \nonumber\\
&\geq H_1+H_2-\log2+H(\mathcal W_1^{\bot})- 2\log\cos\Big[ \dots \nonumber\\
&\left.\dots\frac{1}{2}\arccos[(1-2h_2^{-1}(\log2-H(\mathcal W_1^{\bot}))(e^{H(\mathcal W_2^{\bot})}-1)]-\frac{1}{2}\arccos[e^{H(\mathcal W_2^{\bot})}-1]\right],\nonumber
\end{align}
which is exactly the third expression in the $\max$ in Equation~\eqref{firstFGmainresult} as, again by (\ref{II1}), the channels $\mathcal W_j^{\bot}$ satisfy $H(\mathcal W_j^{\bot})=\log2-H_j$. We infer that the fourth expression in the $\max$ in (\ref{firstFGmainresult}) is a lower bound on $H(X_1+X_2|B_1B_2)$ from (\ref{eqntermafterperpequationrepeat}), by bounding the term $H(\mathcal W_1^{\bot}\boxast \mathcal W_2^{\bot})$ from below using the second expression in the $\max$ in (\ref{firstFGmainresult}).
\end{proof} 

\begin{rem}\label{equality-condition-for-proof} Since the $\arccos$ function is stricly monotonically decreasing in $[0,1]$, one can see from the first expression in the $\max$ in (\ref{firstFGmainresult}) (cf.\ also (\ref{lowerboundWithfg})) that $H(X_1+X_2|B_1B_2)=H_1$ is possible only if $H_1=\log2$ or $H_2=0$. Conversely, if $H_1=\log2$ or $H_2=0$ then actually $H(X_1+X_2|B_1B_2)=H_1$ since: {\textit(a)} $H_1\leq H(X_1+X_2|B_1B_2)$ holds due to (\ref{writewithIcondmutualinfo}) along with strong subadditivity; {\textit(b)} $H(X_1+X_2|B_1B_2)\leq\log2$ holds as $X_1+X_2$ is a binary register; {\textit(c)} since the conditional entropy  $H(X_2|X_1+X_2,B_1B_2)$ of a classical system is nonnegative (similarly to Equation~\eqref{writewithIcondmutualinfo}), we have:
\begin{align*}
H(X_1+X_2|B_1B_2)&\leq H(X_1+X_2|B_1B_2)+H(X_2|X_1+X_2,B_1B_2)\\[2pt]
&=H(X_1+X_2,X_2|B_1B_2)\\[2pt]
&=H(X_1,X_2|B_1B_2)\\[2pt]
&=H(X_1|B_1)+H(X_2|B_2)=H_1+H_2.
\end{align*}
Analogously, $H(X_1+X_2|B_1B_2)=H_2$ if and only if $H_1=0$ or $H_2=\log2$. Thus, the inequality $H(X_1+X_2|B_2B_2)\geq\max\{H_1,H_2\}$ holds with equality if and only if $H_1\in\{0,\log2\}$ or $H_2\in\{0,\log2\}$. Therefore, the inequality $H(X_1+X_2|B_1B_2)\geq(H_1+H_2)/2$ holds with equality if and only if $H_1=H_2\in\{0,\log2\}$.
\end{rem}

The lower bound (\ref{firstFGmainresult}) from Theorem \ref{QuantumMrsGerberTheorem2DifferentStates} is illustrated in Figure~\ref{fig-bound-with-h1h2}.

\begin{figure}[t!]
\centering
\begin{overpic}[trim=0.2cm 0.1cm 0.2cm 0.0cm, clip, scale=0.3]{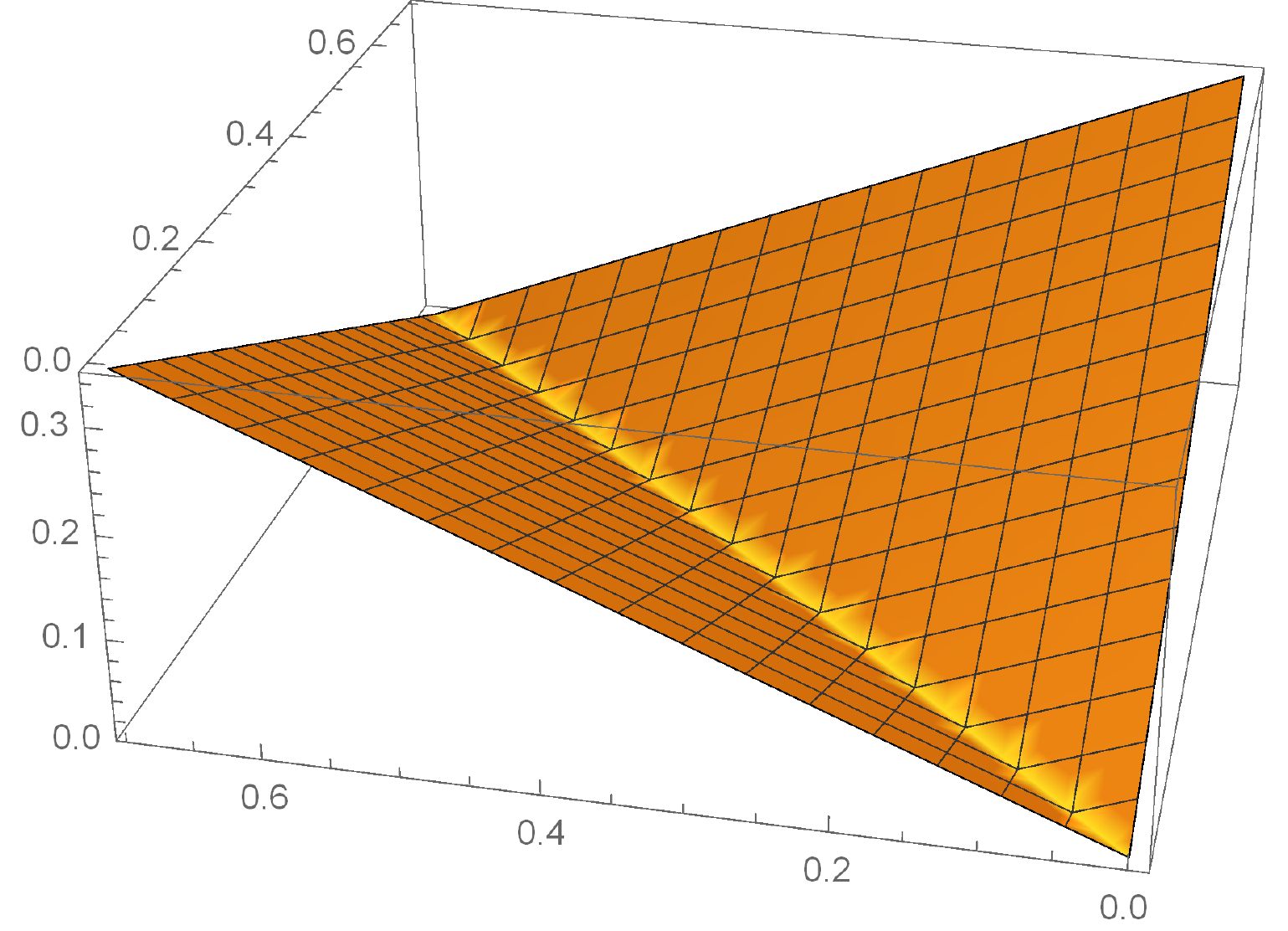}   
\put (0,62) {\small$H_1$}
\put (45,0) {\small$H_2$}
\end{overpic}\hspace{1.5cm}\begin{overpic}[trim=0.0cm 0.3cm 0.0cm 0.1cm, clip, scale=0.3]{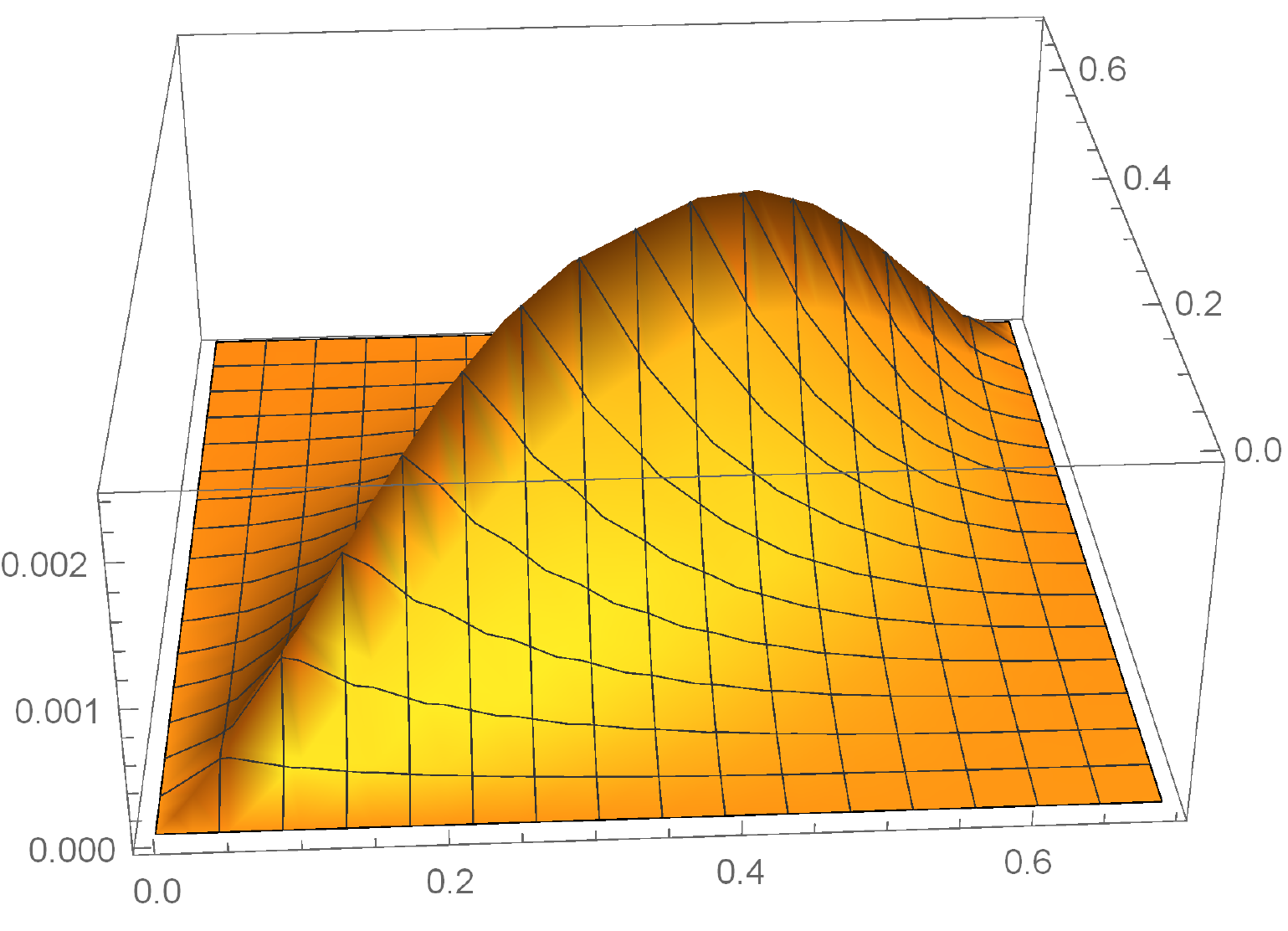}   
\put (97,55) {\small$H_1$}
\put (45,-3) {\small$H_2$}
\end{overpic}

\caption{\label{fig-bound-with-h1h2} The plot on the left shows the lower bound on $H(X_1+X_2|B_1B_2)-\frac{1}{2}(H_1+H_2)$ inferred from the bound (\ref{firstFGmainresult}) in Theorem \ref{QuantumMrsGerberTheorem2DifferentStates}, as a function of $(H_1,H_2)\in[0,\log2]\times[0,\log2]$. The plot on the right shows the lower bound on $H(X_1+X_2|B_1B_2)-\max\{H_1,H_2\}$ inferred from (\ref{firstFGmainresult}). The value of the bound along the diagonal line $H_1=H_2$ is shown again as the purple curve in Fig.\ \ref{fig-bound-with-h}.}
\end{figure}

In the important special case $\rho^{X_1B_1}=\rho^{X_2B_2}$ of Theorem \ref{QuantumMrsGerberTheorem2DifferentStates}, which will be useful in instances such as for polar codes on i.i.d.\ channels, we can use the same idea to obtain a better bound:
\begin{thm}[Mrs.\ Gerber's Lemma with quantum side information on i.i.d.\ states for uniform probabilities]\label{QuantumMrsGerberTheorem1StateTwice}
Let $\rho^{X_1B_1}=\rho^{X_2B_2}$ be \emph{identical and independent} classical-quantum states carrying uniform a priori classical probabilites on the binary variables $X_1,X_2$, i.e.\
\begin{equation}
\rho^{X_1B_1}=\rho^{X_2B_2}=\frac{1}{2}{\ketbra{}{0}{0}}\otimes\sigma_0+\frac{1}{2}{\ketbra{}{1}{1}}\otimes\sigma_1,  \label{equalStates}
\end{equation}
where $\sigma_i^{B_1}=\sigma_i^{B_2}=\sigma_i\in\setS{\cH_{d}}$ are quantum states on a $d$-dimensional Hilbert space ($i=1,2$). Denoting their conditional entropy by $H = H(X_1|B_1)=H(X_2|B_2)$, the following entropy inequality holds:
\begin{align}
H&(X_1+X_2|B_1B_2) \nonumber\\
&\geq\left\{\begin{array}{ll}H-2\log\cos\Big[\frac{1}{2}\arccos[(1-2h_2^{-1}(H))^2] \\ 
\quad\,\qquad\,\quad\,\qquad-\frac{1}{2}\arccos[1-2h_2^{-1}(H)]\Big],&\qquad H\leq\frac{1}{2}\log2
\\H-2\log\cos\Big[\frac{1}{2}\arccos[(1-2h_2^{-1}(\log2-H))^2] \\ 
\quad\,\qquad\,\quad\,\qquad-\frac{1}{2}\arccos[1-2h_2^{-1}(\log2-H)]\Big],&\qquad H>\frac{1}{2}\log2\end{array}\right.\label{FonlyMainResult}\\
&\geq\left\{\begin{array}{ll}H+0.083\cdot\frac{H}{1-\log H},&H\leq\frac{1}{2}\log2\\H+0.083\cdot\frac{\log2-H}{1-\log(\log2-H)},&H>\frac{1}{2}\log2.\end{array}\right.\label{FonlyMoreConvenientLowerBound}
\end{align}
The expressions (\ref{FonlyMoreConvenientLowerBound}) assume $\log$ to be the natural logarithm.
\end{thm}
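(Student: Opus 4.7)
My plan is to follow the proof of Theorem~\ref{QuantumMrsGerberTheorem2DifferentStates} as closely as possible and specialize it at the two points where the assumption $\rho^{X_1 B_1} = \rho^{X_2 B_2}$ yields something stronger. The initial steps---applying Fawzi--Renner to the CNOT'ed state $\tau_{ACB}$ from~\eqref{InfoCombCNOT}, invoking strong subadditivity, and performing the triangle inequality in the geodesic fidelity distance---go through verbatim, and they stop short of using any property of the two states aside from the fidelities $f = F(\sigma_0^{B_1}, \sigma_1^{B_1})$ and $g = F(\sigma_0^{B_2}, \sigma_1^{B_2})$; this is exactly the content of inequality~\eqref{lowerboundWithfg}. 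Under the i.i.d.\ assumption, $f = g = F(\sigma_0,\sigma_1) =: F$ and $H_1 = H_2 = H$, so~\eqref{lowerboundWithfg} specializes to
\begin{equation*}
H(X_1+X_2|B_1B_2) - H \geq -2\log\cos\left[\frac{1}{2}\arccos[F^2] - \frac{1}{2}\arccos[F]\right] =: b(F).
\end{equation*}
The gain over a naive $H_1 = H_2$ specialization of Theorem~\ref{QuantumMrsGerberTheorem2DifferentStates} comes from being able to use a single fidelity $F$ on both tensor factors, rather than replacing $f$ and $g$ independently by their respective one-sided extrema.

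The second step is to pass from $F$ to $H$ via the fidelity--entropy bounds of Theorem~\ref{tightRelationFHtheorem}. The main new difficulty compared to Theorem~\ref{QuantumMrsGerberTheorem2DifferentStates} is that $b(F)$ is non-monotonic on $[0,1]$: a short computation of $b'(F)$ shows that $b$ is increasing on $[0, 1/\sqrt{3}]$ and decreasing on $[1/\sqrt{3}, 1]$, with maximum at $F = 1/\sqrt{3}$. Consequently, one cannot get away with a one-sided substitution as in Theorem~\ref{QuantumMrsGerberTheorem2DifferentStates}; instead one has to identify, for each $H$, which extremum of the admissible fidelity interval from Theorem~\ref{tightRelationFHtheorem} actually minimizes $b$, and substitute exactly that endpoint.

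I expect the case split at $H = (1/2)\log 2$ in the statement to arise naturally from this analysis. For $H \leq (1/2)\log 2$ I plan to verify that the minimizing endpoint is $F = 1 - 2h_2^{-1}(H)$, which upon substitution yields the first branch of~\eqref{FonlyMainResult}. For $H > (1/2)\log 2$ I would avoid redoing the analysis and instead apply the already-proved first branch to the dual channel pair $\mathcal W^\perp$, whose conditional entropy is $\log 2 - H \leq (1/2)\log 2$ by~\eqref{II1}, and then translate the result back through the duality identity~\eqref{eqntermafterperpequationrepeat}. This is exactly the mechanism used for the third and fourth branches in the $\max$ of Theorem~\ref{QuantumMrsGerberTheorem2DifferentStates}, and it automatically produces the second branch of~\eqref{FonlyMainResult}.

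The more convenient inequality~\eqref{FonlyMoreConvenientLowerBound} should then follow from an explicit analytical lower bound on the right-hand side of~\eqref{FonlyMainResult}. Combining the known small-argument asymptotic $h_2^{-1}(x) \sim x/(2\log(1/x))$ with a Taylor expansion of $-2\log\cos$ in the small-angle regime (where the argument of the inner cosine is indeed small under the substitution above), I expect the universal constant $0.083$ to emerge as the infimum over $H \in [0, (1/2)\log 2]$ of a suitable ratio, with the symmetric $H > (1/2)\log 2$ branch coming for free from the dual-channel reduction. The principal obstacle I anticipate is not the overall structure---which is only a modest modification of the Theorem~\ref{QuantumMrsGerberTheorem2DifferentStates} proof---but the analytic bookkeeping in these last two steps: rigorously certifying which endpoint of Theorem~\ref{tightRelationFHtheorem} is the correct minimizer of $b$ in each regime given its non-monotonicity, and then cleanly extracting the numerical constant $0.083$ in~\eqref{FonlyMoreConvenientLowerBound} from the resulting inverse-binary-entropy expressions.
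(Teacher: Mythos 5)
Your first step is exactly the paper's: specializing \eqref{lowerboundWithfg} to $f=g=F(\sigma_0,\sigma_1)$ gives \eqref{lowerboundWithOnlyf}, and the observation that the resulting function $b(F)$ is unimodal (increasing on $[0,1/\sqrt{3}]$, decreasing on $[1/\sqrt{3},1]$) so that one must minimize over the two endpoints of the admissible fidelity interval is also the paper's. The gap is in how you execute the case analysis. For $H\leq\frac{1}{2}\log2$, Theorem~\ref{tightRelationFHtheorem} confines $F$ to $[e^{H}-1,\;1-2h_2^{-1}(\log2-H)]$; the quantity $1-2h_2^{-1}(H)$ that you propose to identify as ``the minimizing endpoint'' is not an endpoint of this interval at all --- for $H\leq\frac{1}{2}\log2$ it lies \emph{above} the upper endpoint $1-2h_2^{-1}(\log2-H)$, so substituting it is not licensed by the endpoint argument. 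The honest direct bound in that regime is $\min\{b(e^{H}-1),\,b(1-2h_2^{-1}(\log2-H))\}$, and near $H\to0$ this is at most $b(e^{H}-1)\approx\frac{1}{2}H^{2}$, i.e.\ quadratic in $H$, whereas the first branch of \eqref{FonlyMainResult} vanishes only like $(3-\sqrt{8})\,h_2^{-1}(H)$. So the direct analysis on $H\leq\frac{1}{2}\log2$ provably cannot yield the first branch; your first case would fail, and the dualization you then build on it would inherit the failure.

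The repair is to run your two steps in the opposite order, which is what the paper does: carry out the direct endpoint minimization for $H\geq\frac{1}{2}\log2$ (where one certifies --- partly analytically for $H\geq\log(1+1/\sqrt{3})$, partly by a monotonicity-plus-numerical-check argument down to $H\approx0.33$ --- that the minimum in \eqref{minof2expressionsinproofofHH} is attained at the upper endpoint $1-2h_2^{-1}(\log2-H)$), obtaining the \emph{second} branch of \eqref{FonlyMainResult}; then apply that branch to the dual channel $\mathcal W^{\bot}$, whose entropy $\log2-H$ lies in $[\frac{1}{2}\log2,\log2]$ when $H\leq\frac{1}{2}\log2$, and translate back via \eqref{eqntermafterperpequationrepeat} to get the first branch. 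The paper's ``Potential improvements'' paragraph explicitly warns that symmetrizing in your direction (small $H$ first, then dualizing to large $H$) can never do better than $O(H^{2})$ at the boundary. Your sketch of \eqref{FonlyMoreConvenientLowerBound} is in the right spirit, but note the paper extracts $0.083$ from global monotonicity bounds on $(\arccos[x^2]-\arccos x)/\sqrt{1-x}$ and on $(1-\cos y)/y^{2}$ over the whole range, together with $h_2^{-1}(H)\geq(1-e^{-1})H/(1-\log H)$, rather than from Taylor asymptotics alone.
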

\begin{proof} We follow the proof of Theorem \ref{QuantumMrsGerberTheorem2DifferentStates} up until Equation~\eqref{lowerboundWithfg}, which reads
\begin{align}
H(X_1+X_2|B_1B_2)-H\geq-2\log\cos\left[\frac{1}{2}\arccos[f^2]-\frac{1}{2}\arccos f\right]\label{lowerboundWithOnlyf}
\end{align}
with $f:=F(\sigma_0,\sigma_1)$. The right-hand-side of the last lower bound is monotonically increasing for $f\in[0,1/\sqrt{3}]$ and monotonically decreasing for $f\in[1/\sqrt{3},\log2]$ since these statements hold for the function $f\mapsto\frac{1}{2}\arccos[f^2]-\frac{1}{2}\arccos f$. Therefore, a lower bound based on $e^H-1\leq f\leq1-2h_2^{-1}(\log2-H)$ from Theorem \ref{tightRelationFHtheorem} can be obtained by evaluating (\ref{lowerboundWithOnlyf}) at those boundaries:
\begin{align}
\label{minof2expressionsinproofofHH}
&H(X_1+X_2|B_1B_2)-H \geq\min\Big\{\dots \nonumber\\ 
&\,-2\log\cos\Big[\frac{1}{2}\arccos[(e^H-1)^2]-\frac{1}{2}\arccos[e^H-1]\Big],\nonumber\\
&\,-2\log\cos\Big[\frac{1}{2}\arccos[(1-2h_2^{-1}(\log2-H))^2]-\frac{1}{2}\arccos[1-2h_2^{-1}(\log2-H)]\Big]\Big\}.
\end{align}
Numerically, one sees that for $H\in[\frac{1}{2}\log2,\log2]$ (and even for $H\in[0.33,\log2]$), the minimum in the last expression is attained by the second term, which gives
\begin{align}
&H(X_1+X_2|B_1B_2) - H \nonumber\\
&\,\geq -2\log\cos\Big[\frac{1}{2}\arccos[(1-2h_2^{-1}(\log2-H))^2]-\frac{1}{2}\arccos[1-2h_2^{-1}(\log2-H)]\Big]\label{proofoffirstselectorinequalHcase}
\end{align}
for $H\geq\frac{1}{2}\log2$, and shows the second selector in (\ref{FonlyMainResult}). Analytically, one can easily show this statement for $H\in[\log(1+1/\sqrt{3}),\log2]$, as this implies by Theorem \ref{tightRelationFHtheorem} that $f$ is in the range $f\in[1/\sqrt{3},1]$. Here the function $f\mapsto\frac{1}{2}\arccos[f^2]-\frac{1}{2}\arccos f$ is monotonically decreasing and we have $e^H-1\leq1-2h_2^{-1}(\log2-H)$ by Theorem \ref{tightRelationFHtheorem}. The statement is also true for $H\in[0.33,\log(1+1/\sqrt{3})]$, for the following reason: First, the statement is easily numerically certified for $H=0.33$; second, the function that maps $H$ to the first expression in the minimum in (\ref{minof2expressionsinproofofHH}) is monotonically \emph{increasing} for $H\in[0,\log(1+1/\sqrt{3})]$ since $H\mapsto e^H-1$ is increasing from $0$ to $1/\sqrt{3}$, where the right-hand-side of (\ref{lowerboundWithOnlyf}) is increasing in $f$; third, the function that maps $H$ to the second expression in the minimum in (\ref{minof2expressionsinproofofHH}) is monotonically \emph{de}creasing for $H\in[0.33,\log(1+1/\sqrt{3})]$ since the function $H\mapsto1-2h_2^{-1}(\log2-H)$ is increasing and not smaller than $1-2h_2^{-1}(\log2-0.33)\geq0.76\geq1/\sqrt{3}$, where the right-hand-side of (\ref{lowerboundWithOnlyf}) is decreasing in $f$.

To prove the first selector in (\ref{FonlyMainResult}), i.e.\ the case $H\leq\frac{1}{2}\log2$, we again use the reasoning via dual channels as in the proof of Theorem \ref{QuantumMrsGerberTheorem2DifferentStates}. Eq.\ (\ref{eqntermafterperpequation}) now reads:
\begin{align}
H(X_1+X_2|B_1B_2)=2H-\log2+H(\mathcal W^{\bot}\boxast \mathcal W^{\bot}),\label{mirrorsymmetryHHeqn}
\end{align}
where $\mathcal W$ is the channel corresponding to the state $\rho^{X_1B_1}=\rho^{X_2B_2}$ and $\mathcal W^{\bot}$ its dual. Since $H(\mathcal W^{\bot})=\log2-H\geq\frac{1}{2}\log2$ we can apply (\ref{proofoffirstselectorinequalHcase}) to the channel $\mathcal W^{\bot}$ to bound the last expression from below:
\begin{alignat}{2}\nonumber
&H(X_1+X_2|B_1B_2)&& \\ \nonumber
&\geq 2H-\log2+H(\mathcal W^{\bot})&&-2\log\cos\Big[\frac{1}{2}\arccos[(1-2h_2^{-1}(\log2-H(\mathcal W^{\bot})))^2] \\\nonumber
&&&-\frac{1}{2}\arccos[1-2h_2^{-1}(\log2-H(\mathcal W^{\bot}))]\Big].
\end{alignat}
This along with $H(\mathcal W^{\bot})=\log2-H$ gives finally the desired expression in the first selector in (\ref{FonlyMainResult}).

We show the more convenient lower bound (\ref{FonlyMoreConvenientLowerBound}) by using a few inequalities without formal proof. First we employ
\begin{align}\nonumber
\frac{1}{2}\arccos[x^2]-\frac{1}{2}\arccos x\geq\frac{\frac{1}{2}\arccos[F^2]-\frac{1}{2}\arccos F}{\sqrt{1-F}}\sqrt{1-x}\qquad\forall x\in[F,1]
\end{align}
for $F:=1-2h_2^{-1}(\frac{1}{2}\log2)$, since the function $x\mapsto(\arccos[x^2]-\arccos[x])/\sqrt{1-x}$ is monotonically increasing in $x\in[0,1)$. Using this in the first selector in (\ref{FonlyMainResult}), i.e.\ for $x=1-2h_2^{-1}(H)$, we obtain:
\begin{align}
H(X_1+X_2|B_1B_2)&\geq H-2\log\cos\left[c_1\sqrt{2h_2^{-1}(H)}\right]  \nonumber\\
&\geq H-2\log\left(1-c_2c_1^2\cdot2h_2^{-1}(H)\right)\,,\nonumber
\end{align}
for any $H\leq\frac{1}{2}\log2$, with
\begin{align}
c_1:&=\left.\frac{\frac{1}{2}\arccos[F^2]-\frac{1}{2}\arccos F}{\sqrt{1-F}}\right|_{F=1-2h_2^{-1}(\frac{1}{2}\log2)} \nonumber\\ 
c_2:&=\left.\frac{1-\cos x}{x^2}\right|_{x=c_1\sqrt{2h_2^{-1}(\frac{1}{2}\log2)}}, \nonumber
\end{align}
since the function $x\mapsto(1-\cos x)/x^2$ is monotonically decreasing in $x\in[0,\pi/2]\ni c_1\sqrt{2h_2^{-1}(\frac{1}{2}\log2)}$. From there we continue by first using the concavity of the $\log$ function:
\begin{align}
H(X_1+X_2|B_1B_2)&\geq H+4c_2c_1^2\,h_2^{-1}(H)\nonumber\\
&\geq H+4c_2c_1^2(1-e^{-1})\frac{H}{1-\log H},\nonumber
\end{align}
where in the last step we employ a convenient lower bound on $h_2^{-1}$, containing Euler's number $e$. The first selector now follows by $4c_2c_1^2(1-e^{-1})\geq0.083$, and the second selector in (\ref{FonlyMoreConvenientLowerBound}) by interchanging $H$ and $\log2-H$.
\end{proof}

The lower bounds (\ref{FonlyMainResult}) and (\ref{FonlyMoreConvenientLowerBound}) from Theorem \ref{QuantumMrsGerberTheorem1StateTwice} are shown in Fig.\ \ref{fig-bound-with-h}, where they are also compared to the bound (\ref{firstFGmainresult}) that is obtained from Theorem \ref{QuantumMrsGerberTheorem2DifferentStates} in the case $H_1=H_2=H$.

\begin{figure}[t!]
\centering
\begin{overpic}[trim=3.1cm 21.8cm 8.7cm 2.6cm, clip, scale=0.85]{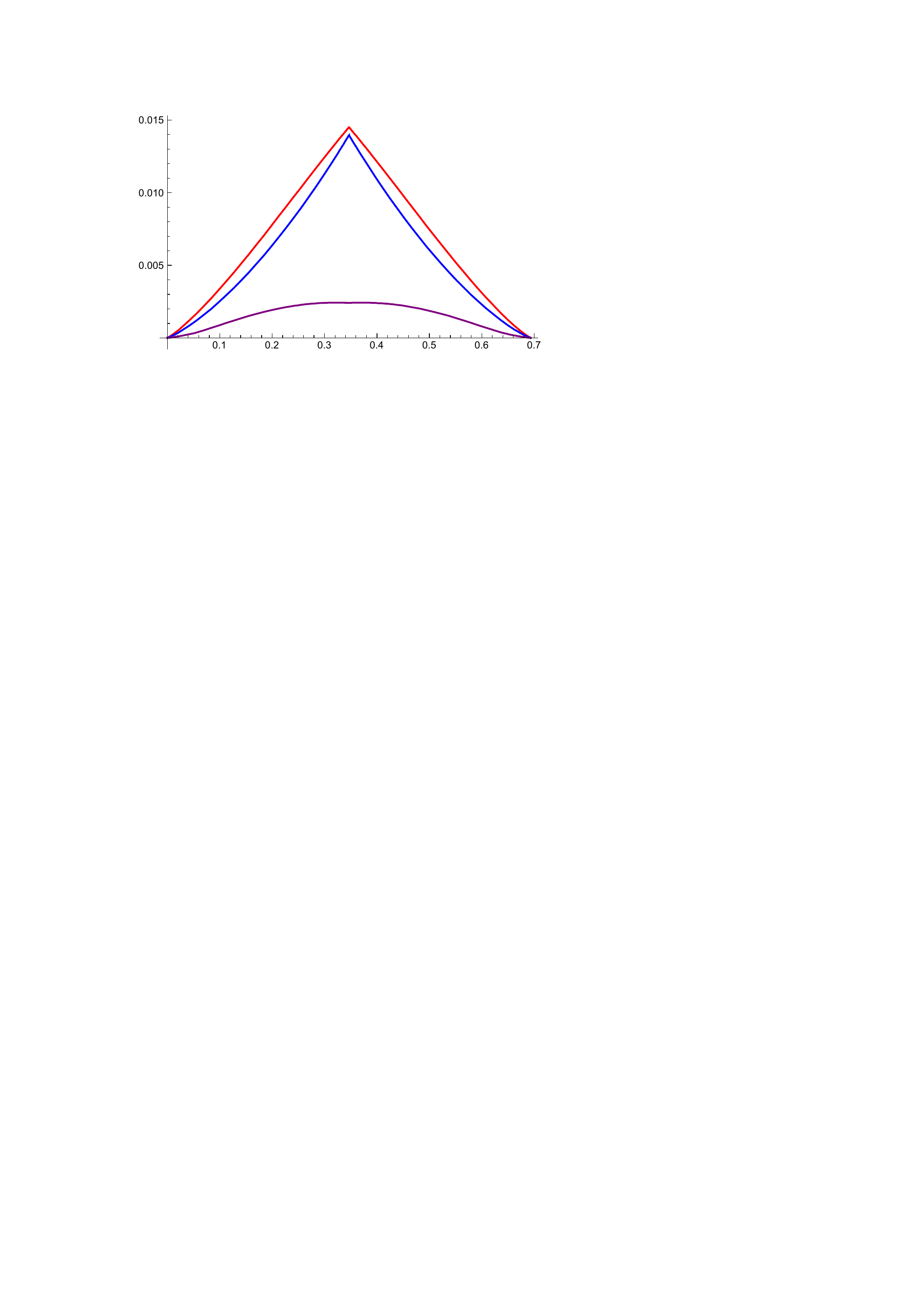}   
\put (50,-5) {\small$H$}
\end{overpic}
\vspace{0.5cm}
\caption{\label{fig-bound-with-h}The red curve shows the lower bound on $H(X_1+X_2|B_1B_2)-H$ in terms of $H\in[0,\log2]$ from Eq.\ (\ref{FonlyMainResult}), the blue curve from Eq.\ (\ref{FonlyMoreConvenientLowerBound}), and the purple curve from Eq.\ (\ref{firstFGmainresult}) in the special case $H_1=H_2=H$.}
\end{figure}

\paragraph{Potential improvements of lower bounds} 
Before ending this section we would like to give further details on the possibilities to improve the bounds given above. 
Our best lower bound (\ref{FonlyMainResult}) on $H(X_1+X_2|B_1B_2)-H$ behaves, by expanding the right-hand-side of (\ref{lowerboundWithOnlyf}) for $f\to1$, near the boundary $H\to\log2$ like
\begin{align}
\left.\left(\frac{3}{2}-\sqrt{2}\right)(1-f)+O((1-f)^2)\right|_{f=1-2h_2^{-1}(\log2-H)} \nonumber\\[1pt]
=(3-\sqrt{8})h_2^{-1}(\log2-H)+O((h_2^{-1}(\log2-H))^2). \label{behaviourofarccosf-expression}
\end{align}
Thus it vanishes faster than linearly $\propto(\log2-H)$ as $H\to\log2$ (see also Fig.\ \ref{fig-bound-with-h}; the behaviour for $H\to0$ follows by mirror symmetry $H\leftrightarrow\log2-H$ around $H=\frac{1}{2}\log2$). On the other hand, the bound vanishes at most as fast as $\Omega((\log2-H)/(-\log(\log2-H)))$ according to (\ref{FonlyMoreConvenientLowerBound}).

In contrast to this, our conjectured optimal lower bound from Conjecture \ref{QMGL} below posits that $H(X_1+X_2|B_1B_2)-H$ does \emph{not} vanish faster than the linear behaviour $(\log2-H)+o(\log2-H)$ for $H\to\log2$. When $\sigma_0,\sigma_1$ from Theorem \ref{QuantumMrsGerberTheorem1StateTwice} are pure states, then $H=\log2-h_2((1-f)/2)$ with $f=F(\sigma_0,\sigma_1)$, and one can easily compute $H(X_1+X_2|B_1B_2)-H=h_2((1-f)/2)-(1-f)\log2+O((1-f)^2\log(1-f))=(\log2-H)+o(\log2-H)$ for $H\to\log2$ (see also Section \ref{main} for our conjectured optimal states).

If one would like to prove such a linear lower bound $\Omega(\log2-H)$ on $H(X_1+X_2|B_1B_2)-H$ for $H\to\log2$ by our proof strategy, generally one would have to improve the lower bound (\ref{lowerboundWithOnlyf}) near $f\to1$ from the linear behaviour $\Omega(1-f)$ (see Equation~\eqref{behaviourofarccosf-expression}) by a logarithmic factor, e.g.\ improve it to $\Omega(-(1-f)\log(1-f))=\Omega(h_2((1-f)/2))$ (which matches the behaviour in the pure state case described in the previous paragraph). In this respect, note that the upper bound $f\leq1-2h_2^{-1}(\log2-H)$, which is also used in our derivation (by Theorem \ref{tightRelationFHtheorem}), \emph{cannot} be improved since it is tight in the pure state case.

It is unlikely that the ``missing'' logarithmic factor in the desired $\Omega(-(1-f)\log(1-f))$ bound on the right-hand-side of (\ref{lowerboundWithOnlyf}) near $f\to1$ is due to the use of concavity, triangle inequality, and monotonicity in the part (\ref{concavity-triangle-monotonicity}) of our derivation. Rather, it is the crucial Fawzi-Renner bound itself \cite{FR14} that we use in step (\ref{fawzi-renner-bound-in-derivation}) which does not seem to be strong enough. To support this statement, we evaluate the inequality (\ref{fawzi-renner-bound-in-derivation}) again in the special setting of Theorem \ref{QuantumMrsGerberTheorem1StateTwice} (i.e.\ $\sigma_0^{B_1}=\sigma_0^{B_2}=\sigma_0$ and $\sigma_1^{B_1}=\sigma_1^{B_2}=\sigma_1$) with pure states $\sigma_0,\sigma_1$ with fidelity $f=F(\sigma_0,\sigma_1)$; and even under the optimistic assumption that the so-called \emph{Petz recovery map} ${\mathcal R}'^{Petz}$ \cite{FR14} applied in a direct way would give a valid lower bound (which is not known to be true, and thus marked with `?' in the following), we would only obtain the following lower bound instead of (\ref{fawzi-renner-bound-in-derivation}):
\begin{align}
H&(X_1+X_2|B_1B_2)-H \nonumber\\
&=I(A:C|B)_\tau\nonumber\\
&\stackrel{?}{\geq}\max\left\{-2\log F(\tau_{ACB},{\mathcal R}'^{Petz}_{B\to AB}(\tau_{CB})),-2\log F(\tau_{ACB},{\mathcal R}'^{Petz}_{B\to BC}(\tau_{AB}))\right\}\nonumber\\
&=\max\left\{ \begin{array}{l} -\log\left[\frac{1}{2}\left(1+f^4+(1-f^2)\sqrt{1+f^2}\right)\right], \\[4pt] -\log\left[\frac{1}{2}\left(1+f^2+(1-f^2)^{3/2} \right)\right]\end{array}\right\}\nonumber\\
&=-\log\left[\frac{1}{2}\left(1+f^2+(1-f^2)^{3/2}\right)\right]\nonumber\\
&=(1-f)+O((1-f)^{3/2})\qquad\text{as}~f\to1. \nonumber
\end{align}
This is again linear $O(1-f)$ and thus not $\Omega(-(1-f)\log(1-f))$, even though there is only one (optimistically assumed Fawzi-Renner-type) inequality in this computation.

One may hope that the desired logarithmic factor may come into a bound $\Omega(-(1-f)\log(1-f))$ improving (\ref{lowerboundWithOnlyf}) by use of recovery results employing the \emph{measured relative entropy} $D_{\mathbb M}\geq F$ instead of the fidelity (the details of these improved bounds where introduced in Chapter~\ref{recoverability}). During such a derivation, one may need to keep more information about the involved states $\sigma_0,\sigma_1$ than their fidelity $f=F(\sigma_0,\sigma_1)$. As a first step towards this direction, we provide some numerical analysis in Figure~\ref{working}. It can be seen that using the measured relative entropy might lead to an advantage, although it would be rather small in absolute terms and it remains unclear whether this improvement could lead to the desired behavior. For comparison, the figure also includes numerics using the relative entropy of recovery, for which we know that it is generally not a valid lower bound on the conditional quantum mutual information. In this case one might still hope that it does give a valid bound in the special case where all but the conditioning system are classical, as it is the case in for our problem here. In Section~\ref{CounterExample} of the previous chapter we reviewed the known counterexamples and provided new ones that show that also in this special case the bound can still be violated. At this point, we also note that using either of the two recovery bounds would still leave us quite far from our conjectured bounds in the next section. Working out exact analytical bounds using the improved recoverability inequalities is left for future work. 

\begin{figure}[t!]
\centering
\begin{overpic}[trim=2.3cm 8cm 1cm 9cm, clip, scale=0.7]{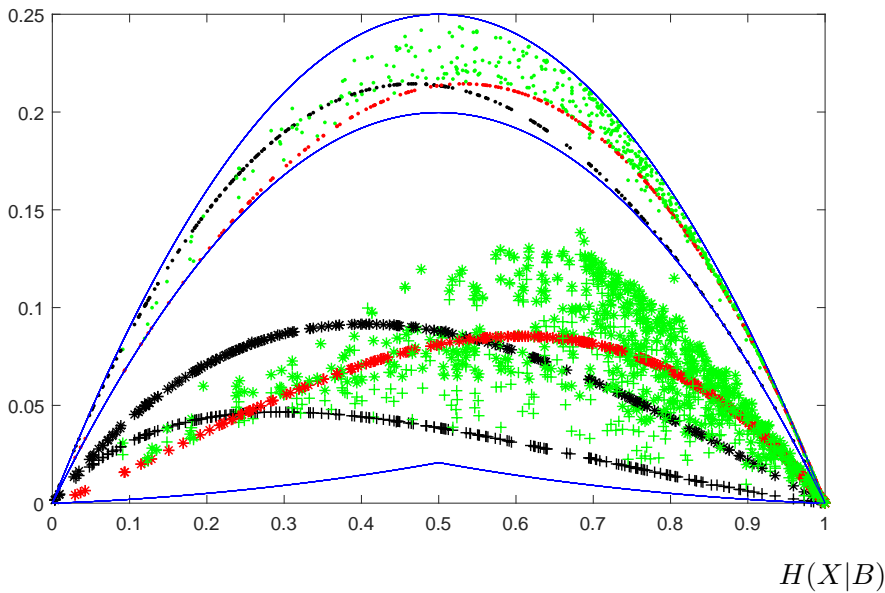}   
\put (83,0) {$H(X|B)$}
\end{overpic}
\caption{
For this figure we generated states of the form in Equation~\eqref{equalStates} and plotted several different quantities against the conditional entropy of the states. Data points marked by a dot give the conditional quantum mutual information, those with a star the relative entropy of recovery and crosses denote the measured relative entropy of recovery. The colors of the data points refer to how we drew the random states $\sigma_0$ and $\sigma_1$; for green points those are randomly drawn mixed states, red ones use classical (diagonal) states and black points use pure quantum states. For comparison we included several graphs in blue; the top line gives the (classical) upper bound on information combining from Equation~\eqref{HHupperbound}, the middle one a (classical) lower bound given in Equation~\eqref{HHlowerbound} and the bottom one our bound in Equation~\eqref{FonlyMoreConvenientLowerBound}. Note that the data points for the conditional quantum mutual information, in particular for classical and pure states, are in perfect agreement with our conjecture in Section~\ref{main}.  \label{working}}
\end{figure}

As a last note, we remark that, instead of exploiting (\ref{minof2expressionsinproofofHH}) in the regime of large $H\in[\frac{1}{2}\log2,\log2]$ and afterwards symmetrizing the bound into the regime $H\in[0,\frac{1}{2}\log2]$ via (\ref{mirrorsymmetryHHeqn}), one could instead have exploited (\ref{minof2expressionsinproofofHH}) in the regime of small $H\in[0,\frac{1}{2}\log2]$ and later symmetrized towards large $H$. Any bounds that can be obtained in this way will, however, never be better than \emph{quadratic} at the boundaries, i.e.\ they will behave like $O(H^2)$ for $H\to0$ and thus $O((\log2-H)^2)$ for $H\to\log2$, and will therefore be inferior to (\ref{FonlyMainResult}) and (\ref{FonlyMoreConvenientLowerBound}) at the boundaries. The reason for this is that: {\textit(a)} no lower bound in terms of the fidelity $f$ akin to (\ref{lowerboundWithOnlyf}) can be better than $f^2/2+O(f^4)$ near $f\to0$, because this is the behaviour of $H(X_1+X_2|B_1B_2)-H$ in the pure state case described above; {\textit(b)} no lower bound on $f$ can be larger than linear in $H$ for $H\to0$ (such as, e.g., the desired $f\geq\Omega(\sqrt{H})$), because the (mixed) states $\sigma_0={\textrm diag}(f,1-f,0)$, $\sigma_1={\textrm diag}(f,0,1-f)$ satisfy the linear relation $f=F(\sigma_0,\sigma_1)=H/\log2$.


\section{Conjectures for optimal bounds}\label{main}

In this section we will present conjectures on what the optimal bounds for information combining with quantum side information might be, i.e.\ the generalization of the inequalities in Eq.\ (\ref{minus-bounds}) to the case of quantum side information.

First we give a conjecture for a lower bound in analogy to the Mrs. Gerber's Lemma (compare to the left inequality in Eq.\ (\ref{minus-bounds})):
\begin{cj}{[Quantum Mrs. Gerber's Lemma]}\label{QMGL}
Let $\rho^{X_1B_1}$ and $\rho^{X_2B_2}$ be classical quantum states with $X_1$ and $X_2$ being binary and conditional entropy $H_1 = H(X_1|B_1)$ and $H_2 = H(X_2|B_2)$ respectively. Then the following entropy inequality holds: 
\begin{align}\label{conj:lower}
&H(X_1+X_2|B_1B_2) \nonumber\\
&\geq 
\begin{cases}
h_2(h_2^{-1}(H_1)\ast h_2^{-1}(H_2)) &\!\!\!\!\!\!\! H_1+H_2 \leq \log2 \\[3pt]
H_1 + H_2 -\log2 + h_2(h_2^{-1}(\log2 - H_1)\ast h_2^{-1}(\log2 - H_2))  \\ 
&\!\!\!\!\!\!\! H_1+H_2 \geq \log2
\end{cases}
\end{align}
\end{cj}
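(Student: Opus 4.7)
The plan is to first exploit the channel duality developed in Section~\ref{duality} to reduce Conjecture~\ref{QMGL} to proving a single inequality. By Equation~\eqref{eqntermafterperpequation}, $H(X_1+X_2|B_1B_2) = H_1 + H_2 - \log 2 + H(\mathcal{W}_1^\perp \boxast \mathcal{W}_2^\perp)$, and since $H(\mathcal{W}_j^\perp) = \log 2 - H_j$, the regime $H_1 + H_2 \geq \log 2$ is mapped bijectively to the regime $H(\mathcal{W}_1^\perp) + H(\mathcal{W}_2^\perp) \leq \log 2$. Thus, the second case of~\eqref{conj:lower} follows directly from the first case applied to the dual channels. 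I would therefore concentrate entirely on proving the low-entropy bound $H(X_1+X_2|B_1B_2) \geq h_2(h_2^{-1}(H_1) \ast h_2^{-1}(H_2))$ under $H_1 + H_2 \leq \log 2$.

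Second, I would identify and verify the conjectured extremizers. In the low-entropy regime, I expect equality to be attained by classical BSC states $\sigma_i^{B_j} = |i\rangle\langle i|$ mixed according to a binary symmetric crossover probability -- this matches the classical Mrs.\ Gerber extremal case and explains the ``BSC-like'' nature of the bound. In the high-entropy regime, by the duality reduction above, the extremizers should be the duals of BSCs, which by the example in Section~\ref{duality} are pure-state channels of the form $|\theta_x\rangle = Z^x(p|0\rangle + (1-p)|1\rangle)$. Verifying that equality holds on these families is a concrete consistency check, and it also pinpoints why classical proof strategies fail: in the high-entropy regime the optimal states are strictly non-classical.

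For the actual proof, the natural starting point is to push the recoverability-based derivation of Theorem~\ref{QuantumMrsGerberTheorem2DifferentStates} further, substituting the fidelity version of the Fawzi--Renner inequality by the strongest available measured-relative-entropy version (Corollary~\ref{cor:cqmi} combined with~\cite{SBT16}), and then replacing the geodesic-distance triangle inequality~\eqref{concavity-triangle-monotonicity} by a Pinsker-type chain in measured relative entropy. In parallel, the refined bounds on the concavity of the von Neumann entropy from Theorem~\ref{tightRelationFHtheorem} should be upgraded so that the relation between $H(X|B)$ and the pair $(\sigma_0,\sigma_1)$ is captured beyond the single scalar $F(\sigma_0,\sigma_1)$, ideally tracking both the fidelity and an auxiliary measured-relative-entropy quantity simultaneously.

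The hard part will be closing the logarithmic gap identified in the discussion after Theorem~\ref{QuantumMrsGerberTheorem1StateTwice}: every known lower bound on the CQMI in terms of recovery fidelity behaves as $\Omega(1-f)$ near $f \to 1$, while the conjectured extremal (pure-state) behaviour is $\Omega(-(1-f)\log(1-f)) = \Omega(h_2((1-f)/2))$. Closing this gap via purely generic recoverability tools seems unlikely, so the ultimate obstacle -- and probably the decisive step -- will be to find a problem-specific inequality that exploits the particular block-diagonal structure of the state $\tau_{ACB}$ in Equation~\eqref{InfoCombCNOT}, where the $C$-register is classical and $A$ is a deterministic parity of $C$ and a function of $B_1$. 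If such a structured strengthening of recoverability cannot be found, the fallback plan is a direct variational argument: parametrise the pair $(\sigma_0^{B_j},\sigma_1^{B_j})$, write the Lagrangian for minimising $H(X_1+X_2|B_1B_2)$ subject to the fixed-$H_j$ constraints, and argue via a perturbation and majorisation analysis that any stationary point must have the conjectured BSC or dual-BSC structure.
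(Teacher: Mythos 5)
There is a genuine gap here, and it is worth being blunt about its nature: the statement you are asked to prove is stated in the paper as a \emph{conjecture}, and the paper itself does not prove it --- it only verifies the equality cases (BSC embeddings in the low-entropy regime, duals of BSCs, i.e.\ pure-output channels, in the high-entropy regime), supplies numerical evidence, and proves a strictly weaker lower bound (Theorems~\ref{QuantumMrsGerberTheorem2DifferentStates} and~\ref{QuantumMrsGerberTheorem1StateTwice}) via the Fawzi--Renner recoverability inequality. Your proposal correctly reproduces the two pieces of structural evidence the paper offers --- the duality symmetry $H\mapsto\log2-H$ that maps the two cases of~\eqref{conj:lower} into each other, and the identification of the extremizers --- but the actual analytic content of a proof is absent. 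You yourself name the obstruction (every known recovery bound on $I(A:C|B)$ scales as $\Omega(1-f)$ near $f\to1$, whereas the extremal pure-state family requires $\Omega(-(1-f)\log(1-f))$) and then defer its resolution to an unspecified ``problem-specific inequality'' or, failing that, to a variational/Lagrangian analysis for which no argument is given that stationary points are classified, that the functional has no boundary or degenerate minima, or that Gaussian-elimination-style perturbation arguments survive the non-commutativity of the $B$ systems. Neither branch of the plan is a proof, and the paper gives no indication that either branch can be completed.

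Two smaller but concrete issues. First, the duality reduction of the second case to the first relies on the channel identities~\eqref{eqntermafterperpequation} and~\eqref{II1}, which are formulated for \emph{uniform} input priors ($H(\mathcal W)=H(X|B)$ with $X$ uniform); the conjecture as stated allows arbitrary priors on $X_1,X_2$, so the reduction as written only covers the uniform case and you would need a separate argument (or a reformulation of duality for biased inputs) to handle the rest. Second, replacing the fidelity bound by the measured-relative-entropy bound of~\cite{SBT16} does not obviously help: the paper's own numerics (Figure~\ref{working}) indicate the improvement is small in absolute terms and there is no evidence it changes the $\Omega(1-f)$ scaling at the boundary, which is exactly where the conjectured bound is tight. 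So the proposal should be read as a reasonable research programme consistent with the paper's partial results, not as a proof of the statement.
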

Additionally, we conjecture the following upper bound (compare to the second inequality in Equation\ (\ref{minus-bounds})):
\begin{cj}[Upper bound]\label{upper}
Let $\rho^{X_1B_1}$ and $\rho^{X_2B_2}$ be classical quantum states with $X_1$ and $X_2$ being binary and conditional entropy $H_1 = H(X_1|B_1)$ and $H_2 = H(X_2|B_2)$ respectively. Then the following entropy inequality holds: 
\begin{equation}
H(X_1+X_2|B_1B_2) \leq \log2 - \frac{(\log2 - H_1)(\log2 - H_2)}{\log2}.
\end{equation}
\end{cj}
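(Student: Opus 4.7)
The plan is to reformulate the inequality in a more tractable form before attempting a direct proof. Using the entropy-conservation identity \eqref{qchain}, the conjectured upper bound is equivalent to the lower bound $H(X_2|X_1+X_2, B_1 B_2) \geq H_1 H_2 / \log 2$, which via $I(\mathcal{W}) = \log 2 - H(\mathcal{W})$ reads $I(\mathcal{W}_1 \boxast \mathcal{W}_2) \geq I(\mathcal{W}_1)\, I(\mathcal{W}_2) / \log 2$. As a first sanity check I would verify the bound along the boundary of the square $(I_1, I_2) \in [0, \log 2]^2$: it is trivial when either $I_j = 0$, and for $I(\mathcal{W}_1) = \log 2$ it reduces to $I(\mathcal{W}_1 \boxast \mathcal{W}_2) \geq I(\mathcal{W}_2)$, which holds because a noiseless $\mathcal{W}_1$ reveals $X_1$ exactly and reduces the combined channel to $\mathcal{W}_2$ up to relabeling. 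Quantum erasure channels saturate the bound, mirroring the classical BEC extremality, and thus form the natural candidate family of extremal states.

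Since the RHS $I_1 I_2 / \log 2$ is bilinear and matches the LHS along all four edges of the square, my main strategy would be to establish a concavity-type structural property forcing the LHS to lie above its bilinear interpolation. Concretely, one parameterizes $\mathcal{W}_1$ along the segment from $\mathcal{W}_1$ to the completely noisy channel (a convex combination that linearly interpolates $I(\mathcal{W}_1)$ from its value down to $0$) and attempts to show that, for fixed $\mathcal{W}_2$, the map $\lambda \mapsto I(\mathcal{W}_1^{(\lambda)} \boxast \mathcal{W}_2)$ is concave along this segment. If true, the conjecture would follow by combining this one-variable concavity in each argument with the boundary matching already verified. Tracking $I(\mathcal{W}_1 \boxast \mathcal{W}_2)$ under such a mixture reduces to analyzing how the Holevo quantity of the combined classical-quantum ensemble behaves under convex combinations of one of its constituent channels, which one would attack via the joint convexity of quantum relative entropy applied to the representation $I(\mathcal{W}) = \min_\sigma \sum_x p(x) D(\mathcal{W}(x) \Vert \sigma)$.

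A complementary route uses the identity $I(X_1+X_2 : X_2 | B_1 B_2)_\tau = H(X_1+X_2|B_1 B_2) - H_1$ from \eqref{writewithIcondmutualinfo}, which recasts the target as the upper bound $I(A:C|B)_\tau \leq (\log 2 - H_1)\, H_2 / \log 2$ on the conditional quantum mutual information of the structured state $\tau_{ACB}$ in \eqref{InfoCombCNOT}. Since $A$ and $C$ are both classical, chain-rule refinements together with the explicit block-diagonal form of $\tau_{ACB}$ might allow one to bypass recoverability altogether. The hardest part, in either approach, will be the absence of a conditional decomposition for quantum entropy --- the classical proof critically uses $H(\cdot | Y_1 Y_2) = \mathbb{E}_{y_1 y_2}[H(\cdot|Y_1=y_1, Y_2=y_2)]$ to reduce everything to pointwise binary inequalities, and this step has no quantum counterpart. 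Moreover, as the counterexamples in Section \ref{CounterExample} show, the classical-quantum regime already admits genuinely quantum behaviour that cannot be captured by a straightforward reduction to the classical bound, so establishing the required structural property in full generality is where I expect the main obstacle to lie.
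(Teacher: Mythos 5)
You should first be aware that the statement you are addressing appears in the paper as Conjecture~\ref{upper}, not as a theorem: the paper offers no proof, only supporting evidence (identification of embedded binary erasure channels as the saturating family, consistency under channel duality, and extensive numerics), and explicitly lists a proof of the conjectured bounds as an open problem. Your reformulations are correct and match the paper's discussion: via the chain rule \eqref{qchain} the bound is equivalent to $H(X_2|X_1+X_2,B_1B_2)\geq H_1H_2/\log 2$, i.e.\ $I(\mathcal W_1\boxast\mathcal W_2)\geq I(\mathcal W_1)I(\mathcal W_2)/\log 2$; the boundary cases are as you say; and your closing diagnosis --- that the missing ingredient is a quantum substitute for the decomposition $H(\cdot\,|Y_1Y_2)=\mathbb{E}_{y_1y_2}\,H(\cdot\,|Y_1\!=\!y_1,Y_2\!=\!y_2)$ --- is exactly the obstruction the paper identifies.

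However, the central step of your plan fails for two reasons. First, the path you propose --- a convex combination of $\mathcal W_1$ with the completely noisy channel --- does not linearly interpolate $I(\mathcal W_1)$: mixing output states as $\rho_u\mapsto(1-\lambda)\rho_u+\lambda\bar\rho$ turns a noiseless binary channel into a BSC of capacity $\log 2-h_2(\lambda/2)$, which is strictly convex in $\lambda$. Linearity holds only for a flagged (erasure-type) mixture, and the only channels representable as interior points of a flagged segment between a noiseless and a useless channel are erasure channels themselves. Second, and more fundamentally, even granting concavity of $\lambda\mapsto I(\mathcal W_1^{(\lambda)}\boxast\mathcal W_2)$ along such a segment, concavity plus endpoint values only lower-bounds the quantity at \emph{interior} points of the segment; a generic $\mathcal W_1$ sits at an endpoint of your segment, so the argument returns $I(\mathcal W_1^{(\lambda)}\boxast\mathcal W_2)\geq(1-\lambda)\,I(\mathcal W_1\boxast\mathcal W_2)$ and says nothing about $\mathcal W_1$ itself. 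The classical proof escapes this precisely because conditioning on $Y_1$ realizes an arbitrary channel-with-side-information as a flagged mixture of unconditional channels, reducing everything to the pointwise convexity of $g_c$ --- the step with no quantum counterpart. Your second route, the upper bound $I(A{:}C|B)_\tau\leq(\log 2-H_1)H_2/\log 2$ on the conditional mutual information of the state \eqref{InfoCombCNOT}, is a correct restatement but is left undeveloped, and no chain-rule manipulation alone delivers upper bounds of this type. As it stands the proposal is a plausible research plan rather than a proof, which is consistent with the statement's status as an open conjecture.
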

In what follows, we will discuss several observations that give strong evidence in favour of our conjectures.

\begin{figure}[t!]
\centering
\begin{overpic}[clip, scale=0.55]{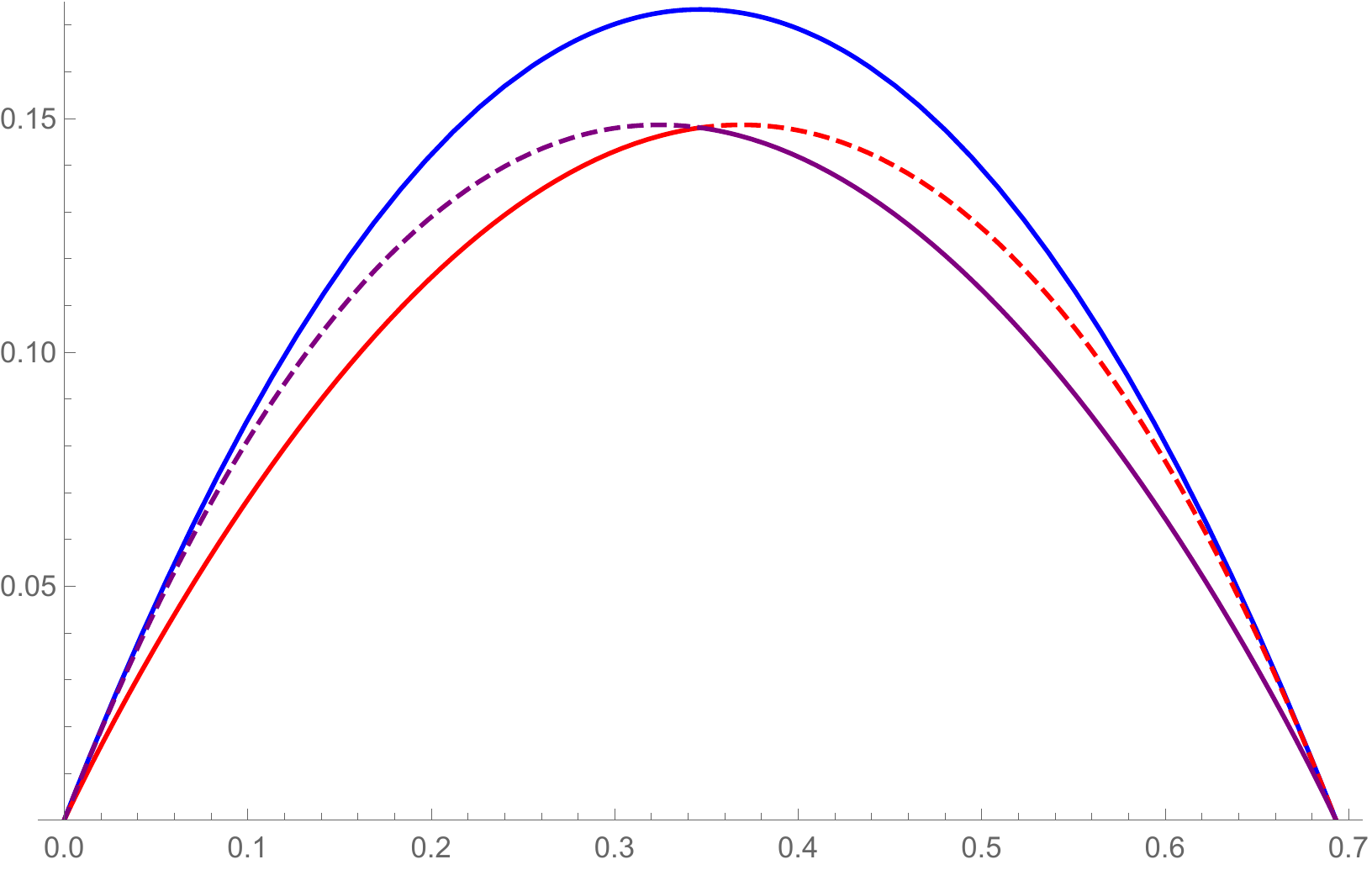}   
\put (50,-5) {\small$H$}
\end{overpic}
\vspace{0.5cm}
\caption{\label{fig:conjectures} This plot shows our conjectured bounds on $H(X_1+X_2|B_1B_2) - H$ when $H_1=H_2=H$. The blue curve is the upper bound in Conjecture~\ref{upper}, while the red curve gives the lower bound for $H\leq \frac{\log 2}{2}$ and purple for $H\geq \frac{\log 2}{2}$ in Conjecture~\ref{QMGL}. Plain lines give the actual bounds, while dashed lines are shown to illustrate the two functions in Equation~\eqref{conj:lower} and for comparison to the classical bound.}
\end{figure}

\paragraph{Quantum states that achieve equality}\label{subsec:optimal}
First we will discuss the states that achieve equality in the conjectured inequalities. 
It can easily be seen that the classical half (i.e.\ the first selector in Eq.\ (\ref{conj:lower})) of Conjecture \ref{QMGL} can be achieved by embedding a BSC into a classical quantum state as follows (with $p\in[0,1]$ chosen accordingly):
\begin{equation}
\rho = \frac{1}{2} \ketbra{}{0}{0} \otimes (p\ketbra{}{0}{0} + (1-p)\ketbra{}{1}{1}) + \frac{1}{2} \ketbra{}{1}{1} \otimes ((1-p)\ketbra{}{0}{0} + p\ketbra{}{1}{1}).
\end{equation}
The optimality of these states follows from the inequality in the classical Mrs. Gerber's Lemma (and can also be verified easily by calculating the entropy terms).
In the quantum half of Conjecture \ref{QMGL}, the optimal states represent binary classical-quantum channels with pure output states and can therefore be represented as
\begin{equation}
\rho = \frac{1}{2} \ketbra{}{0}{0} \otimes \ketbra{}{\Psi_0}{\Psi_0}  + \frac{1}{2} \ketbra{}{1}{1} \otimes \ketbra{}{\Psi_1}{\Psi_1},
\end{equation}
where $\Psi_0$ and $\Psi_1$ are pure states. Due to unitary invariance we can choose them to be $\ket{}{\Psi_0}=\left(\begin{smallmatrix} 1 \\ 0 \end{smallmatrix}\right)$ and $\ket{}{\Psi_1}=\left(\begin{smallmatrix} \cos{\alpha } \\ \sin{\alpha }\end{smallmatrix}\right)$. 
Again, this can be verified by simply calculating the involved entropies. Unfortunately, this calculation is not very insightful, therefore we choose to give an alternative proof, which might also give some intuition towards why our conjectured lower bound has the given additional symmetry. 
The alternative proof will be based on the concept of dual channels as explained in Section~\ref{duality}. 

Lets fix $\mathcal W_1$ and $\mathcal W_2$ to be channels with pure output states of the form in Equation \eqref{BSCdual} and therefore dual channels of BSCs. With the above arguments we can now show in an intuitive way that channels of this form achieve equality for the quantum side of our conjecture. 
\begin{align*}
H&( \mathcal W_1 \boxast \mathcal W_2) \\
&\! = H(\mathcal W_1) + H(\mathcal W_2) - H( \mathcal W_1 \varoast \mathcal W_2) \\
&\! =  H(\mathcal W_1) + H(\mathcal W_2) -\log2 + H(\mathcal W_1^{\bot}\boxast \mathcal W_2^{\bot}) \\ 
&\! =  H(\mathcal W_1) + H(\mathcal W_2) -\log2 + h_2(h_2^{-1}(H(\mathcal W_1^{\bot}))\ast h_2^{-1}(H(\mathcal W_2^{\bot}))) \\
&\! =  H(\mathcal W_1) + H(\mathcal W_2) -\log2 + h_2(h_2^{-1}(\log2 - H(\mathcal W_1))\ast h_2^{-1}(\log2 - H(\mathcal W_2))), 
\end{align*}
where the first equality follows from the chain rule for mutual information, the second one from Equation \eqref{Hbv}, the third from the classical Mrs. Gerber's Lemma and the final one from Equation \eqref{II1}. Note that the equality holds because, in the classical Mrs. Gerber's Lemma, binary symmetric channels achieve equality. 

\begin{rem} With an argument along the same lines one can prove immediately that our conjectured lower bound is true not only for all states that are classical channels (or embeddings of such) but also for all states that are duals of such classical channels.  
\end{rem}

Now, lets look at Conjecture \ref{upper}. From the classical upper bound it can be easily seen that equality is achieved by embeddings of binary erasure channels, which give the following class of states
\begin{equation*}
\rho = \frac{1}{2} \ketbra{}{0}{0} \otimes ((1-\epsilon)\ketbra{}{0}{0} + \epsilon\ketbra{}{e}{e}) + \frac{1}{2} \ketbra{}{1}{1} \otimes ((1-\epsilon)\ketbra{}{1}{1} + \epsilon\ketbra{}{e}{e}).
\end{equation*}
\begin{rem} 
It is interesting to note -- concerning the duality relations used before -- that  the upper bound \emph{can} coincide with the quantum bound because the dual channel of a BEC with error probability $\epsilon$ is again a channel from the same family, i.e.\ a BEC with error probability $1-\epsilon$. 
\end{rem}

\paragraph{Numerical evidence}
\begin{figure}[t!]
\centering
\begin{overpic}[clip, scale=0.9]{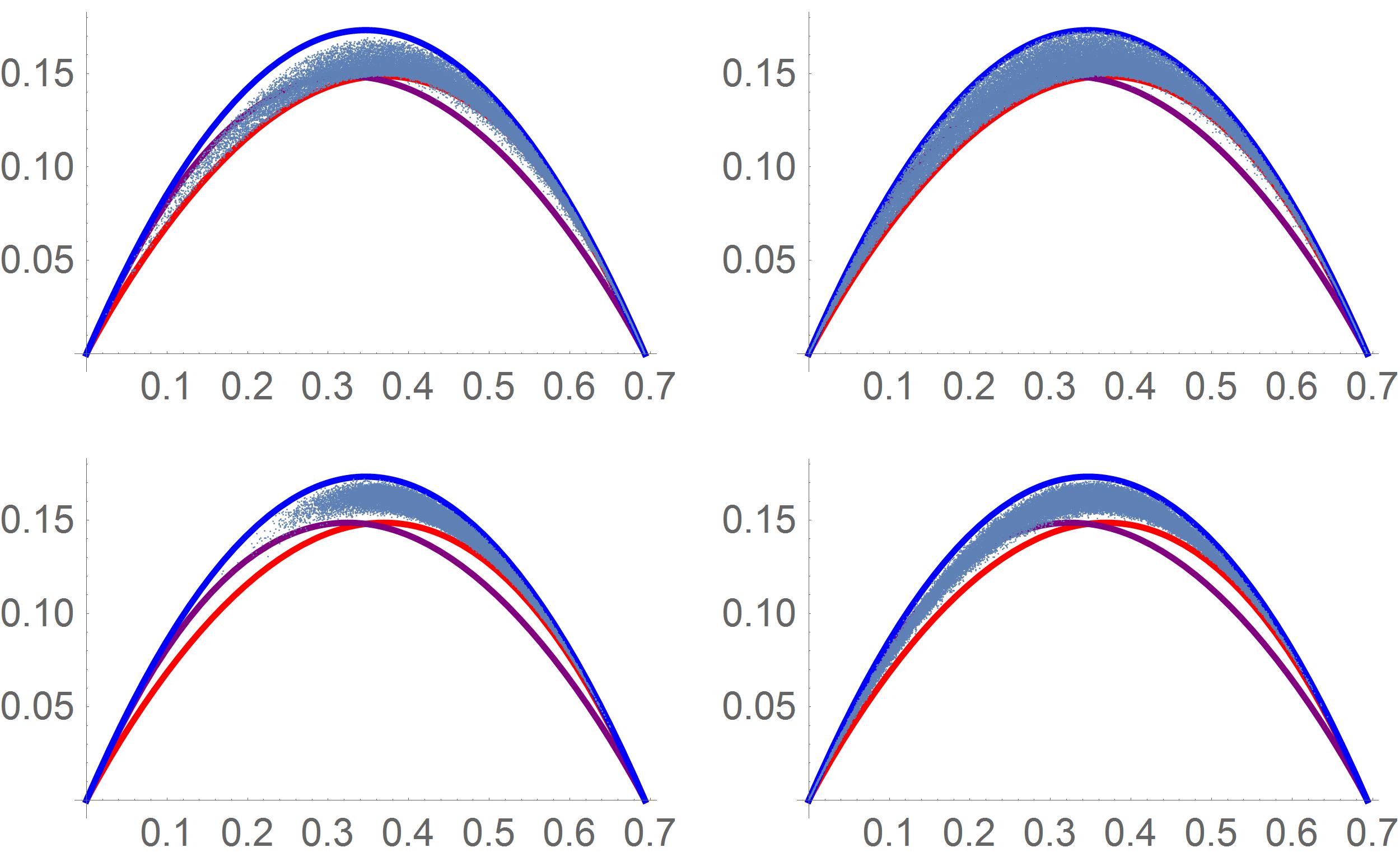}   
\put (27.5,-2) {\small$H$}
\put (80,-2) {\small$H$}
\put (27.5,30) {\small$H$}
\put (80,30) {\small$H$}
\end{overpic}
\vspace{0.2cm}
\caption{\label{fig:numerics} These pictures present some of the numerical evidence gathered to test Conjectures~\ref{QMGL} and~\ref{upper}. Plotted is $H(X_1+X_2|B_1B_2) - H$ when $H_1=H_2=H$ against $H$. For each of the examples classical-quantum states of the form in Equation~\eqref{num:test-states} where randomly generated. Those on the left with $p=\frac{1}{2}$ and those on the right with $p\in [0,1]$ at random, those on the top with quantum states of dimension $d=2$ and the bottom ones with dimension $d=4$. Each plot contains $50000$ examples.} 
\end{figure}

We have tested our Conjectures~\ref{QMGL} and~\ref{upper} using numerical examples in which we generated classical-quantum states of the form
\begin{equation}\label{num:test-states}
\rho^{XB}=p{\ketbra{}{0}{0}}_{X}\otimes\rho_0^{B}+(1-p){\ketbra{}{1}{1}}_{X}\otimes\rho_1^{B}.
\end{equation}
Here $\rho_0^{B}$ and $\rho_1^{B}$ are randomly chosen quantum states of dimension $d\in\{ 2,3,4,5,6 \}$, using the over-parametrized generation method (see e.g.~\cite{M15}),  and $p$ is either fixed $p=\frac{1}{2}$ or drawn at random from $p\in[0,1]$. We then used each of these states to calculate the exact value of $H(X_1+X_2|B_1B_2)$ with $H_1=H_2=H$ and compared it to our conjectured bounds. For all $10$ combinations we tested our conjectures with several $100.000$ classical-quantum states. No violations of our bounds were found. We also found that the states coming close to our conjectured bounds are close to the conjectured optimal forms stated in Section \ref{subsec:optimal}.

While none of the generated states violated our conjectured bounds, violation of the classical lower bound was easily observed. A sample of our numerics is shown in Figure~\ref{fig:numerics}. (That this violation of the classical lower bound must occur is clear from the analytical results of Section \ref{subsec:optimal}).

Furthermore, we carried out similar numerics for the case of two different classical-quantum states, i.e.\ with differing entropies $H_1\neq H_2$. Again, we found no violation of Conjectures~\ref{QMGL} and~\ref{upper}.

Additional numerical evidence can also be found in Figure~\ref{working}, also supporting the analytically found states that achieve our conjectured lower bound with equality.

\section{Application to classical-quantum polar codes}\label{PolarCodes}

In this section we apply the previously achieved results on information combining to classical-quantum polar codes. We will first introduce some technical aspects of polar codes and the underlying concept of polarization. In the following sections we will then show how our bounds can be used to translate a simple proof of polarization from the classical-classical case to the classical-quantum case. Our results also allow us to prove polarization for non-stationary channels. Finally, we will describe the impact of our quantitative bounds from Section \ref{lower-bound} on the \emph{speed} of polarization of cq-polar codes and comment on the possible speed when assuming our conjectured lower bound from Conjecture \ref{QMGL}.

But first we will start with the promised introduction. 
polar codes were introduced by Arikan as the first classical constructive capacity achieving codes with efficient encoding and decoding~\cite{A09}. This is in contrast to the \textit{random coding} technique that is usually used to prove capacity results in communication theory, but does not give explicit or efficient codes. The underlying idea of polar codes is that, by adding the input bit of a later channel onto one of an earlier channel, that earlier channel becomes harder to decode while providing side-information for decoding the later one. polar codes rely on an iteration of this scheme, which, combined with a successive cancellation decoder, eventually leads to almost perfect or almost useless channels. This process is called polarization. This decoder attempts to decode the output bit by bit, assuming at each step full knowledge of previously received bits while ignoring later outputs. Since information is sent only via channels that polarize to (almost) perfect channels while useless channels transmit so called \textit{frozen bits}, which are known to the receiver, this decoder can achieve a very low error probability. In fact, it was proven in \cite{AT09} that the block error probability scales as $O(2^{-N^\beta})$ (for any $\beta<1/2$).

Based on the classical setting, polar codes were later generalized to channels with quantum outputs~\cite{WG13}. These quantum polar codes inherit many of the desirable features like the efficient encoder and the exponentially vanishing block error probability~\cite{WG13, H14}, while especially the efficient decoder remains an open problem~\cite{WLH13}.

Since their introduction polar codes have been investigated in many ways, like adaptations to many different settings in classical~\cite{A12, A10} and quantum information theory~\cite{HMW14,CM15}.

In particular, in the classical setting, polar codes have been generalized to non-stationary channels~\cite{AT14} and it was shown that the exponentially vanishing block error rate can be achieved with just a polynomial block length \cite{GX15}. Both of these results have not so far been extended to the classical-quantum setting, and their proofs rely heavily on the classical Mrs.\ Gerber's Lemma.

Let us now look at the relationship between bounds on information combining and polar codes. The most natural quantity to track the quality of a channel during the polarization process is its conditional entropy (or equivalently, for symmetric channels, its mutual information), and the most basic element in polar coding is the application of a CNOT gate. As described in the beginning of this chapter, from such an application, we can derive one channel that is worse than either of the two original channels, and one that is better (in terms of their conditional entropy). The worse channel is usually denoted by $\langle \mathcal W_1,\mathcal W_2\rangle^-$ and the better one by $\langle \mathcal W_1,\mathcal W_2\rangle^+$, which denote exactly the channels in Equations \ref{qu:minus} and \ref{qu:plus2} respectively, where $\mathcal W_1$ and $\mathcal W_2$ are the original channels. It follows that (see Section~\ref{qcombining})
\begin{equation}
H(\langle \mathcal W_1,\mathcal W_2\rangle^-) = H(X_1+X_2 | Y_1Y_2) = H(\mathcal W_1 \boxast \mathcal W_2)
\end{equation}
and
\begin{equation}
H(\langle \mathcal W_1,\mathcal W_2\rangle^+) = H(X_2 | X_1+X_2,Y_1Y_2) = H(\mathcal W_1 \varoast \mathcal W_2).
\end{equation}
Naturally, the same is true for the corresponding quantities based on the channel's mutual information $I(\mathcal W)$, which we recall is defined by $I(\mathcal W):=\log2-H(\mathcal W)$ for the case of symmetric binary channels, which is the only case we consider here.

Therefore, it is intuitive that good bounds on information combining can be very helpful for investigating specific properties of polar codes and in particular the polarization process. This is because those bounds allow to characterize the difference in entropy between the synthesized channels $\langle \mathcal W_1,\mathcal W_2\rangle^-$ and $\langle \mathcal W_1,\mathcal W_2\rangle^+$ and the original channels $\mathcal W_1,\mathcal W_2$.

Now, we are ready to turn to the main results of this section and provide some new results on classical-quantum polar codes based on our new entropy inequalities.

\subsection{Polarization for stationary and non-stationary channels}
Polarization is one of the main features of polar codes and crucial for their ability to achieve capacity. It was first proven in the classical setting in~\cite{A09} by showing convergence of certain martingales, and a similar approach has later been used to establish polarization for classical-quantum polar codes in~\cite{WG13}. Recently a conceptually simpler proof of polarization has been found in~\cite{AT14} making use of the classical Mrs.\ Gerber's Lemma as its main tool. Besides its more intuitive approach, one of the main advantages of this new proof is that it can be extended to non-stationary channels, while the martingale approach is only known to work for stationary channels. Here we define non-stationary channels based on a set of channels $\{\mathcal W_t\}_{t=0}^\infty$, where the actual channel has the form $\mathcal W_t$ on its $t$-th application, in contrast to stationary channels which are constant throughout all applications.

In this section, we show that our results from Section \ref{lower-bound} are sufficient to extend the polarization proof from \cite{AT14} to the setting of classical-quantum channels, and also to prove polarization for non-stationary classical-quantum channels. The main observation that enables us to translate the classical proofs is the following Lemma.
\begin{lemma}{}\label{lem:lower}
Let $\mathcal W_1$ and $\mathcal W_2$ be two classical-quantum binary and symmetric channels with $I(\mathcal W_1),I(\mathcal W_2)\in [a,b]$, then the following holds
\begin{equation}
I(\langle \mathcal W_1,\mathcal W_2\rangle^+) - I(\langle \mathcal W_1,\mathcal W_2\rangle^-) \geq |I(\mathcal W_1)-I(\mathcal W_2)| + \mu(a,b), \label{ent-diff}
\end{equation}
where $\mu(a,b)>0$ whenever $0<a<b<\log2$. 
\end{lemma}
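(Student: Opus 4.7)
The plan is to reduce the claim to the quantum Mrs.\ Gerber's Lemma established in Theorem~\ref{QuantumMrsGerberTheorem2DifferentStates} and then extract the strict positivity of the improvement over the trivial bound $|I(\mathcal W_1)-I(\mathcal W_2)|$ from a compactness argument. First I would use the entropy conservation relation (the chain rule \eqref{qchain}/\eqref{additiveentropies2}), which in terms of mutual information reads $I(\mathcal W_1\boxast\mathcal W_2)+I(\mathcal W_1\varoast\mathcal W_2)=I(\mathcal W_1)+I(\mathcal W_2)$ for symmetric binary channels, to rewrite the left-hand side of \eqref{ent-diff} as
\begin{equation*}
I(\langle\mathcal W_1,\mathcal W_2\rangle^+)-I(\langle\mathcal W_1,\mathcal W_2\rangle^-)=I(\mathcal W_1)+I(\mathcal W_2)-2I(\mathcal W_1\boxast\mathcal W_2).
\end{equation*}

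Next I would recall the easy (data-processing / strong subadditivity) inequalities $I(\mathcal W_1\boxast\mathcal W_2)\leq\min\{I(\mathcal W_1),I(\mathcal W_2)\}$ and $I(\mathcal W_1\varoast\mathcal W_2)\geq\max\{I(\mathcal W_1),I(\mathcal W_2)\}$ (see Remark~\ref{equality-condition-for-proof}). Substituting the first into the displayed identity already yields the trivial bound $I(\langle\mathcal W_1,\mathcal W_2\rangle^+)-I(\langle\mathcal W_1,\mathcal W_2\rangle^-)\geq|I(\mathcal W_1)-I(\mathcal W_2)|$, so it remains to produce a strictly positive gap $\mu(a,b)$ whenever $I(\mathcal W_j)\in[a,b]\subset(0,\log2)$. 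For this I would convert to conditional entropies via $H_j:=\log2-I(\mathcal W_j)\in[\log2-b,\log2-a]$ and invoke Theorem~\ref{QuantumMrsGerberTheorem2DifferentStates}, yielding
\begin{equation*}
H(\mathcal W_1\boxast\mathcal W_2)\geq\max\{H_1,H_2\}+g(H_1,H_2),
\end{equation*}
where $g$ is the maximum of the four non-negative correction terms appearing in \eqref{firstFGmainresult} (each of the form $-2\log\cos[\cdots]$). Equivalently, $I(\mathcal W_1\boxast\mathcal W_2)\leq\min\{I(\mathcal W_1),I(\mathcal W_2)\}-g(H_1,H_2)$, and substituting into the identity above gives
\begin{equation*}
I(\langle\mathcal W_1,\mathcal W_2\rangle^+)-I(\langle\mathcal W_1,\mathcal W_2\rangle^-)\geq|I(\mathcal W_1)-I(\mathcal W_2)|+2\,g(H_1,H_2).
\end{equation*}

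The remaining step is to show $\inf g(H_1,H_2)>0$ on the compact square $[\log2-b,\log2-a]^2$. Each correction term in \eqref{firstFGmainresult} is of the form $-2\log\cos[\tfrac12\arccos(\alpha\beta)-\tfrac12\arccos\beta]$ and vanishes only when $\alpha=1$, which forces $h_2^{-1}$ of one of the arguments to be zero, i.e., $H_j\in\{0,\log2\}$ (cf.\ Remark~\ref{equality-condition-for-proof}); this is exactly ruled out by the assumption $0<a<b<\log2$. Since $g$ is continuous on the compact square and strictly positive there, it attains a positive minimum, which I would define to be $\mu(a,b)/2>0$, completing the proof. The main obstacle I anticipate is being careful about the symmetric channel assumption when converting between $I(\mathcal W)$ and $H(X|B)$, and verifying continuity and strict positivity of $g$ uniformly on the compact region (in particular handling the different selectors of the maximum in \eqref{firstFGmainresult} and confirming none of them degenerates in the interior); this is ultimately routine given the explicit closed forms.
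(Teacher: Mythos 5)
Your proposal is correct and follows essentially the same route as the paper: rewrite $I(\langle\mathcal W_1,\mathcal W_2\rangle^+)-I(\langle\mathcal W_1,\mathcal W_2\rangle^-)-|I(\mathcal W_1)-I(\mathcal W_2)|$ via the chain rule as $2\bigl(H(\mathcal W_1\boxast\mathcal W_2)-\max\{H_1,H_2\}\bigr)$, lower bound this by the correction terms of Theorem~\ref{QuantumMrsGerberTheorem2DifferentStates}, and use continuity together with the vanishing characterization of Remark~\ref{equality-condition-for-proof} and compactness of $[\log2-b,\log2-a]^2$ to extract a uniform $\mu(a,b)>0$. The only minor imprecision is the claim that each $-2\log\cos$ term vanishes only when $\alpha=1$ (it also vanishes when $\beta=0$), but since you invoke Remark~\ref{equality-condition-for-proof}, which handles both cases and yields exactly the boundary condition $H_j\in\{0,\log2\}$, the argument goes through.
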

\begin{proof}
The statement follows from the results in Section~\ref{lower-bound}, in particular Remark~\ref{equality-condition-for-proof}. To see this, note that
\begin{align}
I(\langle &\mathcal W_1,\mathcal W_2\rangle^+) - I(\langle \mathcal W_1,\mathcal W_2\rangle^-)-|I(\mathcal W_1)-I(\mathcal W_2)| \nonumber\\
&=2\left(H(\langle \mathcal W_1,\mathcal W_2\rangle^-)-\max\{H(\mathcal W_1),H(\mathcal W_2)\}\right)\nonumber\\
&=2\left(H(X_1+X_1|B_1B_2)-\max\{H_1,H_2\}\right),\label{herewewantauniformlowerbound}
\end{align}
where the last line is written in the notation of Remark~\ref{equality-condition-for-proof}. Since our lower bound (\ref{firstFGmainresult}) from Theorem \ref{QuantumMrsGerberTheorem2DifferentStates} is continuous in $H_1,H_2$ and equals $0$ only on the boundary, given by the condition $H_1\in\{0,\log2\}$ or $H_2\in\{0,\log2\}$, we obtain a strictly positive uniform lower bound $\mu(a,b)>0$ on Eq.\ (\ref{herewewantauniformlowerbound}) for $H_1,H_2\in[\log2-b,\log2-a]$ with $0<a<b<\log2$ (see also Fig.\ \ref{fig-bound-with-h1h2}).
\end{proof}
In the usual setting of stationary channels it is enough to consider the two original channels $\mathcal W_1=\mathcal W_2=\mathcal W$ to be equal, in which case we can use the shorter notation $\mathcal W^\pm = \langle \mathcal W,\mathcal W\rangle^\pm$ and Equation~\eqref{ent-diff} simplifies to
\begin{equation}\label{defineDeltaW}
\Delta(\mathcal W) := I(\mathcal W^+) - I(\mathcal W^-) \geq \kappa(a,b),
\end{equation}
if $I(\mathcal W)\in[a,b]$. With this tool we are now ready to address the question of polarization for classical-quantum channel. First, we will look at stationary channels and prove polarization in the classical-quantum setting. As mentioned before this result was already achieved in~\cite{WG13}, but we will give an alternative simple proof based on~\cite{AT14}.
\begin{thm}
For any symmetric binary classical-quantum channel $\mathcal W$ and any $0<a<b<\log2,$ the following holds
\begin{align}
&\lim_{n\rightarrow\infty} \frac{1}{2^n} \#\{ s^n \in \{+,-\}^n : I(\mathcal W^{s^n})\in [0,a) \} = 1-I(\mathcal W)/\log2, \\
&\lim_{n\rightarrow\infty} \frac{1}{2^n} \#\{ s^n \in \{+,-\}^n : I(\mathcal W^{s^n})\in [a,b] \} = 0, \\
&\lim_{n\rightarrow\infty} \frac{1}{2^n} \#\{ s^n \in \{+,-\}^n : I(\mathcal W^{s^n})\in (b,\log2] \} = I(\mathcal W)/\log2.
\end{align}
\end{thm}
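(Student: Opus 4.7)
The plan is to follow the martingale-based polarization argument of \cite{AT14}, with Lemma~\ref{lem:lower} (in its stationary form, Equation~\eqref{defineDeltaW}) replacing the use of the classical Mrs.\ Gerber's Lemma throughout. First I would introduce, on a probability space supporting an i.i.d.\ sequence of uniformly $\{+,-\}$-valued random variables $S_1, S_2, \dots$, the random channel $\mathcal W_n := \mathcal W^{S_1 S_2 \cdots S_n}$ and the associated mutual-information process $I_n := I(\mathcal W_n)$, together with its natural filtration $\mathcal F_n$. One has to check along the way that the $\pm$ combining operations preserve the property of being binary and symmetric, so that Lemma~\ref{lem:lower} is applicable at every level of the tree.

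Next, I would verify that $(I_n)$ is a bounded martingale. Boundedness, $I_n \in [0,\log 2]$, is immediate. The martingale identity $\mathbb{E}[I_{n+1}\mid \mathcal F_n] = I_n$ is a direct translation of the conservation law $I(\mathcal W^+) + I(\mathcal W^-) = 2 I(\mathcal W)$, which is itself Equation~\eqref{additiveentropies2} rewritten in terms of the symmetric capacity. By Doob's martingale convergence theorem, $I_n \to I_\infty$ almost surely and in $L^1$, for some limiting random variable $I_\infty$ taking values in $[0, \log 2]$, and in particular $\mathbb{E}[I_\infty] = I(\mathcal W)$.

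The key step, where our new ingredient enters, is to show that $I_\infty \in \{0, \log 2\}$ almost surely. Since the conditional distribution of $S_{n+1}$ given $\mathcal F_n$ is uniform on $\{+,-\}$ and the martingale identity forces the two children of $\mathcal W_n$ to be symmetric around $I_n$, one has $|I_{n+1} - I_n| = \Delta(\mathcal W_n)/2$ almost surely. Suppose for contradiction that $\mathbb{P}(I_\infty \in [a,b]) > 0$ for some $0 < a < b < \log 2$. Then on this event, for all sufficiently large $n$ one has $I_n \in [a', b']$ with $0 < a' < b' < \log 2$ slightly enlarged, and Equation~\eqref{defineDeltaW} gives $|I_{n+1} - I_n| \geq \kappa(a',b')/2 > 0$ for infinitely many $n$, contradicting almost-sure convergence of $I_n$. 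Hence $I_\infty \in \{0, \log 2\}$ a.s., and combining with $\mathbb{E}[I_\infty] = I(\mathcal W)$ yields $\mathbb{P}(I_\infty = \log 2) = I(\mathcal W)/\log 2$ and $\mathbb{P}(I_\infty = 0) = 1 - I(\mathcal W)/\log 2$. The three stated limits then follow because, for fixed $n$, the string $S_1 \cdots S_n$ is uniform on $\{+,-\}^n$, so the normalized counts in the theorem are exactly $\mathbb{P}(I_n \in [0,a))$, $\mathbb{P}(I_n \in [a,b])$ and $\mathbb{P}(I_n \in (b,\log 2])$, which converge to their counterparts for $I_\infty$ by the convergence of $I_n$.

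The main obstacle, and the reason Lemma~\ref{lem:lower} is needed in the first place, is to obtain a strictly positive uniform lower bound on $\Delta(\mathcal W)$ when $I(\mathcal W)$ is bounded away from $0$ and $\log 2$; this is precisely what Theorem~\ref{QuantumMrsGerberTheorem2DifferentStates} together with Remark~\ref{equality-condition-for-proof} provides in the classical-quantum setting, where no straightforward analogue of the classical Mrs.\ Gerber's Lemma is available. Beyond that, the remaining steps are standard manipulations of bounded martingales and do not require any further quantum-specific input.
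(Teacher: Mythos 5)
Your proof is correct, but it takes a genuinely different route from the one in the paper. You run the classical Arikan-style martingale argument: embed the polarization tree in a probability space, observe that $I_n=I(\mathcal W^{S_1\cdots S_n})$ is a bounded martingale by the chain rule \eqref{qchain}, invoke Doob's convergence theorem, and then use the uniform gap $\Delta(\mathcal W)\geq\kappa(a,b)$ of Equation~\eqref{defineDeltaW} to force the a.s.\ limit into $\{0,\log 2\}$; the identity $|I_{n+1}-I_n|=\Delta(\mathcal W_n)/2$ and the portmanteau step at the end are both handled correctly. The paper instead deliberately avoids martingale machinery and follows \cite{AT14}: it tracks the empirical mean $\mu_n$ and second moment $\nu_n$ of $I(\mathcal W^s)$ over all $2^n$ branches, shows $\mu_{n+1}=\mu_n$ and $\nu_{n+1}\geq\nu_n+\tfrac14\theta_n(a,b)\kappa(a,b)^2$, and deduces $\theta_n\to0$ from the convergence of the bounded monotone sequence $\nu_n$, finishing with elementary bounds on $\mu_n$. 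Both arguments hinge on exactly the same new quantum ingredient, Lemma~\ref{lem:lower}, so the substantive contribution is identical. What your route buys is brevity and a clean probabilistic picture, at the cost of importing Doob's theorem; it is essentially the strategy already used in \cite{WG13}, which the paper explicitly sets aside. What the paper's route buys is that the second-moment bookkeeping carries over verbatim to \emph{non-stationary} channel sequences, where the single-channel martingale structure is not available — this is precisely the generalization proved in the theorem immediately following, and it would not follow from your argument. The one point you flag but do not resolve — that the $\pm$ transforms preserve binary symmetric cq channels so that Lemma~\ref{lem:lower} applies at every level — is equally glossed over in the paper, so it is not a gap specific to your approach.
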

\begin{proof}
The proof follows essentially the one in~\cite{AT14} adjusted to the classical-quantum setting considered in our work. We will nevertheless state the important steps in the proof here. 
We start with a given classical-quantum channel $\mathcal W$ and arbitrary $0<a<b<\log2$. We define the following quantities 
\begin{align*}
\alpha_n(a) := \frac{1}{2^n} \#\{ s \in \{+,-\}^n : I(\mathcal W^{s})\in [0,a) \}, \\
\theta_n(a,b):=\frac{1}{2^n} \#\{ s \in \{+,-\}^n : I(\mathcal W^{s})\in [a,b] \}, \\
\beta_n(b):=\frac{1}{2^n} \#\{ s \in \{+,-\}^n : I(\mathcal W^{s})\in (b,\log2] \},
\end{align*}
where $s:=s^n$ to simplify the notation. Furthermore, we will need two additional quantities 
\begin{equation*}
\mu_n = \frac{1}{2^n} \sum_{ s \in \{+,-\}^n} I(\mathcal W^s)
\end{equation*}
and
\begin{equation*}
\nu_n = \frac{1}{2^n} \sum_{ s \in \{+,-\}^n} [I(\mathcal W^s)]^2.
\end{equation*}
Now, it follows directly from the chain rule (Equation~\eqref{qchain}) that 
\begin{equation*}
\mu_{n+1} = \mu_n = I(\mathcal W). 
\end{equation*}
It can also be seen that 
\begin{align*}
\nu_{n+1} &=  \frac{1}{2^{n+1}} \sum_{ s \in \{+,-\}^{n+1}} I(\mathcal W^s)^2 \\
&=  \frac{1}{2^{n}} \sum_{ t \in \{+,-\}^{n}} \frac{1}{2}[I(\mathcal W^{t+})^2 + I(\mathcal W^{t-})^2] \\
&=  \frac{1}{2^{n}} \sum_{ t \in \{+,-\}^{n}} I(\mathcal W^{t})^2 +\left(\frac{1}{2}\Delta(\mathcal W^{t})\right)^2 \\
&\geq \nu_n + \frac{1}{4}\theta_n(a,b)\kappa(a,b)^2,
\end{align*}
where $\Delta(\mathcal W)$ has been defined in (\ref{defineDeltaW}) and we take $\kappa(a,b)>0$ from Lemma \ref{lem:lower}. It follows that $\nu_n$ is monotonically increasing and since it is bounded, also converging. Particularly we can use it to bound $\theta_n(a,b)$ by
\begin{equation}
0\leq \theta_n(a,b) \leq 4\frac{\nu_{n+1} - \nu_n}{\kappa(a,b)^2}
\end{equation}
and therefore conclude that $\lim_{n\rightarrow\infty}\theta_n(a,b) = 0$. Next, we show that 
\begin{align}
I(\mathcal W) = \mu_n &\leq a \alpha_n(a) + b\theta_n(a,b) + (\log2)\beta_n(b) \\
&= a + (b-a)\theta_n(a,b) + (\log2-a)\beta_n(b),
\end{align}
thus by taking $n$ to infinity and $a$ infinitesimally small it, follows that
\begin{equation}
\liminf_{n\rightarrow\infty} \beta_n(b) \geq I(\mathcal W)/\log2. 
\end{equation}
Similarly upper bounding $1-\mu_n$ leads to
\begin{equation}
\liminf_{n\rightarrow\infty} \alpha_n(a) \geq 1-I(\mathcal W)/\log2. 
\end{equation}
Finally, the original claim follows from the fact that $\alpha_n(a) + \beta_n(b) \leq 1$.
\end{proof}
Now we will look at classical-quantum polar codes for non-stationary channels, following the treatment in \cite{AT14}. Instead of a fixed channel $\mathcal W$, we start with a collection of channels $\mathcal W_{0,t}$, where the first index numbers the coding step and the second the channel position. From here we can define the coding steps similar to the classical case recursively as
\begin{align}
\mathcal W_{n,Nm+j} &= \langle \mathcal W_{n-1,Nm+j} ,\mathcal W_{n-1,Nm+N/2+j} \rangle^- \\
\mathcal W_{n,Nm+N/2+j} &= \langle \mathcal W_{n-1,Nm+j} ,\mathcal W_{n-1,Nm+N/2+j} \rangle^+,
\end{align}
with $n\geq 1$, $N=2^n$, $0\leq j\leq N/2-1$ and $m$ numbering the multiple blocks at a given step (which get combined at later polarization steps).
With these definitions we can state the result for non-stationary channels.  
\begin{thm}
For any collection of symmetric binary classical-quantum channels $\mathcal W_{0,t}$ and any $0<a<b<\log2$, the following holds
\begin{align}
&\lim_{n\rightarrow\infty}\lim_{T\rightarrow\infty} \frac{1}{T} \#\{ 0\leq t<T : I(\mathcal W_{n,t})\in [0,a) \} = 1-\mu/\log2, \\
&\lim_{n\rightarrow\infty}\lim_{T\rightarrow\infty} \frac{1}{T} \#\{  0\leq t<T : I(\mathcal W_{n,t})\in [a,b] \} = 0, \\
&\lim_{n\rightarrow\infty}\lim_{T\rightarrow\infty} \frac{1}{T} \#\{  0\leq t<T : I(\mathcal W_{n,t})\in (b,\log2] \} = \mu/\log2,
\end{align}
with $\mu=\lim_{T\rightarrow\infty}\frac{1}{T} \sum_{t<T}I(\mathcal W_{0,t})$, under the condition that $\mu$ is well defined.
\end{thm}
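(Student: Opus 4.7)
The plan is to carry over the argument just given in the stationary case to position-dependent channels, following the strategy of \cite{AT14}, while leveraging the full strength of Lemma~\ref{lem:lower}, which crucially handles two possibly different classical-quantum channels rather than only the symmetric case $\mathcal W_1=\mathcal W_2$. First, I would define the position-averaged first and second moments of mutual information at each polarization level:
\begin{align}
\mu_n := \lim_{T\to\infty}\frac{1}{T}\sum_{t<T} I(\mathcal W_{n,t})\,, \qquad \nu_n := \lim_{T\to\infty}\frac{1}{T}\sum_{t<T} I(\mathcal W_{n,t})^2\,,
\end{align}
and analogously $\alpha_n(a),\theta_n(a,b),\beta_n(b)$ as the Ces\`aro densities of positions with $I(\mathcal W_{n,t})$ lying in $[0,a)$, $[a,b]$, $(b,\log 2]$ respectively. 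Using the chain rule~\eqref{qchain}, each polarization step preserves the sum $I(\mathcal W^+) + I(\mathcal W^-) = I(\mathcal W_1)+I(\mathcal W_2)$ for each combined pair, and hence $\mu_{n+1}=\mu_n=\mu$ for all $n$.

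The central step is to show monotonicity of $\nu_n$. For a pair $\mathcal W_1,\mathcal W_2$ combined into $\mathcal W^- = \mathcal W_1\boxast\mathcal W_2$ and $\mathcal W^+ = \mathcal W_1\varoast\mathcal W_2$, the identity $x^2+y^2 = \tfrac{1}{2}[(x+y)^2 + (x-y)^2]$ together with the chain rule yields
\begin{align}
\tfrac{1}{2}\bigl(I(\mathcal W^+)^2+I(\mathcal W^-)^2\bigr)-\tfrac{1}{2}\bigl(I(\mathcal W_1)^2+I(\mathcal W_2)^2\bigr) = \tfrac{1}{4}\Bigl[\bigl(I(\mathcal W^+)-I(\mathcal W^-)\bigr)^2 - \bigl(I(\mathcal W_1)-I(\mathcal W_2)\bigr)^2\Bigr]\,.
\end{align}
By Lemma~\ref{lem:lower}, when $I(\mathcal W_1),I(\mathcal W_2)\in[a,b]$ with $0<a<b<\log 2$, the right-hand side is at least $\mu(a,b)^2/4>0$. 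Averaging over positions $t<T$ and taking $T\to\infty$ then yields $\nu_{n+1}-\nu_n \geq \tfrac{1}{4}\theta_n(a,b)\,\mu(a,b)^2$. Since $\nu_n\leq(\log 2)^2$, the series $\sum_n(\nu_{n+1}-\nu_n)$ converges, forcing $\theta_n(a,b)\to 0$.

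Combining $\mu_n=\mu$ with the vanishing of $\theta_n(a,b)$ and the elementary bounds $\mu\leq a\,\alpha_n(a)+b\,\theta_n(a,b)+(\log 2)\beta_n(b)$ and $\log 2 - \mu \leq (\log 2 -a)\alpha_n(a) + (\log 2 - a)\theta_n(a,b) + (\log 2 - b)\beta_n(b)$, pushing $n\to\infty$ and then $a\to 0$, $b\to\log 2$ yields $\lim_n \beta_n(b)\geq \mu/\log 2$ and $\lim_n \alpha_n(a)\geq 1-\mu/\log 2$. Since $\alpha_n(a)+\beta_n(b)\leq 1$, these become equalities and the claim follows.

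The main obstacle is not the core polarization estimate (which Lemma~\ref{lem:lower} handles exactly as in the stationary proof), but rather the careful handling of the iterated limits $\lim_n\lim_{T\to\infty}$: one must verify inductively that the position averages $\mu_n,\nu_n,\alpha_n,\beta_n,\theta_n$ actually exist as Ces\`aro limits at every polarization level. As observed in \cite{AT14}, this reduces to checking that the $(\pm)$-recursion, which interleaves positions $t$ and $t+N/2$ within blocks of length $N=2^n$, preserves the existence of such long-range position averages; the verification is routine once one works at the level of block averages and uses the hypothesis that $\mu$ is well defined to start the induction.
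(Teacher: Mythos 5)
Your reduction to Lemma~\ref{lem:lower} is the right starting point, and the algebraic identity
\begin{align*}
\tfrac{1}{2}\bigl(I(\mathcal W^+)^2+I(\mathcal W^-)^2\bigr)-\tfrac{1}{2}\bigl(I(\mathcal W_1)^2+I(\mathcal W_2)^2\bigr) = \tfrac{1}{4}\Bigl[\bigl(I(\mathcal W^+)-I(\mathcal W^-)\bigr)^2 - \bigl(I(\mathcal W_1)-I(\mathcal W_2)\bigr)^2\Bigr]
\end{align*}
is correct and gives a per-pair variance gain of at least $\mu(a,b)^2/4$ \emph{when both} $I(\mathcal W_1)$ and $I(\mathcal W_2)$ lie in $[a,b]$. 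The gap is in the averaging step: the inequality $\nu_{n+1}-\nu_n \geq \tfrac{1}{4}\theta_n(a,b)\,\mu(a,b)^2$ does not follow in the non-stationary setting. In the stationary proof each channel is combined with an identical copy of itself, so $I(\mathcal W^{s})\in[a,b]$ automatically places both inputs of Lemma~\ref{lem:lower} in $[a,b]$. Here, by contrast, the recursion pairs position $Nm+j$ with position $Nm+N/2+j$, and these carry \emph{different} channels. A position contributing to $\theta_n(a,b)$ may be paired with an already extremal channel (with mutual information near $0$ or $\log 2$), in which case Lemma~\ref{lem:lower} gives no positive gain $\mu(a,b)$ — indeed, by Remark~\ref{equality-condition-for-proof} the gain is exactly zero when one of the two entropies is extremal. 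One can have $\theta_n(a,b)=1/2$ while no pair at step $n$ has both members in $[a,b]$, so $\nu_{n+1}-\nu_n$ can vanish even though $\theta_n(a,b)$ is bounded away from zero, and your conclusion $\theta_n\to 0$ does not follow.

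The missing ingredient is a combinatorial argument showing that the deterministic permutation structure of the polar transform prevents this evasion from persisting: even if a mid-range channel is paired with an extremal one at step $n$, over the next $k$ steps it must eventually be paired with another mid-range channel. This is made precise in a corollary of~\cite{AT14} (used verbatim in the paper's proof): if $\theta_n(a,b) > \binom{k}{\lfloor k/2\rfloor}/2^k =: \epsilon_k$, then $\nu_{n+k}\geq \nu_n + \delta$ for some $\delta>0$ depending only on $k$, $\theta_n$, $a$, $b$. From this one concludes $\theta_n\leq\epsilon_k$ for all sufficiently large $n$ and every fixed $k$, and since $\epsilon_k\to 0$ the middle density vanishes. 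Your closing remarks about the existence of the Ces\`aro limits are reasonable (the paper sidesteps part of this by defining $\nu_n$ with a $\liminf$), but without the multi-step progress lemma the core of the argument does not close.
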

\begin{proof}
Again the proof will follow very closely the one in~\cite{AT14}. For the sake of brevity we will only outline the crucial steps and refer to~\cite{AT14} for more details.
We start again by defining the fractions $\alpha_n(a)$, $\theta_n(a,b)$ and $\beta_n(b)$ as the quantities under investigation before taking the limit over $n$. Furthermore, we will, similar to the previous proof, define the quantities
\begin{equation}
\mu_n=\lim_{T\rightarrow\infty}\frac{1}{T} \sum_{t<T}I(\mathcal W_{n,t})
\end{equation}
and
\begin{equation}
\nu_n=\liminf_{T\rightarrow\infty}\frac{1}{T} \sum_{t<T}I(\mathcal W_{n,t})^2.
\end{equation}
Note that from the assumption that the limit in $\mu = \mu_0$ exists, it also follows that all $\mu_n$ are well defined, with the reasoning being the same as in the classical case (see~\cite{AT14}).
Therefore, it also follows that $\mu_n=\mu_{n+1}$ as in the previous proof.  \\
Next we are looking at the change in variance when combining two channels. From the general Lemma~\ref{lem:lower} we can also deduce the following statement
\begin{align}
\Delta^2(\mathcal W_1,\mathcal W_2) &:= \frac{1}{2} [I(\langle \mathcal W_1,\mathcal W_2\rangle^-)^2 + I(\langle \mathcal W_1,\mathcal W_2\rangle^+)^2] - \frac{1}{2} [I(\mathcal W_1)^2 + I(\mathcal W_2)^2] \nonumber\\ 
&\geq \zeta(a,b),
\end{align}
if $I(\mathcal W_1),I(\mathcal W_2)\in[a,b]$, where $\zeta(a,b)>0$ whenever $0<a<b<1$. This is sufficient to conclude that $\nu_{n+1}\geq\nu_n$; however more work is needed to relate their difference to $\theta_n$. It is easy to see that in special cases, for example when every second channel is already extremal, the combination of different channels might not lead to a positive $\zeta(a,b)$ bounding $\nu_{n+1}-\nu_n$. Nevertheless, even those seemingly ineffective coding steps deterministically permute the channels and therefore allow for progress in later coding steps. This has been made precise in~\cite{AT14} in a corollary that we will also use here. It states that if $\theta_n(a,b) > {{k}\choose{\left\lfloor k/2\right\rfloor}}/2^k:=\epsilon_k$, then
\begin{equation}
\nu_{n+k} \geq \nu_n + \delta,
\end{equation}
where $\delta>0$ is a quantity that depends only on $k$, $\theta_n$, $a$ and $b$. The proof in~\cite{AT14} is entirely algebraic and works also in our generalized setting. From this we can conclude that for every $k\in{\mathbb N}$, $\theta_n\leq\epsilon_k$ holds for sufficiently large $n$. Therefore,
\begin{equation}
\lim_{n\rightarrow\infty}\lim_{T\rightarrow\infty} \frac{1}{T} \#\{  0\leq t<T : I(\mathcal W_{n,t})\in [a,b] \} = 0,
\end{equation}
since $\lim_{k\rightarrow\infty}\epsilon_k = 0$. \\
The claims about $\alpha_n$ and $\beta_n$ now follow from the same reasoning as in the stationary case. 
\end{proof}

\subsection{Speed of polarization}
Applying our quantitative result from Theorem \ref{QuantumMrsGerberTheorem1StateTwice} to the entropy change of binary-input classical-quantum channels under the polar transform, we now prove a \emph{quantitative} result on the speed of polarization for i.i.d.\ binary-input classical-quantum channels. For our proof, we adapt the method of \cite{GX15} to the $\sim H/(-\log H)$ lower bound guaranteed by our Equation~\eqref{FonlyMoreConvenientLowerBound}, which is somewhat worse than the linear lower bound $\sim H$ for the classical-classical case in \cite[see in particular Lemma 6]{GX15}; this is the reason that our following result does not guarantee a polynomial blocklength $\sim(1/\varepsilon)^\mu$, but only a subexponential one $\sim(1/\varepsilon)^{\mu\log1/\varepsilon}$. (Here $\epsilon$ is the gap to the symmetric capacity.) However, under our Conjecture \ref{QMGL}, we can show the same polynomial blocklength result as in \cite{GX15} for classical-classical channels (as we will point out in Remark \ref{poly-blocklength-under-conjecture}). Note that we do not make any claim about efficient decoding of classical-quantum polar codes (e.g.\ with a circuit of subexponential size), which remains an open problem.

\begin{thm}[Blocklength subexponential in gap to capacity suffices for classical-quantum binary polar codes]There is an absolute constant $\mu<\infty$ such that the following holds. For any binary-input classical-quantum channel $\mathcal W$, there exists $a_{\mathcal W}<\infty$ such that for all $\varepsilon>0$ and all powers of two $N\geq a_{\mathcal W}(1/\varepsilon)^{\mu\log1/\varepsilon}$, a polar code of blocklength $N$ has rate at least $I(\mathcal W)-\varepsilon$ and block-error probability at most $2^{-N^{0.49}}$, where $I(\mathcal W)$ is the symmetric capacity of $\mathcal W$.
\end{thm}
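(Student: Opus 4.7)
The plan is to follow the framework established by Guruswami and Xia in~\cite{GX15}, replacing the classical Mrs.\ Gerber's Lemma in their argument with our quantitative quantum version, namely the lower bound~\eqref{FonlyMoreConvenientLowerBound} from Theorem~\ref{QuantumMrsGerberTheorem1StateTwice}. At the top level, the proof factors into two parts: an \emph{entropy polarization} part that controls the fraction of synthesized channels whose conditional entropy lies outside a small interval near $0$ or $\log 2$, and a \emph{Bhattacharyya polarization} part that upgrades small entropy to small block-error probability. The second part is essentially orthogonal to the information-combining machinery and can be imported from~\cite{WG13,H14}, which already give the $2^{-N^{0.49}}$ scaling of the block-error probability once enough synthesized channels have entropy below a small absolute constant. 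Thus the new work all lies in quantifying the speed of entropy polarization.

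For the entropy part, I would let $H_n := H(\mathcal W^{S_n})$ for a uniformly random sign sequence $S_n\in\{+,-\}^n$ and study the evolution of a potential $\Phi_n := \mathbb{E}[\phi(H_n)\,\mathbb{1}\{H_n\le 1/2\}]+\mathbb{E}[\phi(\log2-H_n)\,\mathbb{1}\{H_n>1/2\}]$ with $\phi(x)=x^\delta$ for a small $\delta>0$ to be optimized. The chain rule~\eqref{qchain} gives $\mathbb{E}[H_{n+1}\mid H_n]=H_n$, so the martingale itself does not contract, but $\phi$ is strictly concave. The key one-step inequality to prove is
\begin{equation}
\tfrac{1}{2}\phi(H^-)+\tfrac{1}{2}\phi(H^+)\le \phi(H)\Bigl(1-c\,\eta(H)\Bigr),
\end{equation}
where $\eta(H)$ is the gap by which Theorem~\ref{QuantumMrsGerberTheorem1StateTwice} forces $H(\mathcal W\boxast\mathcal W)-H$ to exceed $0$ when $H\in(0,\log 2)$. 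By~\eqref{FonlyMoreConvenientLowerBound} we have $\eta(H)=\Omega\!\bigl(H/(1-\log H)\bigr)$ for $H\le\tfrac{1}{2}\log2$ and symmetrically near $\log2$. A short Taylor expansion of $\phi$ around the midpoint $H$, combined with this lower bound on the asymmetry between $H^-$ and $H^+$, gives a contraction factor $1-c'\eta(H)$ on $\Phi_n$. Iterating shows that the measure of $\{H_n\in[a,\log 2-a]\}$ decays like $(1/\varepsilon)^{-1/\mu\log(1/\varepsilon)}$ once $a$ is pushed to $\varepsilon$, which is exactly where the $\log(1/\varepsilon)$ penalty in the exponent comes from and yields the subexponential blocklength in the theorem statement.

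To convert the entropy statement into the blocklength and rate claim, I would combine this with Arıkan's standard argument that the fraction of indices with $H_n\in[0,\varepsilon)$ tends to $1-I(\mathcal W)/\log 2$ (already established earlier in this chapter in the polarization theorem for stationary cq-channels), so that choosing the information set to be exactly those indices yields a code of rate at least $I(\mathcal W)-\varepsilon$. The Bhattacharyya upgrade of~\cite{AT09,H14} then shows that, for $n$ large enough, all but a $2^{-N^{0.49}}$ fraction of these "good" synthesized channels in fact have Bhattacharyya parameter below $2^{-N^{0.49}}$, giving the block-error bound via a union bound over the successive cancellation decoder.

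The main obstacle is the logarithmic slack $1/(1-\log H)$ in~\eqref{FonlyMoreConvenientLowerBound}: it is precisely this factor, absent from the classical Mrs.\ Gerber's Lemma used in~\cite{GX15}, that forces the blocklength from polynomial $(1/\varepsilon)^\mu$ up to subexponential $(1/\varepsilon)^{\mu\log 1/\varepsilon}$. Technically, the contractivity step must be carried out with $\phi(x)=x^\delta$ (rather than a linear potential) and the $\delta$ must be chosen small, which in turn inflates the exponent in $n$ by a $\log(1/\varepsilon)$ factor. Under the sharp Conjecture~\ref{QMGL}, $\eta(H)$ would be $\Omega(H)$ near the boundary, and the same scheme would recover polynomial blocklength, matching the classical bound of~\cite{GX15}; this is the single quantitative improvement in Theorem~\ref{QuantumMrsGerberTheorem1StateTwice} that would translate directly into an improved decoding scaling.
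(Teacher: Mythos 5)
Your overall architecture coincides with the paper's: both arguments run the two-stage (rough plus fine) polarization analysis of \cite{GX15}, substitute the quantitative bound \eqref{FonlyMoreConvenientLowerBound} of Theorem~\ref{QuantumMrsGerberTheorem1StateTwice} for the classical Mrs.\ Gerber's Lemma in the rough-polarization stage, identify the extra factor $1/(1-\log H)$ as the sole source of the degradation from polynomial to $(1/\varepsilon)^{\mu\log 1/\varepsilon}$, import the Bhattacharyya/fine-polarization stage unchanged, and note that Conjecture~\ref{QMGL} would restore the polynomial blocklength (this last point is exactly Remark~\ref{poly-blocklength-under-conjecture}).

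The genuine gap is in your one-step contraction inequality, which is where the theorem is actually proved. You posit
\begin{equation*}
\tfrac12\phi(H^-)+\tfrac12\phi(H^+)\;\le\;\phi(H)\bigl(1-c\,\eta(H)\bigr),\qquad \eta(H)=\Omega\bigl(H/(1-\log H)\bigr),
\end{equation*}
with $\phi(x)=x^\delta$. Since $H^-+H^+=2H$ and $\phi$ is strictly concave, the Taylor expansion you invoke gives a \emph{relative} decrease of order $|\phi''(H)|\,\eta(H)^2/\phi(H)\sim\delta(1-\delta)\,(\eta(H)/H)^2\sim(\log(1/H))^{-2}$, not of order $\eta(H)$: what controls the contraction is the relative gap $(H^- - H)/H\gtrsim 1/(1-\log H)$, which is only polylogarithmically small, and not the absolute gap $H^- - H$, which vanishes essentially linearly as $H\to 0$. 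The inequality as displayed is therefore not the one that does the work: iterating a per-step factor $1-c\,\eta(H)$ yields essentially no decay once the martingale's mass concentrates near the endpoints (for the middle region $[a,\log 2-a]$ with $a\sim\varepsilon$ one gets a contraction of only $1-c\varepsilon/\log(1/\varepsilon)$ per step, i.e.\ exponentially many steps in $1/\varepsilon$), so the claimed $n\sim(\log(1/\varepsilon))^2$ iteration count is asserted rather than derived. The paper instead keeps Guruswami--Xia's potential $T(\mathcal W)=H(\mathcal W)(1-H(\mathcal W))$, establishes the recursion ${\mathbb E}[T_{n+1}]\le{\mathbb E}[T_n]-\kappa\,{\mathbb E}[T_n]/(-\log{\mathbb E}[T_n])$, and solves the associated differential equation $f'=-\kappa f/(-\log f)$ to get ${\mathbb E}[T_n]\le e^{-\sqrt{2\kappa n}}$, whence $n\sim(\log(1/\varepsilon))^2$ and $N=2^n\sim(1/\varepsilon)^{\mu\log(1/\varepsilon)}$. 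If you repair your scheme by using the correct contraction factor $1-c(\log(1/H))^{-2}$, you must still check which power of $\log(1/\varepsilon)$ the resulting recursion actually produces before claiming the exponent in the statement; it is the paper's recursion, with decrease $\kappa T/(-\log T)$, that yields the square.
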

\begin{proof}Our proof follows the proofs of \cite[Propositions 5 and 10]{GX15} (``rough'' and ``fine'' polarization). The main reason why we can guarantee only a subexponential scaling here, lies in the rough polarization step (\cite[Proposition 5]{GX15}). In the following, we outline only the main differences to the proofs in \cite{GX15} which are responsible for the altered scaling. As in \cite{GX15}, we define $T(\mathcal W):=H(\mathcal W)(1-H(\mathcal W))$. Then \cite[Lemma 8]{GX15} is modified to,
\begin{align*}
\underset{i~\text{mod}~2}{{\mathbb E}}[T(\mathcal W^{(i)}_{n+1}]\leq T(\mathcal W^{(\lfloor i/2\rfloor)}_n)-\kappa\frac{T(\mathcal W^{(\lfloor i/2\rfloor)}_n)}{-\log T(\mathcal W^{(\lfloor i/2\rfloor)}_n)}\,
\end{align*}
with some $\kappa>0$. We obtain the same relation for the full expectation values using convexity (similar to the equation in the proof of \cite[Corollary 9]{GX15}):
\begin{align*}
\underset{i}{{\mathbb E}}[T(\mathcal W^{(i)}_{n+1})]\leq \underset{i}{{\mathbb E}}[T(\mathcal W^{(i)}_n)]-\kappa\frac{\underset{i}{{\mathbb E}}[T(\mathcal W^{(i)}_n)]}{-\log\underset{i}{{\mathbb E}}[T(\mathcal W^{(i)}_n)]}.
\end{align*}
This now does not anymore guarantee that the decrease of $\underset{i}{{\mathbb E}}[T(\mathcal W^{(i)}_n)]$ is exponential in $n$, as in \cite[Corollary 9]{GX15} which was obtained from the recursion $\underset{i}{{\mathbb E}}[T(\mathcal W^{(i)}_{n+1})]\leq \underset{i}{{\mathbb E}}[T(\mathcal W^{(i)}_n)]-\kappa\underset{i}{{\mathbb E}}[T(\mathcal W^{(i)}_n)]$ (or the same recursion for $\underset{i}{{\mathbb E}}[\sqrt{T(\mathcal W^{(i)}_{n})}]$). Thus, instead of the differential equation $\frac{d}{dn}f(n)=-\kappa f(n)$, the behaviour here is goverened by the equation $\frac{d}{dn}f(n)=-\kappa\frac{f(n)}{-\log f(n)}$. This differential equation has the solution $f(n)=\exp[-\sqrt{2\kappa n+(\log f(0))^2}]$ (note, $f(n)\leq1$ for all $n$) and we therefore obtain the following bound:
\begin{align*}
\underset{i}{{\mathbb E}}[T(\mathcal W^{(i)}_n)]\leq e^{-\sqrt{2\kappa n+(\log T(\mathcal W_0^{(0)}))^2}}\leq e^{-\sqrt{2\kappa n}},
\end{align*}
guaranteeing the expectation value of $T(\mathcal W^{(i)}_n)$ to decrease at least superpolynomially with the number of polarization steps $n$.

This expectation value will thus be smaller than any $\delta>0$ if only the number of polarization steps satisfies $n\geq\frac{1}{2\kappa}\left(\log\frac{1}{\delta}\right)^2\sim\left(\log\frac{1}{\delta}\right)^2$. This expression can now be connected with the ``fine polarization step'' \cite[Proposition 10]{GX15} since for any fixed power $\delta\sim\varepsilon^p$ (with $\varepsilon$ from the statement of the theorem) we again obtain that $n\geq\widetilde{\mu}\left(\log\frac{1}{\varepsilon}\right)^2$ with some constant $\widetilde{\mu}$ suffices. Since the number $n$ of polarization steps is related to the blocklength $N$ via $N=2^n$, we find that the constructed polar code has the desired properties as soon as the blocklength satisfies $N\geq2^{\widetilde{\mu}(\log1/\varepsilon)^2}=(1/\varepsilon)^{\mu\log1/\varepsilon}$ (with $\mu=\widetilde{\mu}\log2$). The constant $a_{\mathcal W}$ from the theorem statement accounts for the fact that the above analysis is only valid for sufficiently small $\varepsilon$.

It is instructive to compare the reasoning in the previous paragraph with the blocklength result obtained in \cite{GX15}. The bound obtained from $f(n)$ in this case is $\underset{i}{{\mathbb E}}[T(\mathcal W^{(i)}_n)]\leq e^{-\kappa n}$, so that $n\geq\widetilde{\mu}\log\frac{1}{\varepsilon}$ suffices for $\underset{i}{{\mathbb E}}[T(\mathcal W^{(i)}_n)]\leq\varepsilon^p$. This shows that a blocklength $N\geq2^{\widetilde{\mu}\log1/\varepsilon}=(1/\varepsilon)^{\mu}$ is sufficient.
\end{proof}

\begin{rem}[Polynomial blocklength suffices under Conjecture \ref{QMGL}]\label{poly-blocklength-under-conjecture}
If Conjecture \ref{QMGL} holds, then one can prove the same polynomial blocklength result as \cite[Theorem 1]{GX15} for classical-quantum channels as well. The only part of the proof which has to be changed is \cite[Lemma 6]{GX15}, where the classical Mrs.\ Gerber's Lemma is to be replaced by Conjecture \ref{QMGL}. However, this change does not even affect the numerical value of $\theta$ that can be chosen in \cite[Lemma 6]{GX15}, since our conjectured optimal lower bound in the classical-quantum case is simply a symmetrization of the classical lower bound.
\end{rem}

%
\chapter{Log-determinant inequalities and recoverability in infinite dimensions}\label{Covariance}

So far, all of the chapters of this thesis discussed problems on finite dimensional systems (although many of the results hold in infinite dimensions as well). 
In this chapter, we will turn our focus to infinite dimensional systems and we try to develop some similar concepts, in particular connected to entropy inequalities and recoverability. 
Before we start with the main results in the following sections, we will introduce the important concepts and notations. 
The key of the reasoning in this chapter is to associate  an $n$-dimensional Gaussian random variable
$X\in\mathds{R}^n$ with mean $0$ and variance (aka covariance matrix)
$\operatorname{Var} X = \mathds{E}\  X X^\intercal = A$ to each positive
matrix $A\in\mathcal{M}_n(\mathds{R})$.
The density of $X$ is given by
\begin{equation}
 p_A(x) = \frac{e^{-\frac12 x^\intercal A^{-1} x}}{\sqrt{(2\pi)^n\det A}} . \label{gauss}
\end{equation}
This has the nice feature that for two independent Gaussian random
variables $X$ and $Y$ with a $0$ mean and covariance matrices $A$ and $B$ respectively,
the sum $A+B$ is the covariance matrix of $X+Y$.
Here, we consider only real matrices since they are more relevant for the applications we are interested in. However, all the results we find apply also to the Hermitian case with minor modifications.

Under the density~\eqref{gauss}, the differential entropy
$h(X) \coloneqq -\int d^n x\, p_A(x) \ln p_A(x)$ of~\eqref{gauss} takes the form
\begin{equation}
  h(X) = \frac{1}{2}\ln\det A + \frac{n}{2}\left(\ln 2\pi + 1 \right) ,
  \label{ent}
\end{equation}
while the relative entropy
$D(p_A\|p_B) \coloneqq \int d^n x\, p_A(x) \ln \frac{p_A(x)}{p_B(x)}$ is given by
\begin{equation}
  D(p_A\|p_B) = \frac{1}{2} \ln\frac{\det B}{\det A} + \frac{1}{2} \Tr(B^{-1}\! A) - \frac{n}{2} .
  \label{rel ent}
\end{equation}
Here and in the rest of the chapter $\ln$ refers to the natural logarithm. The positivity of~\eqref{rel ent} as a function of the matrices $A$ and $B$ can be
seen as an instance of Klein's inequality applied to the natural logarithm~\cite{Wehrl}.

In this picture, general inequalities involving entropies can be turned into
inequalities involving determinants thanks to~\eqref{ent} and~\eqref{rel ent}.
A prominent example of the usefulness of this approach is constituted by
\emph{strong subadditivity} (SSA), the basic ``Shannon-type'' entropy
inequality~\cite{Yeung}.
Consider a Gaussian distributed vector
$X_{ABC} = (X_A,X_B,X_C)^\intercal \in \mathds{R}^{n_A+n_B+n_C}$
with the covariance matrix $V_{ABC}$:
\begin{equation}
  V_{ABC} = \begin{pmatrix} A & X & Y \\
                            X^{\intercal} & B & Z \\
                            Y^{\intercal} & Z^{\intercal} & C \end{pmatrix}
    \geq 0 ,
  \label{global CM}
\end{equation}
The SSA inequality $I(X_{A}:X_{B}|X_{C}) \geq 0$ then reads
\begin{equation}
  \ln\det V_{AC} + \ln\det V_{BC} - \ln\det V_{ABC} - \ln\det V_C \geq 0 ,
  \label{SSA}
\end{equation}
where the local reductions $V_{AC}$, $V_{BC}$ and $V_{C}$ are the principal
submatrices of $V_{ABC}$ corresponding to the components $AC$, $BC$ and $C$,
respectively:
\begin{equation}
  V_{AC} = \begin{pmatrix} A & Y \\ Y^{\intercal} & C \end{pmatrix}, \quad
  V_{BC} = \begin{pmatrix} B & Z \\ Z^{\intercal} & C \end{pmatrix}, \quad
  V_C = C.
  \label{eq:reductions}
\end{equation}
We observe that since~\eqref{SSA} is balanced, the
contribution of the inhomogeneous second terms of~\eqref{ent} cancel out.

Inequality~\eqref{SSA} was proven for the first time in~\cite{Ando09} (see
also~\cite[Sec.~4.5]{PetzBook}).
Incidentally, the differential R\'enyi-$\alpha$ entropy of a Gaussian random variable $X$ with density
$p_A(x)$, i.e.~$H_\alpha(X) \coloneqq \frac{1}{1-\alpha}\ln \int d^n x\, p_A(x)^\alpha$,
is given by
\begin{equation*}
H_\alpha(X) = \frac{1}{2} \ln\det A + \frac{n}{2}\left( \ln 2\pi + \frac{1}{\alpha-1}\ln\alpha \right),
\end{equation*}
showing that all the differential R\'enyi entropies of Gaussian random vectors are
essentially equivalent to the differential Shannon entropy, up to a characteristic
universal additive offset.
In view of this and the above remarks, we are motivated,
given a vector valued random variable $X$ with covariance
matrix $V$, to refer from now on to the quantity
\begin{equation}
  \label{logdetent}
  M(X) \coloneqq M(V) \coloneqq \frac12 \ln \det V ,
\end{equation}
as the \textit{log-det entropy} of $V$ (or analogously $X$).
Likewise, for a bipartite covariance matrix $V_{AB} > 0$ we refer to
\begin{equation}\begin{split}
  I_{M}(A:B)_{V} &\coloneqq \frac12 \ln\frac{\det V_{A}\det V_{B}}{\det V_{AB}} \\
                   &=      M(V_{A})+M(V_{B})-M(V_{AB}) ,
  \label{logdetMI}
\end{split}\end{equation}
as the {\em log-det mutual information}, and for a tripartite covariance matrix $V_{ABC}>0$ we refer to
\begin{equation}\begin{split}
  I_{M}(A:B|C)_{V} &\coloneqq \frac12 \ln\frac{\det V_{AC}\det V_{BC}}{\det V_{C}\det V_{ABC}} \\
                   &=      M(V_{AC})+M(V_{BC}) -M(V_{ABC})-M(V_{C}) ,
  \label{I_2-matrix}
\end{split}\end{equation}
as the {\em log-det conditional mutual information}.


Every (balanced) entropic inequality thus yields a corresponding log-determinant
inequality for positive block matrices~\cite{Chan-balanced}. Thanks
to the work of Zhang and Yeung~\cite{ZhangYeung}
and followers~\cite{Dougherty,Matus}, infinitely many
independent such ``non-Shannon-type inequalities'' are known by now.
The question of what the precise
constraints on the determinants of the $2^n$ principal
submatrices of a positive matrix of size $n\times n$ are, has been raised much earlier either directly in a matrix setting~\cite{JohnsonBarrett}
or more recently in the guise of the balanced entropy inequalities of Gaussian
random variables (both real valued or vector valued)~\cite{Hassibi,Shadbakht}.
Remarkably, the latter papers show that while the entropy region of three Gaussian
real random variables is convex but not a cone, the entropy region of three Gaussian
random vectors is a convex cone and that the \emph{linear} log-det inequalities
for three Gaussian random variables (and equivalently Gaussian random vectors)
are the same as the inequalities for the differential entropy of any three
variables -- which in turn coincide with the Shannon inequalities,
cf.~\cite{Yeung,Chan-balanced}.
It is conjectured that the same identity between Gaussian vector inequalities
and general differential inequalities holds for any number of parties.

In this chapter, we will focus on a deeper investigation of the SSA inequality~\eqref{SSA}, which is nowadays widely regarded as one of the cornerstones upon which quantum information theory is built~\cite{book2000mikeandike}. Our analysis rests crucially on the connection between Gaussian random variables and positive definite matrices we have outlined here, which allows us to use tools taken from matrix analysis~\cite{Bhatia} to explore properties of the log-det conditional mutual information~\eqref{I_2-matrix}.
We find a particular strengthening of the SSA inequality in the following form as a matrix inequality:
\begin{equation}
V_{ABC} / V_{BC}\, \leq V_{AC} / V_C . \label{INEQ 1}
\end{equation}
Here, we use the powerful concept of the \emph{Schur complement} of a $2\times 2$-block matrix
$V = \lmatrix A & X \\ X^\intercal & B \rmatrix$ with respect to the
principal minor $A$, defined as
\begin{equation}
\label{Schur definition}V/A\coloneqq B - X^{\intercal}A^{-1} X .
\end{equation}
We will go into more details about the properties of the Schur complement in the next section.

Our concrete interest in~\eqref{SSA} is mostly motivated by its applications
in quantum information theory with continuous
variables~\cite{Adesso14}, as first explored in~\cite{Adesso12,Gross}.
In Section \ref{sec:Renyi-2-Gaussian-squashed} we will give a detailed introduction to the quantum setting. We will make use of the fact that every continuous variable quantum state $\rho$ of $n$ modes, subject to mild regularity
conditions, has a $2n\times 2n$-covariance matrix $V$ of the phase space variables.

The rest of this chapter is structured as follows. In Section~\ref{sec exact recov}
we derive various characterisations of the case of saturation of SSA with
equality.
Then, in Section~\ref{sec:Gaussian-recov} we turn to the case of near-saturation,
which leads to the theory of recovery maps; in Section~\ref{sec:I_M-lower-bound}
we exploit those results to derive simple and faithful lower bounds on
the log-det conditional mutual information. Up to that point, all results
hold for general covariance matrices $V>0$.
After that, in Section~\ref{sec:Renyi-2-Gaussian-squashed} we turn our
attention to quantum Gaussian states and their phase space covariance
matrices, which need to satisfy additional constraints stemming from the uncertainty principle and the canonical commutation relations. There, we introduce a measure of entanglement for quantum Gaussian states
based on the log-det conditional mutual information and prove its faithfulness and additivity. Quite remarkably, we show that the measure coincides with the \mbox{R\'enyi-$2$} Gaussian entanglement of formation introduced in~\cite{Adesso12}, equipping the latter with an interesting operational interpretation in the context of recoverability.

\section{Mathematical tools: Schur complement and geometric mean}

Two of the elementary tools we will use in the remainder of this chapter are the Schur complement and the geometric mean between positive definite matrices. In this section we will state some useful properties and observations.

Let's start with the Schur complement~\cite{Schur}. The Schur complement is an operation that takes as input a $n\times n$ matrix $M$ and one of its $k\times k$ principal submatrices~\footnote{The shorthand $X\sqsubset Y$ means $X$ is a square submatrix of $Y$} $A \sqsubset M$, and outputs a $(n-k)\times (n-k)$ matrix $M/A$. Given a $2\times 2$-block matrix
$V = \lmatrix A & X \\ X^\intercal & B \rmatrix$, the complement with respect to the
principal minor $A$ is given by $V/A$ as defined in~\eqref{Schur definition}.

Its significance relies on the (elementary) fact that $V$ as a quadratic form is congruent to
$S^\intercal V S = A \oplus V/A$, via the unideterminantal transformation
$S = \lmatrix \mathds{1}\, & -A^{-1}X \\ 0\, & \mathds{1} \rmatrix$. From this
the factorization formula
\begin{equation}
  \det V = (\det A)(\det V/A)
  \label{det factor}
\end{equation}
follows, which shows how~\eqref{INEQ 1} implies the SSA inequality~\eqref{SSA}.
A related property is that of congruence invariance: we have
\begin{equation}
\left(\begin{smallmatrix} N_1 & \\ & N_2 \end{smallmatrix}\right) V \left(\begin{smallmatrix} N_1^T & \\ & N_2^T \end{smallmatrix}\right) \Big/ N_1^T A N_1  \geq N_2 \left( V/A \right) N_2^T,
\end{equation}
for all $N_1, N_2$, with equality if $N_1$ is invertible.
 
From a point of view of linear algebra, Schur complements arise naturally when one wants to express the inverse of a block matrix in a compact form. Namely, for a matrix $V$ partitioned as above one can prove the useful formula~\cite{SCbook}
\begin{equation}
  V^{-1} = \begin{pmatrix} A^{-1} + A^{-1} X (V/A)^{-1} X^\intercal  A^{-1} & -A^{-1} X (V/A)^{-1} \\[0.7ex]
                                            -(V/A)^{-1} X^\intercal  A^{-1} & (V/A)^{-1} \end{pmatrix} .
  \label{inv}
\end{equation}
Naturally, an analogous expression holds with $A$ and $B$ interchanged.
Incidentally, many useful matrix identities can be easily derived from this latter fact.

Schur complements of positive definite matrices enjoy numerous other
useful relations. First of all, the positivity condition itself can be expressed in terms of Schur
complements as
\begin{equation}
V = \begin{pmatrix} A & X \\ X^\intercal & B \end{pmatrix} > 0\quad \Longleftrightarrow\quad A>0\ \text{and}\ V/A>0 .
\label{Schur pos}
\end{equation}
From this the variational representation
\begin{equation}
V/A = \max \big\{ \tilde{B}: V \geq 0 \oplus \tilde{B} \big\} ,
\label{Schur var}
\end{equation}
follows easily. The meaning of~\eqref{Schur var} is that the matrix set on the right hand side has a unique maximal element with respect to the L\"owner partial order (a nontrivial fact in itself) and that this maximum coincides with the left hand side, which means in particular that  $V\mapsto V/A$  is monotonically increasing and concave, while $V\mapsto (V/A)^{-1}$ is decreasing and convex.


Interestingly, it follows from the latter property that $\log\det(V/A)$ is concave in $V$ thanks to the operator concavity of the logarithm. This leads to a simple proof of the central finding of \cite{Adesso}, i.e. the inequality
\begin{equation} \label{log det ineq}  \log\det V_{AC} +\log\det V_{BC} - \log\det V_{A} - \log\det V_{B} \geq 0\, ,  \end{equation}
valid for any quantum CM $V_{ABC}$. This is obtained by noticing that (\ref{log det ineq}) is saturated for pure states and rewriting the left-hand side as $\log\det (V_{AC}/V_A) + \log\det (V_{BC}/V_B)$, which is a concave function of $V_{ABC}$. 

This suggests that the  Schur complement of CMs can define a natural notion of conditional covariance, as previously noted for classical Gaussian variables \cite{simoprl}.
Hence we will study the Schur complement  $V_{AB}/V_B$, thereby proving that many well-known properties of the standard conditional entropy $H(A|B)=H(AB)-H(B)$, where $H$ denotes respectively Shannon or von Neumann entropy for a classical or quantum system,
have a straightforward equivalent within this framework.

We start by recalling that a canonical formulation of strong subadditivity in classical and quantum information theory is $H(A|BC) \leq H(A|C)$, i.e.~partial trace on the conditioning system  {increases} the conditional entropy \cite{ArakiLieb,WehrlR,Lieb2002,NP04}. Guided by our formal analogy, our first result of this chapter is thus a generalization of the SSA inequality.

\begin{thm}[Partial trace in the denominator  {increases} Schur complement] \label{pt incr sch}
If $V_{ABC} \geq 0$ is any tripartite CM, then
\begin{equation} V_{ABC} / V_{BC}\ \leq\ V_{AC} / V_C\, . \label{INEQ 1} \end{equation}
\end{thm}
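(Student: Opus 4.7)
The plan is to exploit the variational characterization of the Schur complement given in~\eqref{Schur var}, which turns the inequality into an essentially one-line monotonicity argument under restriction to a principal block.

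First I would rewrite both sides of~\eqref{INEQ 1} via~\eqref{Schur var}. Concretely,
\begin{equation*}
V_{ABC}/V_{BC} \;=\; \max\bigl\{\,\tilde A \,:\, V_{ABC} \;\geq\; \tilde A \oplus 0_{BC}\,\bigr\},
\end{equation*}
\begin{equation*}
V_{AC}/V_{C} \;=\; \max\bigl\{\,\tilde A \,:\, V_{AC} \;\geq\; \tilde A \oplus 0_{C}\,\bigr\},
\end{equation*}
where the maxima are taken in the L\"owner order on symmetric matrices acting on the $A$-system.

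Next I would observe the elementary fact that if $M \geq N$ as positive operators on $\mathcal{H}_A \oplus \mathcal{H}_B \oplus \mathcal{H}_C$, then the principal submatrix of $M$ on $\mathcal{H}_A \oplus \mathcal{H}_C$ dominates that of $N$, since $v^{\intercal}(M-N)v \geq 0$ in particular for vectors $v$ of the form $(v_A,0,v_C)^{\intercal}$. Applying this to $M = V_{ABC}$ and $N = \tilde A \oplus 0_B \oplus 0_C$ shows that any $\tilde A$ feasible for the first variational problem is automatically feasible for the second: namely $V_{AC} \geq \tilde A \oplus 0_C$.

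Finally I would pick the optimizer $\tilde A^{*} = V_{ABC}/V_{BC}$ in the first problem (well-defined as a maximum by~\eqref{Schur var}). By the previous step, $\tilde A^{*}$ lies in the feasible set of the second problem, and maximality there yields $V_{AC}/V_{C} \geq \tilde A^{*} = V_{ABC}/V_{BC}$, which is precisely~\eqref{INEQ 1}. The only delicate point, and the one I would double-check, is the variational formula~\eqref{Schur var} itself in the possibly degenerate case where $V_{BC}$ or $V_C$ fails to be strictly positive; this is handled by interpreting Schur complements via generalized inverses (consistent with the convention already adopted in the paper) or by a standard $\varepsilon$-regularization $V_{ABC}\to V_{ABC}+\varepsilon\mathds{1}$ followed by taking $\varepsilon\to 0^+$, using continuity of both sides in $V_{ABC}>0$.
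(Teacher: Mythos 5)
Your proposal is correct and follows essentially the same route as the paper: the paper's proof likewise invokes the variational representation~\eqref{Schur var} and notes that $V_{ABC}\geq W_A\oplus 0_{BC}$ implies $V_{AC}\geq W_A\oplus 0_C$, so the first maximum is dominated by the second. Your additional care about degenerate $V_{BC}$ or $V_C$ via generalized inverses or $\varepsilon$-regularization is a sensible refinement but not a substantive departure.
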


\begin{proof}
Since $V_{ABC}\geq W_A \oplus 0_{BC}$ implies $V_{AC}\geq W_A\oplus 0_C$, employing the variational representation we find
$V_{ABC} / V_{BC}\ =\ \max\big\{ W_A:\, V_{ABC}\geq W_A\oplus 0_{BC} \big\}\ \leq\ \max\big\{ W_A:\, V_{AC}\geq W_A\oplus 0_C \big\}\ =\ V_{AC} / V_C$.
\end{proof}

Clearly, taking the determinant of \eqref{INEQ 1} and applying the factorization property of the Schur complement yields the positivity of the log-det conditional mutual information  immediately.

Notice further that the invariance of $V_{AB}/V_B$ under symplectic operations on $B$ and its monotonicity under partial trace, suffice to guarantee its monotonicity under general deterministic (i.e.~trace-preserving) Gaussian channels  $\Gamma_B$ on $B$:
\begin{align*}
(\mathds{1}_A&\oplus \Gamma_B)(V_{AB}) \,\big/\, \Gamma_B (V_B)  \\
&= \big( S_{BC}\,( V_{AB}\oplus \sigma_C)\, S_{BC}^T \big)_{AB} \Big/ \big( S_{BC}\, (V_B\oplus \sigma_C)\, S_{BC}^T \big)_B \\
&\geq \big( S_{BC}\, (V_{AB}\oplus \sigma_C)\, S_{BC}^T \big) \Big/ \big( S_{BC}\, (V_B\oplus \sigma_C)\, S_{BC}^T \big) \\
&= (V_{AB}\oplus \sigma_C) \big/ (V_B\oplus \sigma_C) \\[2pt]
&= V_{AB} / V_B.
\end{align*}
But there is more: perhaps surprisingly,  the Schur complement is also  monotonically increasing under general non-deterministic \textit{classical} (i.e.~non quantum-limited) Gaussian operations on $B$. We recall that any such map acts at the level of CMs as \cite{nogo1,nogo2,nogo3}
\begin{equation}\label{CP gauss}
\Gamma_{B\rightarrow B'}:\ V_B\longmapsto \gamma_{B'} - \delta_{BB'}^T\,\left({\gamma_B+V_B}\right)^{-1}\,\delta_{BB'}\, ,
\end{equation}
where $\gamma_{BB'}=\left(\begin{smallmatrix} \gamma_B & \delta_{BB'} \\ \delta_{BB'}^T & \gamma_{B'} \end{smallmatrix}\right)>0$ is a positive matrix pertaining to a bipartite system $BB'$. If $\gamma_{BB'}$ is also a valid quantum CM (which we will define later), then \eqref{CP gauss} corresponds to a (non-deterministic) completely positive Gaussian channel, but this restricting hypothesis plays no role in stating the following general result.

\begin{thm}[Classical Gaussian maps in the denominator increase Schur complement] \label{CP incr sch}
If $\Gamma_{B\rightarrow B'}$ is a non-deterministic classical Gaussian map as in \eqref{CP gauss}, with $\left(\begin{smallmatrix} \gamma_B & \gamma_{BB'} \\ \gamma_{BB'}^T & \gamma_{B'} \end{smallmatrix}\right)>0$, then
\begin{equation*}
\Gamma_{B\rightarrow B'}(V_{AB}) \big/ \Gamma_{B\rightarrow B'}(V_B)\ \geq\ V_{AB} / V_B\, .
\end{equation*}
\end{thm}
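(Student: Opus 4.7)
The strategy is to compute both sides of the claimed inequality explicitly and reduce the problem to the elementary fact that $\gamma_{BB'}/\gamma_{B'}\geq 0$. First I would pin down the action of $\Gamma_{B\to B'}$ on a joint covariance matrix. Writing $V_{AB}=\left(\begin{smallmatrix} V_A & X \\ X^T & V_B \end{smallmatrix}\right)$ and $\gamma_{BB'}=\left(\begin{smallmatrix} \gamma_B & \delta \\ \delta^T & \gamma_{B'} \end{smallmatrix}\right)>0$, the standard correlated-ancilla dilation behind~\eqref{CP gauss} gives the block form
\begin{equation*}
\Gamma_{B\to B'}(V_{AB})=\begin{pmatrix} V_A - X P X^T & -X P\,\delta \\ -\delta^T P X^T & \gamma_{B'}-\delta^T P\delta \end{pmatrix},\qquad P:=(V_B+\gamma_B)^{-1},
\end{equation*}
whose $B'$-reduction $\gamma_{B'}-\delta^T P\delta$ coincides with $\Gamma_{B\to B'}(V_B)$ as a sanity check. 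Verifying this formula (e.g.\ by checking that it is the unique extension, linear in the off-diagonal block and covariant under congruence on $A$, whose $B'$-reduction matches~\eqref{CP gauss}) is the only moderately delicate step.

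Next I would compute the Schur complement of the output with respect to its $B'$-block. Plugging in the definition~\eqref{Schur definition} yields
\begin{equation*}
\Gamma_{B\to B'}(V_{AB})\big/\Gamma_{B\to B'}(V_B) = V_A - X\bigl[P + P\delta\,(\gamma_{B'}-\delta^T P\delta)^{-1}\delta^T P\bigr]X^T.
\end{equation*}
The bracket is exactly the upper-left corner of $T^{-1}$ for $T:=\left(\begin{smallmatrix} V_B+\gamma_B & \delta \\ \delta^T & \gamma_{B'} \end{smallmatrix}\right)$, so by applying the block-inverse identity~\eqref{inv} on the \emph{other} diagonal block of $T$ one recognizes it as $(T/\gamma_{B'})^{-1}=(V_B+\widetilde{\gamma}_B)^{-1}$, where
\begin{equation*}
\widetilde{\gamma}_B := \gamma_B - \delta\gamma_{B'}^{-1}\delta^T = \gamma_{BB'}/\gamma_{B'}.
\end{equation*}
This simplification collapses the Schur complement of the output to the single clean expression $V_A - X(V_B+\widetilde{\gamma}_B)^{-1}X^T$.

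The conclusion then follows from~\eqref{Schur pos}: since $\gamma_{BB'}>0$ one has $\widetilde{\gamma}_B>0$, hence $V_B+\widetilde{\gamma}_B\geq V_B$, and by operator monotonicity of $t\mapsto -t^{-1}$ also $(V_B+\widetilde{\gamma}_B)^{-1}\leq V_B^{-1}$. Conjugating by $X$ and subtracting from $V_A$ gives
\begin{equation*}
\Gamma_{B\to B'}(V_{AB})\big/\Gamma_{B\to B'}(V_B) = V_A - X(V_B+\widetilde{\gamma}_B)^{-1}X^T \geq V_A - X V_B^{-1} X^T = V_{AB}/V_B,
\end{equation*}
which is the claim. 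I expect the main obstacle to be not the final inequality itself (which is essentially a one-line consequence of $\widetilde{\gamma}_B\geq 0$) but rather the bookkeeping in the first step: correctly identifying the full matrix action of a non-deterministic Gaussian channel on a joint CM, and recognizing $P+P\delta Q^{-1}\delta^T P$ as $(V_B+\widetilde{\gamma}_B)^{-1}$ via the complementary form of the block inverse. Notably, the argument uses only the positivity $\gamma_{BB'}>0$, never the stronger uncertainty-principle constraint that would be required for complete positivity, which is consistent with the ``classical'' scope stated in the theorem.
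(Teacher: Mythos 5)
Your proof is correct, and it arrives at exactly the same key intermediate object as the paper's own argument: both reduce the claim to the identity
\begin{equation*}
\Gamma_{B\to B'}(V_{AB})\big/\Gamma_{B\to B'}(V_B)\;=\;(\gamma_{BB'}+V_{AB})\big/(\gamma_{BB'}+V_B)\;=\;V_A - X\bigl(V_B+\gamma_{BB'}/\gamma_{B'}\bigr)^{-1}X^{\intercal},
\end{equation*}
and then conclude from $\gamma_{BB'}/\gamma_{B'}> 0$. The difference is in how that identity is obtained. The paper first rewrites the channel itself as a Schur complement, $\Gamma_{B\to B'}(V_B)=(\gamma_{BB'}+V_B)/(\gamma_B+V_B)$, and then invokes the quotient property of Schur complements, $(M/A)/(N/A)=M/N$, which makes the proof three lines long but uses that identity as a black box. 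You instead carry out the block computation by hand and recognize the bracket $P+P\delta Q^{-1}\delta^{\intercal}P$ as the $B$-corner of $T^{-1}$ via the two complementary forms of the block-inverse formula \eqref{inv}; this is longer but more self-contained, and it makes transparent that only the classical positivity $\gamma_{BB'}>0$ enters, never the uncertainty-principle constraint — an observation consistent with the theorem's scope. The one step you flag as delicate, the block form of the joint action $\Gamma_{B\to B'}(V_{AB})$, is the standard extension $(\gamma_{BB'}+V_{AB})/(\gamma_B+V_B)$ that the paper also uses implicitly, and your formula for it is correct, so there is no gap there.
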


\begin{proof}
Observing that \eqref{CP gauss} can be rewritten as $\Gamma_{B\rightarrow B'}:\ V_B\longmapsto (\gamma_{BB'}+V_B) \big/ (\gamma_B+V_B)$,
we obtain: 
\begin{align*}
\Gamma_{B\rightarrow B'}&(V_{AB}) \big/ \Gamma_{B\rightarrow B'}(V_B) \\
&=  \big((\gamma_{BB'}+V_{AB}) / (\gamma_B+V_B)\big) \Big/ \big((\gamma_{BB'}+V_B) / (\gamma_B+V_B) \big) \\
&= (\gamma_{BB'}+V_{AB})  \big/ (\gamma_{BB'}+V_B) \geq V_{AB} / V_B,
\end{align*}
where we used  the quotient property of covariance matrices together with the bound $\left.\left(\begin{smallmatrix} A & X \\ X^T & B+\sigma \end{smallmatrix}\right) \Big/ (B+\sigma)\ \geq\ \left(\begin{smallmatrix} A & X \\ X^T & B \end{smallmatrix}\right) \Big/ B\right.$.
\end{proof}

Another useful property is the additivity of ranks under Schur complements:
\begin{equation}
\rk V = \rk A + \rk (V/A) .
\label{rank add}
\end{equation}
For more details on Schur complements and applications thereof in matrix analysis
and beyond, we refer the reader to the book~\cite{SCbook}. \\


Another fundamental tool we shall take from matrix analysis is the concept of \emph{geometric mean} between two positive definite matrices $A,B>0$, usually denoted by $A\# B$~\cite{geomoriginal, Ando79}. As done in~\eqref{Schur var} for the Schur complement, also the geometric mean is most conveniently defined using a variational approach. Namely, one has
\begin{equation}
A\# B \coloneqq \max\{X=X^{\intercal}:\, A\geq XB^{-1} X\}\, . \label{geom var}
\end{equation}
From~\eqref{geom var} it is apparent, how $A\# B$ is covariant with respect to matrix congruence, i.e.
\begin{equation}
\left(SAS^\intercal \right)\# \left( SBS^\intercal \right) = S (A\# B) S^\intercal
\label{geom cov congr}
\end{equation}
for all invertible $S$. Moreover, through standard algebraic manipulations it is possible to write the explicit solution of~\eqref{geom var} as
\begin{equation}
A\# B = A^{1/2} \left( A^{-1/2} B A^{-1/2} \right)^{1/2} A^{1/2} . \label{geom expl}
\end{equation}
An excellent introduction to the theory of matrix means can be found in~\cite[Chapter 4]{Bhatia}. Here, we limit ourselves to briefly discuss an interesting interpretation of the geometric mean. We can turn the manifold of positive definite matrices into a Riemannian manifold by introducing on the tangent space the metric $ds^2\coloneqq \Tr[(A^{-1} dA)^2]$ (sometimes called ``\emph{trace metric}"). It turns out that the geodesic connecting two positive matrices $A$ and $B$ in this metric, parametrised by $t\in [0,1]$, is given by
\begin{equation}
\gamma(t) = A^{1/2} \left( A^{-1/2} B A^{-1/2} \right)^{t} A^{1/2} \eqqcolon A\#_t B,
\label{geom geod}
\end{equation}
sometimes called the \emph{weighted geometric mean}. From this we see in particular that $A\# B$ is nothing but the geodesic midpoint between $A$ and $B$. An easy consequence of the above expression is the determinantal identity
\begin{equation}
\det (A\#_t B) = (\det A)^{1-t} (\det B)^t .
\label{det geom}
\end{equation}
For more on this connection between geometric mean and Riemannian metric, see~\cite[Chapter 6]{Bhatia}.

\section{SSA saturation and exact recovery}
\label{sec exact recov}
Now we turn to studying the conditions under which the SSA inequality~\eqref{SSA}
is saturated with equality. A necessary and sufficient condition was already found
in~\cite{Ando09}~\footnote{For a comprehensive discussion, see~\cite{PetzBook}},
but here we present new proofs as well as alternative formulations which may provide new insights.

Let us start by fixing our notation concerning classical Gaussian channels,
whose actions can be described as follows. We denote the input random variable
by $X$, and consider an independent Gaussian variable $Z \sim P_K$, where
$P_{K}$ is a normal distribution with covariance matrix $K$ and zero mean.
Then the output variable $Y$ of the Gaussian channel $N$ is given by
$N(X) \coloneqq Y \coloneqq H X + Z$ for some matrix $H$ of appropriate size. At the level of covariance matrices this
translates to the description given in the last section, which we write here slightly different as
\begin{equation}
  N : V \longmapsto V' = H V H^\intercal + K ,
  \label{N}
\end{equation}
where the only constraint to be obeyed is $K\geq 0$.

The following theorem gathers some notable facts concerning log-det conditional mutual information, and provides a neat example of how useful the interplay between matrix analysis
and information theory with Gaussian random variables can be. We are going to employ these results extensively in the remainder of this chapter.
\begin{thm}
  \label{thm I cond geom}
  For all positive, tripartite matrices $V=V_{ABC}>0$, the following identities hold true:
  \begin{align}
    I_{M}(A:B|C)_{V} &= I_{M}(A:B)_{V_{ABC}/V_C} ,     \label{I cond Schur}\\
    I_{M}(A:B|C)_{V} &= I_{M}(A:B)_{V^{-1}} .                \label{I cond inv}
  \end{align}
  Furthermore, for all pairs of positive definite matrices $V_{AB},W_{AB}>0$, the log-det mutual information is convex on the geodesic connecting
  them as in~\eqref{geom geod}, i.e.
  \begin{equation}
  I_M(A:B)_{V\#_t W} \leq (1-t) I_M(A:B)_V + t I_M(A:B)_W . \label{I conv geod}
  \end{equation}
\end{thm}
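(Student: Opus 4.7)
The plan is to prove the three statements in order, each relying only on properties of the Schur complement and of the geometric mean collected in the preceding section.

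For identity \eqref{I cond Schur} I would start by partitioning $V_{ABC}$ as $\lmatrix V_{AB} & Q \\ Q^{\intercal} & V_C \rmatrix$ and expanding the Schur complement $V_{ABC}/V_C = V_{AB} - Q V_C^{-1} Q^{\intercal}$ explicitly as a $2\times 2$ block matrix on $AB$. A short calculation shows that its $A$- and $B$-principal submatrices coincide with $V_{AC}/V_C$ and $V_{BC}/V_C$ respectively. Combining this with the factorization formula \eqref{det factor} applied to each of $V_{ABC}$, $V_{AC}$, and $V_{BC}$ relative to $V_C$ reduces the identity to a cancellation of $M(V_C)$ terms. For identity \eqref{I cond inv} the key is to read off the principal $A$-, $B$-, and $AB$-submatrices of $V^{-1}$ via the block-inverse formula \eqref{inv}, applied successively with the partitions $(AB\mid C)$, $(A\mid BC)$, and $(B\mid AC)$. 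Each application gives $(V^{-1})_X = (V/V_{X^c})^{-1}$, so that $(V^{-1})_{AB}=(V/V_C)^{-1}$, $(V^{-1})_A=(V/V_{BC})^{-1}$, and $(V^{-1})_B=(V/V_{AC})^{-1}$. Substituting these into $I_M(A:B)_{V^{-1}}$ and invoking \eqref{det factor} a second time converts the expression into $M(V_{AC})+M(V_{BC})-M(V_{ABC})-M(V_C)=I_M(A:B|C)_V$.

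The convexity \eqref{I conv geod} hinges on a different observation: formula \eqref{det geom} implies
\begin{equation*}
M(V\#_t W) \;=\; (1-t)\,M(V) + t\,M(W),
\end{equation*}
so that the term $-M(V)$ in $I_M(A:B)_V=M(V_A)+M(V_B)-M(V)$ is in fact \emph{affine} along the geodesic. Thus it suffices to prove the two separate inequalities $M((V\#_t W)_A)\leq(1-t)M(V_A)+t\,M(W_A)$ and the analogous one for $B$. Since $M(V_A\#_t W_A)=(1-t)M(V_A)+t\,M(W_A)$ by \eqref{det geom}, each of these would follow at once from the compression inequality
\begin{equation*}
(V\#_t W)_A \;\leq\; V_A \#_t W_A.
\end{equation*}

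The hard part will be establishing this compression inequality, and I plan to derive it from joint concavity of the weighted geometric mean, a well-known consequence of the operator concavity of $x\mapsto x^t$ on $[0,1]$. The trick is to encode the block-diagonal part of any $V>0$ as $V_A\oplus V_B=\tfrac{1}{2}\big(V+UVU^{\intercal}\big)$, with the involution $U=\mathds{1}_A\oplus(-\mathds{1}_B)$. Combining joint concavity of $\#_t$ with the congruence covariance \eqref{geom cov congr} then produces
\begin{equation*}
(V_A\#_t W_A)\oplus(V_B\#_t W_B) \;=\; (V_A\oplus V_B)\#_t(W_A\oplus W_B) \;\geq\; (V\#_t W)_A\oplus(V\#_t W)_B,
\end{equation*}
from which the compression inequality is read off block by block. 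Taking determinants of both sides closes the argument for \eqref{I conv geod}.
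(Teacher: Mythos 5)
Your proofs of \eqref{I cond Schur} and \eqref{I cond inv} follow essentially the same route as the paper: identify the $A$- and $B$-principal submatrices of $V_{ABC}/V_C$ (respectively the principal submatrices of $V^{-1}$ via the block-inverse formula \eqref{inv}) and cancel everything using the determinant factorisation \eqref{det factor}; both computations check out. For \eqref{I conv geod} the two arguments reduce to the same key compression inequality $(V\#_t W)_A\leq V_A\#_t W_A$, but you reach it by a genuinely different route. The paper applies Ando's monotonicity of the geometric mean under positive maps to the compression $\Phi(X)=\Pi_A X\Pi_A^{\intercal}$, extending from $\#$ to $\#_t$ by a dyadic-midpoint-plus-continuity argument. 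You instead invoke joint concavity of $\#_t$ together with congruence covariance \eqref{geom cov congr} and the pinching identity $V_A\oplus V_B=\tfrac12\left(V+UVU^{\intercal}\right)$ with $U=\mathds{1}_A\oplus(-\mathds{1}_B)$, and then read the Loewner inequality off blockwise. Both inputs are standard, and in fact monotonicity under compressions is classically derived from joint concavity by exactly your averaging trick, so the two proofs are essentially equivalent in strength; yours avoids the dyadic approximation step at the cost of citing joint concavity of the \emph{weighted} mean for all $t\in[0,1]$, rather than only the midpoint case that the variational formula \eqref{geom var} yields for free. Your observation that $M(V\#_t W)$ is affine along the geodesic by \eqref{det geom}, so that only the marginal terms need to be estimated, is the same one the paper uses implicitly when it takes determinants at the end. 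The proposal is correct.
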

\begin{proof}
Let us start by showing~\eqref{I cond Schur}. Using repeatedly the determinant factorisation property~\eqref{det factor}, we find
\begin{align*}
    & I_{M}(A:B)_{V_{ABC}/V_C} \\[0.8ex]
    &\quad = \frac12 \ln \frac{\det (V_{AB}/V_C) \det (V_{BC}/V_C)}{\det (V_{ABC}/V_C)} \\[0.8ex]
    &\quad = \frac12 \ln \frac{(\det V_{AB})(\det V_C)^{-1} (\det V_{BC})(\det V_C)^{-1}}{(\det V_{ABC})(\det V_C)^{-1}} \\[0.8ex]
    &\quad= \frac12 \ln \frac{(\det V_{AB}) (\det V_{BC} )}{(\det V_{ABC})(\det V_C)} \\[0.8ex]
    &\quad= I_M(A:B|C)_V .
\end{align*}
We now move to~\eqref{I cond inv}. The block inverse formulae~\eqref{inv} give us
\begin{align*}
  (V^{-1})_{AB} &= (V_{ABC}/V_{C})^{-1} ,  \\
  (V^{-1})_{A}  &= (V_{ABC}/V_{BC})^{-1} , \\
  (V^{-1})_{B}  &= (V_{ABC}/V_{AC})^{-1} .
\end{align*}
Putting everything together we find
\begin{align*}
    & I_{M}(A:B)_{V^{-1}} \\[0.8ex]
    &\quad = \frac12 \ln\frac{\det (V^{-1})_{A}\det (V^{-1})_{B}}{\det (V^{-1})_{AB}} \\[0.8ex]
    &\quad = \frac12 \ln\frac{\det (V_{ABC}/V_{BC})^{-1} \det (V_{ABC}/V_{AC})^{-1}}{\det (V_{ABC}/V_{C})^{-1}} \\[0.8ex]
    &\quad = \frac12 \ln\frac{\det (V_{ABC}/V_{C})}{\det (V_{ABC}/V_{BC}) \det (V_{ABC}/V_{AC})} \\[0.8ex]
    &\quad = \frac12 \ln\frac{(\det V_{ABC})(\det V_{C})^{-1}}
                        {(\det V_{ABC})(\det V_{BC})^{-1} (\det V_{ABC})(\det V_{AC})^{-1}} \\[0.8ex]
    &\quad = \frac12 \ln \frac{\det V_{AC}\det V_{BC}}{\det V_{ABC}\det V_{C}} \\[0.8ex]
    &\quad = I_{M}(A:B|C)_V ,
\end{align*}
which is what we wanted to show.

Finally, let us consider~\eqref{I conv geod}. A preliminary observation uses the monotonicity of the geometric mean under positive maps~\cite[Theorem~3]{Ando79}, written as $\Phi(V\# W)\leq \Phi(V)\# \Phi(W)$. Iterative applications of this inequality show that the same monotonicity property holds also for the weighted geometric mean~\eqref{geom geod} when $t$ is a dyadic rational, and hence (by continuity) for all $t\in [0,1]$. This standard reasoning is totally analogous to the one normally used to show that mid-point convexity and convexity are equivalent for continuous functions. Applying this to the positive map $\Phi(X)\coloneqq \Pi_A X \Pi_A^{\intercal}$, where $\Pi_A$ is the projector onto the $A$ components, yields $(V\#_t W)_A = \Pi_A (V\#_t W) \Pi_A^\intercal \leq V_A \#_t W_A$. Taking the determinant of both sides of the latter inequality and using the explicit formula~\eqref{det geom} for the right hand  side, we obtain $\det \left(V\#_t W \right)_A \leq \det\left(V_A \#_t W_A\right) = (\det V_A)^{1-t} (\det W_A)^t$. Together with the analogous inequality for the $B$ system, this gives
\begin{align*}
   & I_M(A:B)_{V\#_t W} \\[0.8ex]
   &\quad = \frac12 \ln \frac{\left(\det (V\#_t W)_A \right) \left( \det (V \#_t W)_B \right)}{\det (V\#_t W)_{AB}} \\[0.8ex]
   &\quad \leq \frac12 \ln \frac{(\det V_A)^{1-t} (\det W_A)^t (\det  V_B)^{1-t}  (\det W_B)^t}{(\det V_{AB})^{1-t} (\det W_{AB})^t} \\[0.8ex]
   &\quad = (1-t) I_M(A:B)_V + t I_M(A:B)_W ,
\end{align*}
concluding the proof.
\end{proof}

\begin{rem}
Inequality~\eqref{I conv geod} is especially notable because in general the log-det mutual information is not convex over the set of positive matrices. However, it is convex when restricted to geodesics in the trace metric, as we have just shown. Moreover, we note in passing that an inequality analogous to~\eqref{I conv geod} does not seem to hold for the log-det conditional mutual information.
\end{rem}

We now turn to the main result of this section. 

\begin{thm}
\label{thm satur}
  For an arbitrary $V_{ABC}>0$ written in block form as in~\eqref{global CM}, the following are equivalent:
  \begin{enumerate}
    \item $I_{M}(A:B|C)_{V}=0$, i.e.~\eqref{SSA} is saturated;
    \item $V_{ABC}/V_{BC}=V_{AC}/V_{C}$, i.e.~\eqref{INEQ 1} is saturated;
    \item $(V^{-1})_{AB}=(V^{-1})_{A}\oplus (V^{-1})_{B}$;
    \item $X=YC^{-1}Z^{\intercal}$ (see~\cite{Ando09} or~\cite[Thm.~4.49]{PetzBook});
    \item there is a classical Gaussian channel $N_{C\rightarrow BC}$ such
          that $(I_A\oplus N_{C\rightarrow BC})(V_{AC})=V_{ABC}$.
  \end{enumerate}
\end{thm}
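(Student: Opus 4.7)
My plan is to prove the theorem by establishing the five equivalences as a cycle, but more transparently I will prove (1)$\Leftrightarrow$(2), (3)$\Leftrightarrow$(4), (2)$\Leftrightarrow$(4), and (4)$\Leftrightarrow$(5) separately. All of them reduce, in one way or another, to the algebraic identity $X=YC^{-1}Z^{\intercal}$, and the main tools are the factorization formula \eqref{det factor}, the Crabtree--Haynsworth quotient formula for Schur complements, the block-inverse formula \eqref{inv}, and Theorem~\ref{pt incr sch}.

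For (1)$\Leftrightarrow$(2), I would apply~\eqref{det factor} twice to rewrite
\[ I_{M}(A:B|C)_{V}=\tfrac{1}{2}\ln\frac{\det(V_{AC}/V_{C})}{\det(V_{ABC}/V_{BC})} . \]
By Theorem~\ref{pt incr sch} we already have the matrix inequality $V_{ABC}/V_{BC}\leq V_{AC}/V_{C}$; thus vanishing of $I_{M}(A:B|C)_{V}$ is equivalent to the two positive definite Schur complements having equal determinants, which in turn forces them to coincide as matrices (for if $0<P\leq Q$ then $\det P=\det Q$ implies $P=Q$, as can be seen by diagonalizing $Q^{-1/2}PQ^{-1/2}\leq\mathds{1}$).

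For (3)$\Leftrightarrow$(4), I would use the block-inverse formula~\eqref{inv} applied to the partition $V=\left(\begin{smallmatrix}V_{AB}& Q\\ Q^{\intercal}& C\end{smallmatrix}\right)$ with $Q=\left(\begin{smallmatrix}Y\\ Z\end{smallmatrix}\right)$ to obtain $(V^{-1})_{AB}=(V_{AB}-QC^{-1}Q^{\intercal})^{-1}$. Since $V_{AB}-QC^{-1}Q^{\intercal}>0$, its inverse is block diagonal on $A,B$ if and only if the matrix itself is; writing out the off-diagonal block gives precisely $X-YC^{-1}Z^{\intercal}$. For the equivalence (2)$\Leftrightarrow$(4), I would apply the quotient property $V_{ABC}/V_{BC}=(V_{ABC}/C)/(V_{BC}/C)$ to compute
\[ V_{ABC}/V_{BC}=(A-YC^{-1}Y^{\intercal})-(X-YC^{-1}Z^{\intercal})(B-ZC^{-1}Z^{\intercal})^{-1}(X-YC^{-1}Z^{\intercal})^{\intercal} , \]
so that comparing with $V_{AC}/V_{C}=A-YC^{-1}Y^{\intercal}$ shows that (2) holds iff the additional positive semidefinite correction vanishes, which (since $B-ZC^{-1}Z^{\intercal}>0$) happens exactly when $X=YC^{-1}Z^{\intercal}$.

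The most substantial step is (4)$\Leftrightarrow$(5), the recovery characterization. For ``$\Rightarrow$'', I would construct the candidate channel $N_{C\to BC}$ explicitly by acting on the covariance matrix level as in~\eqref{N}, with $\tilde H=\left(\begin{smallmatrix}ZC^{-1}\\ \mathds{1}_{C}\end{smallmatrix}\right)$ and noise $\tilde K=(V_{BC}/V_{C})\oplus 0_{C}$, which is positive semidefinite by \eqref{Schur pos}. A direct computation of $(I_{A}\oplus N_{C\to BC})(V_{AC})=\left(\begin{smallmatrix}\mathds{1}_{A}& 0\\ 0& \tilde H\end{smallmatrix}\right)V_{AC}\left(\begin{smallmatrix}\mathds{1}_{A}& 0\\ 0& \tilde H\end{smallmatrix}\right)^{\intercal}+(0_{A}\oplus\tilde K)$ then produces the block matrix whose $A$--$B$ off-diagonal entry is $YC^{-1}Z^{\intercal}$; hypothesis~(4) identifies this with $X$ and the remaining blocks match $V_{ABC}$ directly. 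For the converse ``$\Leftarrow$'', I would note that the channel preserves the $A$--$C$ block, so $Y$ and $C$ remain unchanged; the $A$--$B$ block after recovery must take the form $Y\tilde H_{B}^{\intercal}$ for some matrix $\tilde H_{B}$, and the $B$--$C$ block must be $\tilde H_{B}C+\tilde K_{BC}$, with the positivity $\tilde K\geq 0$ and a vanishing $C$--$C$ noise block (forced by $C$ being preserved) pinning down $\tilde H_{B}=ZC^{-1}$, from which $X=Y\tilde H_{B}^{\intercal}=YC^{-1}Z^{\intercal}$ follows. The main obstacle here is handling potential non-uniqueness of the channel: I would sidestep this by arguing that the $C$-preservation condition is without loss of generality because the log-det conditional mutual information (hence condition~(1)) is invariant under symplectic/affine transformations on $C$, so any witnessing channel can be post-processed into the canonical form above.
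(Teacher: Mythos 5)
Your equivalences (1)$\Leftrightarrow$(2), (2)$\Leftrightarrow$(4), (3)$\Leftrightarrow$(4) and the construction for (4)$\Rightarrow$(5) are all correct. The first coincides with the paper's argument; for (2)$\Rightarrow$(4) the paper instead works from the variational representation~\eqref{Schur var} and a choice of test vectors, and it reaches (3) by routing through the identity $I_M(A:B|C)_V=I_M(A:B)_{V^{-1}}$ of Theorem~\ref{thm I cond geom}, whereas you compute directly with the quotient formula and the block-inverse formula~\eqref{inv}; your route is, if anything, more economical. The recovery map you write down for (4)$\Rightarrow$(5) is exactly the paper's $(H_R,K_R)$ from~\eqref{gaus Petz eq2}.

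The genuine gap is in (5)$\Rightarrow$(4). Write the channel as $W\mapsto \tilde H W\tilde H^{\intercal}+\tilde K$ with $\tilde H$ having blocks $\tilde H_B,\tilde H_C$ and $\tilde K\geq 0$ having blocks $\tilde K_B,\tilde K_{BC},\tilde K_C$. Matching the output of $I_A\oplus N_{C\to BC}$ to $V_{ABC}$ gives $Y\tilde H_C^{\intercal}=Y$, $\tilde H_C C\tilde H_C^{\intercal}+\tilde K_C=C$, $Y\tilde H_B^{\intercal}=X$ and $\tilde H_B C\tilde H_C^{\intercal}+\tilde K_{BC}=Z$. Neither of the first two equations forces $\tilde H_C=\mathds{1}$ or $\tilde K_C=0$: if $Y$ is rank-deficient, $\tilde H_C^{\intercal}=\mathds{1}+P$ with $YP=0$, $P\neq 0$, is admissible. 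So ``$C$ being preserved'' does not pin down $\tilde H_B=ZC^{-1}$, and your proposed remedy (invariance of $I_M$ under transformations of $C$) does not address the actual problem, which is that the output covariance matrix does not determine the channel parameters. The structural argument can be salvaged — sandwich the $C$-block identity between $YC^{-1}$ and $C^{-1}Y^{\intercal}$, use $YP=0$ to isolate two positive semidefinite terms that must each vanish, and then use $\ker\tilde K_C\subseteq\ker\tilde K_{BC}$ from $\tilde K\geq 0$ to kill the $\tilde K_{BC}$ contribution — but the clean route is the paper's: by Theorem~\ref{CP incr sch} and the discussion preceding it, any classical Gaussian map applied to the conditioning block can only increase the Schur complement, so the existence of a recovery channel gives $V_{ABC}/V_{BC}\geq V_{AC}/V_C$, which together with Theorem~\ref{pt incr sch} yields statement (2), already shown by you to be equivalent to (4).
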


\begin{proof} $\\[-3ex]$
\begin{description}
\item[$1\!\Leftrightarrow\! 2$.] Saturation of~\eqref{SSA} and~\eqref{INEQ 1} are equivalent concepts, since it is very easy to verify that if $M\geq N>0$ then $M=N$ if and only if $\det M=\det N$.

\item[$1\!\Leftrightarrow\! 3$.] It is well-known that $W_{AB}>0$ satisfies $\det W_{AB} = \det W_{A} \det W_{B}$ iff its off-diagonal block is zero, i.e. iff $W_{AB}=W_{A}\oplus W_{B}$. For instance, this can be easily seen as a consequence of~\eqref{det factor}. Thanks to Theorem~\ref{thm I cond geom}, identity~\eqref{I cond inv}, applying this observation with $W=V^{-1}$ yields the claim.

\item[$2\!\Rightarrow\! 4$.] This is known in linear algebra~\cite{Ando09}, but for
the sake of completeness we provide a different proof that fits more with the
spirit of the present work. Namely, we see that the variational representation of Schur complements~\eqref{Schur var} guarantees that~\eqref{INEQ 1} is saturated if and only if
\begin{equation}
\begin{split}
  V_{ABC} - (V_{AC}/V_C) \oplus 0_{BC}\
    &= \lmatrix A-V_{AC}/V_C & X & Y \\
                      X^\intercal  & B & Z \\
                      Y^\intercal  & Z^\intercal  & C \rmatrix \\[0.8ex]
    &= \lmatrix YC^{-1}Y^\intercal  & X & Y \\
                      X^\intercal  & B & Z \\
                      Y^\intercal  & Z^\intercal  & C \rmatrix \\[0.8ex]
    &\geq 0\, .
\end{split}
  \label{eq cond eq1}
\end{equation}
A necessary condition for~\eqref{eq cond eq1} to hold is obtained by taking suitable matrix elements:
\begin{equation*}
\begin{split}
  0 &\leq \lmatrix v \\ w \\ -C^{-1}Y^\intercal  v \rmatrix^{\intercal}
\lmatrix YC^{-1}Y^\intercal  & X & Y \\
                           X^\intercal  & B & Z \\
                           Y^\intercal  & Z^\intercal  & C \rmatrix
\lmatrix v \\ w \\ -C^{-1}Y^\intercal  v \rmatrix \\[0.8ex]
    &=    2 v^\intercal  (X-YC^{-1}Z^\intercal ) w + w^\intercal  B w .
\end{split}
\end{equation*}
This can only be true for all $v$ and $w$ if $X=YC^{-1}Z^\intercal $.
Moreover, this latter condition (together with the positivity of $V_{ABC}$)
is enough to guarantee that~\eqref{eq cond eq1} is satisfied. Indeed, we can write
\begin{align*}
  \lmatrix YC^{-1}Y^\intercal  & YC^{-1}Z^\intercal  & Y \\
                  ZC^{-1}Y^\intercal  & B & Z \\
                  Y^\intercal  & Z^\intercal  & C \rmatrix &= \lmatrix 0 & & \\ & B-ZC^{-1}Z^\intercal  & \\ & & 0 \rmatrix + \lmatrix YC^{-\frac12} \\ ZC^{-\frac12} \\ C^{\frac12} \rmatrix \lmatrix YC^{-\frac12} \\ ZC^{-\frac12} \\ C^{\frac12} \rmatrix^\intercal \\[0.8ex]
    & \geq 0 ,
\end{align*}
where $B-ZC^{-1}Z^\intercal \geq 0$ follows from
$\left(\begin{smallmatrix} B & Z \\ Z^\intercal & C \end{smallmatrix}\right) \geq 0$.

\item[$4\!\Rightarrow\! 5$.] If in~\eqref{N} we define
\begin{equation}\begin{split}
  H &= H_R \coloneqq \begin{pmatrix} \mathds{1} & 0 \\ 0 & ZC^{-1} \\ 0 & \mathds{1} \end{pmatrix} \text{ and}\\[0.8ex]
  K &= K_R \coloneqq \begin{pmatrix} 0 & & \\ & B-ZC^{-1}Z^\intercal  & \\ & & 0 \end{pmatrix} ,
  \label{gaus Petz eq2}
\end{split}\end{equation}
we directly obtain 
\begin{align*}
  (I_A\oplus N_{C\rightarrow BC}) (V_{AC}) &= H_R \lmatrix A & X \\ X^\intercal  & C \rmatrix H_R^\intercal + K_R  \\[0.8ex]
    &= \lmatrix A & X & Y \\ X^\intercal  & B & Z \\ Y^\intercal  & Z^\intercal  & C \rmatrix  \\[0.8ex]
    &= V_{ABC} ,
\end{align*}
provided that $X=YC^{-1}Z^{\intercal}$. We will see in the next section
that this map is nothing but a specialisation to the Gaussian case of a
general construction known as transpose channel, or Petz recovery map.

\item[$5\!\Rightarrow\! 2$.] From Theorem \ref{CP incr sch} it is clear that the equality in~\eqref{INEQ 1}
is a necessary condition for the existence of a Gaussian recovery map $N_{C\rightarrow BC}$.
\end{description}
\end{proof}

\section{Gaussian recoverability}
\label{sec:Gaussian-recov}
Here, we discuss the role of some well-known remainder terms for inequalities
of the form~\eqref{SSA}. In the setting of finite dimensional quantum states, we have discussed such inequalities in detail in Chapter~\ref{recoverability}.  
The much simpler classical reasoning (with a better bound) was presented in~\cite{LW14}.
We will translate these results into the Gaussian setting
in order to find an explicit expression for a remainder term to be added to~\eqref{SSA}.

For classical probability distributions $p$ and $q$ over a discrete alphabet, the following inequality , which improves on the
monotonicity of the relative entropy under channels, was shown in~\cite{LW14}:
\begin{equation}
  D(p\|q) - D(Np\|Nq) \geq D\left( p \| R N p \right) .
  \label{recov}
\end{equation}
Here, $N = (N_{ji})$ is any stochastic map (channel)
and the action of the Petz recovery map~\cite{PetzBook,BarnumKnill}
$R=R_{q,N}$ on an input distribution $r$ is uniquely
defined via the requirement that $N_{ji}q_i = R_{ij} (Nq)_j$ for all $i$ and $j$.
Explicitly,
\begin{equation}
  (R_{q,N}\, r)_i \coloneqq \sum_j \frac{q_i N_{ji}}{(Nq)_j} r_j .
  \label{Petz}
\end{equation}
Observe that $R_{q,N}$ is a bona fide channel, since
\begin{equation*}
  \sum_i (R_{q,N})_{ij} = \sum_i \frac{q_i N_{ji}}{(Nq)_j}
                        = \frac{(Nq)_j}{(Nq)_j}
                        = 1 .
\end{equation*}
In analogy to the quantum state case, we will call the right hand side of~\eqref{recov} the
\textit{relative entropy of recovery}. The proof of~\eqref{recov} is a simple
application of the concavity of the logarithm, and works as follows
\begin{align}
  D\left(p\|R_{q,N}Np\right)
     &= \sum_i p_i \Big(\ln p_i - \ln (R_{q,N}Np)_i \Big) \nonumber\\[0.8ex]
     &= \sum_i p_i \Big(\ln p_i - \ln \sum_j \frac{q_i\,N_{ji}}{(Nq)_j}\,(Np)_j \Big) \label{petz eq1} \\
     &\leq \sum_i p_i \Big(\ln p_i - \sum_j N_{ji} \ln \frac{q_i}{(Nq)_j}\,(Np)_j \Big) \label{petz eq2} \\
     &= D(p\|q) - D\left(Np\|Nq\right) . \nonumber
  \label{petz eq2}
\end{align}

Although we wrote out the proof only for random variables taking values in a discrete
alphabet, all of the above expressions make perfect sense also in more general cases,
e.g. when $i$ and $j$ are multivariate real variables.
If $N$ is a classical Gaussian channel acting as in~\eqref{N}, it
can easily be verified that the `transition probabilities' $N(x,y)$ satisfying
\begin{equation}
  (Np)(x) = \int dy\, N(x,y) p(y)
\end{equation}
take the form
\begin{equation}
  N(x,y) = \frac{e^{-\frac{1}{2} (x-Hy)^\intercal K^{-1} (x-Hy)}}{\sqrt{(2\pi)^n\det K}} .
  \label{N Gauss}
\end{equation}

Following again~\cite{LW14}, we observe that if the output of the random channel
$N$ is a deterministic function of the input, then~\eqref{recov} is always
saturated with equality.
This can be seen by noticing that in that case for all $i$ there is only one
index $j$ such that $N_{ji}\neq 0$ (and so $N_{ji}=1$). Therefore, the step
from~\eqref{petz eq1} to~\eqref{petz eq2} is an equality. There is a very special
case when this remark is useful. Consider a triplet of random variables $XYZ$,
distributed according to $p(xyz)$, a second probability distribution $q(xyz)=p(x)p(yz)$ 
and the channel $N$ consisting of discarding $Y$. Obviously, in this case the
output is a deterministic function of the input. It is easily seen that the
reconstructed global probability distribution $R_{q,N}Np$ is
\begin{equation}
  \tilde{p}(xyz) = p(xz) p(y|z) .
\end{equation}
Then the saturation of~\eqref{recov} allows us to write
\begin{equation}
  I(X:Y|Z) = D(p\|q) - D(Np\|Nq) = D(p\|\tilde{p}) .
  \label{I rel ent}
\end{equation}

\paragraph{Gaussian Petz recovery map}\label{gaus Petz}
From now on, we will consider the case in which $N$ is a {classical Gaussian channel} transforming covariance matrices according to the rule~\eqref{N}. As can be easily verified, if $q$ is also a multivariate Gaussian distribution, then $R_{q,N}$ becomes a {classical Gaussian channel} as well. We compute its action in the case we are mainly interested in, that is, when the left--hand side of~\eqref{recov} corresponds to the difference of the two sides of~\eqref{SSA}, and verify that it coincides with the recovery map introduced in Section~\ref{sec exact recov} (via the general action~\eqref{N} with the substitutions~\eqref{gaus Petz eq2}).

\begin{prop}
  \label{prop Petz}
  Let $q$ be a tripartite Gaussian probability density with zero mean and covariance matrix
  \begin{equation*}
    V_A\oplus V_{BC} = \begin{pmatrix} A & 0 & 0 \\ 0 & B & Z \\ 0 & Z^\intercal & C \end{pmatrix} ,
  \end{equation*}
  and let the channel $N$ correspond to the action of discarding the $B$ components, i.e.
  $H = \Pi_{AC}
   = \left( \begin{smallmatrix} \mathds{1} & 0 & 0 \\ 0 & 0 & \mathds{1} \end{smallmatrix}\right)$
  and $K=0$ in~\eqref{N}. Then, the action $C\rightarrow BC$ of the Petz recovery
  map~\eqref{Petz} on Gaussian variables with zero mean can be written at the level
  of covariance matrices as~\eqref{N}, where $H_R$ and $K_R$ are given by~\eqref{gaus Petz eq2}.
\end{prop}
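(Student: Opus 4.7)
The plan is to compute the Petz recovery map directly from the classical definition~\eqref{Petz}, and then read off the affine action on covariance matrices. The main subtlety is that the channel $N$ is deterministic (all the noise $K$ is zero), so the transition density is a Dirac delta rather than the Gaussian form~\eqref{N Gauss}; to avoid trouble I would bypass any limiting argument and work directly with conditional densities.

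Writing $x = (a,b,c)$ and $y = (a',c')$, the conditional law of the output of $N$ given the input is $N(a',c'|a,b,c) = \delta(a'-a)\delta(c'-c)$. Plugging this into the continuous analogue of~\eqref{Petz} collapses the $y$-integral and yields
\begin{equation*}
(R_{q,N}\,r)(a,b,c) \;=\; \frac{q(a,b,c)}{q_{AC}(a,c)}\,r(a,c) \;=\; q(b\,|\,a,c)\,r(a,c),
\end{equation*}
where $q_{AC}$ is the $AC$-marginal of $q$. The key observation is that the covariance of $q$ is block-diagonal, $V_A \oplus V_{BC}$, so under $q$ the variable $A$ is independent of $(B,C)$; hence $q(b|a,c) = q(b|c)$. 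For the Gaussian pair $(B,C)$ with covariance $\bigl(\begin{smallmatrix} B & Z \\ Z^\intercal & C\end{smallmatrix}\bigr)$, the standard conditioning formula gives that $q(\,\cdot\,|c)$ is Gaussian with mean $ZC^{-1}c$ and covariance equal to the Schur complement $V_{BC}/C = B - Z C^{-1} Z^\intercal$.

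Translating this to the level of random variables: if $(A',C') \sim r$, then $R_{q,N}(r)$ is the law of $(A',B',C')$ with $B' = Z C^{-1} C' + \xi$, where $\xi \sim \mathcal{N}(0,\,B - Z C^{-1} Z^\intercal)$ is drawn independently of $(A',C')$. This is precisely an affine Gaussian map of the form~\eqref{N}: the linear part sends the column $(a,c)$ to $(a,\,ZC^{-1}c,\,c)$, which is the action of $H_R$ in~\eqref{gaus Petz eq2}, and the additive noise has covariance supported on the $B$-block equal to $B - Z C^{-1} Z^\intercal$, which is $K_R$. Reading off the covariance transformation $V'_{AC} \mapsto H_R V'_{AC} H_R^\intercal + K_R$ then completes the proof.

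The only hard part, if any, is the handling of the deterministic channel; the rest is routine Gaussian conditioning. Since the derivation also produces the map in terms of the conditional distribution $q(b|c)$, it makes manifest the well-known intuition that the Petz recovery map simply re-prepares the lost subsystem by resampling from its conditional distribution under the reference state $q$ — here expressed compactly through the Schur complement $V_{BC}/C$.
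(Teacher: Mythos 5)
Your proof is correct, but it takes a genuinely different route from the paper's. The paper treats the Petz map~\eqref{Petz} as a composition of three Gaussian operations --- pointwise division by $q$, the transpose channel $N^\intercal$ acting on inverse covariance matrices via~\eqref{N^T}, and pointwise multiplication by $q$ --- which yields the intermediate expression $\sigma_{AC}\mapsto \bigl(V_A^{-1}\oplus V_{BC}^{-1}+(\sigma_{AC}^{-1}-V_A^{-1}\oplus V_C^{-1})\oplus 0_B\bigr)^{-1}$, and then grinds this into the canonical form $H_R\sigma_{AC}H_R^\intercal+K_R$ via the Woodbury identity~\eqref{Wood}. You instead exploit the fact that $N$ is deterministic, so the Petz formula collapses pointwise to $(R_{q,N}r)(a,b,c)=q(b\,|\,a,c)\,r(a,c)$, and the block-diagonal structure $V_A\oplus V_{BC}$ of $q$ reduces this to $q(b\,|\,c)\,r(a,c)$; standard Gaussian conditioning then hands you $H_R$ and $K_R$ directly, with $K_R$'s nonzero block appearing as the conditional covariance $V_{BC}/C=B-ZC^{-1}Z^\intercal$. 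Your argument is shorter and makes the ``resample the lost subsystem from its conditional under $q$'' structure manifest --- it is really the Gaussian instance of the paper's own earlier observation that for deterministic channels the recovered density is $p(xz)p(y|z)$, cf.~the discussion around~\eqref{I rel ent}. What the paper's heavier computation buys is a template that does not rely on $N$ being deterministic: the three-step decomposition and the Woodbury manipulation would go through for a general Gaussian channel $N$ with $K\neq0$, where your delta-function collapse is unavailable. Both proofs are valid for the proposition as stated.
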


\begin{proof}
The Petz recovery map~\eqref{Petz} is a composition of three operations: first the pointwise division by a Gaussian distribution, then the transpose of a deterministic channel, and eventually another pointwise Gaussian multiplication. It should be obvious from~\eqref{gauss} that a pointwise multiplication by a Gaussian distribution with covariance matrix $A$ is a Gaussian (non--deterministic) channel that leaves the mean vector invariant and acts on covariance matrices as $V\mapsto V'=(V^{-1}+A^{-1})^{-1}$. Furthermore, it can be proven that the transpose $N^\intercal$ of the channel $N$ in~\eqref{N} sends Gaussian variables with zero mean to other Gaussian variables with zero mean, while on the inverses of the covariance matrices it acts as
\begin{equation}
  N^\intercal : V^{-1} \longmapsto\ (V')^{-1} = H^\intercal (V+K)^{-1} H .
  \label{N^T}
\end{equation}
A way to prove the above equation is by using~\eqref{N Gauss} to directly compute the action of $N^\intercal$ on a Gaussian input distribution.

After the preceding discussion, it should be clear that under our hypotheses
the action of the Petz recovery map can be written as
\begin{equation}
  \sigma_{AC} \longmapsto \sigma'_{ABC} = \Big( V_A^{-1}\oplus V_{BC}^{-1} + (\sigma_{AC}^{-1}-V_A^{-1}\oplus V_C^{-1})\oplus 0_B \Big)^{-1} .  \label{gaus Petz eq3}
\end{equation}
The Woodbury matrix identity (see~\cite{Woodbury}, or~\cite[Equation (6.0.10)]{SCbook}),
\begin{equation}
  (S+UTV)^{-1} = S^{-1} - S^{-1}U\left(VS^{-1}U+T^{-1}\right)^{-1} V S^{-1},  \label{Wood}
\end{equation}
can be used to bring~\eqref{gaus Petz eq3} into the canonical form~\eqref{N}:
\begin{align*}
  \sigma'_{ABC} &= \left( V_A^{-1}\oplus V_{BC}^{-1} + (\sigma_{AC}^{-1}-V_A^{-1}\oplus V_C^{-1})\oplus 0_B \right)^{-1} \\[0.8ex]
                &= \big( V_A^{-1}\oplus V_{BC}^{-1} + \Pi_{AC}^\intercal  (\sigma_{AC}^{-1}-V_A^{-1}\oplus V_C^{-1}) \Pi_{AC} \big)^{-1} \\[0.8ex]
                &= V_A\oplus V_{BC} - (V_A\oplus V_{BC}) \Pi_{AC}^\intercal \cdot  \\ 
                &\quad\quad\Big( (\sigma_{AC}^{-1}\! - V_A^{-1}\oplus V_C^{-1})^{-1} + \Pi_{AC} (V_A \oplus V_{BC}) \Pi_{AC}^\intercal\Big)^{-1} \\ 
                &\quad\quad\cdot \Pi_{AC} (V_A \oplus V_{BC}) \\[0.8ex]
                &= V_A \oplus V_{BC} - (V_A \oplus V_{BC}) \Pi_{AC}^\intercal \cdot \\ 
                &\quad\quad\Big( -V_A \oplus V_C - (V_A \oplus V_C) (\sigma_{AC}-V_A \oplus V_C)^{-1} (V_A \oplus  V_C) + V_A \oplus V_C \Big)^{-1} \\
                &\quad\quad\cdot \Pi_{AC} (V_A \oplus V_{BC}) \\[0.8ex]
                &= V_A\oplus V_{BC} + (V_A\oplus V_{BC}) \Pi_{AC}^\intercal (V_A^{-1} \oplus V_C^{-1}) \\
                &\quad\quad \cdot(\sigma_{AC}-V_A\oplus V_C) \cdot (V_A^{-1}\oplus V_C^{-1}) \Pi_{AC} (V_A\oplus V_{BC}) \\[0.8ex]
                &= H_R \sigma_{AC} H_R^\intercal + K_R\, ,
\end{align*}
where we have employed the definitions
\begin{align*}
  H_R &= (V_A\oplus V_{BC}) \Pi_{AC}^\intercal (V_A^{-1}\oplus V_C^{-1})
       = \lmatrix \mathds{1} & 0 \\ 0 & ZC^{-1} \\ 0 & \mathds{1} \rmatrix \text{ and} \\[0.8ex]
  K_R &= \lmatrix 0 & & \\ & B-ZC^{-1} Z^\intercal & \\ & & 0 \rmatrix .
\end{align*}
\end{proof}

\paragraph{Gaussian relative entropy of recovery}
We are now ready to employ the classical theory of recoverability in order
to find the expression for the relative entropy of recovery in the Gaussian case.

\begin{prop}
  For all tripartite covariance matrices $V_{ABC}>0$ written in block
  form as in~\eqref{global CM}, we have
  \begin{equation}
  \begin{split}
    I_{M}(A:B|C)_{V} &= \frac12 \ln\frac{\det V_{AC}\det V_{BC}}{\det V_{ABC}\det V_C} \\
                     &= D\!\left( V_{ABC} \| \tilde{V}_{ABC}\right) ,
  \end{split}
  \label{SSA+}
  \end{equation}  where
  \begin{equation}
    \tilde{V}_{ABC} \coloneqq \begin{pmatrix} A & YC^{-1} Z^\intercal  & Y \\
                                           ZC^{-1}Y^\intercal  & B & Z \\
                                           Y^\intercal  & Z^\intercal  & C \end{pmatrix}
    \label{V tilde}
  \end{equation}
  and the relative entropy function $D(\cdot\|\cdot)$ is given by~\eqref{rel ent}.
\end{prop}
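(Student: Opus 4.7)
The plan is to exploit the Gaussian recoverability theory developed in the preceding subsections, specialised to the Markov reconstruction scenario. My first step would be to recognise that $\tilde V_{ABC}$ is precisely the image of $V_{AC}$ under the Gaussian Petz recovery map of Proposition \ref{prop Petz}: indeed, with the specific choices $H_R,K_R$ from \eqref{gaus Petz eq2}, the calculation $H_R\, V_{AC}\, H_R^\intercal + K_R$ reproduces exactly the block matrix \eqref{V tilde}, as was already verified in the step $4\Rightarrow 5$ of the proof of Theorem \ref{thm satur} (run in reverse: that computation starts from a generic $V_{AC}$ and outputs $\tilde V_{ABC}$, which then coincides with $V_{ABC}$ precisely in the Markov case $X=YC^{-1}Z^\intercal$).

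Once this identification is made, \eqref{SSA+} becomes the Gaussian avatar of the classical saturation relation \eqref{I rel ent}: the channel ``discard the $B$-block'' is deterministic, so the passage from \eqref{petz eq1} to \eqref{petz eq2} is an equality. Translating via \eqref{ent} and \eqref{rel ent}, the left-hand side $I(A:B|C)$ for zero-mean Gaussians reduces to the log-det conditional mutual information $I_M(A:B|C)_V$ (the constant offsets cancel, the expression being balanced), while the right-hand side is by definition $D(V_{ABC}\|\tilde V_{ABC})$.

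As a self-contained cross-check, the identity can also be derived directly from the Gaussian relative entropy formula \eqref{rel ent}. This reduces to proving the two separate statements
\begin{equation*}
  \det \tilde V_{ABC} \;=\; \frac{\det V_{AC}\det V_{BC}}{\det V_C}\,, \qquad \Tr\!\left(\tilde V_{ABC}^{-1} V_{ABC}\right) \;=\; n_A+n_B+n_C\,.
\end{equation*}
The first follows because $\tilde V_{ABC}$ satisfies the Markov condition $X_{\tilde V}=YC^{-1}Z^\intercal$, so by the equivalence $(1)\Leftrightarrow(4)$ of Theorem \ref{thm satur} one has $I_M(A:B|C)_{\tilde V}=0$; combined with $\tilde V_{AC}=V_{AC}$, $\tilde V_{BC}=V_{BC}$, $\tilde V_C=V_C$, this is exactly the required determinant identity. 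The second follows by writing $\Tr(\tilde V^{-1}V) = n + \Tr\!\bigl(\tilde V^{-1}(V-\tilde V)\bigr)$ and observing that $V-\tilde V$ is supported only on the $AB$ off-diagonal blocks (with entry $X-YC^{-1}Z^\intercal$), while the equivalence $(1)\Leftrightarrow(3)$ of Theorem \ref{thm satur} applied to $\tilde V$ forces $(\tilde V^{-1})_{AB}=0$; hence the cross term vanishes identically.

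The only real subtlety lies in this last cancellation: the perturbation $V-\tilde V$ is supported on exactly the blocks on which $\tilde V^{-1}$ vanishes. This is not accidental but a structural consequence of the Petz construction, which by design reproduces the marginals $V_{AC}$, $V_{BC}$ while imposing Markovianity in the $ABC$ ordering. I do not foresee any further obstacles; both routes converge to the stated identity without additional calculation.
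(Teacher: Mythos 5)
Your first route is exactly the paper's proof: the paper disposes of this proposition in one line, observing that it is an instance of the saturation identity~\eqref{I rel ent} applied to the continuous Gaussian variable $(X_A,X_B,X_C)$, with $\tilde V_{ABC}$ being precisely the Petz-reconstructed covariance matrix of Proposition~\ref{prop Petz}. Your second, computational route is a genuinely different and equally valid argument, and it is arguably more self-contained: it never invokes the classical Petz saturation (whose derivation in the paper is written out only for discrete alphabets and then asserted to extend to densities), relying instead only on the explicit formula~\eqref{rel ent} together with the equivalences $(1)\Leftrightarrow(3)\Leftrightarrow(4)$ of Theorem~\ref{thm satur} applied to $\tilde V_{ABC}$. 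Both of your reductions check out: the determinant identity is the computation the paper itself performs right after the proposition (via $\det\tilde V_{ABC}=\det V_{BC}\det(V_{AC}/V_C)$), and the trace identity follows because $V_{ABC}-\tilde V_{ABC}$ is supported on the $A$--$B$ off-diagonal blocks while $(\tilde V_{ABC}^{-1})_{AB}$ vanishes there by Markovianity of $\tilde V_{ABC}$. The only detail worth adding to the cross-check is a word on why $\tilde V_{ABC}>0$ (so that Theorem~\ref{thm satur} applies to it): positive semidefiniteness follows from the decomposition in the step $2\Rightarrow 4$ of that theorem, and strict positivity then follows from $\det\tilde V_{ABC}=\det V_{AC}\det V_{BC}/\det V_C>0$.
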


\begin{proof}
This is just an instance of~\eqref{I rel ent} applied to the continuous
Gaussian variable $(X_{A},X_{B},X_{C})$.
\end{proof}

The identity~\eqref{SSA+} is useful in deducing new constraints that will
be much less obvious coming from a purely matrix analysis perspective, by using the relation between Gaussian probability distributions and their covariance matrices.
For instance, it is well known that $D(p\|q)\geq -2\ln F(p,q)$ {(see e.g.~\cite{muller2013quantum,audenaert2012comparisons})}.
In case of Gaussian variables with the same mean, it holds
\begin{equation}
  F^2(p_A,p_B) = \frac{\det (A!B)}{\sqrt{\det A\det B}} , \label{fid gaus}
\end{equation}
where $(A!B)\coloneqq 2\left(A^{-1}+B^{-1}\right)^{-1}$ is the \emph{harmonic mean}
of $A$ and $B$.
Inserting this standard lower bound into~\eqref{SSA+} we obtain
\begin{equation}
  \frac{\det V_{AC}\det V_{BC}}{\det V_{ABC}\det V_C}
    \geq \frac{\det V_{ABC}\det \tilde{V}_{ABC}}{\left( \det (V_{ABC}!\tilde{V}_{ABC}) \right)^2},
  \label{SSA+ fid}
\end{equation}
leading to
\begin{equation}
   I_{M}(A:B|C)_{V}
    \geq \frac12 \ln\frac{\det V_{ABC}\det \tilde{V}_{ABC}}{\left( \det (V_{ABC}!\tilde{V}_{ABC}) \right)^2}.
  \label{SSA+ fid2}
\end{equation}
Using furthermore
\begin{align*}
  \det \tilde{V}_{ABC} &= \det\tilde{V}_{BC} \det (\tilde{V}_{ABC}/\tilde{V}_{BC}) \\
                       &= \det V_{BC} \det (\tilde{V}_{AC}/\tilde{V}_{C}) \\
                       &= \det V_{BC} \det (V_{AC}/V_{C}) ,
\end{align*}
we also arrive at the inequality
\begin{equation}
  \det V_{ABC} \leq \det (V_{ABC}!\tilde{V}_{ABC}) .
\end{equation}
To illustrate the power of this relation,
we note that inserting the harmonic-geometric mean inequality
for matrices~\cite[Corollary 2.1]{Ando79}
\begin{equation*}
  A!B \leq A\# B
\end{equation*}
yields again SSA~\eqref{SSA} in the form
$\det \tilde{V}_{ABC} \geq \det V_{ABC}$.

\section{A lower bound on $I_{M}(A:B|C)_{V}$}
\label{sec:I_M-lower-bound}
Throughout this section, we explore some ways of strengthening Theorem~\ref{thm satur}
by finding a suitable lower bound on the log-det conditional mutual information $I_{M}(A:B|C)_{V}$. We would like the expression to have two main features: (a) it should be easily computable in
terms of the blocks of $V_{ABC}$; and (b) the explicit saturation condition in Theorem~\ref{thm satur}(4) should be easily readable from it. This latter requirement can be accommodated, for example, if the lower bound involves some kind of distance between the off-diagonal block $X$ and its `saturation value' $YC^{-1}Z^{\intercal}$. We start with a preliminary result.

\begin{prop}
  \label{prop I(A:B)}
  For all matrices
  \begin{equation*}
    V_{AB} = \begin{pmatrix} A & X \\ X^{\intercal} & B \end{pmatrix} \geq 0,
  \end{equation*}
  we have
  \begin{equation}
    I_{M}(A:B)_{V}  \geq \frac12 \big\| A^{-1/2} X B^{-1/2} \big\|^{2}_{2} .  \label{prop I(A:B) eq}
  \end{equation}
\end{prop}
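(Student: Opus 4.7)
The plan is to reduce the proposition to an eigenvalue inequality by exploiting the Schur complement structure already developed in the paper, and then apply an elementary scalar bound termwise.

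First, I would invoke the determinantal factorization property \eqref{det factor} to write
\begin{equation*}
I_M(A:B)_V = \frac12 \ln \frac{\det A \det B}{\det V_{AB}} = \frac12 \ln \frac{\det B}{\det(V_{AB}/A)} = \frac12 \ln \frac{\det B}{\det(B - X^\intercal A^{-1} X)}.
\end{equation*}
Assuming $A,B>0$ (the semidefinite case can be handled by restricting to the supports and passing to generalized inverses, or by a limiting argument $A\to A+\varepsilon\mathds{1}$, $B\to B+\varepsilon\mathds{1}$), I would then introduce the positive semidefinite matrix $M := B^{-1/2} X^\intercal A^{-1} X B^{-1/2}$ and factor $B - X^\intercal A^{-1} X = B^{1/2}(\mathds{1} - M)B^{1/2}$, which gives
\begin{equation*}
I_M(A:B)_V = -\frac12 \ln \det(\mathds{1}-M) = -\frac12 \sum_i \ln(1-\lambda_i(M)).
\end{equation*}

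Next, I would observe that the Schur complement criterion \eqref{Schur pos} applied to $V_{AB}\geq 0$ gives $V_{AB}/A = B - X^\intercal A^{-1} X \geq 0$, which is equivalent to $M \leq \mathds{1}$; hence every eigenvalue satisfies $\lambda_i(M) \in [0,1]$. The elementary scalar inequality $-\ln(1-x) \geq x$ for $x \in [0,1)$ then yields, termwise,
\begin{equation*}
I_M(A:B)_V \geq \frac12 \sum_i \lambda_i(M) = \frac12 \Tr M = \frac12 \Tr\!\left( A^{-1/2} X B^{-1} X^\intercal A^{-1/2} \right) = \frac12 \big\| A^{-1/2} X B^{-1/2} \big\|_2^2,
\end{equation*}
which is the claim. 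In the degenerate situation where some $\lambda_i(M) = 1$, the left-hand side diverges and the inequality is automatic.

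There is no substantial obstacle here: the only mild subtlety is the treatment of rank-deficient blocks, which can be handled either by continuity or by interpreting $A^{-1/2}$, $B^{-1/2}$ as Moore–Penrose pseudoinverses on the relevant supports (together with the standard observation that $V_{AB}\geq 0$ forces the range of $X$ into the range of $A$ and similarly for $X^\intercal$). Everything else is an entirely routine combination of the Schur complement factorization formula and the scalar bound $-\ln(1-x)\geq x$.
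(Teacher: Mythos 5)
Your proof is correct and is essentially the paper's own argument: the paper likewise reduces $I_M(A:B)_V$ to $-\tfrac12\ln\det(\mathds{1}-A^{-1/2}XB^{-1}X^\intercal A^{-1/2})$ via the Schur complement factorization and then applies the operator inequality $\ln(\mathds{1}+\Delta)\leq\Delta$, which is exactly your eigenvalue-wise bound $-\ln(1-x)\geq x$. The only cosmetic difference is that you take the complement with respect to $A$ rather than $B$, which produces the same trace and hence the same Hilbert--Schmidt norm.
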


\begin{proof}
Using, the standard factorization of the determinant in terms
of the Schur complement, the identity $\ln \det V = \Tr \ln V$ (where $V>0$),
and the inequality $\ln(\mathds{1}+\Delta)\leq \Delta$ (for Hermitian
$\Delta > -\mathds{1}$), we find
\begin{align*}
  I_{M}(A:B)_{V} &=    \frac12 \ln \frac{\det V_{A}\det V_{B}}{\det V_{AB}} \\[0.8ex]
                 &=   -\frac12 \ln \det V_{A}^{-1/2}(V_{AB}/V_{B})V_{A}^{-1/2} \\[0.8ex]
                 &=   -\frac12 \ln \det (\mathds{1} - A^{-1/2}XB^{-1}X^\intercal A^{-1/2}) \\[0.8ex]
                 &=   -\frac12 \Tr \ln (\mathds{1} - A^{-1/2}XB^{-1}X^\intercal A^{-1/2})  \\[0.8ex]
                 &\geq \frac12 \Tr A^{-1/2}XB^{-1}X^\intercal A^{-1/2} \\[0.8ex]
                 &=    \frac12  \big\| A^{-1/2} X B^{-1/2} \big\|^{2}_{2}\, ,
\end{align*}
\end{proof}

This allows us to prove our main result. 

\begin{thm}
  \label{thm lower b}
  For all $V_{ABC}>0$ written in block form as in~\eqref{global CM}, we have
  the following chain of inequalities:
  \begin{align}
    I_{M}(A:B|C)_{V}
      &\geq \frac12 \Tr\Big[ (V_{AC}/V_{C})^{-1} (X-YC^{-1}Z^{\intercal}) \nonumber\\
      &\quad \cdot (V_{BC}/V_{C})^{-1} (X-YC^{-1}Z^{\intercal})^{\intercal} \Big] \\[0.8ex]
      &\geq \frac12 \left\| A^{-1/2}(X-YC^{-1}Z^{\intercal})B^{-1/2} \right\|^{2}_{2}\, .    \label{lower b}
  \end{align}
\end{thm}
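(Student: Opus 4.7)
The plan is to reduce the conditional statement to the unconditional one via the Schur complement identity \eqref{I cond Schur}, and then apply Proposition~\ref{prop I(A:B)} to the reduced matrix. First I would compute the block structure of $V_{ABC}/V_C$ by grouping $V_{ABC}$ as a $2\times 2$ block matrix with the first block being $V_{AB}$ and the second being $V_C$, with off-diagonal block $\left(\begin{smallmatrix} Y \\ Z \end{smallmatrix}\right)$. A direct application of~\eqref{Schur definition} gives
\begin{equation*}
V_{ABC}/V_C = \begin{pmatrix} A - YC^{-1}Y^\intercal & X - YC^{-1}Z^\intercal \\ X^\intercal - ZC^{-1}Y^\intercal & B - ZC^{-1}Z^\intercal \end{pmatrix} = \begin{pmatrix} V_{AC}/V_C & X - YC^{-1}Z^\intercal \\ (X - YC^{-1}Z^\intercal)^\intercal & V_{BC}/V_C \end{pmatrix},
\end{equation*}
so the diagonal blocks are exactly the Schur complements that appear in the first bound of the theorem, and the off-diagonal block is the ``discrepancy'' $X - YC^{-1}Z^\intercal$ flagged in Theorem~\ref{thm satur}(4).

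Next I would combine Theorem~\ref{thm I cond geom} (identity~\eqref{I cond Schur}) with Proposition~\ref{prop I(A:B)} applied to $V_{ABC}/V_C$. This yields directly
\begin{equation*}
I_M(A:B|C)_V = I_M(A:B)_{V_{ABC}/V_C} \geq \tfrac{1}{2}\bigl\| (V_{AC}/V_C)^{-1/2} (X-YC^{-1}Z^\intercal) (V_{BC}/V_C)^{-1/2} \bigr\|_2^2,
\end{equation*}
which rewritten as a trace is precisely the first inequality of the theorem. In particular, the lower bound vanishes if and only if $X = YC^{-1}Z^\intercal$, recovering condition~(4) of Theorem~\ref{thm satur} in a quantitative way.

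For the second, weaker inequality I would invoke the elementary operator inequalities $V_{AC}/V_C = A - YC^{-1}Y^\intercal \leq A$ and $V_{BC}/V_C \leq B$, which follow from $YC^{-1}Y^\intercal, ZC^{-1}Z^\intercal \geq 0$. Inverting reverses the order, so $(V_{AC}/V_C)^{-1} \geq A^{-1}$ and $(V_{BC}/V_C)^{-1} \geq B^{-1}$. Setting $M \coloneqq X - YC^{-1}Z^\intercal$, I would then use the bilinear monotonicity $\Tr[P M Q M^\intercal] \geq \Tr[R M S M^\intercal]$ whenever $P\geq R$ and $Q\geq S$ (proven by writing the trace as $\Tr[Q^{1/2}M^\intercal P M Q^{1/2}]$ and applying monotonicity of the trace of a product with a positive matrix, twice), to conclude
\begin{equation*}
\tfrac{1}{2}\Tr\bigl[(V_{AC}/V_C)^{-1} M (V_{BC}/V_C)^{-1} M^\intercal\bigr] \geq \tfrac{1}{2}\Tr[A^{-1} M B^{-1} M^\intercal] = \tfrac{1}{2}\|A^{-1/2} M B^{-1/2}\|_2^2.
\end{equation*}
The only step that requires a moment's care is this final monotonicity argument, but it is a routine manipulation once one rewrites the trace in the appropriate cyclic form; the conceptual content lies entirely in steps~1--3.
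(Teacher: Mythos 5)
Your proof is correct, and it reaches the paper's first inequality by a slightly different (and in fact more economical) route. The paper also reduces to Proposition~\ref{prop I(A:B)} via Theorem~\ref{thm I cond geom}, but it uses the identity $I_M(A:B|C)_V = I_M(A:B)_{V^{-1}}$ from~\eqref{I cond inv}: it then has to extract the off-diagonal block $\Pi_A V^{-1}\Pi_B^{\intercal}$ from the block-inversion formulae~\eqref{inv1}--\eqref{inv2}, which produces the expression $-(V_{AC}/V_C)^{-1}(X-YC^{-1}Z^{\intercal})(V_{ABC}/V_{AC})^{-1}$ and requires a further round of Schur-complement gymnastics to cancel the factors $(V_{ABC}/V_{BC})$ and $(V_{ABC}/V_{AC})$. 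You instead use~\eqref{I cond Schur}, $I_M(A:B|C)_V = I_M(A:B)_{V_{ABC}/V_C}$, and simply read off the blocks of $V_{ABC}/V_C$ directly: the diagonal blocks are $V_{AC}/V_C$ and $V_{BC}/V_C$ by the quotient property, and the off-diagonal block is $X-YC^{-1}Z^{\intercal}$. One application of Proposition~\ref{prop I(A:B)} then lands exactly on the trace expression of the theorem, with no cancellation needed; your observation that the bound vanishes iff $X=YC^{-1}Z^{\intercal}$ ties it neatly to Theorem~\ref{thm satur}(4). The two routes converge to the identical intermediate bound, and your final weakening step ($V_{AC}/V_C\leq A$, $V_{BC}/V_C\leq B$, plus bilinear monotonicity of $\Tr[PMQM^{\intercal}]$ in $P,Q\geq 0$) is the same as the paper's. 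In short: same skeleton, different identity from Theorem~\ref{thm I cond geom}, and your choice buys a visibly shorter computation.
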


\begin{proof}
We want to use the identity~\eqref{I cond inv} to lower bound $I_{M}(A:B|C)_{V}$.
In order to do so, we need to write out the $A$-$B$ off-diagonal block of the
inverse $(V_{ABC})^{-1}$. With the help of the projectors onto the $A$ and $B$
components, denoted by $\Pi_{A}$ and $\Pi_{B}$ respectively, we are seeking an
explicit expression for $\Pi_{A} (V_{ABC})^{-1} \Pi_{B}^{\intercal}$. Remember
that the block-inversion formula~\eqref{inv} gives
\begin{align}
  \Pi_{1} (W_{12})^{-1} \Pi_{1}^{\intercal} &= (W_{12}/W_{2})^{-1} , \label{inv1} \\[0.8ex]
  \Pi_{1} (W_{12})^{-1} \Pi_{2}^{\intercal} &= - W_{1}^{-1} (\Pi_{1}W_{12}\Pi_{2}^{\intercal})
                                                            (W_{12}/W_{1})^{-1} , \label{inv2}
\end{align}
for an arbitrary bipartite block matrix $W_{12}$. This allows us to write
\begin{align*}
  &\Pi_{A} (V_{ABC})^{-1} \Pi_{B}^{\intercal} \\[0.8ex]
     &= \Pi_{A} \Pi_{AB} (V_{ABC})^{-1} \Pi_{AB}^{\intercal} \Pi_{B}^{\intercal} \\[0.8ex]
     &= \Pi_{A} (V_{ABC}/V_{C})^{-1} \Pi_{B}^{\intercal} \\[0.8ex]
     &= -(V_{AC}/V_{C})^{-1} \bigl( \Pi_{A} V_{ABC}/V_{C} \Pi_{B}^{\intercal} \bigr) \bigl( (V_{ABC}/V_{C}) \big/ (V_{AC}/V_{C}) \bigr)^{-1} \\[0.8ex]
     &= -(V_{AC}/V_{C})^{-1} \bigl( X - YC^{-1} Z^{\intercal} \bigr) (V_{ABC}/V_{AC})^{-1} .
\end{align*}
Exchanging $A$ and $B$ in this latter expression and taking the transpose we arrive at
\begin{equation*}
\begin{split}
  \Pi_{A} (V_{ABC})^{-1} \Pi_{B}^{\intercal}
     &= -(V_{ABC}/V_{BC})^{-1} \big( X - YC^{-1} Z^{\intercal} \big) \cdot (V_{BC}/V_{C})^{-1} .
\end{split}
\end{equation*}
Now we are ready to invoke Proposition~\ref{prop I(A:B)} to write
\begin{align*}
  & I_{M}(A:B|C)_{V}  = I_{M}(A:B)_{V^{-1}} \\[0.8ex]
           &\quad\geq \frac12 \Tr \Big[(V^{-1})_{A}^{-1} (\Pi_{A} V^{-1} \Pi_{B}^{\intercal}) \cdot (V^{-1})_{B}^{-1} (\Pi_{B}^{\intercal} V^{-1} \Pi_{A}) \Big] \\[0.8ex]
           &\quad=    \frac12 \Tr \Bigl[ (V_{ABC}/V_{BC})  \cdot \bigl( (V_{ABC}/V_{BC})^{-1} (X-YC^{-1} Z^{\intercal}) (V_{BC}/V_{C})^{-1} \bigr) \\
           &\quad\quad \cdot (V_{ABC}/V_{AC}) \cdot \bigl( (V_{AC}/V_{C})^{-1} (X-YC^{-1} Z^{\intercal}) (V_{ABC}/V_{AC})^{-1} \big)^{\intercal} \Bigr] \\[0.8ex]
           &\quad= \frac12 \Tr \Big[ (V_{AC}/V_{C})^{-1} ( X - YC^{-1} Z^{\intercal}) \cdot (V_{BC}/V_{C})^{-1} ( X - YC^{-1} Z^{\intercal})^{\intercal} \Big]\, .
\end{align*}
Since on one hand $V_{AC}/V_{C}\leq V_{A}=A$, and on the other the
expression $\Tr R K S K^{\intercal}$ is clearly monotonic in $R,S\geq 0$,
we finally obtain
\begin{align*}
  I_{M}(A:B|C)_{V} &\geq \frac12 \Tr \bigl[ A^{-1} ( X - YC^{-1} Z^{\intercal}) \cdot B^{-1} ( X - YC^{-1} Z^{\intercal})^{\intercal} \bigr] \\[0.8ex]
                   &= \frac12 \bigl\| A^{-1/2}(X-YC^{-1}Z^{\intercal})B^{-1/2} \bigr\|^{2}_{2}\, .
\end{align*}
\end{proof}

It can easily be seen that the above result satisfies the requirements stated in the beginning of the section, i.e. it is easily computable in terms of the blocks of $V_{ABC}$ and it is faithful.

We are now ready to start the investigation of {\em quantum} covariance matrices in the next section.

\section{Strengthenings of SSA for quantum covariance matrices and correlation measures} \label{sec:Renyi-2-Gaussian-squashed}

In this section we show how to apply results on log-det conditional
mutual information to infer properties of Gaussian states in quantum optics.
Before doing so, let us provide a very brief introduction to quantum optics,
a framework of great importance for practical applications and
implementations of quantum communication protocols.

\paragraph{Gaussian states in quantum optics}

The set of $n$ electromagnetic modes that are available for transmission
of information translates to a set of $n$ pairs of canonical operators
$x_i, p_j$ ($i=1,\ldots, n$) acting on an infinite-dimensional Hilbert space
and obeying the canonical commutation relations
$[x_i, p_j] = i\delta_{ij}$ (in natural units with $\hbar=1$). These operators are the non-commutative
analogues of the classical electric and magnetic fields.
By introducing the vector notation $r\coloneqq (x_1,p_1,\ldots,x_n,p_n)^{\intercal}$
we can rewrite the canonical commutation relations in the more convenient form
\begin{equation}
[r,r^\intercal ] = i \Omega \coloneqq i {\begin{pmatrix} 0 & 1 \\ -1 & 0 \end{pmatrix}\!}^{\oplus n} = i\begin{pmatrix} 0 & \mathds{1} \\ -\mathds{1} & 0 \end{pmatrix} ,
\label{CCR}
\end{equation}
where $\Omega$ is called the \emph{standard symplectic form}. The antisymmetric, non-degenerate quadratic form identified by $\Omega$ is called \emph{standard symplectic product}, and the linear space $\mathds{R}^{2n}$ endowed with this product is a \emph{symplectic space}. In what follows, the symplectic space associated with a quantum optical system $A$ will be denoted with $\Sigma_A$. For an introduction to symplectic geometry, we refer the reader to~\cite{Gosson}.

Exactly as in the classical case, the Hamiltonian for the quantum electromagnetic fields is quadratic
in the canonical operators. Thus, not surprisingly, the states that are most
frequently produced in the laboratories are thermal states of quadratic
Hamiltonians of the form $\mathcal{H}=\frac12 r^\intercal H r$, where $H>0$
is a $2n\times 2n$ real, positive definite matrix. These states are called {\em Gaussian states}
\cite{biblioparis,weedbrook12,Adesso14}. 

For a quantum state described by a density matrix $\rho$ the
first moments are given by the expected value of the field operators, $s=\Tr [\rho r]$.
However, the information-theoretical properties of
Gaussian states can be fully understood in terms of the second-moment
correlations, encoded in the $2n\times 2n$ covariance matrix $V$
whose entries are
\begin{equation}
   V_{ij} \coloneqq \Tr \left[ \rho \left\{(r-s)_i, (r-s)_j \right\} \right] .
   \label{QCM}
\end{equation}
Here the anticommutator $\{H,K\}\coloneqq HK+KH$ is needed
in the quantum case, in order to make the above expression real, and $s\coloneqq s\cdot \text{id}$ as operators on the Hilbert space. It is customary
not to divide by $2$ when defining the covariance matrix in the quantum case.
The reason will become apparent in a moment.
Any quantum state $\rho$ of an $n$-mode electromagnetic field can be equivalently described in terms of phase space quasi-probability distributions, such as the Wigner distribution~\cite{Wigner}. Hence Gaussian states can be defined, in general, as the continuous variable states with a Gaussian Wigner distribution, given by
\begin{equation}\label{Wigner}
W_\rho(\xi) \coloneqq \frac{1}{\pi^n \sqrt{\det V}} e^{-(\xi-s)^\intercal V^{-1} (\xi-s)},
\end{equation}
in terms of the vector of first moments $s$ and the QCM $V$, with $\xi \in \mathds{R}^{2n}$ a phase space coordinate vector.

Let us have a closer look at the set of matrices arising from~\eqref{QCM}. Unlike the classical case, not every positive definite matrix $V>0$
can be the covariance matrix of a Gaussian state. In fact, Heisenberg's uncertainty
principle imposes further constraints that are quantum mechanical in nature. It turns out
\cite{simon94} that covariance matrices of quantum states (not necessarily Gaussian)
must obey the inequality
\begin{equation}
V\geq i\Omega . \label{Heisenberg}
\end{equation}
Furthermore, all $2n\times 2n$ real matrices satisfying~\eqref{Heisenberg},
collectively called \emph{quantum covariance matrices} (QCMs), can
be covariance matrices of suitably chosen Gaussian states.
Therefore, according to our convenience, we can think of Gaussian states as operators on the background Hilbert space, or we can adopt the complementary picture at the symplectic space level, and parametrise Gaussian states with their covariance matrices.

Clearly, linear transformations $r \rightarrow S r$ that preserve the commutation relations~\eqref{CCR} play a special role within this framework. Any such transformation is described by a \textit{symplectic} matrix, i.e.~a matrix $S$ with the property that $S\Omega S^\intercal=\Omega$. Symplectic matrices form a non-compact, connected Lie group that is additionally closed under transposition, and is typically denoted by $\mathrm{Sp}(2n,\mathds{R})$~\cite{pramana}. The importance of these operations arises from the fact that for any symplectic $S$ there is a unitary evolution $U_S$ on the Hilbert space such that $U_S^\dag r U_S = Sr$. When a unitary conjugation $\rho\mapsto U_S \rho U_S^\dag$ is applied to a state $\rho$, its covariance matrix transforms as $V\mapsto SVS^\intercal$. Accordingly, we observe that~\eqref{Heisenberg} is preserved under congruences by symplectic matrices. It turns out that under such congruences, positive matrices can be brought into a remarkably simple form.

\begin{lemma}[Williamson's decomposition~\cite{willy,willysim}] \label{lemma Williamson}
Let $K>0$ be a positive, $2n\times 2n$ matrix. Then there is a symplectic transformation $S$ such that $K= S \Delta S^\intercal$, where according to the block decomposition~\eqref{CCR} one has $\Delta= \lmatrix D & 0 \\ 0 & D \rmatrix$, and $D$ is a positive diagonal matrix whose nonzero entries depend (up to their order) only on $K$, and are called symplectic eigenvalues.
\end{lemma}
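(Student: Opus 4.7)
The plan is to reduce Williamson's decomposition to the classical real normal form for antisymmetric matrices, applied to the auxiliary matrix $M \coloneqq K^{-1/2}\Omega K^{-1/2}$. Since $K>0$ and $\Omega$ is invertible, $M$ is real, antisymmetric, and invertible. The symplectic eigenvalues $d_j$ will emerge directly as the positive entries in the canonical form of $M$, and the symplectic conjugating matrix $S$ will be reconstructed by undoing the congruence by $K^{-1/2}$.

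First I would invoke the real normal form for antisymmetric matrices: there exists an orthogonal $U \in \mathrm{O}(2n)$ such that, after a reordering of coordinates grouping the first $n$ basis vectors together and the last $n$ together,
\begin{equation*}
U^{\intercal} M U \;=\; \lmatrix 0 & D^{-1} \\ -D^{-1} & 0 \rmatrix,
\end{equation*}
with $D^{-1} = \mathrm{diag}(d_1^{-1},\ldots,d_n^{-1})$ and $d_j>0$. Setting $\Delta \coloneqq D \oplus D$ in the block decomposition of~\eqref{CCR} and $\Delta^{1/2} \coloneqq D^{1/2}\oplus D^{1/2}$, a direct computation (using that $D^{1/2}$ commutes with $\mathds{1}$) yields $\lmatrix 0 & D^{-1} \\ -D^{-1} & 0 \rmatrix = \Delta^{-1/2}\Omega\Delta^{-1/2}$. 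Thus $K^{-1/2}\Omega K^{-1/2} = U\Delta^{-1/2}\Omega\Delta^{-1/2}U^{\intercal}$.

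Next I would define $S \coloneqq K^{1/2}\, U\, \Delta^{-1/2}$ and check the two required properties by direct computation. Symplecticity follows from
\begin{equation*}
S\Omega S^{\intercal} \;=\; K^{1/2}\big(U\Delta^{-1/2}\Omega\Delta^{-1/2}U^{\intercal}\big)K^{1/2} \;=\; K^{1/2}\big(K^{-1/2}\Omega K^{-1/2}\big)K^{1/2} \;=\; \Omega,
\end{equation*}
while the Williamson form itself follows from orthogonality of $U$:
\begin{equation*}
S\Delta S^{\intercal} \;=\; K^{1/2}\, U\, \Delta^{-1/2}\Delta\Delta^{-1/2}\, U^{\intercal}\, K^{1/2} \;=\; K^{1/2} U U^{\intercal} K^{1/2} \;=\; K.
\end{equation*}
For the intrinsic character of the $d_j$'s, I would observe that $\Omega K$ is similar to the real antisymmetric matrix $K^{1/2}\Omega K^{1/2}$ via conjugation by $K^{1/2}$, and hence has eigenvalues $\{\pm i d_j\}_{j=1}^n$, which depend only on $K$; this fixes $D$ up to a permutation of its diagonal entries.

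The main obstacle is a bookkeeping issue rather than a conceptual one: the textbook real normal form for antisymmetric matrices produces a block-diagonal output of the form $\bigoplus_{j} d_j^{-1}\lmatrix 0 & 1 \\ -1 & 0 \rmatrix$, whose $2\times 2$ blocks are \emph{interleaved} along the diagonal, and so does not immediately match the "all positions first, all momenta second" convention enforced by the $\Omega$ of~\eqref{CCR}. A permutation matrix, itself orthogonal, must be composed with $U$ to shuffle the rows and columns into the two $n\times n$ super-blocks required by $\Delta = D \oplus D$. Once this rearrangement is carried out explicitly, everything else in the argument is a routine verification.
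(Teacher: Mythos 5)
The paper does not actually prove this lemma: it is imported verbatim from the classical references \cite{willy,willysim}, so there is no in-paper argument to compare against. Your proof is the standard derivation of Williamson's theorem and it is correct. The matrix $M = K^{-1/2}\Omega K^{-1/2}$ is real, antisymmetric and invertible; its orthogonal canonical form — after composing with the permutation you rightly flag, which converts the interleaved $2\times 2$ blocks into the $xxpp$ ordering implicit in $\Delta = D\oplus D$ — gives $M = U\Delta^{-1/2}\Omega\Delta^{-1/2}U^{\intercal}$, and the two displayed computations for $S = K^{1/2}U\Delta^{-1/2}$ (symplecticity and $S\Delta S^{\intercal}=K$) are exactly what is needed; $S$ is manifestly real. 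The only step where I would add a line is the intrinsic character of the $d_j$: similarity of $\Omega K$ to the antisymmetric matrix $K^{1/2}\Omega K^{1/2}$ shows that the spectrum is purely imaginary and depends only on $K$, but to identify it as $\{\pm i d_j\}$ you should either observe that $K^{1/2}\Omega K^{1/2} = -M^{-1}$ (using $\Omega^{-1}=-\Omega$), whose eigenvalues are obtained by inverting the eigenvalues $\pm i d_j^{-1}$ of $M$, or use the decomposition already established to write $\Omega K = S^{-\intercal}(\Omega\Delta)S^{\intercal}$ and read the eigenvalues off $\Omega\Delta$. With that one-line patch the argument is complete.
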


Thanks to Williamson's decomposition, we see that~\eqref{Heisenberg} can be cast into the simple form $D\geq \mathds{1}$, and that the minimal elements in the set of QCMs are exactly those matrices $V$ for which one of the following equivalent conditions is met: (a) $D=\mathds{1}$; (b) $\det V=1$; (c) $\rk (V \pm i\Omega) = n$ (i.e. half the maximum). These special QCMs are called ``pure'', since the corresponding Gaussian state is a rank-one projector.

When the system under examination is made of several parties (each comprising a certain number
of modes), the global QCM will have a block structure as in~\eqref{global CM}.
The symplectic form in this case is simply given by the direct sum of
the local symplectic forms, e.g. for a composite system $AB$ one has $\Omega_{AB}=\Omega_{A}\oplus \Omega_{B}$. This can be rephrased by saying that the symplectic space associated with the system $AB$ is the direct sum of the symplectic spaces associated with $A$ and $B$, expressed in formula as $\Sigma_{AB}=\Sigma_{A}\oplus \Sigma_{B}$~\cite[Equation (1.4)]{Gosson}.
Conversely, discarding a subsystem corresponds to performing an orthogonal projection of the QCM onto the corresponding symplectic subspace~\cite[Section 1.2.1]{Gosson}, in formula $V_{A}=\Pi_{A} V_{AB} \Pi_{A}^\intercal$.

Pure Gaussian states enjoy many useful properties that we will exploit multiple times throughout this section. To explore them, a clever use of the complementarity between the two pictures at the Hilbert space level and at the QCM level is of prime importance. Let us illustrate this point by presenting some lemmas that we will make use of in deriving the main results of this section.

\begin{lemma} \label{lemma pure reduction}
Let $V_{AB}$ be a QCM of bipartite system $AB$. We denote by $V_{A}=\Pi_{A} V_{AB} \Pi_{A}^\intercal$ the reduced QCM corresponding to the subsystem $A$, and analogously for $V_{B}$. If $V_{A}$ is pure, then $V_{AB} = V_{A} \oplus V_{B}$.
\end{lemma}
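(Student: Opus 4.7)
The plan is to reduce the problem to the canonical case $V_A=\mathds{1}_{2n_A}$ by a local symplectic transformation on the $A$ side, and then to exploit the rank deficiency of $V_A\pm i\Omega_A$ forced by purity, together with the global Heisenberg inequality, to conclude that the off-diagonal block of $V_{AB}$ must vanish.

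First I would apply Williamson's decomposition (Lemma~\ref{lemma Williamson}) to $V_A$: purity is equivalent to all symplectic eigenvalues equalling $1$, so $V_A=S_A S_A^\intercal$ for some $S_A\in \mathrm{Sp}(2n_A,\mathds{R})$. The transformation $T\coloneqq S_A^{-1}\oplus \mathds{1}_{2n_B}$ is symplectic on $AB$, and the congruence $V_{AB}\mapsto T V_{AB} T^\intercal$ preserves both the QCM property (since~\eqref{Heisenberg} is invariant under symplectic congruences) and the direct-sum structure of the partition. After this reduction we may assume
\begin{equation*}
V_{AB}=\begin{pmatrix} \mathds{1}_{2n_A} & X \\ X^\intercal & V_B \end{pmatrix},
\end{equation*}
and it suffices to show $X=0$.

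Next I would combine the Heisenberg inequality $V_{AB}\geq i\Omega_{AB}$ with its complex conjugate $V_{AB}\geq -i\Omega_{AB}$ (which holds because $V_{AB}$ is real). Both give Hermitian positive semidefinite block matrices
\begin{equation*}
V_{AB}\mp i\Omega_{AB}=\begin{pmatrix} \mathds{1}\mp i\Omega_A & X \\ X^\intercal & V_B\mp i\Omega_B \end{pmatrix}\geq 0.
\end{equation*}
The diagonal block $\mathds{1}\mp i\Omega_A$ has rank exactly $n_A$, its kernel being the $\pm i$-eigenspace of $\Omega_A$, which is $n_A$-dimensional. Invoking the standard fact that for a Hermitian positive semidefinite block matrix $\bigl(\begin{smallmatrix} P & Q \\ Q^\dagger & R\end{smallmatrix}\bigr)$ one has $\ker P\subseteq \ker Q^\dagger$, I conclude that $X^\intercal v=0$ for every eigenvector $v$ of $\Omega_A$ of either eigenvalue $\pm i$. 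Since these two eigenspaces together span $\mathds{C}^{2n_A}$, this forces $X=0$, so $V_{AB}=\mathds{1}_{2n_A}\oplus V_B$. Undoing the congruence by $T$ then yields $V_{AB}=V_A\oplus V_B$, as desired.

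No serious obstacle is anticipated: the argument is a clean combination of Williamson's theorem with a standard rank-deficiency trick for block-positive matrices. The one subtlety worth emphasising is that a single sign of the Heisenberg inequality would only kill $X$ on a Lagrangian (half-dimensional) subspace; using both $\pm i\Omega_{AB}$ is essential to cover the full symplectic space and rule out any nontrivial off-diagonal block.
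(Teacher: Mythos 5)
Your proof is correct, but it takes a genuinely different route from the one in the paper. The paper argues at the Hilbert space level in one line: since $\rho_A=\Tr_B\,\rho_{AB}$, purity of the marginal forces the global state to factorise as $\rho_A\otimes\rho_B$ (a rank/Schmidt argument), and the covariance matrix of a product state is the direct sum of the marginals. You instead stay entirely at the matrix level: Williamson's theorem reduces $V_A$ to $\mathds{1}_{2n_A}$, and then the kernel-inclusion property of positive semidefinite block matrices, applied to \emph{both} $V_{AB}-i\Omega_{AB}\geq 0$ and $V_{AB}+i\Omega_{AB}\geq 0$, kills the off-diagonal block on the $\mp i$-eigenspaces of $\Omega_A$, which together span $\mathds{C}^{2n_A}$. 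Each step checks out, and your closing observation is the right one to emphasise: one sign of the bona fide condition alone only annihilates $X^\intercal$ on an $n_A$-dimensional subspace, so both signs are genuinely needed. What the paper's argument buys is brevity and physical transparency, at the cost of invoking the dictionary between QCMs and quantum states (in particular, that a QCM with all symplectic eigenvalues equal to $1$ belongs to a pure state). What yours buys is self-containedness: it is a statement and proof about real matrices satisfying $V\geq i\Omega$, with no quantum-mechanical input, which is arguably more in the spirit of the matrix-analytic toolkit developed in this chapter.
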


\begin{proof}
The statement becomes obvious at the Hilbert space level. In fact, the reduced state on $A$ of a bipartite state $\rho_{A B}$ is given by $\rho_{A} = \Tr_{B} \rho_{A B}$. Evaluating the ranks of both sides of this equation shows that if $\rho_{A}$ is pure then the global state must be factorised.
\end{proof}

Extending the system to include auxiliary degrees of freedom is a standard technique in quantum information, popularly referred to as going to the ``Church of the larger Hilbert space'', a phrase originally coined by J. Smolin (see also~\cite{DHW05RI}). Such a technique can be most notably employed in order to \emph{purify} the system under examination, as detailed in the following lemma~\cite{Gpurif}.

\begin{lemma} \label{lemma pur}
For all QCMs $V_A$ pertaining to a system $A$ there exists an extension $AE$ of $A$ and a pure QCM $\gamma_{AE}$ such that $\Pi_A \gamma_{AE} \Pi_A^\intercal = V_A$, where $\Pi_A$ is the projector onto the symplectic subspace $\Sigma_A\subset \Sigma_{AE}$.
\end{lemma}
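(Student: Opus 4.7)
The plan is to reduce the purification problem for a general QCM to the single-mode case via Williamson's decomposition (Lemma \ref{lemma Williamson}), and then to purify each single-mode thermal block explicitly using the well-known two-mode squeezed vacuum construction. More precisely, I would first invoke Williamson's decomposition to write $V_A = S_A \Delta_A S_A^\intercal$ with $S_A \in \mathrm{Sp}(2n,\mathds{R})$ and $\Delta_A = \mathrm{diag}(d_1,\dots,d_n)\oplus \mathrm{diag}(d_1,\dots,d_n)$, where $d_i \geq 1$ are the symplectic eigenvalues of $V_A$. The point of this reduction is that purity is preserved under symplectic congruences (since such congruences preserve both $\det V$ and the bona fide inequality $V\geq i\Omega$), so it suffices to purify the diagonal $\Delta_A$ and then conjugate back by $S_A\oplus \mathds{1}_E$.

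Second, I would introduce an ancillary $n$-mode system $E$ with canonical form $\Omega_E = \Omega_A$, and define on $\Sigma_A \oplus \Sigma_E$ the block QCM
\begin{equation*}
\gamma'_{AE} \coloneqq \begin{pmatrix} \Delta_A & C \\ C^\intercal & \Delta_A \end{pmatrix},\qquad C \coloneqq \begin{pmatrix} \sqrt{\Delta_A^{(x)2}-\mathds{1}} & 0 \\ 0 & -\sqrt{\Delta_A^{(x)2}-\mathds{1}} \end{pmatrix},
\end{equation*}
where $\Delta_A^{(x)} = \mathrm{diag}(d_1,\dots,d_n)$. This is nothing but a direct sum of $n$ two-mode squeezed vacuum covariance matrices, one for each symplectic eigenvalue $d_i$. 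The verification reduces to the single-mode case: for each $i$, the $4\times 4$ block
\begin{equation*}
\gamma_i = \begin{pmatrix} d_i & 0 & \sqrt{d_i^2-1} & 0 \\ 0 & d_i & 0 & -\sqrt{d_i^2-1} \\ \sqrt{d_i^2-1} & 0 & d_i & 0 \\ 0 & -\sqrt{d_i^2-1} & 0 & d_i \end{pmatrix}
\end{equation*}
satisfies $\det \gamma_i = 1$ and $\gamma_i \geq i(\Omega \oplus \Omega)$, which can be checked either by a direct eigenvalue computation or, more elegantly, by noting that $\gamma_i$ has symplectic eigenvalues all equal to $1$ (it is brought to the identity by a two-mode squeezing symplectic with squeezing parameter $r$ such that $\cosh 2r = d_i$). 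Thus $\gamma'_{AE}$ is a pure QCM on $AE$, and by construction $\Pi_A \gamma'_{AE} \Pi_A^\intercal = \Delta_A$.

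Finally, I would set
\begin{equation*}
\gamma_{AE} \coloneqq (S_A \oplus \mathds{1}_E)\, \gamma'_{AE}\, (S_A \oplus \mathds{1}_E)^\intercal.
\end{equation*}
Since $S_A \oplus \mathds{1}_E$ is symplectic on $\Sigma_A \oplus \Sigma_E$, the matrix $\gamma_{AE}$ remains a pure QCM, and
\begin{equation*}
\Pi_A \gamma_{AE} \Pi_A^\intercal = S_A (\Pi_A \gamma'_{AE} \Pi_A^\intercal) S_A^\intercal = S_A \Delta_A S_A^\intercal = V_A,
\end{equation*}
which is the desired purification. The only step requiring any real work is the verification that the single-mode block $\gamma_i$ above is genuinely pure; everything else is linear algebra and Williamson's theorem. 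I expect no substantive obstacle, since this construction is classical; the main care needed is merely in bookkeeping the $xp$-ordering conventions when writing $\Omega_{AE} = \Omega_A \oplus \Omega_E$ and checking that the off-diagonal block $C$ has the correct sign pattern to make $\gamma'_{AE} \geq i\Omega_{AE}$ with equality of ranks half-maximal.
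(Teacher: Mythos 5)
Your proof is correct. The paper does not actually prove this lemma — it defers entirely to a citation (\cite[Section III.D]{Gpurif}) — and the argument you give, namely Williamson reduction to $\Delta_A$ followed by a direct sum of two-mode squeezed vacuum blocks $\bigl(\begin{smallmatrix} d_i\mathds{1} & \sqrt{d_i^2-1}\,Z \\ \sqrt{d_i^2-1}\,Z & d_i\mathds{1}\end{smallmatrix}\bigr)$ and conjugation back by $S_A\oplus\mathds{1}_E$, is exactly the standard construction that reference contains; your single-mode purity check ($\det\gamma_i=1$ with both symplectic eigenvalues equal to $1$) and the observation that symplectic congruence preserves both the bona fide condition and purity are all sound.
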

\begin{proof}
See~\cite[Section III.D]{Gpurif}.
\end{proof}

Having all the necessary tools in place, now, we would like to obtain an operator generalization of \eqref{log det ineq} from \eqref{INEQ 1} by applying the symplectic purification trick. This will allow us to generalize the previously mentioned results on steering to Gaussian states with many modes. Note that these results are specific to bona fide quantum CMs. We first note that if a bipartite quantum CM $V_{AB}$ is symplectic, then 
\begin{equation}
V_{AB}^{-1} = \Omega_{A}^T V_{AB}^T \Omega_{AB} = \Omega_{AB}^T V_{AB} \Omega_{AB},
\end{equation}
which by comparison with (\ref{inv}) yields $V_{AB}/V_A=\Omega_B^T V_B^{-1} \Omega_B$. In conjunction with the quotient property of CMs, this implies that
\begin{equation}\label{lemma2}
V_{ABC} \mbox{ is symplectic \ \ $\Rightarrow$ \ \ } V_{AB}/V_B = \Omega_A^T (V_{AC}/V_C)^{-1} \Omega_A\,.
\end{equation}
We then get the following for any tripartite quantum system.
\begin{thm}[Schur complement of quantum CMs is monogamous] \label{sch compl mon}
If $V_{ABC}\geq i \Omega_{ABC}$ is any tripartite quantum CM, then
\begin{equation} V_{AC} / V_A\ \geq\ \Omega_C^T (V_{BC} / V_B)^{-1} \Omega_C\, . \label{INEQ 2} \end{equation}
\end{thm}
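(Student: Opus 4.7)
The strategy is to purify the QCM and then combine Theorem~\ref{pt incr sch} (monotonicity of the Schur complement when one traces in the denominator) with the symplectic duality~\eqref{lemma2}, which on a pure, i.e.~symplectic, CM converts a Schur complement on one bipartition into the \emph{inverse} of the Schur complement on a complementary bipartition. That inversion is exactly what produces the $(V_{BC}/V_B)^{-1}$ on the right-hand side of~\eqref{INEQ 2}, so once the purification is in place the whole theorem should collapse to a two-line chain: an inequality from Theorem~\ref{pt incr sch} followed by an equality from~\eqref{lemma2}.

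Concretely, I would first invoke Lemma~\ref{lemma pur} to embed $V_{ABC}$ as the orthogonal projection of a pure symplectic QCM $V_{ABCD}$ on some enlarged system $ABCD$, where $D$ denotes the purifying modes. By construction, the relevant principal submatrices $V_{AC},V_A,V_{BC},V_B$ coincide with those extracted from the original $V_{ABC}$, so no ambiguity arises. Next, I would apply Theorem~\ref{pt incr sch} to the (not-necessarily-pure) tripartite QCM $V_{ACD}$, taking the Schur complement in the $C$-system and tracing $D$ out of the denominator; formally the roles of $A,B,C$ in that theorem are played here by $C,D,A$. This yields
\begin{equation*}
V_{ACD}/V_{AD}\,\leq\,V_{AC}/V_A.
\end{equation*}

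It then remains to identify the left-hand side with $\Omega_C^{\intercal}(V_{BC}/V_B)^{-1}\Omega_C$. For this I would apply the symplectic duality~\eqref{lemma2} to the pure CM $V_{ABCD}$, regrouped as a tripartite CM with parts $C$, $AD$, and $B$: because the whole $V_{ABCD}$ is symplectic, \eqref{lemma2} gives
\begin{equation*}
V_{ACD}/V_{AD}\,=\,\Omega_C^{\intercal}\,(V_{BC}/V_B)^{-1}\,\Omega_C.
\end{equation*}
Chaining this equality with the preceding inequality delivers~\eqref{INEQ 2}.

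\textbf{Main difficulty.} The only nontrivial point is the combinatorial bookkeeping: one must select a purification and a tripartite regrouping of $ABCD$ such that a single Schur complement of $V_{ABCD}$ is \emph{simultaneously} reached from $V_{AC}/V_A$ by one application of Theorem~\ref{pt incr sch} and reached from $\Omega_C^{\intercal}(V_{BC}/V_B)^{-1}\Omega_C$ by one application of~\eqref{lemma2}. The grouping $(C, AD, B)$ of the purified system achieves both at once, and once this is spotted no further matrix-analytic input is required; a minor sanity check is that~\eqref{lemma2} as derived in the excerpt is genuinely insensitive to the labeling of the three blocks, since its proof uses only the block-inverse formula~\eqref{inv} together with the global symplecticity of the CM.
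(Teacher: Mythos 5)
Your proposal is correct and is essentially identical to the paper's own proof: the paper likewise takes a symplectic purification $V_{ABCD}$ of $V_{ABC}$ and writes the two-step chain $V_{AC}/V_A \geq V_{ACD}/V_{AD} = \Omega_C^{\intercal}(V_{BC}/V_B)^{-1}\Omega_C$, the inequality coming from Theorem~\ref{pt incr sch} and the equality from \eqref{lemma2} applied to the pure (hence symplectic) global CM. Your explicit bookkeeping of the regrouping $(C,AD,B)$ merely spells out what the paper leaves implicit.
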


\begin{proof}
Consider a symplectic purification $V_{ABCD}$ of the system $ABC$. Applying \eqref{INEQ 1} first and then (\ref{lemma2}) yields  \eqref{INEQ 2}: $V_{AC}/V_A\ \geq\ V_{ACD}/V_{AD}\ =\ \Omega_C^T (V_{BC}/V_B)^{-1} \Omega_C$.
Alternatively, observe that the difference between right- and left-hand side of \eqref{INEQ 2} is concave in $V_{ABC}$ (as $V_{AC}/V_A$ is concave and $(V_{BC}/V_B)^{-1}$ is convex), and it vanishes on symplectic CMs by (\ref{lemma2}).
\end{proof}

We remark that the operator inequalities \eqref{INEQ 1} and \eqref{INEQ 2} are significantly stronger than the scalar ones reported in \cite{Gross,Adesso}, as the former establish algebraic limitations directly at the  level of CMs, in a similar spirit to the marginal problem \cite{Tyc2008},
for arbitrary multipartite states.
Equipped with these powerful tools, we proceed to investigate applications to quantum correlations, namely steering and entanglement.
Let us present here another useful observation.

\begin{lemma} \label{lemma fact out}
For all QCMs $V_A\geq i\Omega_A$ of a system $A$, there is a decomposition $\Sigma_A=\Sigma_{A_1} \oplus \Sigma_{A_2}$ of the global symplectic space into a direct sum of two symplectic subspaces such that
\begin{equation}
V_A = V_{A_1} \oplus \eta_{A_2} ,
\end{equation}
where $V_{A_1} > i\Omega_{A_1}$ and $\eta_{A_2}$ is a pure QCM. Furthermore, for every purification $\gamma_{AE}$ of $V_A$ (see Lemma~\ref{lemma pur}) there is a symplectic decomposition of $E$ as $\Sigma_{E}=\Sigma_{E_1} \oplus \Sigma_{E_2}$ such that: (a) $\gamma_{AE}=\gamma_{A_1E_1} \oplus \eta_{A_2} \oplus \tau_{E_2}$, with $\eta_{A_2}, \tau_{E_2}$ pure QCMs; (b) $n_{A_1} = n_{E_1}$; and (c) $\gamma_{E_1}>i\Omega_{E_1}$.
\end{lemma}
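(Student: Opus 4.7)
The plan is to establish the three claims sequentially, using Williamson's decomposition (Lemma \ref{lemma Williamson}) and the pure-reduction factorization (Lemma \ref{lemma pure reduction}) as the main tools. First I would apply Williamson's theorem to $V_A$: there is a symplectic transformation $S_A$ on $\Sigma_A$ and symplectic eigenvalues $d_1,\ldots,d_{n_A}\geq 1$ such that, in the symplectic basis picked out by $S_A$, the QCM $V_A$ takes the normal form $D \oplus D$ with $D=\mathrm{diag}(d_1,\ldots,d_{n_A})$. Grouping the indices with $d_i > 1$ into a subsystem $A_1$ and those with $d_i = 1$ into $A_2$ yields a symplectic splitting $\Sigma_A = \Sigma_{A_1}\oplus \Sigma_{A_2}$ in which $V_A = V_{A_1} \oplus \eta_{A_2}$, where $V_{A_1}$ has symplectic spectrum strictly above $1$ (hence $V_{A_1} > i\Omega_{A_1}$) and $\eta_{A_2}$ has symplectic spectrum $(1,\ldots,1)$, and is therefore a pure QCM. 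This settles the first assertion.

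Next, given any purification $\gamma_{AE}$ of $V_A$, I would factor out the pure summand $\eta_{A_2}$. Viewing $\gamma_{AE}$ through the bipartition $A_2\,|\,A_1 E$, the marginal on $A_2$ is the pure QCM $\eta_{A_2}$, so Lemma \ref{lemma pure reduction} forces $\gamma_{AE} = \eta_{A_2} \oplus \gamma_{A_1 E}$ for some QCM $\gamma_{A_1 E}$ on $A_1 E$ with marginal $V_{A_1}$ on $A_1$. Since $\det\gamma_{AE}=1$ and $\det\eta_{A_2}=1$, we obtain $\det\gamma_{A_1 E}=1$, so $\gamma_{A_1 E}$ is itself a pure QCM. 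Now apply Williamson's theorem a second time, to the reduced QCM $\gamma_E \coloneqq \Pi_E \gamma_{A_1 E}\Pi_E^\intercal$, obtaining a symplectic splitting $\Sigma_E = \Sigma_{E_1}\oplus \Sigma_{E_2}$ with $\gamma_E = \gamma_{E_1}\oplus \tau_{E_2}$, where $\gamma_{E_1}>i\Omega_{E_1}$ and $\tau_{E_2}$ is pure. Applying Lemma \ref{lemma pure reduction} once more along the bipartition $E_2\,|\,A_1 E_1$ of $A_1 E$ gives $\gamma_{A_1 E} = \gamma_{A_1 E_1}\oplus \tau_{E_2}$. Combining everything yields the decomposition in (a), and (c) is immediate from the construction.

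Finally, for (b), the equality $n_{A_1} = n_{E_1}$ is the only genuinely nontrivial dimension count, and is where I expect the main obstacle to lie. The plan is to invoke the Schmidt-like normal form for pure bipartite Gaussian QCMs: any pure $\gamma_{A_1 E_1}$ can be brought, by local symplectic transformations on $A_1$ and $E_1$ separately, into a direct sum of two-mode squeezed vacuum blocks (one per correlated mode pair) padded, on the larger side, with extra vacuum modes. Consequently the \emph{nontrivial} symplectic eigenvalues of $\gamma_{A_1}$ and $\gamma_{E_1}$ agree with multiplicity, while any vacuum padding produces a symplectic eigenvalue equal to $1$. By construction $\gamma_{A_1}=V_{A_1}$ has all $n_{A_1}$ symplectic eigenvalues strictly larger than $1$, and $\gamma_{E_1}>i\Omega_{E_1}$ rules out the value $1$ on the $E_1$ side; therefore no vacuum padding can appear on either side, and the matching of nontrivial symplectic spectra forces $n_{E_1}=n_{A_1}$. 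The main work is justifying this Schmidt-like normal form; the cleanest route is probably to first prove the full statement for the canonical symplectic purification supplied by Lemma \ref{lemma pur}, for which the Williamson form of $V_{A_1}$ makes the normal form manifest, and then to transfer the conclusion to an arbitrary purification using the fact that any two purifications differ by a symplectic transformation on $E$ alone, which preserves all the structural assertions in the statement.
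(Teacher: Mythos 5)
Your treatment of the first claim and of the factorisation $\gamma_{AE}=\gamma_{A_1E_1}\oplus\eta_{A_2}\oplus\tau_{E_2}$ is exactly the paper's: Williamson's decomposition (Lemma~\ref{lemma Williamson}) splits off the symplectic-eigenvalue-one modes into $A_2$, and then Lemma~\ref{lemma pure reduction} is applied twice, once along the cut $A_2\,|\,A_1E$ and once along $E_2\,|\,A_1E_1$ after a second Williamson decomposition of $\gamma_E$. That part is correct and matches the paper step for step.

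Where you diverge is the dimension count $n_{A_1}=n_{E_1}$, and this is where your plan has a genuine gap. You invoke the phase-space Schmidt (mode-wise) normal form of pure Gaussian states, which is true but is nowhere established in the paper and is a heavier result than the lemma itself. More importantly, your proposed route to it --- prove the normal form for the canonical purification of Lemma~\ref{lemma pur} and then transfer it to an arbitrary purification because ``any two purifications differ by a symplectic transformation on $E$ alone'' --- is close to circular: that uniqueness statement holds verbatim only when the two purifying systems have the same number of modes, and for environments of different sizes one must first show that the surplus modes decouple in a pure product state, which is essentially the content of claims (a)--(c) that you are trying to prove. The paper sidesteps all of this with a short rank argument: writing $\gamma_{A_1E_1}=\left(\begin{smallmatrix} V_{A_1} & L \\ L^\intercal & \gamma_{E_1}\end{smallmatrix}\right)$ and using that a pure QCM is symplectic, Equation~\eqref{lemma2} gives $\gamma_{A_1E_1}/\gamma_{E_1}=\Omega V_{A_1}^{-1}\Omega^\intercal$, i.e.\ $L\gamma_{E_1}^{-1}L^\intercal = V_{A_1}-\Omega V_{A_1}^{-1}\Omega^\intercal$. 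By Lemma~\ref{lemma gamma sharp} the strict inequality $V_{A_1}>i\Omega_{A_1}$ makes the right-hand side invertible, of rank $2n_{A_1}$, whereas the left-hand side has rank at most $2n_{E_1}$; hence $n_{A_1}\leq n_{E_1}$, and exchanging the roles of $A_1$ and $E_1$ (using $\gamma_{E_1}>i\Omega_{E_1}$, which you have already secured) gives the reverse inequality. If you want to keep the Schmidt-decomposition route you would have to prove that normal form from scratch; otherwise replace your last step by this rank count.
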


\begin{proof}
The first claim is a direct consequence of Williamson's decomposition, Lemma~\ref{lemma Williamson}. The subspace $\Sigma_{A_2}$ corresponds to those symplectic eigenvalues of $V_A$ that are equal to $1$. 

Now, let us prove the second claim. Consider an arbitrary pure QCM $\gamma_{AE}$ that satisfies $\gamma_A=V_A=V_{A_1}\oplus \eta_{A_2}$. Since in particular $\gamma_{A_2}=\eta_{A_2}$, we can apply Lemma~\ref{lemma pure reduction} and conclude that $\gamma_{AE}=\gamma_{A_1 E} \oplus \eta_{A_2}$. The first claim of the present lemma tells us that $\gamma_{E}=\gamma_{E_1}\oplus \tau_{E_2}$, with $\gamma_{E_1}> i\Omega_{E_1}$ and $\tau_{E_2}$ pure. Again, Lemma~\ref{lemma pure reduction} yields $\gamma_{AE}=\gamma_{A_1E_1}\oplus \eta_{A_2}\oplus \tau_{E_2}$, corresponding to statement (b). Hence, we have only to show that $n_{A_1}=n_{E_1}$. In order to show this, let us write
\begin{equation*}
\gamma_{A_1 E_1} = \begin{pmatrix} V_{A_1} & L \\ L^\intercal & \gamma_{E_1} \end{pmatrix}
\end{equation*}
We can invoke Equation~\ref{lemma2} to deduce the identity $V_{A_1} - L\gamma_{E_1}^{-1} L^\intercal = \Omega V_{A_1}^{-1} \Omega^\intercal$, that is, $L \gamma_{E_1}^{-1} L^\intercal = V_{A_1} - \Omega V_{A_1}^{-1} \Omega^\intercal$. Since the right hand side has maximum rank $2n_{A_1}$ due to the strict inequality $V_{A_1}>i\Omega$ (see the forthcoming Lemma~\ref{lemma gamma sharp}), we conclude that $2n_{E_1}\leq \rk \left( L \gamma_{E_1}^{-1} L^\intercal \right) = 2 n_{A_1}$, and hence $n_{E_1}\leq n_{A_1}$. But the same reasoning can be applied with $A_1$ and $E_1$ exchanged, thus giving $n_{A_1}\leq n_{E_1}$, which concludes the proof.
\end{proof}


If one wants to use Gaussian states to transmit and manipulate quantum
information, the role of measurements is of course central. 
Therefore, it is of prime importance for us to understand how
Gaussian states behave under measurements. Of course, the most natural
and easily implementable measurements are Gaussian as well, meaning that
the $X=\mathds{R}^{2n}$ and the positive operators $E(d^{2n}x)=E(x) d^{2n}x$ are positive
multiples of Gaussian states with a fixed covariance matrix $\sigma$ and
varying first moments $\Tr[E(x) r]\propto x$.
Implementing such a Gaussian measurement on a Gaussian state $\rho$
with a vector of first moments $s$ and a QCM $V$ yields
an outcome $x$ distributed according to a
Gaussian probability distribution
\begin{equation}
p(x) = \frac{2^n e^{-(x-s)^\intercal (V+\gamma)^{-1} (x-s)}}{\sqrt{\det(V+\sigma)}} .
\label{G meas}
\end{equation}
Furthermore, it can be shown that if a bipartite system $AB$ is in a Gaussian
state $\rho_{AB}$ described by a QCM $V_{AB}$ and only the second subsystem
$B$ is subjected to a Gaussian measurement described by a seed QCM $\sigma_B$, the
state of subsystem $A$ after the measurement, given by
$\rho'_A \propto \Tr_B [ \rho_{AB} \left( \text{id}_A\otimes E_B(x)\right) ]$, is again
Gaussian, and is described by first moments depending on the measurement outcome,
but by a fixed QCM which is given by the Schur complement~\cite{nogo1,nogo2,nogo3}
\begin{equation}
V'_B = (V_{AB} + 0_A\oplus\sigma_B) / (V_B +\sigma_B) .
\label{QCM after meas}
\end{equation}

Equation~\eqref{G meas} shows how quantum Gaussian states reproduce classical Gaussian probability
distributions when measured with Gaussian measurements. Thus, log-det entropies
become relevant in the quantum case as well, since they reproduce Shannon
entropies of the experimentally accessible measurement outcomes.

One could also wonder whether the log-det entropy given in~\eqref{logdetent}
can be interpreted directly at the density operator level. 
Interestingly, it can be shown that for an arbitrary Gaussian state with QCM $V$
it holds that
\begin{equation}
H_2(\rho) = \frac12 \ln \det V = M(V) = h(\xi) - n (\ln \pi  + 1).
\label{Renyi-2 G}
\end{equation}
This means that the \mbox{R\'enyi-$2$} entropy {\em coincides} with the log-det entropy
defined in~\eqref{logdetent}~\cite{Adesso12}, and  these quantities in turn coincide, up to an additive constant, with the differential entropy $h(\xi)$ of the classical Gaussian variable $\xi \in \mathds{R}^{2n}$ whose probability distribution is precisely the Wigner function $W_\rho (\xi)$ of the quantum Gaussian state $\rho$. Therefore, in the relevant case of tripartite quantum Gaussian states, the
general inequality~\eqref{SSA} for log-det entropy takes the form of a
SSA inequality for the \mbox{R\'enyi-$2$} entropy~\cite{Adesso12,Gross,Adesso}, holding in addition to the standard
one for R\'enyi-1 entropy aka von Neumann entropy, which is valid for
arbitrary (Gaussian or not) tripartite quantum states.

Note that in general it is not advisable to form entropy expressions from R\'enyi entropies, since they do not obey any nontrivial constraints in a general multi-partite system~\cite{LMW}. In information theory, this is addressed by directly defining well-behaved notions of conditional R\'enyi entropy and R\'enyi mutual information~\cite{Tomamichel-book}. Here, we evade those issues as we are restricting ourselves to Gaussian states. In fact thanks to their special structure Gaussian states satisfy also \mbox{R\'enyi-$2$} entropic inequalities.
Not surprisingly, such inequalities find several applications in continuous variable
quantum information, in particular limiting the performances of quantum
protocols with Gaussian states.
For example, as demonstrated in~\cite{Adesso,Kor}, there is no Gaussian
state of a $(n_A+n_B+n_C)$-mode system $ABC$ that is simultaneously $A\rightarrow C$ steerable 
and $B\rightarrow C$ steerable by Gaussian measurements when $n_C=1$. At the level of QCMs, this is a consequence
of the (non-balanced) inequality
\begin{equation}
M(V_{AC})+ M(V_{BC}) - M(V_A)-M(V_B)\geq 0,
\label{SSA purified}
\end{equation}
to be obeyed by all tripartite QCMs $V_{ABC}$. We stress that
\eqref{SSA purified} cannot hold for all positive definite $V$ (that is, for all
classical covariance matrices), as it can be easily seen by
rescaling it via $V\mapsto kV$, for $k>0$. However, the new matrix
$V$ becomes unphysical for sufficiently small $k$, as it violates
the uncertainty principle~\eqref{Heisenberg}.


\subsection{Gaussian \mbox{R\'enyi-$2$} entanglement of formation}

We are now ready to apply our results to strengthening the SSA inequality~\eqref{SSA} in the quantum case.
This subsection is thus devoted to finding
a sensible lower bound on the log-det conditional mutual information for
all QCMs. This bound will be given by a quantity called
\emph{\mbox{R\'enyi-$2$} Gaussian entanglement of formation}, already introduced
and studied in~\cite{Adesso12}. In general, for a bipartite quantum state $\rho_{AB}$, the \emph{R\'enyi-$\alpha$
entanglement of formation} is defined as the convex hull of the
R\'enyi-$\alpha$ entropy of entanglement defined on pure states~\cite{H42007}, i.e.
\begin{equation}
\begin{split}
  E_{F,\alpha}(A:B)_{\rho}
    &\coloneqq \inf \sum_i p_i \, H_{\alpha}\bigl(\psi_i^A\bigr) \\
           &\quad\text{ s.t. } \rho_{AB} = \sum_i p_i \psi_i^{AB} ,
\end{split}
  \label{EoF}
\end{equation}
where $\psi_{i}^{AB}$ are density matrices of pure states and $\psi_i^A = \Tr_B \psi_i^{AB}$ is the reduced state.

For quantum Gaussian states, an upper bound to this quantity can be
derived by restricting the decompositions appearing in the above infimum
to be comprised of pure Gaussian states only. One obtains what is called
{\em Gaussian R\'enyi-$\alpha$ entanglement of formation}, which is a monotone under Gaussian local operations and classical communication. In terms of
the QCM $V_{AB}$ of $\rho_{AB}$ this is given by the simpler formula~\cite{Wolf03}
\begin{equation}
\begin{split}
  E^{\text{G}}_{F,\alpha}(A:B)_{V}
    &= \inf H_{\alpha}(\gamma_{A}) \\
    &\quad \text{ s.t. } \gamma_{AB} \text{ pure QCM and } \gamma_{AB}\leq V_{AB},
\end{split}
  \label{GEoF}
\end{equation}
where with a slight abuse of notation we denoted with $H_{\alpha}(W)$
the R\'enyi-$\alpha$ entropy of a Gaussian state with QCM $W$,
and $\gamma_{AB}$ stands for the QCM of a pure Gaussian state, i.e.~with $\det\gamma_{AB}=1$.
Incidentally, it has been proven~\cite{EoFsymmetricG,Giovadd} that for some $2$-mode
Gaussian states, the formula~\eqref{GEoF} reproduces exactly
\eqref{EoF}, i.e.~Gaussian decompositions in~\eqref{EoF} are globally optimal.

The most commonly used $E_{F,\alpha}$ is the one corresponding to
the von Neumann entropy, $\alpha=1$. However, as we already saw,
\mbox{R\'enyi-$2$} quantifiers arise quite naturally in the Gaussian setting, because by virtue of~\eqref{Renyi-2 G} they reproduce Shannon entropies of measurement outcomes, cf.~(\ref{G meas}). Thus, from now on
we will focus on the case $\alpha=2$. Under this assumption,
thanks to~\eqref{Renyi-2 G} we see that~\eqref{GEoF} becomes
\begin{equation}
\begin{split}
  E^{\text{G}}_{F,2}(A:B)_{V}
    &= \inf M(\gamma_{A}) \\[0.8ex]
    &\quad \text{ s.t. } \gamma_{AB} \text{ pure QCM and } \gamma_{AB}\leq V_{AB} .
\end{split}
  \label{G R2 EoF}
\end{equation}
We will find it convenient to rewrite the above equation in a slightly different form.
Using the well-known fact that $M(\gamma_A)=M(\gamma_B)=\frac12 I_M(A:B)_\gamma$
when $\gamma_{AB}$ is the QCM of a pure state~\cite{Adesso14}, we obtain
\begin{equation}
\begin{split}
  E^{\text{G}}_{F,2}(A:B)_{V}
    &= \inf \frac12 I_M(A:B)_{\gamma} \\[0.8ex]
    &\quad \text{ s.t. } \gamma_{AB} \text{ pure QCM and } \gamma_{AB}\leq V_{AB} .
\end{split}
  \label{G R2 EoF alt}
\end{equation}

The entanglement measure~\eqref{GEoF} is known to be faithful on quantum Gaussian states,
i.e.~it becomes zero if and only if the Gaussian state with QCM $V_{AB}$ is separable. 

In \cite{LiLuo}, the inequality $I\geq 2E$ is identified as a fundamental postulate for a consistent theory of quantum versus classical correlations in bipartite systems, for an arbitrary measure of entanglement $E$ and of total correlations $I$. This follows from the fact that for pure states classical and quantum correlations are equal and add up to the total correlations \cite{Groisman2005}, while for mixed states classical correlations are intuitively expected to exceed quantum ones, which include entanglement \cite{Henderson2001,Groisman2005,LiLuo}. However, such a relation can already be violated for two-qubit states (Werner states) when $E$ is the entanglement of formation defined via the usual von Neumann entropy \cite{H42007}, and $I$ the corresponding mutual information. In larger dimensions it may even happen that $I<E$  \cite{hayden06}, undermining the interpretation of the entanglement of formation as just a fraction of total correlations.
Perhaps it is worth noticing that in \cite{RenFan} the same inequality $I\geq 2E$ is also shown to be equivalent to a monogamy relation for quantum discord in a pure tripartite state. 
Here we show that $I_M \geq 2 E_{F,2}^G$ \emph{does hold} for Gaussian states of arbitrarily many modes using the R\'enyi-2 quantifiers. 
\begin{thm}\label{I2Ev1}
Let $AB$ be in an arbitrary Gaussian quantum state. Then
\begin{equation}
\frac12 I_M(A:B) \geq E_{F,2}^G(A:B)\, .  \label{noi}
\end{equation}
If $AB$ is in a pure Gaussian state, both sides coincide with the reduced R\'enyi-2 entropy $\frac{1}{2}\log\det V_A$.
\end{thm}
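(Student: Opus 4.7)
The equality claim for pure $V_{AB}$ falls out immediately: since $\det V_{AB}=\det\gamma_{AB}=1$, any pure QCM $\gamma_{AB}\le V_{AB}$ must in fact equal $V_{AB}$ (equality in the L\"owner order between two positive matrices of the same determinant forces matrix equality). Consequently $E^{\text{G}}_{F,2}(A:B)_V = M(V_A)$, and since bipartite pure Gaussian states satisfy $\det V_A=\det V_B$ together with $\det V_{AB}=1$, one obtains $\tfrac12 I_M(A:B)_V=\tfrac12\log\det V_A=M(V_A)$ as well. The heart of the theorem is thus the inequality for mixed $V_{AB}$.

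For the inequality, the plan is to exhibit an explicit pure QCM $\gamma_{AB}\le V_{AB}$ (which is a feasible point in the infimum \eqref{G R2 EoF}) and to control $M(\gamma_A)$ by $\tfrac12 I_M(A:B)_V$. I would first invoke Lemma~\ref{lemma pur} to pass to a symplectic (pure) extension $V_{ABE}=\lmatrix V_{AB} & L \\ L^\intercal & V_E\rmatrix$. By \eqref{QCM after meas}, a Gaussian measurement on $E$ with any pure seed $\sigma_E$ (i.e.\ $\sigma_E\ge i\Omega_E$ with $\det\sigma_E=1$) yields a conditional QCM on $AB$,
\[
  \gamma_{AB}(\sigma_E) \;=\; V_{AB}-L(V_E+\sigma_E)^{-1}L^\intercal,
\]
which is automatically pure and satisfies $\gamma_{AB}(\sigma_E)\le V_{AB}$. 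The resulting family $\{\gamma_{AB}(\sigma_E)\}$ therefore provides a tractable collection of feasible points for the infimum defining $E^{\text{G}}_{F,2}(A:B)_V$, and picking a single good $\sigma_E$ will produce an upper bound on that infimum.

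The main task is then the matrix-analytic one of exhibiting a seed $\sigma_E$ for which $M\bigl(\gamma_A(\sigma_E)\bigr)\le \tfrac12 I_M(A:B)_V$, i.e.\ $\det\gamma_A(\sigma_E)\le\sqrt{\det V_A\det V_B/\det V_{AB}}$. Two natural candidates are $\sigma_E=\mathds{1}_E$ (heterodyne) and $\sigma_E=S_E S_E^\intercal$ read off the Williamson decomposition $V_E=S_E(\Delta_E\oplus\Delta_E)S_E^\intercal$. The key simplification available here is that, since $V_{ABE}$ is symplectic, identity~\eqref{lemma2} gives
\[
  L V_E^{-1}L^\intercal = V_{AB}-\Omega_{AB}^\intercal V_{AB}^{-1}\Omega_{AB},
\]
which, combined with Woodbury's identity, expresses $\gamma_{AB}(\sigma_E)$ purely in terms of $V_{AB}$ and $\sigma_E$ and removes all reference to the particular extension chosen. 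The remaining log-det inequality can then be attacked using the Schur-complement identities of Theorem~\ref{thm I cond geom}, the monotonicity/concavity of Schur complements developed in this chapter, and the variational characterisation~\eqref{Schur var}; the pure-state equality case provides a useful boundary check. The hardest step is precisely this last one: turning the freedom in the choice of $\sigma_E$ into a concrete, verifiable bound. An alternative route, avoiding the explicit optimisation over seeds, would be to parametrise pure $\gamma_{AB}\le V_{AB}$ directly by symplectic matrices $T$ with $TT^\intercal\le V_{AB}$ (so $\gamma_{AB}=TT^\intercal$), and to minimise $\det(\Pi_A TT^\intercal\Pi_A^\intercal)$ via a symplectic Williamson-type reduction of $V_{AB}$.
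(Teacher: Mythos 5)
Your treatment of the pure-state equality claim is correct: a pure $\gamma_{AB}\leq V_{AB}$ with $\det\gamma_{AB}=\det V_{AB}=1$ must equal $V_{AB}$ (equal determinants in the L\"owner order force equality), and the identities $\det V_{AB}=1$, $\det V_A=\det V_B$ then give both sides as $\frac12\log\det V_A$.

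The inequality itself, however, has a genuine gap. You correctly reduce the problem to exhibiting a single pure QCM $\gamma_{AB}\leq V_{AB}$ with $\det\gamma_A\leq\sqrt{\det V_A\det V_B/\det V_{AB}}$, and your measurement-on-the-purification construction does produce a family of feasible points for the infimum in~\eqref{G R2 EoF}. But the theorem lives or dies on actually verifying that determinant bound for some member of the family, and this is exactly the step you defer. Neither of your candidate seeds is shown to work, and there is no obvious reason the heterodyne seed $\sigma_E=\mathds{1}_E$ should: for a generic mixed $V_{AB}$ the conditional QCM after a fixed Gaussian measurement on the purifying system bears no a priori relation to the quantity $\sqrt{\det V_A\det V_B/\det V_{AB}}$. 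The paper sidesteps the optimisation entirely by writing down the explicit ansatz $\gamma^{\#}_{AB}=V_{AB}\#\left(\Omega_{AB}V_{AB}^{-1}\Omega_{AB}^{\intercal}\right)$: Lemma~\ref{lemma gamma sharp} shows this is a pure QCM with $\gamma^{\#}_{AB}\leq V_{AB}$, and Ando's monotonicity of the geometric mean under the positive map $X\mapsto\Pi_A X\Pi_A^{\intercal}$ gives $\gamma^{\#}_A\leq V_A\#\bigl(\Omega_A (V_{AB}^{-1})_A\Omega_A^{\intercal}\bigr)$, whence
\begin{equation*}
\det\gamma^{\#}_A\;\leq\;\sqrt{\det V_A\cdot\det (V_{AB}^{-1})_A}\;=\;\sqrt{\frac{\det V_A\det V_B}{\det V_{AB}}}
\end{equation*}
by \eqref{det geom} and the block-inverse identity $(V_{AB}^{-1})_A=(V_{AB}/V_B)^{-1}$; taking $\frac12\log$ of both sides finishes the proof. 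If you wish to salvage your route, you would have to show that some seed $\sigma_E$ reproduces $\gamma^{\#}_{AB}$ (or another equally good point) at least in a limit --- essentially Proposition~\ref{follia prop} combined with the same geometric-mean computation --- so the step you have left open cannot be avoided.
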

\begin{proof}
The inequality admits a neat proof that makes use of the geometric mean $M\#N$ between positive matrices $M,N$. The key step is that, for any quantum CM $V_{AB}$ obeying the bona fide condition, the matrix $\gamma^{\#}_{AB} = V_{AB}\#(\Omega_{AB} V_{AB}^{-1}\Omega_{AB}^T)$ is the quantum CM of a pure Gaussian state obeying $\gamma^{\#}_{AB} \leq V_{AB}$; using it as an ansatz in Equation~\eqref{G R2 EoF} and exploiting Theorem 3 in \cite{Ando79} one shows that $E_{F,2}^G(A:B)_V  \leq \frac{1}{2} \log\det \gamma^{\#}_A \leq \frac{1}{2} I_M(A:B)_V$.
\end{proof}
This in turn allows to prove useful monogamy properties of~\eqref{G R2 EoF}, captured by the inequality
\begin{equation}
  E^{\text{G}}_{F,2}(A:B_1\ldots B_n)_V \geq  \sum_{j=1}^n E^{\text{G}}_{F,2}(A:B_j)_V,
  \label{CKW}
  \end{equation}
  for any multipartite Gaussian state with QCM $V_{AB_1 \ldots B_n}$. We delay the proof to Lemma \ref{E mono}, where we will be able to give a simple argument. 

We are now in position to apply some of the tools we have been developing
so far to prove a generalisation of the inequality~\eqref{noi} that
is of interest to us since it constitutes also a strengthening of~\eqref{SSA}. Before doing so, we provide a useful lemma. Besides being a versatile tool to be employed throughout the rest of this section, it starts to show how fruitful the application of matrix analysis tools in quantum optics can be.

\begin{lemma} \label{lemma gamma sharp}
Let $K>0$ be a positive matrix. Then $\gamma_K^\# \equiv K\#(\Omega K^{-1}\Omega^\intercal)$ is a pure QCM. Furthermore, $K> i\Omega$ if and only if $K> \Omega K^{-1} \Omega^\intercal$, if and only if $K>\gamma_K^\#$.
\end{lemma}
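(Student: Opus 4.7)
The first claim that $\gamma_K^\#$ is a pure QCM is an algebraic identity; the equivalences will follow by combining a block-matrix/Schur complement interpretation of the Heisenberg condition with the fundamental characterizing property of the geometric mean.

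For the purity claim, recall from Section~\ref{sec exact recov} that a QCM $V$ corresponds to a pure Gaussian state iff $V^{-1}=\Omega V\Omega^\intercal$ (equivalently, $\Omega^\intercal V^{-1}\Omega = V$), a condition that follows directly from $V=SS^\intercal$ with $S$ symplectic. I will verify this relation for $\gamma_K^\#$ using two standard properties of the geometric mean: the inversion identity $(A\# B)^{-1}=A^{-1}\# B^{-1}$ and the congruence covariance $(SAS^\intercal)\#(SBS^\intercal)=S(A\# B)S^\intercal$ recalled in~\eqref{geom cov congr}. Applying the former,
\begin{equation*}
(\gamma_K^\#)^{-1}= K^{-1}\#(\Omega K\Omega^\intercal),
\end{equation*}
and then using $K^{-1}=\Omega(\Omega^\intercal K^{-1}\Omega)\Omega^\intercal$ together with $\Omega^\intercal K^{-1}\Omega=\Omega K^{-1}\Omega^\intercal$ (an easy consequence of $\Omega^\intercal=-\Omega$), the congruence covariance with $S=\Omega$ and the symmetry $A\# B=B\# A$ yield $(\gamma_K^\#)^{-1}=\Omega\gamma_K^\#\Omega^\intercal$, as required. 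The positivity $\gamma_K^\#>0$ is automatic from the definition of the geometric mean on positive operators.

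For the equivalence $K>i\Omega\Leftrightarrow K>\Omega K^{-1}\Omega^\intercal$, the idea is to recognize $K-i\Omega$ as the Hermitian matrix whose positivity is governed by a $2\times 2$ block Schur complement. Concretely, the standard doubling trick identifies $K\geq i\Omega$ (equivalently $K\geq\pm i\Omega$, by complex conjugation) with positivity of the block Hermitian matrix $\left(\begin{smallmatrix}K & i\Omega\\ -i\Omega & K\end{smallmatrix}\right)$; taking the Schur complement of the upper-left block and simplifying using $\Omega^\intercal=-\Omega$ converts this to $K-\Omega K^{-1}\Omega^\intercal\geq 0$. The argument goes through verbatim with strict inequalities, since the Schur complement criterion respects strictness.

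The last equivalence $K>\Omega K^{-1}\Omega^\intercal\Leftrightarrow K>\gamma_K^\#$ is where the geometric mean does the work. In the forward direction I will use the operator monotonicity of the geometric mean together with the elementary fact that $A>B>0$ implies $A>A\# B>B$ strictly (which follows by diagonalizing $A^{-1/2}BA^{-1/2}<\mathds{1}$ and noting $x<\sqrt{x}<1$ for $0<x<1$), applied with $A=K$ and $B=\Omega K^{-1}\Omega^\intercal$. For the converse I will invoke the defining Riccati-type identity $(A\# B)A^{-1}(A\# B)=B$, which in our setting reads $\gamma_K^\# K^{-1}\gamma_K^\#=\Omega K^{-1}\Omega^\intercal$; then $K>\gamma_K^\#$ implies $K^{-1}<(\gamma_K^\#)^{-1}$, and congruence by the invertible symmetric matrix $\gamma_K^\#$ gives $\Omega K^{-1}\Omega^\intercal=\gamma_K^\# K^{-1}\gamma_K^\#<\gamma_K^\#<K$. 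The only mild subtlety is keeping strictness throughout, which is handled by noting that congruence by an invertible matrix preserves strict positivity and by invoking the strict form of the Schur-complement criterion used above; none of these steps requires anything beyond the matrix-analytic tools already developed in this chapter.
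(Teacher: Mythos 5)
Your proof is correct, but it takes a genuinely different route from the paper. The paper's proof is a reduction to Williamson normal form: writing $K=S\Delta S^\intercal$ with $S$ symplectic, it uses $\Omega S^\intercal=S^{-1}\Omega$ and $[\Omega,\Delta]=0$ to compute $\gamma_K^\#=S(\Delta\#\Delta^{-1})S^\intercal=SS^\intercal$, after which all three equivalences collapse to the trivial diagonal statements $\Delta>\mathds{1}\Leftrightarrow\Delta>\Delta^{-1}\Leftrightarrow S\Delta S^\intercal>SS^\intercal$. You instead work intrinsically with the geometric mean: the inversion identity $(A\# B)^{-1}=A^{-1}\# B^{-1}$ plus congruence covariance for purity, a doubling/Schur-complement characterization of $K>i\Omega$, and the Riccati identity $(A\# B)A^{-1}(A\# B)=B$ for the converse of the last equivalence. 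The paper's argument is shorter and makes the explicit form $\gamma_K^\#=SS^\intercal$ visible (which is reused later, e.g.\ in the proof of Theorem~\ref{I2Ev1}); yours is more self-contained at the level of matrix identities and gives, via the Riccati step, a cleaner reason why $K>\gamma_K^\#$ forces $K>\Omega K^{-1}\Omega^\intercal$. Two small points to tidy up: the block matrix you display, $\lmatrix K & i\Omega\\ -i\Omega & K\rmatrix$, is not Hermitian as written --- since $i\Omega$ is Hermitian for real antisymmetric $\Omega$, the lower-left block must be $i\Omega$, and then the Schur complement is indeed $K-\Omega K^{-1}\Omega^\intercal$ as you claim; and your purity step quietly uses the \emph{converse} of ``$V$ pure $\Rightarrow V^{-1}=\Omega V\Omega^\intercal$'', i.e.\ that a positive $V$ with $V^{-1}=\Omega V\Omega^\intercal$ (equivalently $V\Omega V=\Omega$) is a pure QCM --- this is standard but its usual justification again passes through Williamson, so you have not entirely dispensed with the normal form.
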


\begin{proof}
We apply Lemma~\ref{lemma Williamson} to decompose $K=S \Delta S^T$, where $S$ is symplectic and $\Delta$ diagonal. Then, we deduce that
\begin{align*}
\gamma_K^\# &= (S\Delta S^\intercal) \# \left( \Omega S^{-\intercal} \Delta^{-1} S^{-1} \Omega^\intercal \right) \\
&\texteq{(i)} (S\Delta S^\intercal) \# \left( S \Omega \Delta^{-1} \Omega^\intercal S^\intercal \right) \\
&\texteq{(ii)} (S\Delta S^\intercal) \# \left( S \Delta^{-1} S^\intercal \right) \\
&\texteq{(iii)} S \left( \Delta \# \Delta^{-1} \right) S^\intercal \\
&\texteq{(iv)} SS^\intercal ,
\end{align*}
where we used, in order: (i) the identity $\Omega S^\intercal = S^{-1} \Omega$, valid for all symplectic $S$; (ii) the fact that $[\Omega,\Delta]=0$, which is a consequence of Lemma~\ref{lemma Williamson}; (iii) the congruence covariance of the geometric mean,~\eqref{geom cov congr}; and (iv) the elementary observation that $\Delta \# \Delta^{-1}=\mathds{1}$, as follows from the explicit formula~\eqref{geom expl}.
Then, it is easy to observe that $\gamma_{K}^\# $ is the QCM of a pure Gaussian state. The inequality $K>i\Omega$ translates to $\Delta > \mathds{1}$, and in turn to $K=S\Delta S^\intercal > SS^\intercal = \gamma_K^\#$, or alternatively to $\Delta>\Delta^{-1}$ and thus to $K = S \Delta S^\intercal > S \Delta^{-1} S^\intercal = \Omega K^{-1} \Omega^\intercal $. This latter condition can already be found in~\cite[Lemma 1]{sep3-mode}.
\end{proof}

\begin{thm}
  \label{thm I cond G R2 EoF}
  For all tripartite QCMs $V_{ABC}\geq i\Omega_{ABC}$,
  it holds that
  \begin{equation}
    \frac12 I_{M}(A:B|C)_{V} \geq E^{\text{\emph{G}}}_{F,2}(A:B)_{V} .
    \label{ext I con}
  \end{equation}
\end{thm}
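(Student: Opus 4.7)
The plan is to reduce the conditional statement to an unconditional one of the type already established in Theorem \ref{I2Ev1}, with the reduction made possible by the Schur-complement identity
\begin{equation*}
I_M(A:B|C)_V = I_M(A:B)_{V_{ABC}/V_C}
\end{equation*}
from Theorem \ref{thm I cond geom}, equation \eqref{I cond Schur}. Setting $K \coloneqq V_{ABC}/V_C$, the strategy is to construct an explicit pure Gaussian ansatz $\gamma^{\#}_{AB}\leq V_{AB}$ whose \mbox{R\'enyi-$2$} marginal entropy $M(\gamma^{\#}_A)$ is bounded above by $\tfrac{1}{2} I_M(A:B)_K$, and then invoke the variational formula \eqref{G R2 EoF} for $E^{\text{G}}_{F,2}(A:B)_V$.

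First I would verify two preliminary inequalities for $K$: the purely algebraic bound $V_{AB}\geq K$, which follows directly from $V_{AB}-K = L V_C^{-1} L^{\intercal}\geq 0$ with $L=\bigl(\begin{smallmatrix}Y\\ Z\end{smallmatrix}\bigr)$, and the "conditional uncertainty relation" $K\geq i\Omega_{AB}$, which asserts that the Schur complement of a QCM is itself a QCM. The latter can be extracted from $V_{ABC}-i\Omega_{ABC}\geq 0$ by a Schur-complement argument that exploits the block-diagonal structure $\Omega_{ABC}=\Omega_{AB}\oplus \Omega_C$ together with a Williamson reduction on $V_C$. With these in place, define the ansatz
\begin{equation*}
\gamma^{\#}_{AB} \coloneqq K \# \bigl(\Omega_{AB} K^{-1} \Omega_{AB}^{\intercal}\bigr).
\end{equation*}
By Lemma \ref{lemma gamma sharp}, $\gamma^{\#}_{AB}$ is automatically a pure QCM; the same lemma combined with $K\geq i\Omega_{AB}$ yields $\gamma^{\#}_{AB}\leq K$, and chaining with $K\leq V_{AB}$ gives $\gamma^{\#}_{AB}\leq V_{AB}$, so $\gamma^{\#}_{AB}$ is admissible in the variational problem \eqref{G R2 EoF}.

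The last step is to show $M(\gamma^{\#}_A)\leq \tfrac{1}{2} I_M(A:B)_K$. This is exactly the second half of the chain of estimates in the proof of Theorem \ref{I2Ev1}, and its justification via \cite[Theorem 3]{Ando79} (monotonicity of the geometric mean under positive linear maps, applied to the projection onto the $A$ block) requires only that $K$ be positive definite, not that $K$ satisfy the QCM condition. Combining everything yields
\begin{equation*}
E^{\text{G}}_{F,2}(A:B)_V \leq M(\gamma^{\#}_A) \leq \tfrac{1}{2}\, I_M(A:B)_K = \tfrac{1}{2}\, I_M(A:B|C)_V,
\end{equation*}
which is the desired bound. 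The main obstacle I anticipate is the clean matrix-analytic derivation of the conditional uncertainty inequality $V_{ABC}/V_C\geq i\Omega_{AB}$; although physically natural, encoding a Heisenberg principle in the ideal-measurement limit on $C$, some care is required to establish it in a form compatible with the admissibility argument for $\gamma^{\#}_{AB}$.
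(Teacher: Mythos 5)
Your overall architecture --- reducing to the unconditional statement via $I_M(A:B|C)_V = I_M(A:B)_{V_{ABC}/V_C}$ and feeding the ansatz $\gamma^{\#}_{AB} = K\#\bigl(\Omega_{AB}K^{-1}\Omega_{AB}^{\intercal}\bigr)$ with $K=V_{ABC}/V_C$ into the variational formula \eqref{G R2 EoF} --- is the same as the paper's, and your final estimate $M(\gamma^{\#}_A)\leq\tfrac12 I_M(A:B)_K$ is sound, since it only needs $K>0$. The gap is in the admissibility argument: the ``conditional uncertainty relation'' $V_{ABC}/V_C\geq i\Omega_{AB}$ that you invoke to get $\gamma^{\#}_{AB}\leq K$ is \emph{false} in general. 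The Schur complement of a QCM onto a subsystem need not be a QCM: by Equation~\eqref{lemma2}, for a pure bipartite QCM one has $V_{AB}/V_B = \Omega_A^{\intercal}V_A^{-1}\Omega_A$, whose symplectic eigenvalues are the reciprocals of those of $V_A$ and hence all $\leq 1$; for any entangled pure state (e.g.\ a two-mode squeezed vacuum, where $V_{AB}/V_B = \tfrac{1}{\cosh 2r}\,\mathds{1}$) this strictly violates the bona fide condition. Moreover, Lemma~\ref{lemma gamma sharp} says that $\gamma^{\#}_{K}\leq K$ holds precisely when $K\geq i\Omega$, so your chain $\gamma^{\#}_{AB}\leq K\leq V_{AB}$ breaks at its first link exactly in the nontrivial cases (whenever the ideal-measurement limit on $C$ leaves $AB$ in a ``sub-Heisenberg'' conditional covariance).

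The repair is what the paper does, and it is where the genuinely quantum input enters: instead of comparing $\gamma^{\#}_{AB}$ to $K$, compare each of the two arguments of the geometric mean to $V_{AB}$ separately. One has $K\leq V_{AB}$ purely algebraically, and $\Omega_{AB}K^{-1}\Omega_{AB}^{\intercal}\leq V_{AB}$ as a special case of the monogamy inequality of Theorem~\ref{sch compl mon} (this is the step that actually uses $V_{ABC}\geq i\Omega_{ABC}$); monotonicity of the geometric mean then gives $\gamma^{\#}_{AB}= K\#\bigl(\Omega_{AB}K^{-1}\Omega_{AB}^{\intercal}\bigr)\leq V_{AB}\# V_{AB}=V_{AB}$. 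With that substitution the rest of your argument goes through unchanged.
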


\begin{proof}
For any QCM $V_{ABC}$, using the notation of Lemma~\ref{lemma gamma sharp} define
\begin{equation}
  \gamma_{AB} \coloneqq \gamma^\#_{V_{ABC}/V_{C}} .
\end{equation}
Since $V_{ABC}/V_{C}>0$ by the positivity conditions~\eqref{Schur pos}, we see that $\gamma_{AB}$ is a pure QCM. Now we proceed to show that $\gamma_{AB}\leq V_{AB}$. On the one hand, the very definition of Schur complement implies that $V_{ABC}/V_{C}\leq V_{AB}$,
while on the other hand a special case of Theorem \ref{sch compl mon} gives us the
general inequality $V_{ABC}/V_{C}\geq \Omega V_{AB}^{-1}\Omega^{\intercal}$,
i.e.~$\Omega (V_{ABC}/V_{C})^{-1} \Omega^{\intercal}\leq V_{AB}$.
Since the geometric mean is well-known to be monotonic~\cite{Ando79},
we obtain $\gamma_{AB}\leq V_{AB}$. This shows that $\gamma_{AB}$ can be used as an ansatz in~\eqref{G R2 EoF alt}.
We can write
\begin{align*}
     E^{\text{G}}_{F,2}(A:B)_{V} &\leq \frac12 I_M(A:B)_\gamma \\[0.8ex]
     &= \frac12 I_M(A:B)_{(V_{ABC}/V_{C})\# (\Omega (V_{ABC}/V_{C})^{-1}\Omega^\intercal ) } \\[0.8ex]
     &\textleq{(i)} \frac14 I_M(A:B)_{V_{ABC}/V_{C}} + \frac14 I_M(A:B)_{\Omega (V_{ABC}/V_{C})^{-1}\Omega^\intercal} \\[0.8ex]
     &\texteq{(ii)} \frac14 I_M(A:B)_{V_{ABC}/V_{C}} + \frac14 I_M(A:B)_{(V_{ABC}/V_{C})^{-1}} \\[0.8ex]
     &\texteq{(iii)} \frac14 I_M(A:B|C)_V + \frac14 I_M(A:B|C)_V \\[0.8ex]
     &= \frac12 I_M(A:B|C)_V ,
\end{align*}
where we employed, in order: (i) the convexity of log-det mutual information on the trace metric geodesics~\eqref{I conv geod}, (ii) the fact that since $\Omega_{AB}=\Omega_A \oplus \Omega_B$, the equality $I_M(A:B)_{\Omega W \Omega^\intercal }=I_M(A:B)_W$ holds true; and (iii) the identity~\eqref{I cond Schur} for the first term and~\eqref{I cond inv} followed again by~\eqref{I cond Schur} for the second.
\end{proof}


\subsection{Gaussian steerability and its monogamy.}
We call a bipartite state steerable, if one party can remotely steer the other system into different ensembles by making different measurements on their part. 
Here we show how to use our results and techniques to prove general properties of the quantitative measure of steerability by Gaussian measurements proposed in \cite{steerability}. 
Consider a $n$-mode continuous variable quantum system, and denote by $\nu_i(A)$ the $i$--th smallest symplectic eigenvalue of a positive definite CM $0<A=A^T\in \mathcal{M}_{2n}(\mathds{R})$. We define the two functions
\begin{equation} g_\pm (A) =  {\sum}_{i=1}^n\, \max\,\big\{\pm \log \nu_i(A),\, 0\big\}\, . \label{func g} \end{equation}

The function $g_-$ finds many applications in continuous variable quantum information. For instance, the logarithmic negativity \cite{VidalWerner,plenioprl} of a bipartite state $\rho_{AB}$, defined as  $E_N(\rho_{AB}) =  \log \|\rho_{AB}^{\text{\reflectbox{$\Gamma$}}} \|_1$ (where \reflectbox{$\Gamma$} denotes partial transposition), takes the form $E_N(\rho_{AB})=g_-(\tilde{V}_{AB})$ if $\rho_{AB}$ is a Gaussian state with quantum CM $V_{AB}$; here, the partial transpose of the CM is given by $\tilde{V}_{AB}=\Theta V_{AB} \Theta$, with $\Theta=\left(\begin{smallmatrix} \mathds{1} & \\ & -\mathds{1} \end{smallmatrix}\right)_A\oplus \mathds{1}_B$. Furthermore, a quantitative measure of Gaussian steerability (i.e., steerability by Gaussian measurements) has been recently introduced for any state $\rho_{AB}$ with quantum CM $V_{AB}$ \cite{steerability}, that takes the form
\begin{equation}
\mathcal{G}(A\rangle B)_V = g_-(V_{AB}/V_A)\,,
\label{G steer}
\end{equation}
 {in the case of party $A$ steering party $B$. Notice that $\mathcal{G}(A\rangle B)_V>0$ is necessary and sufficient for ``$A$ to $B$'' steerability of a Gaussian state with quantum CM $V_{AB}$ by means of Gaussian measurements on $A$ \cite{steerability,Wiseman}, but is only sufficient if either the state \cite{JOSAB} or the measurements \cite{NoGauss1,NoGauss2} are non-Gaussian.}

The functions $g_\pm$ have useful properties (see \cite{LHAW16} for details):
$g_\pm(A)=g_\pm(SAS^T)$ for all symplectic $S$, $g_\pm (A^{-1}) = g_\mp (A)$, $g_+(A)-g_-(A)=\frac{1}{2}\,\log\det A$, $g_\pm (A\oplus B)=g_\pm (A) + g_\pm (B)$, $g_-(A)$ is monotonically decreasing and convex in $A$, while $g_+(A)$ is monotonically increasing but neither convex nor concave in $A$, and finally  $g_-$ is superadditive in the subsystems,
\begin{equation}\label{dec red g- eq}
g_-(V_{AB}) \geq g_-(V_A) + g_-(V_B)\,.
\end{equation}

Based on these facts, for which the proofs rely on recent advances in the study of symplectic eigenvalues \cite{SympIneq}, we can prove fully general properties of the steerability measure \eqref{G steer}, extending the results of \cite{steerability} where these properties were only proven in the special case of one-mode steered subsystem ($n_B=1$).

\begin{thm}[Properties of Gaussian steerability] \label{G prop}
 The steerability measure \eqref{G steer} enjoys the following properties.
 \begin{enumerate}
\item  ${ \mathcal G}(A\rangle B)_V$ is convex and decreasing in the CM $V_{AB}$;
\item ${ \mathcal G}(A\rangle B)$ is additive under tensor products, i.e.~under direct sums of CMs,
${ \mathcal G}(A_1 A_2\rangle B_1 B_2)_{V_{A_1B_1}\oplus W_{A_2 B_2}} =  \mathcal{G}(A_1\rangle B_1)_{V_{A_1B_1}} +  \mathcal{G}(A_2\rangle B_2)_{W_{A_2 B_2}}$;
\item for arbitrary states, ${ \mathcal G}(A\rangle B)$ is decreasing under general, non-deterministic Gaussian maps on the steering party $A$;
\item for Gaussian states, ${ \mathcal G}(A\rangle B)$ is decreasing under general, non-deterministic Gaussian maps on the steered party $B$;
\item for any quantum CM $V_{ABC}$,  it holds ${ \mathcal G}(A\rangle C)_V  \leq  g_+(V_{BC}/V_B)$. 
\end{enumerate}
\end{thm}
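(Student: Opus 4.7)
The plan is to handle the five items by combining the structural results already in hand---Theorem~\ref{pt incr sch}, Theorem~\ref{CP incr sch}, and Theorem~\ref{sch compl mon}---with the elementary calculus of the symplectic spectral functions $g_\pm$: convexity and monotonicity of $g_-$, additivity on direct sums, symplectic invariance, the duality $g_-(A^{-1})=g_+(A)$, and superadditivity $g_-(V_{AB})\geq g_-(V_A)+g_-(V_B)$. No new principle is needed beyond these.

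For (1), I would note that $V_{AB}\mapsto V_{AB}/V_A$ is concave and monotonically non-decreasing as a function of $V_{AB}$, which is immediate from the variational formula $V_{AB}/V_A=\max\{\tilde B:\,V_{AB}\geq 0\oplus\tilde B\}$; composing with the convex, non-increasing $g_-$ gives a convex, non-increasing functional. For (2), Schur complements split over direct sums as $(V_{A_1B_1}\oplus W_{A_2B_2})/(V_{A_1}\oplus W_{A_2})=(V_{A_1B_1}/V_{A_1})\oplus (W_{A_2B_2}/W_{A_2})$, and $g_-$ is additive on direct sums, whence additivity of $\mathcal{G}$. Item (3) is a one-line consequence of Theorem~\ref{CP incr sch}: it gives $\Gamma_{A\to A'}(V_{AB})/\Gamma_{A\to A'}(V_A)\geq V_{AB}/V_A$, so monotonicity of $g_-$ yields $\mathcal{G}(A'\rangle B)\leq \mathcal{G}(A\rangle B)$; no Gaussianity of $V$ is needed, matching the statement. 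For (5), Theorem~\ref{sch compl mon} yields $V_{AC}/V_A\geq \Omega_C^{\intercal}(V_{BC}/V_B)^{-1}\Omega_C$ and then
\begin{equation*}
\mathcal{G}(A\rangle C)_V=g_-(V_{AC}/V_A)\leq g_-\bigl(\Omega_C^{\intercal}(V_{BC}/V_B)^{-1}\Omega_C\bigr)=g_-\bigl((V_{BC}/V_B)^{-1}\bigr)=g_+(V_{BC}/V_B)
\end{equation*}
by symplectic invariance of $g_\pm$ and the inversion identity.

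The only delicate item is (4), because Theorem~\ref{CP incr sch} says nothing about maps acting on the steered (``numerator'') subsystem. I would instead exploit the Gaussianity of the state to write a symplectic Stinespring dilation of $\Gamma_{B\to B'}$: adjoin an ancilla $E$ in a pure Gaussian state with QCM $\gamma_E$, apply a symplectic $S_{BE}$ on $BE$, and discard an $E'$ subsystem. Congruence-covariance of the Schur complement (acting trivially on $A$) produces the key identity
\begin{equation*}
\tilde V_{AB'E'}/V_A = S_{BE}\bigl[(V_{AB}/V_A)\oplus\gamma_E\bigr]S_{BE}^{\intercal},
\end{equation*}
and $\tilde V_{AB'}/V_A$ is the $B'$-principal submatrix of the right-hand side. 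Applying $g_-$, symplectic invariance removes $S_{BE}$; additivity plus $g_-(\gamma_E)=0$ (since $\gamma_E$ is pure) give $g_-\bigl((V_{AB}/V_A)\oplus\gamma_E\bigr)=g_-(V_{AB}/V_A)$; and finally the superadditivity inequality~\eqref{dec red g- eq} together with the non-negativity of $g_-$ imply that $g_-$ of the $B'$-submatrix is at most $g_-$ of the full matrix, i.e.\ $\mathcal{G}(A\rangle B')_{\tilde V}\leq \mathcal{G}(A\rangle B)_V$.

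The main obstacle is precisely this fourth item: monotonicity of the steerability measure under Gaussian operations on the steered side is not a direct Schur-complement statement, so one really has to engineer a dilation before the problem can be reduced to properties of $g_-$. This is the only step where Gaussianity of the underlying state is used essentially (to justify the CM-level dilation); items (1), (2), (3), and (5) hold for any positive-definite $V_{ABC}$, the quantum constraint $V_{ABC}\geq i\Omega_{ABC}$ being invoked only in (5) through Theorem~\ref{sch compl mon}.
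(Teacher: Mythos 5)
Your proof is correct, and for items (1), (2), (3) and (5) it coincides with the paper's own argument: concavity and monotonicity of $V_{AB}\mapsto V_{AB}/V_A$ composed with the convex, decreasing $g_-$; splitting of Schur complements and additivity of $g_-$ over direct sums; Theorem~\ref{CP incr sch} applied on the steering party; and Theorem~\ref{sch compl mon} combined with symplectic invariance and the identity $g_-(A^{-1})=g_+(A)$. The only place you genuinely diverge is item (4). The paper derives monotonicity on the steered side from the monogamy inequality~\eqref{mon steer 1} (i.e.\ $\mathcal{G}(A\rangle BC)\geq\mathcal{G}(A\rangle B)+\mathcal{G}(A\rangle C)$, itself proven from the superadditivity~\eqref{dec red g- eq}) together with positivity, symplectic invariance and an implicit dilation; you make the dilation explicit and apply~\eqref{dec red g- eq} directly to $\tilde V_{AB'E'}/V_A$, which is the same underlying ingredient but more self-contained, since it avoids the forward reference to Theorem~\ref{G mono}. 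One caveat you share with the paper: a dilation of the form ``pure Gaussian ancilla, symplectic on $BE$, partial trace'' realizes only \emph{deterministic} Gaussian channels, so your argument (exactly like the paper's, whose appendix proof explicitly claims only monotonicity under ``deterministic quantum Gaussian channels on the steered party'') does not cover the non-deterministic maps mentioned in the wording of item (4); the paper's own remark after the proof states that the claim \emph{fails} for genuinely non-deterministic maps on the steered side, so the qualifier ``non-deterministic'' in item (4) is a slip in the statement rather than a gap in either proof.
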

\begin{proof} See Appendix \ref{appGauss} for detailed proofs. \end{proof}

 {Theorem~\ref{G prop} establishes ${ \mathcal G}(A \rangle B)_V$ as a convex monotone for arbitrary Gaussian states with quantum CM $V_{AB}$ under arbitrary local Gaussian operations on either the steering or the steered parties, hence fully validating it within the Gaussian subtheory of the recently formulated resource theory of steering \cite{resource}.} Moreover, our framework allows us to address the general problem of the monogamy of ${ \mathcal G}(A \rangle B)$ for {\textit arbitrary} (Gaussian or not) multimode states. For a state with quantum CM $V_{AB_1\ldots B_k}$, consider the following inequalities
\begin{eqnarray}
\mathcal{G}(A\rangle B_1\ldots B_k) &\geq& {\sum}_{j=1}^k \mathcal{G}(A\rangle B_j) \label{mon steer 1}\,,\\
\mathcal{G}(B_1\ldots B_k \rangle A) &\geq& {\sum}_{j=1}^k \mathcal{G}(B_j\rangle A) \label{mon steer 2}\,.
\end{eqnarray}
In a very recent study \cite{MonSteer}, both inequalities were proven in the special case of a  $(k+1)$-mode system with one single mode per party, i.e., $n_A=n_{B_j}=1$ ($j=1,\ldots,k$). We now show that  only one of these constraints holds in full generality.

\begin{thm}[Monogamy of Gaussian steerability] \label{G mono}
(a) Ineq.~(\ref{mon steer 1}) holds for any multimode quantum CM $V_{A B_1 \ldots B_k}$. (b) Ineq.~(\ref{mon steer 2}) holds for any multimode quantum CM $V_{A B_1 \ldots B_k}$ such that either $A$ comprises a single mode ($n_A=1$), or $V_{A B_1 \ldots B_k}$ belongs to a pure state, but can be violated otherwise.
\end{thm}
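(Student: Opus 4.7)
The plan splits naturally along the statement: (a) reduces to a block-structural identity for Schur complements combined with Equation~\eqref{dec red g- eq}, while (b) is handled via the pure-state identity in Equation~\eqref{lemma2} and the monogamy of Gaussian \mbox{R\'enyi-$2$} entanglement of formation from Equation~\eqref{CKW}.

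For (a), I would first observe a direct block-structural property: if one partitions the second party as $B=B_1\cup\cdots\cup B_k$ and writes $V_{AB_1\ldots B_k}=\left(\begin{smallmatrix} V_A & X \\ X^\intercal & V_{B_1\ldots B_k} \end{smallmatrix}\right)$ with $X=(X_1,\ldots,X_k)$ partitioned column-wise, then by the definition of the Schur complement one has $(V_{AB_1\ldots B_k}/V_A)_{B_j}=V_{B_j}-X_j^\intercal V_A^{-1}X_j = V_{AB_j}/V_A$. Superadditivity of $g_-$ under direct-sum-like reductions (Equation~\eqref{dec red g- eq}) then gives
\begin{equation*}
\mathcal{G}(A\rangle B_1\ldots B_k)_V = g_-(V_{AB_1\ldots B_k}/V_A) \geq \sum_{j=1}^k g_-\bigl((V_{AB_1\ldots B_k}/V_A)_{B_j}\bigr) = \sum_{j=1}^k \mathcal{G}(A\rangle B_j)_V,
\end{equation*}
which is exactly \eqref{mon steer 1}.

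For the pure-state case of (b), the strategy is to rewrite the left-hand side in a form amenable to \eqref{CKW}. Since $V_{AB_1\ldots B_k}$ is symplectic, Equation~\eqref{lemma2} gives $V_{AB_1\ldots B_k}/V_{B_1\ldots B_k}=\Omega_A^\intercal V_A^{-1}\Omega_A$, and using $g_-(A^{-1})=g_+(A)$ together with $V_A\geq i\Omega_A$ yields $\mathcal{G}(B_1\ldots B_k\rangle A)_V = g_+(V_A) = M(V_A)$. For a pure global state, the reduction $V_A$ also equals $E_{F,2}^G(A:B_1\ldots B_k)_V$ (evaluated on the trivial optimal ansatz $\gamma_{AB_1\ldots B_k}=V_{AB_1\ldots B_k}$), so \eqref{CKW} becomes $M(V_A) \geq \sum_j E_{F,2}^G(A:B_j)_V$. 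It then suffices to show $E_{F,2}^G(A:B_j)_V\geq \mathcal{G}(B_j\rangle A)_V$ for each $j$: picking an optimal pure ansatz $\gamma_{AB_j}\leq V_{AB_j}$, Equation~\eqref{lemma2} gives $g_-(\gamma_{AB_j}/\gamma_{B_j})=M(\gamma_A)$, and the ansatz produces a Gaussian map $V_{B_j}\mapsto \gamma_{B_j}$ of the non-deterministic form \eqref{CP gauss} under which steerability on the steered side $A$ decreases (Theorem~\ref{G prop}, item 4), yielding the claim.

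For $n_A=1$ in (b), I would symplectically purify $V_{AB_1\ldots B_k}$ by an ancilla $E$ (Lemma~\ref{lemma pur}) and apply the pure-state argument above to the extended state, since the single-mode steered party $A$ now has a pure environment $B_1\ldots B_k E$; the fact that reductions of the purification do not increase $g_-$ of the relevant Schur complements (Theorem~\ref{G prop}) reduces the multi-party monogamy on $B_1,\ldots,B_k$ to the pure case. Finally, to prove that \eqref{mon steer 2} can fail for $n_A\geq 2$ and mixed $V$, the most direct route is an explicit example: a small multi-mode Gaussian state with $n_A=2$ and $k=2$ one-mode steerers, whose CM is parametrised by a few free parameters and whose steerabilities $\mathcal{G}(B_j\rangle A)$ are computed by diagonalising $2\times 2$ symplectic spectra of the Schur complements. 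The main technical obstacle in the whole theorem is the step $E_{F,2}^G(A:B_j)_V\geq \mathcal{G}(B_j\rangle A)_V$ in the pure case; the rest is essentially bookkeeping once one has identified the Schur-complement block identity in (a) and the pure-state identity \eqref{lemma2} as the two key tools.
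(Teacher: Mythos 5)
Your part (a) and the pure-state half of (b) follow essentially the same route as the paper: for (a), the observation that the $B_j$ diagonal block of $V_{AB_1\ldots B_k}/V_A$ equals $V_{AB_j}/V_A$ combined with the superadditivity \eqref{dec red g- eq} of $g_-$; for pure states, the identity $\mathcal{G}(B_1\ldots B_k\rangle A)=M(V_A)=E^{\text{G}}_{F,2}(A:B_1\ldots B_k)$ followed by the monogamy \eqref{CKW} and the bound $E^{\text{G}}_{F,2}(A:B_j)\geq\mathcal{G}(B_j\rangle A)$ (the latter is Theorem~\ref{I2E}; it follows more directly than your channel argument from monotonicity of $g_-(\,\cdot\,/\,\cdot\,)$ in the CM applied to the optimal pure ansatz $\gamma_{AB_j}\leq V_{AB_j}$). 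The counterexample for $n_A\geq 2$ does require exhibiting an explicit CM and checking both the bona fide condition and the violation numerically; your plan is right but incomplete without the matrix.

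The genuine gap is in the $n_A=1$ case. Your proposed reduction --- purify $V_{AB_1\ldots B_k}$ to $\gamma_{AB_1\ldots B_kE}$ and invoke the pure-state monogamy --- cannot work, for two reasons. First, the monotonicity goes the wrong way: the pure-state result gives $\mathcal{G}(B_1\ldots B_kE\rangle A)_\gamma\geq\sum_j\mathcal{G}(B_j\rangle A)_V$, but discarding $E$ on the \emph{steering} side can only decrease the left-hand side, i.e.\ $\mathcal{G}(B_1\ldots B_kE\rangle A)_\gamma\geq\mathcal{G}(B_1\ldots B_k\rangle A)_V$, so the chain does not close into the desired $\mathcal{G}(B_1\ldots B_k\rangle A)_V\geq\sum_j\mathcal{G}(B_j\rangle A)_V$. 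Second, and more tellingly, your argument never uses $n_A=1$; if it were valid it would establish \eqref{mon steer 2} for arbitrary $n_A$, contradicting the counterexample that is part of the very theorem you are proving. The paper's argument for $n_A=1$ is of a different nature: by the log-det inequality \eqref{log det ineq} a single mode cannot be simultaneously Gaussian-steered by two disjoint parties, so at most one term $\mathcal{G}(B_j\rangle A)$ on the right-hand side of \eqref{mon steer 2} is nonzero, and that single term is dominated by $\mathcal{G}(B_1\ldots B_k\rangle A)$ via monotonicity under partial trace on the steering party (Theorem~\ref{G prop}). You need some such input that is specific to a one-mode steered system; without it the case $n_A=1$ remains unproven.
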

\begin{proof}  See Appendix \ref{appGauss} for detailed proofs. \end{proof}

The Gaussian steerability is thus not monogamous with respect to a common steered party $A$ when the latter is made of two or more modes, with violations of (\ref{mon steer 2}) existing already in a tripartite setting ($k=2$) with $n_{B_1}=n_{B_2}=1$ and $n_A=2$; a counterexample is reported in the detailed proof of Theorem \ref{G mono} in Appendix \ref{appGauss}. 
What is truly monogamous is the log-determinant of the Schur complement, which only happens to coincide with the function $g_-$ when $n_A=1$.

At the end of this section, we would like to reconnect to the results of the past section with the following theorem. 
\begin{thm}[Gaussian R\'enyi-2 correlations hierarchy]\label{I2E}
Let $AB$ be in an arbitrary Gaussian quantum state. Then
\begin{equation}
\mbox{$\frac12 I_M(A:B) \geq E_{F,2}^G(A:B) \geq { \mathcal G}(A \rangle B)$}\, . \label{I>2E}
\end{equation}
If $AB$ is in a pure Gaussian state, all the above three quantities coincide with the reduced R\'enyi-2 entropy $\frac{1}{2}\log\det V_A$.
\end{thm}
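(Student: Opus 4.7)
The first inequality $\frac12 I_M(A:B)_V \geq E_{F,2}^G(A:B)_V$ is precisely the content of Theorem~\ref{I2Ev1}, which has already been established via the pure ansatz $\gamma^\#_{V_{AB}} = V_{AB}\#(\Omega_{AB} V_{AB}^{-1}\Omega_{AB}^T)$. The plan is therefore to focus on the second inequality $E_{F,2}^G(A:B)_V \geq \mathcal{G}(A\rangle B)_V$, and then verify that both inequalities collapse to $\tfrac{1}{2}\log\det V_A$ when $V_{AB}$ is a pure-state covariance matrix.

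The strategy for the second inequality is to show that the Gaussian steerability $\mathcal{G}(A\rangle B)$ is a valid lower bound for \emph{every} pure ansatz entering the optimisation~\eqref{G R2 EoF alt}. Let $\gamma_{AB}\leq V_{AB}$ be an arbitrary pure QCM. I would first establish the identity
\begin{equation*}
M(\gamma_A)=g_-(\gamma_{AB}/\gamma_A).
\end{equation*}
Purity of $\gamma_{AB}$ means it is symplectic, so by~\eqref{lemma2} one has $\gamma_{AB}/\gamma_A = \Omega_B^T \gamma_B^{-1}\Omega_B$; using the symplectic invariance of $g_-$ together with $g_-(X^{-1})=g_+(X)$ this becomes $g_+(\gamma_B)$. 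Since $\gamma_B\geq i\Omega_B$, all its symplectic eigenvalues are $\geq 1$, so $g_+(\gamma_B)=\tfrac12\log\det\gamma_B=M(\gamma_B)$; and by the symmetry of symplectic spectra of the reductions of a pure Gaussian state (Lemma~\ref{lemma pure reduction} together with Williamson's decomposition), $M(\gamma_B)=M(\gamma_A)$.

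Next I would use the two monotonicity facts recalled earlier. The variational representation~\eqref{Schur var} of Schur complements shows that $\gamma_{AB}\leq V_{AB}$ implies $\gamma_{AB}/\gamma_A\leq V_{AB}/V_A$. Since $g_-$ is monotonically decreasing in the matrix argument, this gives
\begin{equation*}
M(\gamma_A)=g_-(\gamma_{AB}/\gamma_A)\;\geq\;g_-(V_{AB}/V_A)=\mathcal{G}(A\rangle B)_V.
\end{equation*}
Taking the infimum over all admissible pure $\gamma_{AB}$ on the left yields $E_{F,2}^G(A:B)_V\geq \mathcal{G}(A\rangle B)_V$.

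Finally, the pure-state equality case: when $V_{AB}$ itself is a pure QCM, the choice $\gamma_{AB}=V_{AB}$ is feasible in~\eqref{G R2 EoF alt}, so $E_{F,2}^G(A:B)_V\leq M(V_A)$; combined with the general chain just established and the identity $\tfrac12 I_M(A:B)_V=M(V_A)$ valid on pure Gaussian states, all three quantities must equal $\tfrac12\log\det V_A$. The main conceptual step is the computation $M(\gamma_A)=g_-(\gamma_{AB}/\gamma_A)$ for pure $\gamma$; the rest of the argument reduces to the two monotonicities already packaged in Section~\ref{sec exact recov} and in the properties of $g_\pm$ stated before Theorem~\ref{G prop}.
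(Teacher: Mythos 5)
Your proposal is correct and follows essentially the same route as the paper: the left inequality is cited from Theorem~\ref{I2Ev1}, and the right inequality is obtained exactly as in the paper's corollary of Theorem~\ref{G prop}, namely via the pure-state identity $M(\gamma_A)=g_-(\gamma_{AB}/\gamma_A)$ together with the monotonicity of the Schur complement under the L\"owner order and the fact that $g_-$ is decreasing (you merely spell out the derivation of the identity through $\gamma_{AB}/\gamma_A=\Omega_B^{\intercal}\gamma_B^{-1}\Omega_B$ in more detail than the paper does). One small point on the equality case: the sandwich $\tfrac12 I_M = M(V_A)\geq E^{\text{G}}_{F,2}\geq\mathcal{G}(A\rangle B)$ together with $E^{\text{G}}_{F,2}\leq M(V_A)$ does not by itself force $\mathcal{G}(A\rangle B)=M(V_A)$; you should additionally invoke your own identity with $\gamma_{AB}=V_{AB}$ to get $\mathcal{G}(A\rangle B)_V=g_-(V_{AB}/V_A)=M(V_A)$ directly, which closes the argument.
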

\begin{proof}
The inequality on the left is simply Theorem \ref{I2Ev1}. The inequality on the right is a corollary of Theorem~\ref{G prop}. 
\end{proof}
Remarkably, this proves that the involved measures quantitatively capture the general hierarchy of correlations \cite{ABC} in arbitrary Gaussian states \cite{Adesso14}: the Gaussian steerability is generally smaller than the entanglement degree, which accounts for a portion of quantum correlations up to half the total ones.

\subsection{Gaussian \mbox{R\'enyi-$2$} squashed entanglement} \label{subsec Gauss sq}

In finite-dimensional quantum mechanics, the positivity of conditional
mutual information allows to construct a powerful entanglement measure
called \emph{squashed entanglement}, defined for a bipartite state
$\rho_{AB}$ by~\cite{CW04}
\begin{equation}
  E_{\text{sq}}(A:B)_{\rho} \coloneqq \inf_{\rho_{ABC}} \frac12 I(A:B|C)_{\rho} ,
\end{equation}
where the infimum ranges over all possible ancillary quantum systems $C$
and over all the possible states $\rho_{ABC}$ having marginal $\rho_{AB}$.
We are now in a position to discuss a similar quantity tailored to Gaussian states.
First, we can restrict the infimum by considering only Gaussian extensions,
which corresponds to the step leading from~\eqref{EoF} to~\eqref{GEoF}.
Secondly, as it was done to arrive at~\eqref{G R2 EoF}, we can substitute
von Neumann entropies with \mbox{R\'enyi-$2$} entropies. The result is
\begin{equation}
  E^{\text{G}}_{\text{sq},2}(A:B)_{V} \coloneqq \inf_{V_{ABC}} \frac12 I_{M}(A:B|C)_{V},
  \label{Gauss sq}
\end{equation}
where the infimum is on all extended QCMs $V_{ABC}$ satisfying the condition
$\Pi_{AB} V_{ABC}\Pi_{AB}^\intercal = V_{AB}$ on the $AB$ marginal (and~\eqref{Heisenberg}).
We dub the quantity in~\eqref{Gauss sq} {\em Gaussian \mbox{R\'enyi-$2$} squashed
entanglement}, stressing that it is a quantifier specifically tailored to Gaussian states
and different from the R\'enyi squashed entanglement defined
in~\cite{SBW14} for general states, for which an alternative expression
for the conditional R\'enyi-$\alpha$ mutual information is adopted instead.

Despite the complicated appearance of the expression~\eqref{Gauss sq}, it turns out that \emph{the Gaussian \mbox{R\'enyi-$2$} squashed entanglement coincides with the Gaussian \mbox{R\'enyi-$2$} entanglement of formation for all bipartite QCMs}. This unexpected fact shows once more that \mbox{R\'enyi-$2$} quantifiers are particularly well behaved when employed to analyse Gaussian states, while at the same time it provides us with a novel, alternative expression of $E^{\text{G}}_{F,2}$ that can be used to understand its basic properties in a different and sometimes more intuitive way. Before stating the main result of this subsection, we need some preliminary results.

\begin{lemma} \label{lemma follia 0}
Let $\gamma_{AB}$ be a pure QCM of a bipartite system $AB$ such that $n_A=n_B=n$ and $\gamma_A>i\Omega_A$. Then
\begin{equation*}
\left(\gamma_{AB}+ i \Omega_{AB}\right) \big/ \left( \gamma_A + i\Omega_A \right) = 0_B .
\end{equation*}
\end{lemma}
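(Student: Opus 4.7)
The plan is to exploit the rank-additivity property of Schur complements, Equation~\eqref{rank add}, applied to the Hermitian matrix $M \coloneqq \gamma_{AB}+i\Omega_{AB}$. First I would observe that, because $\Omega_{AB} = \Omega_A \oplus \Omega_B$ is block diagonal, if we write $\gamma_{AB} = \lmatrix \gamma_A & L \\ L^\intercal & \gamma_B \rmatrix$ then $M$ takes the block form
\begin{equation*}
M = \begin{pmatrix} \gamma_A + i\Omega_A & L \\ L^\intercal & \gamma_B + i\Omega_B\end{pmatrix}=\begin{pmatrix} M_A & L \\ L^\intercal & M_B \end{pmatrix},
\end{equation*}
and the quantity we want to evaluate is precisely $M/M_A$. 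By the bona fide condition~\eqref{Heisenberg} we have $M \geq 0$.

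The next step is to pin down the ranks of $M$ and $M_A$. For the rank of $M_A$: since $\gamma_A > i\Omega_A$ by hypothesis, the reasoning in the proof of Lemma~\ref{lemma gamma sharp} also yields $\gamma_A > -i\Omega_A$ (both inequalities $\gamma_A>\pm i\Omega_A$ are equivalent to $\gamma_A > \gamma_{\gamma_A}^\#$), so adding them shows $M_A > 0$ is strictly positive definite and hence $\rk M_A = 2n_A = 2n$. For the rank of $M$, I would invoke purity via Williamson's decomposition (Lemma~\ref{lemma Williamson}): $\gamma_{AB}$ pure means all its symplectic eigenvalues equal $1$, so $\gamma_{AB} = S S^\intercal$ for some symplectic $S$. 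Using $S\Omega_{AB}S^\intercal = \Omega_{AB}$, this gives
\begin{equation*}
M = S(\mathds{1}_{2(n_A+n_B)} + i\Omega_{AB}) S^\intercal,
\end{equation*}
and a direct one-mode calculation shows $\mathds{1}_{2} + i\Omega$ has rank $1$ per mode, so $\rk M = n_A+n_B = 2n$.

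Finally, the additivity formula~\eqref{rank add} for Schur complements gives
\begin{equation*}
\rk(M/M_A) = \rk M - \rk M_A = 2n - 2n = 0.
\end{equation*}
On the other hand, the characterization of positivity via Schur complements~\eqref{Schur pos}, applied to $M \geq 0$ with $M_A > 0$, forces $M/M_A \geq 0$. A positive semidefinite matrix of rank zero is the zero matrix, so $M/M_A = 0_B$, which is the claim.

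I do not foresee any serious obstacle: the key non-trivial ingredient is that the rank-additivity formula~\eqref{rank add} and the positivity characterization~\eqref{Schur pos} extend verbatim from real symmetric to complex Hermitian block matrices, which is standard. The only small subtlety worth stressing in the write-up is the equivalence $\gamma_A > i\Omega_A \Leftrightarrow \gamma_A > -i\Omega_A$ needed to guarantee invertibility of $M_A$, and the fact that the purity hypothesis is used precisely to obtain the minimal-possible rank $n_A + n_B$ of $\gamma_{AB} + i\Omega_{AB}$.
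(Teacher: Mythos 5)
Your proof is correct and follows essentially the same route as the paper: Williamson's decomposition to show $\rk(\gamma_{AB}+i\Omega_{AB})=n_A+n_B=2n$, full rank $2n$ of $\gamma_A+i\Omega_A$ from the strict inequality $\gamma_A>i\Omega_A$, and rank additivity of the Schur complement~\eqref{rank add} to conclude the rank is zero. The only differences are cosmetic: the paper stops at rank zero (which already forces the zero matrix, so your appeal to~\eqref{Schur pos} is superfluous), and your invertibility argument for $\gamma_A+i\Omega_A$ is phrased slightly awkwardly --- the relevant fact is simply that $\gamma_A>i\Omega_A$ is equivalent to $\gamma_A>-i\Omega_A$, i.e.\ to $\gamma_A+i\Omega_A>0$.
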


\begin{proof}
From Williamson's decomposition, Lemma~\ref{lemma Williamson}, we see that whenever $\gamma_{AB}$ is pure, one has $\rk(\gamma_{AB}+i\Omega_{AB}) = n_A + n_B = 2n$ (i.e. half the maximum). Since already $\rk(\gamma_A + i\Omega_A) = 2n$, the additivity of ranks under Schur complements~\eqref{rank add} tells us that $\rk \left( \left(\gamma_{AB}+ i \Omega_{AB}\right) \big/ \left( \gamma_A + i\Omega_A \right) \right) = 0$, concluding the proof.
\end{proof}

\begin{prop} \label{follia prop}
Let $V_{AB}$ be a QCM of a bipartite system, and let $\gamma_{ABC}$ be a fixed purification of $V_{AB}$ (see Lemma~\ref{lemma pur}). Then, for all pure QCMs $\tau_{AB} \leq V_{AB}$ there exists a one-parameter family of pure QCMs $\sigma_C (t)$ (where $0<t\leq 1$) on $C$ such that
\begin{equation}
\gamma'_{AB}(t)\coloneqq \left(\gamma_{ABC} + 0_{AB} \oplus \sigma_C(t) \right) \big/ \left( \gamma_C +\sigma_C(t) \right) .
\label{follia prop eq}
\end{equation}
is a pure QCM for all $t>0$, and $\lim_{t\rightarrow 0^+} \gamma'_{AB}(t) = \tau_{AB}$. Equivalently, there is a sequence of Gaussian measurements on $C$, identified by pure seeds $\sigma_C(t)$, such that the QCM of the post-measurement state on $AB$ is pure and tends to $\tau_{AB}$ (see~\eqref{QCM after meas}). 
\end{prop}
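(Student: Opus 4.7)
For any pure seed $\sigma_C(t)$ the matrix $\gamma'_{AB}(t)$ is automatically the QCM of a pure Gaussian state: on the Hilbert space side, performing a pure (rank-one) Gaussian POVM on part of a globally pure state yields a pure conditional state on the remainder, and at the level of covariance matrices this is just $\det \gamma'_{AB}(t)=1$. Hence the only real content of the statement is the approximation claim, which is a Gaussian version of the Schr\"odinger--Hughston--Jozsa--Wootters correspondence: every pure $\tau_{AB}\leq V_{AB}$ must arise as the (possibly limiting) conditional post-measurement QCM for some sequence of pure Gaussian measurements on $C$.

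Before constructing $\sigma_C(t)$ one may assume $V_{AB}>i\Omega_{AB}$ strictly. Indeed, if $V_{AB}$ contains a pure symplectic direct summand, Lemma~\ref{lemma fact out} together with Lemma~\ref{lemma pure reduction} forces this summand to factor out of the purification $\gamma_{ABC}$, and any pure $\tau_{AB}\leq V_{AB}$ must carry the same summand identically; proving the proposition on the remaining non-degenerate block then implies the full statement. Writing $\gamma_{ABC}=\lmatrix V_{AB} & M \\ M^\intercal & \gamma_C \rmatrix$, the Schur complement formula gives
\begin{equation*}
\gamma'_{AB}(t)=V_{AB}-M\bigl(\gamma_C+\sigma_C(t)\bigr)^{-1}M^\intercal,
\end{equation*}
so the problem reduces to choosing a one-parameter family of pure seeds with
\begin{equation*}
M\bigl(\gamma_C+\sigma_C(t)\bigr)^{-1}M^\intercal\;\longrightarrow\; \Delta:=V_{AB}-\tau_{AB}\;\geq\;0\qquad\text{as}\ t\to 0^+.
\end{equation*}

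To produce $\sigma_C(t)$, I would parametrise pure QCMs on $C$ symplectically as $\sigma_C=SS^\intercal$ with $S$ symplectic, and exploit the rigid identity $\gamma_{ABC}/\gamma_C=\Omega_{AB}V_{AB}^{-1}\Omega_{AB}^\intercal$ that follows from the purity (hence symplecticity) of $\gamma_{ABC}$; this fixes $M\gamma_C^{-1}M^\intercal=V_{AB}-\Omega_{AB}V_{AB}^{-1}\Omega_{AB}^\intercal$ and pins down the range of $M$. For a generic pure target $\tau_{AB}$ the matrix $\Delta$ will be rank-deficient, so the limit of $(\gamma_C+\sigma_C(t))^{-1}$ required by the equation above is a singular positive operator; the parameter $t$ is there precisely to regularise this, with $\sigma_C(t)$ diverging along the kernel of the limiting operator while remaining a bona fide pure QCM for every $t>0$. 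Physically this corresponds to sharpening a heterodyne-type Gaussian measurement into a projective one. The main obstacle will be verifying surjectivity of the construction, i.e., that every pure $\tau_{AB}\leq V_{AB}$ is reached in the limit; here I expect to combine the symplectic freedom in the purification (purifications being unique up to symplectic transformations on $C$ by Lemma~\ref{lemma pur}) with the symplectic freedom in $S(t)$, and to use Lemma~\ref{lemma follia 0}---placed immediately before the proposition and which forces a specific Schur complement to vanish on pure states---to certify that the $t\to 0^+$ limit is well defined and equals the prescribed $\tau_{AB}$.
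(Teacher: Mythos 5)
Your preliminary reductions are correct and agree with the paper's: the purity of $\gamma'_{AB}(t)$ is automatic for pure seeds, the pure symplectic summand of $V_{AB}$ factors out of both the purification and any pure $\tau_{AB}\leq V_{AB}$ (by minimality of pure QCMs), and after that the problem is exactly to realise $M\left(\gamma_C+\sigma_C\right)^{-1}M^{\intercal}=V_{AB}-\tau_{AB}$. But the proposal stops where the actual work begins: you never exhibit $\sigma_C(t)$, and you yourself flag the surjectivity of your (unspecified) construction as ``the main obstacle'' without resolving it. The missing idea is that once $\tau_{AB}<V_{AB}$ \emph{strictly}, no symplectic parametrisation or limiting procedure is needed at all --- the equation can simply be solved for the seed. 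The identity $\gamma_{ABC}/\gamma_C=\Omega V_{AB}^{-1}\Omega^{\intercal}$ that you quote gives $M\gamma_C^{-1}M^{\intercal}=V_{AB}-\Omega V_{AB}^{-1}\Omega^{\intercal}>0$ by Lemma~\ref{lemma gamma sharp}, hence the square block $M$ is invertible, and setting
\[
\sigma_C:=M^{\intercal}\left(V_{AB}-\tau_{AB}\right)^{-1}M-\gamma_C
\]
satisfies the target equation exactly. What then has to be \emph{proved} --- and this is the heart of the paper's argument, absent from your sketch --- is that this $\sigma_C$ is a bona fide \emph{pure} QCM. Using Lemma~\ref{lemma follia 0} in the form $\gamma_C+i\Omega_C=M^{\intercal}\left(V_{AB}+i\Omega_{AB}\right)^{-1}M$ one finds
\[
\sigma_C-i\Omega_C=M^{\intercal}\left(V_{AB}-\tau_{AB}\right)^{-1}\left(\tau_{AB}+i\Omega_{AB}\right)\left(V_{AB}+i\Omega_{AB}\right)^{-1}M\geq 0,
\]
with rank $\rk\left(\tau_{AB}+i\Omega_{AB}\right)=n_{AB}=n_C$, i.e.\ $\sigma_C$ is pure. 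Without this computation your claim that the regularised seeds are valid pure QCMs is unsupported.

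The second gap concerns the role of the parameter $t$, which you misplace: it is not there to regularise a singular $\left(\gamma_C+\sigma_C\right)^{-1}$ for a fixed boundary target, but to move the \emph{target} off the boundary. For a pure $\tau_{AB}\leq V_{AB}$ that is not strictly below $V_{AB}$, the paper replaces it by $\tau_{AB}(t):=\tau_{AB}\#_t\,\gamma^{\#}_{V_{AB}}$, which is pure by the determinant identity~\eqref{det geom}, satisfies $\tau_{AB}(t)<V_{AB}$ for $t>0$ (strict monotonicity of the weighted geometric mean, since $\gamma^{\#}_{V_{AB}}<V_{AB}$), and tends to $\tau_{AB}$ as $t\to0^{+}$; the exact construction above applied to $\tau_{AB}(t)$ then yields the family $\sigma_C(t)$, and continuity of the Schur complement gives the limit. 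Your picture of ``$\sigma_C(t)$ diverging along the kernel'' is consistent with this, but as written you have neither an approximating family of pure targets nor a verification that any family of pure seeds achieves the prescribed limit, so the proposal does not yet constitute a proof.
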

\begin{proof} See Appendix \ref{appGauss} for detailed proofs. \end{proof}

Now, we are ready to state the main result of this section.

\begin{thm}
  \label{thm GSq=GEoF}
  For all bipartite QCMs $V_{AB}\geq i\Omega_{AB}$, the Gaussian \mbox{R\'enyi-$2$} squashed
  entanglement coincides with the Gaussian \mbox{R\'enyi-$2$} entanglement of formation, i.e.
  \begin{equation}
  E^{\text{\emph{G}}}_{\text{\emph{sq}},2}(A:B)_{V} = E_{F,2}^{\text{\emph{G}}}(A:B)_V .
  \label{GSq=GEoF}
  \end{equation}
\end{thm}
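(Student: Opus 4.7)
The plan is to prove the two inequalities separately. The direction $E^{\text{G}}_{\text{sq},2}(A:B)_V \geq E_{F,2}^{\text{G}}(A:B)_V$ is essentially free: Theorem~\ref{thm I cond G R2 EoF} asserts that for every tripartite QCM $V_{ABC}$ one has $\tfrac12 I_M(A:B|C)_V \geq E^{\text{G}}_{F,2}(A:B)_{V_{AB}}$, where the right-hand side only depends on the $AB$-marginal. Taking the infimum of the left-hand side over all extensions $V_{ABC}$ of $V_{AB}$ immediately yields $E^{\text{G}}_{\text{sq},2}(A:B)_V \geq E^{\text{G}}_{F,2}(A:B)_V$.

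The nontrivial direction is $E^{\text{G}}_{\text{sq},2}(A:B)_V \leq E_{F,2}^{\text{G}}(A:B)_V$, and this is where Proposition~\ref{follia prop} does the heavy lifting. I would fix an arbitrary pure ansatz $\tau_{AB} \leq V_{AB}$ admissible in the infimum~\eqref{G R2 EoF alt}, choose a pure symplectic purification $\gamma_{ABC}$ of $V_{AB}$ (cf.\ Lemma~\ref{lemma pur}) written in block form
\begin{equation*}
\gamma_{ABC} = \begin{pmatrix} V_{AB} & P \\ P^{\intercal} & \gamma_C \end{pmatrix},
\end{equation*}
and invoke Proposition~\ref{follia prop} to obtain a one-parameter family of pure QCMs $\sigma_C(t)$ on $C$ ($0<t\leq 1$) such that the post-measurement QCM
\begin{equation*}
\gamma'_{AB}(t) = \bigl(\gamma_{ABC}+ 0_{AB}\oplus \sigma_C(t)\bigr)\big/\bigl(\gamma_C+\sigma_C(t)\bigr) = V_{AB} - P\bigl(\gamma_C+\sigma_C(t)\bigr)^{-1} P^{\intercal}
\end{equation*}
is pure for every $t>0$ and converges to $\tau_{AB}$ as $t\to 0^+$.

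The key observation is now to turn this Gaussian measurement on $C$ into a legitimate extension of $V_{AB}$ to which the identity~\eqref{I cond Schur} can be applied. Concretely, I would define an extended system $\tilde C$ of the same dimension as $C$ and set
\begin{equation*}
W_{AB\tilde C}(t) \coloneqq \gamma_{ABC} + 0_{AB}\oplus \sigma_C(t) = \begin{pmatrix} V_{AB} & P \\ P^{\intercal} & \gamma_C+\sigma_C(t) \end{pmatrix}.
\end{equation*}
Since $\gamma_{ABC}\geq i\Omega_{ABC}$ and $0_{AB}\oplus \sigma_C(t)\geq 0$, the sum satisfies $W_{AB\tilde C}(t)\geq i\Omega_{AB\tilde C}$, so this is a valid QCM, and moreover its $AB$-marginal equals $V_{AB}$, making it an admissible extension in~\eqref{Gauss sq}. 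By construction $W_{AB\tilde C}(t)/W_{\tilde C}(t) = \gamma'_{AB}(t)$, and hence identity~\eqref{I cond Schur} gives
\begin{equation*}
I_M(A:B|\tilde C)_{W(t)} = I_M(A:B)_{\gamma'_{AB}(t)}.
\end{equation*}
Plugging this admissible extension into the definition of $E^{\text{G}}_{\text{sq},2}$ yields
\begin{equation*}
E^{\text{G}}_{\text{sq},2}(A:B)_V \leq \tfrac12 I_M(A:B)_{\gamma'_{AB}(t)}.
\end{equation*}
Finally, taking $t\to 0^+$ and using the continuity of $I_M(A:B)_\cdot$ on positive definite matrices together with $\gamma'_{AB}(t)\to\tau_{AB}$, I obtain $E^{\text{G}}_{\text{sq},2}(A:B)_V \leq \tfrac12 I_M(A:B)_\tau$. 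Taking the infimum over all admissible pure $\tau_{AB}\leq V_{AB}$ in~\eqref{G R2 EoF alt} concludes the proof.

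The main obstacle is really hidden in Proposition~\ref{follia prop}: producing a sequence of \emph{pure} seeds $\sigma_C(t)$ whose Gaussian measurement drives the post-measurement CM to an arbitrary pure $\tau_{AB}\leq V_{AB}$ is delicate (one has to handle the possible degeneracy of the symplectic spectrum of $V_A$, via Lemma~\ref{lemma fact out}, and invoke Lemma~\ref{lemma follia 0} to keep $\gamma'_{AB}(t)$ pure). Once that preparatory result is in place, however, the reduction described above is essentially algebraic: identity~\eqref{I cond Schur} converts the recovery-map/measurement picture into a Schur-complement statement, and the additive extension $W_{AB\tilde C}(t)$ realizes precisely the desired Schur complement without any further effort.
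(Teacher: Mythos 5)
Your proposal is correct and follows essentially the same route as the paper's own proof: the easy direction via Theorem~\ref{thm I cond G R2 EoF}, and the converse by feeding the extensions $\gamma_{ABC}+0_{AB}\oplus\sigma_C(t)$ from Proposition~\ref{follia prop} into the squashed-entanglement infimum, converting the measurement into a Schur complement via identity~\eqref{I cond Schur}, and passing to the limit $t\to0^+$ by continuity. The only (welcome) addition is your explicit check that $\gamma_{ABC}+0_{AB}\oplus\sigma_C(t)\geq i\Omega$ with the correct $AB$-marginal, which the paper asserts without comment.
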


\begin{proof}
The inequality $E^{\text{G}}_{\text{sq},2}(A:B)_V \geq E_{F,2}^{\text{G}}(A:B)_V$ is an easy consequence
of~\eqref{ext I con} together with~\eqref{Gauss sq}. To show the converse, we employ the expression~\eqref{G R2 EoF alt} for the Gaussian R\'enyi-2 entanglement of formation. Consider an arbitrary purification $\gamma_{ABC}$ of $V_{AB}$, and pick a pure state $\tau_{AB}\leq V_{AB}$. By construction, we have $\gamma_{AB}=V_{AB}$. Now, thanks to Proposition~\ref{follia prop} one can construct a sequence of measurements identified by $\sigma_C(t)$ such that~\eqref{follia prop eq2} holds. Then, we have
\begin{align*}
&\frac12 I_M(A:B)_\tau \\[0.8ex]
&\quad= \frac12 I_M(A:B)_{\lim_{t\rightarrow 0^+} \left(\gamma_{ABC} + 0_{AB} \oplus \sigma_C(t) \right) / \left( \gamma_C +\sigma_C(t) \right) } \\
&\quad\texteq{(i)} \lim_{t\rightarrow 0^+} \frac12 I_M(A:B)_{\left(\gamma_{ABC} + 0_{AB} \oplus \sigma_C(t) \right) / \left( \gamma_C +\sigma_C(t) \right) } \\
&\quad\texteq{(ii)} \lim_{t\rightarrow 0^+} \frac12 I_M(A:B|C)_{\gamma_{ABC} + 0_{AB} \oplus \sigma_C(t)} \\
&\quad\textgeq{(iii)} E^{\text{G}}_{\text{sq},2}(A:B)_V ,
\end{align*}
where we used, in order: (i) the continuity of the log-det mutual information; (ii) the identity~\eqref{I cond Schur}; and (iii) the fact that the QCMs $\gamma_{ABC} + 0_{AB} \oplus \sigma_C(t)$ constitute valid extensions of $V_{AB}$, thus being legitimate ansatzes in~\eqref{Gauss sq}.
\end{proof}

\begin{rem}
A by-product of the above proof of Theorem~\ref{thm GSq=GEoF} is that in~\eqref{Gauss sq} we can restrict ourselves to systems of bounded size $n_C \leq n_{AB} = n_A + n_B$. Moreover, the extension can be taken of the form $\gamma_{ABC} + 0_{AB} \oplus \sigma_C$ up to certain limits, where $\gamma_{ABC}$ is a fixed purification of $V_{AB}$ and $\sigma_C$ is a pure QCM.
\end{rem}

This surprising identity between two seemingly very different entanglement measures, even though tailored to Gaussian states, is remarkable. On the one hand, it provides an interesting operational interpretation for the Gaussian \mbox{R\'enyi-$2$} entanglement of formation in terms of log-det conditional mutual information, via the recoverability framework. On the other hand, it simplifies the notoriously difficult evaluation of the squashed entanglement, in this case restricted to Gaussian extensions and log-det entropy, because it recasts it as an optimisation of the form~\eqref{G R2 EoF} which thus involves matrices of bounded instead of unbounded size (more precisely, of the same size as the mixed QCM whose entanglement is being computed).
In general, Theorem~\ref{thm GSq=GEoF} allows us to export useful properties between the two frameworks it connects. For instance, it follows from the identity~\eqref{GSq=GEoF} that the Gaussian \mbox{R\'enyi-$2$} squashed entanglement is faithful on Gaussian states and a monotone under Gaussian local operations and classical communication; in contrast, proving the property of faithfulness for the standard squashed entanglement was a very difficult step to perform~\cite{BCY11}. On the other hand, the arguments establishing many basic properties of the standard squashed entanglement can be imported from~\cite{CW04} and applied to~\eqref{Gauss sq}, providing new proofs of the same properties for the Gaussian \mbox{R\'enyi-$2$} entanglement of formation. Let us give an example of how effective the interplay between the two frameworks is by providing an alternative, one-line proof of the following result.

\begin{lemma}\label{E mono}
The Gaussian \mbox{R\'enyi-$2$} entanglement of formation is monogamous on arbitrary Gaussian states, i.e.
\begin{equation}
E_{F,2}^{\text{\emph{G}}}(A:BC) \geq E_{F,2}^{\text{\emph{G}}}(A:B) + E_{F,2}^{\text{\emph{G}}}(A:C) ,
\label{monogamy GEoF}
\end{equation}
and analogously for more than three parties.
\end{lemma}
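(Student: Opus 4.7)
The plan is to exploit the identification of $E^{\text{G}}_{F,2}$ with the Gaussian R\'enyi-$2$ squashed entanglement $E^{\text{G}}_{\text{sq},2}$ provided by Theorem~\ref{thm GSq=GEoF}, and then mimic the classical Koashi--Winter-style monogamy argument from~\cite{CW04} at the level of covariance matrices. The key algebraic ingredient, replacing the chain rule for the quantum conditional mutual information, is the analogous chain rule for the log-det conditional mutual information:
\begin{equation*}
I_M(A:BC|E)_V = I_M(A:B|E)_V + I_M(A:C|BE)_V ,
\end{equation*}
valid for any multipartite covariance matrix $V_{ABCE}>0$. This identity I expect to verify directly from the definition~\eqref{I_2-matrix}: each of the six log-det entropy terms appearing in $I_M(A:B|E)+I_M(A:C|BE)$ telescopes to leave precisely $M(V_{AE})+M(V_{BCE})-M(V_{ABCE})-M(V_E)$, which is $I_M(A:BC|E)_V$. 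No uncertainty principle or symplectic structure is needed at this step.

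With the chain rule in hand, the proof proceeds in the standard way. First I would take an arbitrary Gaussian extension $V_{ABCE}$ of $V_{ABC}$. Its marginal $V_{ABE}$ is a Gaussian extension of $V_{AB}$, so by definition~\eqref{Gauss sq} we have $\tfrac12 I_M(A:B|E)_V \geq E^{\text{G}}_{\text{sq},2}(A:B)_V$. Similarly, regrouping $BE$ as the single ancillary system, $V_{ABCE}$ is an extension of $V_{AC}$, and hence $\tfrac12 I_M(A:C|BE)_V \geq E^{\text{G}}_{\text{sq},2}(A:C)_V$. Adding these inequalities and invoking the chain rule yields
\begin{equation*}
\tfrac12 I_M(A:BC|E)_V \geq E^{\text{G}}_{\text{sq},2}(A:B)_V + E^{\text{G}}_{\text{sq},2}(A:C)_V .
\end{equation*}

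Since the right-hand side no longer depends on the extension, taking the infimum over all Gaussian extensions $V_{ABCE}$ of $V_{ABC}$ on the left-hand side gives $E^{\text{G}}_{\text{sq},2}(A:BC)_V \geq E^{\text{G}}_{\text{sq},2}(A:B)_V + E^{\text{G}}_{\text{sq},2}(A:C)_V$. Finally, applying Theorem~\ref{thm GSq=GEoF} to each of the three terms converts this into the desired inequality~\eqref{monogamy GEoF} for $E^{\text{G}}_{F,2}$. The extension to $k\geq 3$ subsystems $B_1,\dots,B_k$ follows by iterating the same argument, peeling off one $B_j$ at a time. I do not anticipate any genuine obstacle here: all the heavy lifting is concentrated in Theorem~\ref{thm GSq=GEoF}, and once that identification is available the monogamy reduces to the purely algebraic chain-rule identity plus the variational definition of $E^{\text{G}}_{\text{sq},2}$.
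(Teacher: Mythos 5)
Your proof is correct and is essentially identical to the paper's own argument: both reduce the claim to the Gaussian R\'enyi-$2$ squashed entanglement via Theorem~\ref{thm GSq=GEoF} and then apply the chain rule $I_M(A:BC|E)=I_M(A:B|E)+I_M(A:C|BE)$ together with the observation that marginals of an extension of $V_{ABC}$ are extensions of $V_{AB}$ and $V_{AC}$, exactly as in~\cite[Proposition 4]{CW04}. Your explicit telescoping verification of the log-det chain rule is a minor elaboration the paper leaves implicit, but the route is the same.
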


\begin{proof}
Thanks to Theorem~\ref{thm GSq=GEoF}, we can prove the monogamy relation~\eqref{GEoF} for the Gaussian \mbox{R\'enyi-$2$} squashed entanglement. We use basically the same argument as in~\cite[Proposition 4]{CW04}. Namely, call $V_{ABC}$ the QCM of the system $ABC$. Then for all extensions $V_{ABCE}$ of $V_{ABC}$ one has
\begin{align*}
I_M(A:BC|E)_V &= I_M(A:B|E)_V + I_M (A:C|BE) \\[0.8ex]
&\geq 2  E^{\text{G}}_{\text{sq},2}(A:B)_V + 2 E^{\text{G}}_{\text{sq},2}(A:C)_V ,
\end{align*}
where we applied the chain rule for the conditional mutual information together with the obvious facts that $V_{ABE}$ is a valid extension of $V_{AB}$ and $V_{ABCE}$ a valid extension of $V_{AC}$.
\end{proof}

Lemma~\ref{E mono} yields the {\textit most general} result to date regarding quantitative monogamy of continuous variable entanglement \cite{ourreview,Adesso14}, as all previous proofs (for the R\'enyi-2 measure \cite{Adesso12} or other quantifiers \cite{hiroshima_2007,strongmono}) were restricted to the special case of one mode per party. 
A monogamy inequality is a powerful tool in dealing with entanglement measures. For instance, when combined with monotonicity under local operations, it leads to the additivity of the measure under examination.

\begin{cor}
The Gaussian entanglement measure $E_{F,2}^{\text{\emph{G}}} = E^{\text{G}}_{\text{sq},2}$ is additive under tensor products (equivalently, direct sum of covariance matrices). In formulae,
\begin{equation}
\begin{split}
E_{F,2}^{\text{\emph{G}}} (A_1 A_2: B_1 B_2)_{V_{A_1 B_1} \oplus W_{A_2 B_2}} &= E_{F,2}^{\text{\emph{G}}} (A_1: B_1)_{V} \\
&\quad + E_{F,2}^{\text{\emph{G}}} (A_2: B_2)_{W} .
\end{split}
\label{additivity GEoF}
\end{equation}
\end{cor}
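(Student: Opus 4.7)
The strategy is to establish the two inequalities $\leq$ and $\geq$ separately, exploiting the identity $E_{F,2}^{G}=E^{G}_{\text{sq},2}$ from Theorem~\ref{thm GSq=GEoF} so that each direction can be proved in whichever of the two pictures is more convenient.

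For the inequality $\leq$ (subadditivity), the plan is to work with the squashed-entanglement expression. Given any Gaussian extensions $V_{A_1B_1C_1}$ of $V_{A_1B_1}$ and $W_{A_2B_2C_2}$ of $W_{A_2B_2}$, the block-diagonal QCM $V_{A_1B_1C_1}\oplus W_{A_2B_2C_2}$ is a valid extension of $V_{A_1B_1}\oplus W_{A_2B_2}$ with conditioning system $C=C_1C_2$. Since $\det$ is multiplicative on block-diagonal matrices, $M$ is additive on direct sums, and therefore each of the four terms defining $I_M(A_1A_2:B_1B_2|C_1C_2)$ splits into a sum. This yields
\begin{equation*}
I_M(A_1A_2:B_1B_2|C_1C_2)_{V\oplus W}=I_M(A_1:B_1|C_1)_{V}+I_M(A_2:B_2|C_2)_{W},
\end{equation*}
and taking the infimum over both extensions gives $E^{G}_{\text{sq},2}(A_1A_2:B_1B_2)_{V\oplus W}\leq E^{G}_{\text{sq},2}(A_1:B_1)_V+E^{G}_{\text{sq},2}(A_2:B_2)_W$.

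For the inequality $\geq$ (superadditivity), the plan is to combine the monogamy relation of Lemma~\ref{E mono} with Gaussian LOCC monotonicity. Applying monogamy to the tripartition $(A_1A_2\mid B_1\mid B_2)$ of the state with QCM $V\oplus W$ gives
\begin{equation*}
E_{F,2}^{G}(A_1A_2:B_1B_2)_{V\oplus W}\geq E_{F,2}^{G}(A_1A_2:B_1)_{V\oplus W_{A_2}}+E_{F,2}^{G}(A_1A_2:B_2)_{V_{A_1}\oplus W}.
\end{equation*}
Now one uses that on a product QCM the ``extra'' uncorrelated local subsystem carries no entanglement: tracing out $A_2$ is a deterministic Gaussian map on the ``Alice'' side, hence by the LOCC monotonicity of $E_{F,2}^{G}$ one has $E_{F,2}^{G}(A_1A_2:B_1)_{V\oplus W_{A_2}}\geq E_{F,2}^{G}(A_1:B_1)_{V}$, and analogously $E_{F,2}^{G}(A_1A_2:B_2)_{V_{A_1}\oplus W}\geq E_{F,2}^{G}(A_2:B_2)_{W}$ by tracing out $A_1$. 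Chaining these inequalities gives the desired lower bound, and combining with the subadditivity direction yields \eqref{additivity GEoF}.

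The only nontrivial step is the LOCC monotonicity used in the second direction, but this is already part of the package of properties of $E_{F,2}^{G}$ inherited from $E^{G}_{\text{sq},2}$ via Theorem~\ref{thm GSq=GEoF} (or can be seen directly from the variational formula \eqref{G R2 EoF alt}, since any pure ansatz $\gamma_{A_1B_1}\leq V_{A_1B_1}$ extends to $\gamma_{A_1B_1}\oplus \eta_{A_2}\leq V_{A_1B_1}\oplus W_{A_2}$ with an appropriate pure $\eta_{A_2}$ and the $A$-marginal log-det entropy is unchanged up to the harmless additive constant $M(\eta_{A_2})=0$, using Lemma~\ref{lemma fact out} to produce a pure local $\eta$). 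This is the part that requires a small sanity check but presents no real obstacle, since everything reduces to known properties of pure QCMs and of the log-det functional on direct sums.
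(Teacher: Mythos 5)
Your proof is correct, and your superadditivity direction is exactly the paper's argument: monogamy (Lemma~\ref{E mono}) applied to the split $A_1A_2\,|\,B_1\,|\,B_2$, followed by monotonicity of $E_{F,2}^{\text{G}}$ under discarding local subsystems. Where you genuinely diverge is the subadditivity direction. The paper obtains it in one line by inserting factorised pure ansatzes $\gamma_{A_1B_1}\oplus\tau_{A_2B_2}$ into the entanglement-of-formation formula~\eqref{GEoF}, whereas you pass to the squashed-entanglement picture via Theorem~\ref{thm GSq=GEoF} and feed in block-diagonal extensions $V_{A_1B_1C_1}\oplus W_{A_2B_2C_2}$, using that $M$ is additive on direct sums so that $I_M$ splits term by term. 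Both routes are equally short and both are valid; the paper's has the small advantage of not invoking the (nontrivial) identity $E_{F,2}^{\text{G}}=E^{\text{G}}_{\text{sq},2}$ for that direction, while yours shows that either variational characterisation delivers subadditivity for free.

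One caveat: the parenthetical you offer as a ``sanity check'' of the monotonicity step proves the wrong inequality. Extending a pure ansatz $\gamma_{A_1B_1}\leq V_{A_1B_1}$ to $\gamma_{A_1B_1}\oplus\eta_{A_2}\leq V_{A_1B_1}\oplus W_{A_2}$ shows $E_{F,2}^{\text{G}}(A_1A_2:B_1)_{V\oplus W_{A_2}}\leq E_{F,2}^{\text{G}}(A_1:B_1)_{V}$, i.e.\ that appending an uncorrelated local mode does not \emph{increase} the measure; what your chain needs is the reverse inequality, that discarding $A_2$ does not increase it. That reverse statement does not follow by restricting ansatzes, since the restriction to $A_1B_1$ of a pure ansatz for $V_{A_1B_1}\oplus W_{A_2}$ is in general mixed; it is the Gaussian-LOCC monotonicity of $E_{F,2}^{\text{G}}$ under partial trace, which the paper likewise simply invokes as a known property. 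So the property you rely on is correct and available, but the justification you sketch for it does not establish it.
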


\begin{proof}
Applying first~\eqref{monogamy GEoF} and then the monotonicity of $E_{F,2}^{\text{\emph{G}}}$ under the operation of discarding some local subsystems, we obtain
\begin{align*}
&E_{F,2}^{\text{\emph{G}}} (A_1 A_2: B_1 B_2)_{V_{A_1 B_1} \oplus W_{A_2 B_2}} \\[0.8ex]
&\quad \geq E_{F,2}^{\text{\emph{G}}} (A_1 A_2 : B_1)_{V_{A_1 B_1}\oplus W_{A_2}} \\
&\quad\quad + E_{F,2}^{\text{\emph{G}}} (A_1 A_2: B_2)_{V_{A_1} \oplus W_{A_2 B_2}} \\[0.8ex]
&\quad \geq E_{F,2}^{\text{\emph{G}}} (A_1: B_1)_{V} + E_{F,2}^{\text{\emph{G}}} (A_2: B_2)_{W} .
\end{align*}
The opposite inequality follows by inserting factorised ansatzes $\gamma_{A_1 B_1}\oplus \tau_{A_2 B_2}$ into~\eqref{GEoF}.
\end{proof}

As established in this section, the Gaussian \mbox{R\'enyi-$2$} entanglement of formation alias Gaussian \mbox{R\'enyi-$2$} squashed entanglement also emerges as a rare example of an additive entanglement monotone (within the Gaussian framework) which satisfies the general monogamy inequality \eqref{CKW}. We remark that the conventional (R\'enyi-$1$) entanglement of formation cannot fundamentally be monogamous~\cite{lancien_2016}, while the standard squashed entanglement is monogamous on arbitrary multipartite systems~\cite{koashi_2004}.

\chapter{Final thoughts} \label{theEnd}

It is now time to recap the results presented in the previous chapters and in particular review the connections we found between the seemingly different areas of quantum information theory. 

The first part of this thesis was devoted to hypothesis testing. Although this is a well known topic in the field, in particular in terms of quantum state discrimination, many questions were left unanswered for a long time. A prime example is that of composite quantum state discrimination, where many special cases had been investigated, but a general solution was not yet known. Here, we provided a solution for very general convex null and alternative hypotheses, closing this gap. While proving this composite quantum Stein's Lemma, several tools were developed including the asymptotic equivalence of the relative entropy and its measured version for permutation invariant quantum states. One rather unfortunate feature of our bound is that it is given by a regularized function, which makes it in general hard to compute. It is however remarkable, that it becomes clear that this regularization is indeed necessary due to the problems connected with the second part of the thesis, namely recoverability. 

While quantum state discrimination was also previously investigated closely, the reminder of the first part turned to a rather unexplored topic. That is, the quite natural question of what happens when we optimize over the states to discriminate rather than the measurements in quantum hypothesis testing. The optimal rate at which a fixed measurement can discriminate between states was labeled the discrimination power and optimal rates in several settings were given. Here, the crucial observation is that, while the input states can be arbitrarily chosen, entangled or even adaptively chosen states do not help in the asymptotic setting. This is in contrast to the state discrimination scenario, where no such results are known and it is even conjectured that collective measurements are necessary to achieve the optimal asymptotic rate. 

In the second part of the thesis, the topic of entropy inequalities and recoverability was considered. Here, the close connection to the first part became evident especially in the first chapter where recoverability inequalities were investigated. First, using the aforementioned result on asymptotic relative entropies, a novel lower bound on the conditional quantum mutual information was found in terms of a regularized relative entropy featuring an explicit and universal recovery map. Next, it was shown that the most commonly used recoverability quantities, in particular the regularized relative entropy of recovery, have an operational interpretation given by discrimination scenarios as they were developed in the first part of the thesis. No other  operational interpretation was previously known. Finally, this chapter also allowed us to prove that the regularization in our composite Stein's Lemma is indeed needed, again exploring the connection between hypothesis testing and entropy inequalities. 

In the next chapter we turned to a different type of entropy inequalities, called bounds on information combining. While these bring their very own difficulties, most importantly handling conditioning on quantum systems, it turns out that the results from the last chapter are of great help in investigating these bounds. Using the lower bound on the conditional quantum mutual information given by the fidelity of recovery, we manage to prove non-trivial lower bounds. Interestingly, there is also a connection in the other direction, which is that the states which arise naturally in this setting allowed us to find counterexamples on a conjecture concerning recoverability bounds in the previous chapter, also disproving a certain special case where several of the involved systems are classical. 
Furthermore, conjectures of the optimal lower and upper bounds where given as well as applications to finite blocklength and non-stationary behavior of polar codes. 

Finally, in the last chapter the focus was changed to looking at infinite dimensional system, with a particular focus on Gaussian quantum states. Using a remarkable connection between the \mbox{R\'enyi-$2$} entropy of a Gaussian state and the log-determinant of its covariance matrix allows us to use tools from matrix analysis to investigate entropy inequalities with a particular focus on correlation measures. The main results of this chapter are, first that an operator strengthening of the strong subadditivity inequality allows for novel monogamy results for a quantitative measure of Gaussian steerability, and second that of gaining insights into two measures of entanglement, namely the \mbox{R\'enyi-$2$} Gaussian entanglement of formation and the \mbox{R\'enyi-$2$} Gaussian squashed entanglement, which we ultimately prove to be equal, which is surprising considering that no such result is known in the von Neumann case. 

Now, in the last section of this thesis we will briefly discuss some open problems to point the reader towards some interesting future research ideas.


\section{Some open problems}

As is often the case, answering the questions addressed in this thesis leaves us with a bunch of new problems that might be the basis of additional research. In the following section we would like to point out some questions which we find particularly interesting and hope to enthuse the reader in these. 

\paragraph{Quantum channel discrimination:}
In Chapter~\ref{CompHypo} we discussed state discrimination with composite hypothesis. A closely related problem is that of channel discrimination. In this scenario, $n$ copies of two channels are given and one optimizes over all possible input states. It is apparent that the setting bears a certain similarity with composite discrimination, namely discriminating between the sets of possible output states. Nevertheless, the problem turns out to be significantly more complicated. Aside from assigning different priorities to the types of error, one can discuss many different settings in the channel case, such as product inputs, entangled inputs or even adaptively chosen inputs. Also one can envision the states to be chosen to aid the discrimination but also in an adversary setting. To find the ultimate rate in either case, one would have to take into account all possible strategies. In the classical case we know that \iid inputs are optimal and that even adaptive strategies bring no advantage~\cite{H08}. In the quantum case the picture is less clear, e.g. it is known that there exist quantum channels that cannot be perfectly discriminated with a finite number of independent product input states, but it is possible if the states can be chosen adaptively~\cite{HHLW09}. 
However, despite the apparent differences, one might expect that the tools developed in Chapter~\ref{CompHypo} help in the investigation of quantum channel discrimination as well. 

In recent work~\cite{berta2018amortized}, we give general converse bounds on quantum channel discrimination allowing for the most general (adaptive) strategies and show that for classical-quantum channels \iid inputs are optimal in many settings, extending the results in~\cite{H08}.

\paragraph{Characterization of optimal states for discrimination power:}
In Chapter~\ref{discPower} optimal rates for the discrimination power of a quantum measurement have been given in several scenarios. Those rates have a relatively simple form which needs  to be evaluated only on a single copy of the system. Nevertheless, they still include an optimization over all possible pairs of input states. Restricting the sets over which one needs to optimize would significantly simplify the computation of the rates. As conjectured in Section~\ref{optRates}, it seems natural to assume that the states in the optimal pair are orthogonal to each other.  

Related problems arise when one slightly alters the setting in the original question. One could, for example, consider the scenario where the input states are restricted in their total energy, which is of particular importance in infinite dimensional systems. Another possibility is that the states might have been generated at a different location and undergo a change of reference frame or more generally a noisy channel, before being measured. It is clear that the solution is given by restricting the optimization to all possible output states of the given channel. Nevertheless, depending on the channel, one might be able to bring the optimization for practically relevant cases into a significantly simpler form. 

\paragraph{Recoverability:}
As we discussed in Chapter \ref{recoverability}, it is desirable to give the lower bounds on the conditional quantum mutual information a simple form. Much progress has already been made, by finding explicit and universal recovery maps, avoiding the need for optimization. Still, the best known bounds include an integration over rotated Petz recovery maps. In particular, one might hope that it is sufficient to use the non-rotated Petz recovery map in the fidelity lower bound, as suggested in Equation \ref{conjFidelity}. 

\paragraph{Optimal bounds on information combining:}
This might be the most obvious open problem in this section, to find a proof for our conjectured bounds in Section~\ref{main}. This comes along with several other open questions, such as a better understanding of conditioning on a quantum system and duality in quantum information theory as well as new bounds on strong subadditivity. 
Also, our given lower bound as well as the conjectured ones can be seen as special cases of the Mrs. Gerber's Lemma by Wyner and Ziv, which in their version not only applies to single copies of the channel but $n$ copies. Since its discovery, the Mrs. Gerber's Lemma has been generalized to many settings~\cite{W74, AK77, JA12, OS15, C14}, all of which pose natural open problems in the quantum setting. While the $n$-copy case could be useful in Shannon theory, generalization to non-binary inputs would have applications to coding such as polar codes for arbitrary classical-quantum channels (see e.g~\cite{GV14,GB15,NR17}). 
Natural starting points for investigation could be the convex programming formulations of the measured relative entropy of recovery in~\cite{BFT15} or extending the problem to different entropies like the $\alpha$-Renyi entropy. 

\paragraph{Equivalence of entanglement measures:}
Finally, within the context of continuous variable quantum information with Gaussian states, it could be interesting to establish whether the equivalence between the Gaussian \mbox{R\'enyi-$2$} squashed entanglement and the Gaussian \mbox{R\'enyi-$2$} entanglement of formation proven in Section~\ref{subsec Gauss sq} further extends to a third measure of entanglement, namely the recently introduced Gaussian intrinsic entanglement~\cite{GIE}. 
It could also be worth exploring whether for states where Gaussian decompositions attain the global convex roof optimization for the entanglement of formation (such as symmetric $2$-mode Gaussian states), one could extend our techniques to show that even the standard squashed entanglement defined in terms of von Neumann conditional mutual information~\cite{CW04} may be optimised by Gaussian extensions and perhaps be shown to coincide with the conventional entanglement of formation; this would constitute a unique instance of computable squashed entanglement on states which find applications in quantum optics. 

\appendix
\part{ Appendix}

\chapter{Some helpful Lemmas}

Here we present several lemmas that are used in the main part. We start with Sion's minimax theorem.

\begin{lemma}\label{Sion}\cite{sion58}
Let $X$ be a compact convex subset of a linear topological space and $Y$ a convex subset of a linear topological space. If a real-valued function on $X\times Y$ is such that
\begin{itemize}
\item $f(x,\cdot)$ is upper semi-continuous and quasi-concave on $Y$ for every $x\in X$
\item $f(\cdot,y)$ is lower semi-continuous and quasi-convex on $X$ for every $y\in Y$\,,
\end{itemize}
then we have
\begin{align}
\min_{x\in X} \sup_{y\in Y} f(x,y) = \sup_{y\in Y} \min_{x\in X} f(x,y)\,.
\end{align}
\end{lemma}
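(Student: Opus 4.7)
The direction $\min_{x\in X}\sup_{y\in Y} f(x,y) \geq \sup_{y\in Y}\min_{x\in X} f(x,y)$ is immediate and holds for any real-valued function on a product set, so the content of the theorem lies entirely in the reverse inequality. I would argue this by contradiction: suppose there exist real numbers $\alpha, \beta$ with
\begin{equation*}
\sup_{y\in Y}\min_{x\in X} f(x,y) \;<\; \alpha \;<\; \beta \;<\; \min_{x\in X}\sup_{y\in Y} f(x,y),
\end{equation*}
and for each $y \in Y$ introduce the level set $C_y := \{x \in X : f(x,y) \leq \alpha\}$. The first strict inequality guarantees that each $C_y$ is nonempty; lower semi-continuity of $f(\cdot,y)$ makes $C_y$ closed in $X$; and quasi-convexity of $f(\cdot,y)$ makes $C_y$ convex. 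If one can show that the family $\{C_y\}_{y\in Y}$ has the finite intersection property, then compactness of $X$ produces a point $x^* \in \bigcap_{y\in Y} C_y$, which satisfies $\sup_{y\in Y} f(x^*,y) \leq \alpha < \beta$, contradicting the second strict inequality.

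The crux of the argument is therefore establishing the finite intersection property: for every finite collection $y_1,\dots,y_n \in Y$, one has $C_{y_1} \cap \cdots \cap C_{y_n} \neq \emptyset$. I would proceed by induction on $n$, the base case $n=1$ being just nonemptiness. For the inductive step, it suffices, by a standard reduction, to treat $n=2$: given $y_1,y_2 \in Y$, consider the segment $y(t) := (1-t)y_1 + t y_2 \in Y$ (available by convexity of $Y$) and define the closed sets $A_t := C_{y_1} \cap \{x : f(x,y(t)) \leq \beta\}$ and $B_t := C_{y_2} \cap \{x : f(x,y(t)) \leq \beta\}$. Using quasi-concavity of $f(x,\cdot)$ one shows that $\{x : f(x,y(t)) \leq \beta\}$ is contained in $\{x : f(x,y_1) \leq \beta\} \cup \{x : f(x,y_2) \leq \beta\}$, and combining this with a continuity/connectedness argument on $t \in [0,1]$ — together with upper semi-continuity of $f(x,\cdot)$ — one locates a value of $t$ at which $A_t$ and $B_t$ meet. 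That intersection point lies in $C_{y_1} \cap C_{y_2}$.

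The main obstacle is precisely this two-point inductive step: it is the only place where quasi-concavity in the second argument, convexity of $Y$, and the upper semi-continuity in $y$ must all be used simultaneously, and the interpolation along $y(t)$ has to be arranged so that the level sets do not pull apart as $t$ varies. Once this finite intersection property is in hand, the compactness of $X$ completes the argument and delivers the contradiction, establishing the equality with $\min$ on the $X$ side and $\sup$ on the $Y$ side as stated. This follows the spirit of Sion's original proof and its subsequent simplifications; no additional tools beyond elementary topology and the stated hypotheses are needed.
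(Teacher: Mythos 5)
You should first note that the paper offers no proof of this lemma: it is quoted from Sion's original article~\cite{sion58} and used as a black box, so your argument has to stand on its own rather than be compared with an in-paper derivation. The outer frame of your proposal is the standard elementary proof (Sion/Komiya) and is correct as far as it goes: the inequality $\min_x\sup_y f\geq\sup_y\min_x f$ is trivial; under the assumed gap each sublevel set $C_y=\{x: f(x,y)\leq\alpha\}$ is nonempty (the min over the compact set $X$ of the lower semi-continuous $f(\cdot,y)$ is attained and lies below $\alpha$), closed and convex; and the finite intersection property together with compactness of $X$ would indeed produce the contradicting point $x^*$.

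However, both places where the real mathematical content lives are misassembled. First, the reduction of the finite intersection property to the two-set case is not a ``standard reduction'': a family of convex sets in which every pair intersects need not have the finite intersection property (this is exactly the content of Helly's theorem, which already requires checking triples in the plane). The correct induction does not combine level sets; it applies the two-point lemma on the restricted ambient space $X'=X\cap C_{y_n}$, which is again compact and convex, and re-verifies the hypotheses there. Second, the two-point step as you describe it is circular: since $A_t\cap B_t\subseteq C_{y_1}\cap C_{y_2}$ for \emph{every} $t$, ``locating a $t$ at which $A_t$ and $B_t$ meet'' is not something a connectedness argument can deliver independently of the conclusion you are trying to prove; moreover, for your globally chosen $\beta$ the sets $\{x: f(x,y_i)\leq\beta\}$ need not be disjoint, so no disconnection of $[0,1]$ is available. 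The correct argument runs the other way: assume $C_{y_1}\cap C_{y_2}=\emptyset$, equivalently $\alpha<\min_x\max\{f(x,y_1),f(x,y_2)\}$, and choose $\beta'$ strictly between these two numbers so that $E_i:=\{x: f(x,y_i)\leq\beta'\}$ are disjoint closed sets. Quasi-concavity in $y$ gives $f(x,y(t))\geq\min\{f(x,y_1),f(x,y_2)\}$, so each nonempty connected set $D_t:=\{x: f(x,y(t))\leq\alpha\}$ lies entirely in exactly one of $E_1,E_2$; upper semi-continuity of $f(x,\cdot)$ makes this assignment locally constant in $t$, and since it equals $E_1$ at $t=0$ and $E_2$ at $t=1$ this contradicts the connectedness of $[0,1]$. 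Until these two steps are repaired, the proposal is a correct outline of the strategy but not a proof.
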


If the measured relative entropy is optimized over closed, convex sets then Sion's minimax theorem can be applied.

\begin{lemma}\cite[Lem.~20]{BHLP14}\label{applySion}
Let $\cS,\cT\subseteq S(\cH)$ be closed, convex sets. Then, we have
\begin{align}
\min_{\substack{\rho\in\cS\\ \sigma\in\cT}}D_{\mathcal{M}}(\rho\|\sigma)=\sup_{(\mathcal{X},M)}\min_{\substack{\rho\in\cS\\ \sigma\in\cT}}D\left(\sum_{x\in\mathcal{X}}\tr\left[M_x\rho\right]|x\rangle\langle x|\middle\|\sum_{x\in\mathcal{X}}\tr\left[M_x\sigma\right]|x\rangle\langle x|\right)\,.
\end{align}
\end{lemma}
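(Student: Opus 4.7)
The plan is to prove the stated equality by establishing both inequalities separately, with the main work being a minimax exchange via Sion's theorem (Lemma~\ref{Sion}). The direction $\geq$ is immediate: unfolding $D_{\mathcal{M}}(\rho\|\sigma)=\sup_{(\mathcal{X},M)}D(\mathcal{N}_M(\rho)\|\mathcal{N}_M(\sigma))$, where $\mathcal{N}_M$ denotes the quantum-to-classical channel associated with the POVM $M$, and using the general min-sup $\geq$ sup-min inequality yields
\begin{align*}
\min_{(\rho,\sigma)\in\cS\times\cT}D_{\mathcal{M}}(\rho\|\sigma) &= \min_{(\rho,\sigma)}\sup_{(\mathcal{X},M)}D(\mathcal{N}_M(\rho)\|\mathcal{N}_M(\sigma)) \\
&\geq \sup_{(\mathcal{X},M)}\min_{(\rho,\sigma)}D(\mathcal{N}_M(\rho)\|\mathcal{N}_M(\sigma)).
\end{align*}

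For the converse $\leq$, the natural attempt is to apply Sion's theorem directly to $F((\rho,\sigma),(\mathcal{X},M)):=D(\mathcal{N}_M(\rho)\|\mathcal{N}_M(\sigma))$. The conditions on the minimizing argument are straightforward: $\cS\times\cT$ is closed, bounded, and convex in the finite-dimensional space of Hermitian operators, hence compact and convex; and for each fixed $M$ the function $F$ is jointly convex and continuous in $(\rho,\sigma)$ (the classical relative entropy is jointly convex and $(\rho,\sigma)\mapsto(\tr[M_x\rho],\tr[M_x\sigma])_x$ is affine).

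The main obstacle is that $F$ is \emph{not} concave in the POVM variable $M$ --- in fact it is convex, by the same composition argument applied in the other slot. Hence Sion's theorem cannot be applied to $F$ in this parametrization. To work around this, I would pass to a variational representation of the measured relative entropy of the form $D_{\mathcal{M}}(\rho\|\sigma)=\sup_\omega g(\rho,\sigma,\omega)$, e.g.\ via one of the formulas derived in~\cite{BFT15}, where the auxiliary parameter $\omega$ ranges over a convex, suitably compact set and $g$ is jointly convex and continuous in $(\rho,\sigma)$ while being concave and upper semi-continuous in $\omega$. Sion's theorem then applies directly to $g$, giving
\begin{align*}
\min_{(\rho,\sigma)}\sup_\omega g(\rho,\sigma,\omega)=\sup_\omega\min_{(\rho,\sigma)}g(\rho,\sigma,\omega).
\end{align*}

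The final step is to identify the right-hand side with $\sup_{(\mathcal{X},M)}\min_{(\rho,\sigma)}D(\mathcal{N}_M(\rho)\|\mathcal{N}_M(\sigma))$, exploiting that the optimizing $\omega$ in the variational formula can always be taken in (or approximated by elements of) a commuting family generated by some POVM, so that reading off the POVM from $\omega$ produces precisely the measured divergence appearing in the claim. The hard part will be choosing and verifying the correct variational formula with the precise concavity and semi-continuity properties demanded by Sion's theorem, and then carefully carrying out the translation from the abstract parameter $\omega$ back to an honest POVM; once both ingredients are in place, the minimax swap itself is routine.
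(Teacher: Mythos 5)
The paper does not actually prove this lemma --- it is imported verbatim with the citation to~\cite[Lem.~20]{BHLP14}, so there is no in-paper argument to compare against. Judged on its own terms, your plan is correct and is essentially the standard route (the one underlying the cited reference and~\cite{BFT15}). Your trivial direction and your diagnosis of the obstacle are both right: $D(\mathcal{M}(\rho)\|\mathcal{M}(\sigma))$ is jointly \emph{convex} in the POVM (convex combinations of POVMs push through to convex combinations of outcome distributions, and classical relative entropy is jointly convex), so Sion cannot be applied in that parametrization. The two ingredients you defer as "the hard part" are in fact available off the shelf. For the variational formula, take $D_{M}(\rho\|\sigma)=\sup_{\omega>0}\left\{\tr\left[\rho\log\omega\right]+1-\tr\left[\sigma\omega\right]\right\}$ from~\cite{BFT15}: this $g(\rho,\sigma,\omega)$ is \emph{affine} in $(\rho,\sigma)$ (hence convex and continuous on the compact convex set $\cS\times\cT$) and concave and upper semi-continuous in $\omega$ by operator concavity of the logarithm, so Lemma~\ref{Sion} applies directly. (Beware that the other common form, with $-\log\tr[\sigma\omega]$ in place of $1-\tr[\sigma\omega]$, is \emph{not} concave in $\omega$ and would not do.) For the translation back to a POVM, no approximation by commuting families is needed: given any fixed $\omega=\sum_{x}\lambda_x P_x$, the projective measurement $\{P_x\}$ satisfies, by the classical Donsker--Varadhan/Gibbs variational principle applied with the test function $\nu_x=\lambda_x$, the pointwise bound $D\left(\mathcal{M}_\omega(\rho)\middle\|\mathcal{M}_\omega(\sigma)\right)\geq \tr[\rho\log\omega]+1-\tr[\sigma\omega]=g(\rho,\sigma,\omega)$ for \emph{all} $(\rho,\sigma)$ simultaneously; minimizing over $(\rho,\sigma)$ and then taking the supremum over $\omega$ gives $\sup_\omega\min_{(\rho,\sigma)}g\leq\sup_{(\mathcal{X},M)}\min_{(\rho,\sigma)}D(\mathcal{M}(\rho)\|\mathcal{M}(\sigma))$, which closes the argument. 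The only remaining housekeeping is the degenerate case where $\supp(\rho)\nsubseteq\supp(\sigma)$ for every admissible pair, in which both sides are $+\infty$ (handled exactly as in Remark~\ref{remSup}).
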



We have the following discretization result.

\begin{lemma}\label{carat}
For every measure $\mu$ over a subset $\cS\subseteq S(\cH)$ with the dimension of $\cH$ given by $d$, there exists a probability distribution $\{ p_i\}_i^N$ with $N\leq (n+1)^{2d^2}$ and $\rho_i\in\cS$ such that
\begin{align}
\int \rho^{\otimes n}\;\mathrm{d}\mu(\rho)=\sum_{i=1}^N p_i\rho_i^{\otimes n}\,.
\end{align}
\end{lemma}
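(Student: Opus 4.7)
The plan is to embed the operators $\rho^{\otimes n}$ into a finite-dimensional real vector space whose dimension is polynomial in $n$, and then to apply the barycentric form of Carath\'eodory's theorem inside that space.

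First I would fix an $\mathbb{R}$-linear basis $\{H_1,\dots,H_{d^2}\}$ of the real vector space $H(\cH)$ of Hermitian operators on $\cH$. Tensor products $H_{i_1} \otimes \cdots \otimes H_{i_n}$ then form an $\mathbb{R}$-basis of $H(\cH^{\otimes n}) \cong H(\cH)^{\otimes n}$, and the subspace invariant under the natural $S_n$-action by permutation of tensor factors is canonically isomorphic to $\mathrm{Sym}^n(H(\cH))$, of real dimension $D := \binom{n+d^2-1}{d^2-1}$. The elementary estimate $\binom{n+d^2-1}{d^2-1} \leq (n+1)^{d^2-1}$ (obtained by bounding each factor $(n+i)/i$ in the product expansion by $n+1$) shows in particular that $D+1 \leq (n+1)^{2d^2}$. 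Since $\rho^{\otimes n}$ is Hermitian and permutation invariant for every $\rho \in S(\cH)$, the set $T := \{\rho^{\otimes n} : \rho \in \cS\}$ is contained in this subspace of dimension $D$.

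Next I would apply the integral (``barycentric'') version of Carath\'eodory's theorem, which states that if $\nu$ is a Borel probability measure on a bounded subset $T$ of a real vector space of dimension $D$, then its barycenter $\int_T t\,d\nu(t)$ can be written as $\sum_{i=1}^{N} p_i t_i$ with $t_i \in \overline{T}$, nonnegative $p_i$ summing to one, and $N \leq D+1$. Applying this with $\nu$ equal to the pushforward of $\mu$ under the continuous map $\rho \mapsto \rho^{\otimes n}$ immediately yields the desired decomposition $\int \rho^{\otimes n}\,d\mu(\rho) = \sum_{i=1}^N p_i \rho_i^{\otimes n}$ with $N \leq D+1 \leq (n+1)^{2d^2}$.

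The main (minor) subtlety is the passage from the classical Carath\'eodory theorem (which concerns finite convex combinations) to its integral form. This I would handle by a standard density and compactness argument: $\mu$ can be weakly approximated by finitely supported probability measures $\mu_k$ on $\cS$, for each of which the classical Carath\'eodory theorem applies directly and yields a decomposition with at most $D+1$ summands; compactness of $S(\cH)$ and of the $(D+1)$-simplex of weights then allows one to extract a convergent subsequence whose limit provides the desired decomposition. If $\cS$ is not assumed closed the states $\rho_i$ strictly speaking lie in $\overline{\cS}$, but this does not affect any of the applications of this lemma in the main text.
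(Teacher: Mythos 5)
Your proof is correct, but it takes a genuinely different route from the paper's. The paper's (very terse) proof invokes the pure-state version of this statement from an external reference --- there the relevant small space is the symmetric subspace of $\cH^{\otimes n}$ itself --- and then handles mixed states by purifying each $\rho\in\cS$ into a pure state on a $d^2$-dimensional space and taking the partial trace at the end; the exponent $2d^2$ in the bound reflects the dimension $\binom{n+d^2-1}{d^2-1}^{\!2}$ of the space of operators acting on that symmetric subspace. You instead bypass purification entirely by observing that $\rho^{\otimes n}$ is an $S_n$-invariant element of the real vector space $H(\cH)^{\otimes n}\cong H(\cH^{\otimes n})$, so that the whole family lives in $\mathrm{Sym}^n(H(\cH))$, of dimension $\binom{n+d^2-1}{d^2-1}\leq (n+1)^{d^2-1}$, and then apply the barycentric Carath\'eodory theorem there. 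Both arguments rest on the same two pillars (Carath\'eodory plus the polynomial growth of a symmetric subspace), but yours is self-contained, avoids the measurability/continuity bookkeeping of choosing purifications, and in fact yields the sharper bound $N\leq\binom{n+d^2-1}{d^2-1}+1\leq(n+1)^{d^2}$, which of course implies the stated $(n+1)^{2d^2}$. Your closing caveat that the $\rho_i$ may a priori only lie in $\overline{\cS}$ is shared by the paper's own argument and is harmless in the applications, where $\cS$ is effectively closed or the relevant infima are unaffected.
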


\begin{proof}
We use Carath\'eodory theorem together with the smallness of the symmetric subspace. For pure states the proof from~\cite[Cor.~D.6]{BCR11} applies and the general case follows immediately by considering purifications and taking the partial trace over the purifying system.
\end{proof}

The von Neumann entropy has the following quasi-convexity property (besides its well-known concavity).

\begin{lemma}\label{caratentropy}
Let $\rho_i\in S(\cH)$ for $i=1,\ldots,N$ and $\{p_i\}$ be a probability distribution. Then, we have
\begin{align}
H\left(\sum_{i=1}^N p_i\rho_i\right)\leq \sum_{i=1}^N p_i H(\rho_i) + \log N\,.
\end{align}
\end{lemma}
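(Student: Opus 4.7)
The plan is to derive the claim as a weakened form of Holevo's bound on the accessible information. First, I would introduce a classical ancilla $X$ of dimension $N$ and consider the classical--quantum state
\begin{equation}
\sigma_{XA}\coloneqq\sum_{i=1}^{N}p_i\,|i\rangle\langle i|_X\otimes\rho_i^{A}\,,
\end{equation}
whose block-diagonal structure in the computational basis of $X$ makes its spectrum the multiset $\{p_i\lambda_{i,k}\}$, where $\{\lambda_{i,k}\}_k$ are the eigenvalues of $\rho_i$. A direct calculation yields
\begin{equation}
H(\sigma_{XA})=H(\{p_i\})+\sum_{i=1}^{N}p_i H(\rho_i)\,,
\end{equation}
while the reduced state on $A$ is precisely $\sigma_A=\sum_i p_i\rho_i$, the operator whose entropy we wish to bound.

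The heart of the proof is the inequality $H(\sigma_A)\leq H(\sigma_{XA})$, i.e.\ the non-negativity of the conditional entropy $H(X|A)_\sigma$ for classical--quantum states. I would establish it by a short direct argument exploiting operator monotonicity of the logarithm: for every $i$ with $p_i>0$ one has $\sigma_A\geq p_i\rho_i$, hence
\begin{equation}
\log\sigma_A\,\geq\,\log(p_i\rho_i)=(\log p_i)\,\id+\log\rho_i\,.
\end{equation}
Taking the trace against $\rho_i$, multiplying by $p_i$, and summing over $i$ gives
\begin{equation}
-H(\sigma_A)=\tr[\sigma_A\log\sigma_A]\,\geq\,\sum_{i}p_i\log p_i-\sum_i p_i H(\rho_i)=-H(\{p_i\})-\sum_i p_i H(\rho_i)\,,
\end{equation}
which is exactly $H(\sigma_A)\leq H(\{p_i\})+\sum_i p_i H(\rho_i)$.

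Finally, combining this with the elementary bound $H(\{p_i\})\leq\log N$ for any probability distribution supported on $N$ points yields
\begin{equation}
H\!\left(\sum_{i=1}^{N}p_i\rho_i\right)\leq\sum_{i=1}^{N}p_i H(\rho_i)+\log N\,,
\end{equation}
as claimed. The main (and only substantive) obstacle is the step $H(\sigma_A)\leq H(\sigma_{XA})$; while essentially Holevo's bound, the operator-monotonicity derivation above handles it in a self-contained way and avoids any recourse to purifications or majorization machinery. Everything else is bookkeeping.
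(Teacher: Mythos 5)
Your proof is correct and follows essentially the same route as the paper: both bound $H\bigl(\sum_i p_i\rho_i\bigr)$ by $\sum_i p_i H(\rho_i)+H(\{p_i\})$ and then use $H(\{p_i\})\leq\log N$, the only difference being that the paper cites the first inequality as an elementary known fact while you re-derive it via operator monotonicity of the logarithm (a derivation that is fine up to the routine regularization needed when some $\rho_i$ is not full rank, handled by perturbing towards the maximally mixed state and using continuity of the entropy).
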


\begin{proof}
This follows from elementary entropy inequalities:
\begin{align}
H\left(\sum_{i=1}^N p_i\rho_i\right)\leq\sum_{i=1}^N p_i H\left(\rho_i\right)+H(p_i)\leq \sum_{i=1}^N p_i H(\rho_i)+\log N\,.
\end{align}
\end{proof}


The following is a property of the quantum relative entropy.

\begin{lemma}\label{relEntropyGasym}~\cite[Thm.~3]{GMS09}
Let $\mathcal{N}$ be a trace-preserving, completely positive map with $\mathcal{N}(1)=1$ (unital) and $\mathcal{N}^2=\mathcal{N}$ (idempotent). Then, the minimum relative entropy distance between $\rho\in S(\cH)$ and $\sigma\in S(\cH)$ in the image of $\mathcal{N}$ satisfies 
\begin{align}
\inf_{\sigma\in\mathrm{Im}(\mathcal{N})}D(\rho\|\sigma)=H(\mathcal{N}(\rho))-H(\rho)=D(\rho\|\mathcal{N}(\rho))\,.
\end{align}
In particular, we have for the relative entropy of coherence that $D_{\cC}(\rho)=D(\rho\|\rho_{\mathrm{diag}})$, where $\rho_{\mathrm{diag}}$ denotes the state obtained from $\rho$ by deleting all off-diagonal elements.
\end{lemma}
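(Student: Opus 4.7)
The plan is to show that for a unital, trace-preserving, completely positive idempotent $\mathcal{N}$, the relative entropy from $\rho$ to any $\sigma\in\mathrm{Im}(\mathcal{N})$ splits into a ``perpendicular'' piece that is constant in $\sigma$ plus $D(\mathcal{N}(\rho)\|\sigma)$, which is nonnegative and vanishes at $\sigma=\mathcal{N}(\rho)$. The Pythagorean-type identity
\begin{align*}
D(\rho\|\sigma)=D(\rho\|\mathcal{N}(\rho))+D(\mathcal{N}(\rho)\|\sigma)\qquad\text{for all }\sigma\in\mathrm{Im}(\mathcal{N})
\end{align*}
together with $D(\rho\|\mathcal{N}(\rho))=H(\mathcal{N}(\rho))-H(\rho)$ would then immediately yield the lemma. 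The application to coherence is the special case $\mathcal{N}=\mathcal{N}_{\mathrm{diag}}$, the completely dephasing map in the fixed basis $\{|c\rangle\}$, which is manifestly unital, trace-preserving, CP, and idempotent, and whose image is exactly $\cC$.

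First I would establish a self-adjointness property of $\mathcal{N}$ on which both identities hinge. Since $\mathcal{N}$ is a unital, trace-preserving, CP, idempotent map on a finite-dimensional matrix algebra, a classical structural result (Tomiyama-type) identifies it with a trace-preserving conditional expectation onto the $*$-subalgebra $\mathcal{A}:=\mathrm{Im}(\mathcal{N})$; equivalently, $\mathcal{N}=\mathcal{N}^\dagger$ in the Hilbert--Schmidt inner product, and $\mathcal{N}(X)=X$ for all $X\in\mathcal{A}$. One can either invoke this directly or argue concretely: the fixed-point set of a unital Schwarz map equals $\mathcal{A}$, idempotency forces $\mathcal{N}|_\mathcal{A}=\mathrm{id}$, and the trace-preservation condition $\tr[\mathcal{N}(X)]=\tr[X]$ combined with $\mathcal{N}^2=\mathcal{N}$ gives $\tr[X\,\mathcal{N}(Y)]=\tr[\mathcal{N}(X)\,Y]$ for all $X,Y$. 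This is the main technical obstacle; once secured, everything else is routine.

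Next, for any $\sigma\in\mathcal{A}$ with $\supp(\rho)\subseteq\supp(\sigma)$, the operator $\log\sigma$ is a continuous function of $\sigma\in\mathcal{A}$ and therefore also lies in $\mathcal{A}$, so $\mathcal{N}(\log\sigma)=\log\sigma$. Using self-adjointness,
\begin{align*}
\tr[\rho\log\sigma]=\tr[\rho\,\mathcal{N}(\log\sigma)]=\tr[\mathcal{N}(\rho)\log\sigma].
\end{align*}
Substituting into the definition of the quantum relative entropy yields
\begin{align*}
D(\rho\|\sigma)=-H(\rho)-\tr[\mathcal{N}(\rho)\log\sigma]=\bigl(H(\mathcal{N}(\rho))-H(\rho)\bigr)+D(\mathcal{N}(\rho)\|\sigma),
\end{align*}
which is the Pythagorean decomposition. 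Specializing to $\sigma=\mathcal{N}(\rho)\in\mathcal{A}$ (a valid choice since $\mathcal{N}$ is CPTP, so $\mathcal{N}(\rho)$ is a state) gives $D(\rho\|\mathcal{N}(\rho))=H(\mathcal{N}(\rho))-H(\rho)$, while minimizing over $\sigma\in\mathcal{A}$ uses nonnegativity of $D(\mathcal{N}(\rho)\|\sigma)$ with equality at $\sigma=\mathcal{N}(\rho)$, establishing the infimum formula. The support hypothesis may be verified separately: if no $\sigma\in\mathcal{A}$ satisfies $\supp(\rho)\subseteq\supp(\sigma)$, then both sides equal $+\infty$ by convention, so there is nothing to prove; otherwise $\mathcal{N}(\rho)$ itself automatically satisfies the support condition by CP-monotonicity. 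Finally, for coherence the diagonal map $\mathcal{N}_{\mathrm{diag}}$ fulfills all hypotheses and has image $\cC$, so $D_\cC(\rho)=D(\rho\|\rho_{\mathrm{diag}})$ is immediate. \qed
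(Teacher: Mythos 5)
Your proof is correct in substance, and it follows the standard route; note that the thesis itself does not prove this lemma but imports it verbatim as \cite[Thm.~3]{GMS09}, so there is no in-paper argument to compare against (the original reference handles the group-twirl case, where self-adjointness is manifest because the map is an average of unitary conjugations). The Pythagorean decomposition $D(\rho\|\sigma)=D(\rho\|\mathcal{N}(\rho))+D(\mathcal{N}(\rho)\|\sigma)$ for $\sigma\in\mathrm{Im}(\mathcal{N})$ is exactly the right mechanism, and everything downstream of the self-adjointness claim is airtight, including $\log\sigma\in\mathrm{Im}(\mathcal{N})$ via functional calculus in the subalgebra.

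Two points need tightening. First, your ``concrete'' justification of self-adjointness is not valid as written: trace preservation together with $\mathcal{N}^{2}=\mathcal{N}$ alone does \emph{not} imply $\tr[X\,\mathcal{N}(Y)]=\tr[\mathcal{N}(X)\,Y]$ for a general linear idempotent. The missing ingredient is the conditional-expectation (bimodule) property $\mathcal{N}(AXB)=A\,\mathcal{N}(X)\,B$ for $A,B\in\mathcal{A}:=\mathrm{Im}(\mathcal{N})$, which does follow from unitality and complete positivity (Tomiyama, or the multiplicative-domain argument using the faithful invariant trace to show fixed points lie in the multiplicative domain); combined with trace preservation it gives $\tr[A\,\mathcal{N}(X)]=\tr[\mathcal{N}(AX)]=\tr[AX]$ for all $A\in\mathcal{A}$, i.e.\ $\mathcal{N}$ is the Hilbert--Schmidt orthogonal projection onto $\mathcal{A}$ and hence self-adjoint. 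So keep your primary route (invoke the Tomiyama-type result) and delete or repair the shortcut. Second, the inclusion $\supp(\rho)\subseteq\supp(\mathcal{N}(\rho))$ does not follow from generic ``CP-monotonicity''; it follows from the identity you already have: the kernel projection $Q$ of $\mathcal{N}(\rho)$ is a spectral projection of an element of $\mathcal{A}$, hence lies in $\mathcal{A}$, so $\tr[Q\rho]=\tr[Q\,\mathcal{N}(\rho)]=0$ and therefore $Q\rho=0$. With these two repairs the argument is complete, including the specialization to the dephasing map and $D_{\cC}(\rho)=D(\rho\|\rho_{\mathrm{diag}})$.
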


Audenaert's matrix inequality originally used to derive the quantum Chernoff bound can be stated as follows.

\begin{lemma}\cite[Thm.~1]{ACMBMAV07}\label{lem:audenaert}
Let $X,Y\gg0$ and $s\in(0,1)$. Then, we have
\begin{align}
\tr\left[X^sY^{1-s}\right]\geq\tr\left[X\left(1-\left\{X-Y\right\}_+\right)\right]+\tr\left[Y\left\{X-Y\right\}_+\right]\,.
\end{align}
\end{lemma}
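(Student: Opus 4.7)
The plan is to reduce the stated trace inequality to an elementary scalar inequality and then lift it back to the matrix setting, following the original line of Audenaert et al. First, I would rewrite the inequality in a cleaner symmetric form. With $P:=\{X-Y\}_+$ and the Jordan decomposition $X-Y = (X-Y)_+ - (X-Y)_-$, one has $\tr[P(X-Y)] = \tr(X-Y)_+$, so the right-hand side of the claim equals $\tr X - \tr(X-Y)_+ = \tfrac12(\tr X + \tr Y - \|X-Y\|_1)$. Hence the lemma is equivalent to
\begin{equation*}
2\tr[X^s Y^{1-s}] \;\geq\; \tr X + \tr Y - \|X-Y\|_1.
\end{equation*}
This reformulation makes explicit why the lemma is the key ingredient of the quantum Chernoff bound: the right-hand side is twice the Helstrom error probability for discriminating $X$ against $Y$, while the left-hand side is the Chernoff-type quantity optimized over $s$.

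Next I would settle the commuting case $[X,Y]=0$ by simultaneous diagonalization. Writing $X=\sum_i x_i |i\rangle\langle i|$ and $Y=\sum_i y_i |i\rangle\langle i|$ in a common eigenbasis, the inequality reduces eigenvalue-by-eigenvalue to the elementary scalar bound $x^s y^{1-s}\geq \min(x,y)$ for $x,y\geq 0$ and $s\in[0,1]$: if $x\leq y$ then $x^s y^{1-s}\geq x^s x^{1-s}=x$, and symmetrically if $y\leq x$. Summing over $i$ gives the claim in the commuting case.

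For the general non-commuting case one cannot diagonalize simultaneously, and this is where the main work lies. The standard route is to invoke the integral representation
\begin{equation*}
Y^{1-s} \;=\; \frac{\sin(s\pi)}{\pi}\int_0^\infty t^{1-s}\,(Y+t)^{-1}\,dt\cdot Y,\qquad s\in(0,1),
\end{equation*}
to reduce the trace inequality to a pointwise operator estimate for each $t>0$. Combining this with the variational characterization $\tr(X-Y)_+ = \max_{0\leq Q\leq\id}\tr[Q(X-Y)]$ tested at $Q=P$, and then integrating against the weight $\tfrac{\sin(s\pi)}{\pi}t^{1-s}\,dt$, one recovers the full trace inequality. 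The pointwise step itself follows from the operator concavity of the maps $Y\mapsto Y(Y+t)^{-1}$ and $X\mapsto X^s$ (for $s\in(0,1)$), together with Lieb's joint concavity theorem applied to $(A,B)\mapsto \tr[A^s B^{1-s}]$; Löwner's theorem is what allows one to transport the scalar inequality $x^s y^{1-s}\geq\min(x,y)$ into its matrix counterpart.

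The main obstacle I expect is precisely the pointwise operator inequality: because $X$ and $Y$ do not commute, the scalar minimum has no direct operator analogue, and a careful use of operator monotonicity together with the parallel-sum/harmonic-mean structure inside $Y(Y+t)^{-1}$ is needed to show that the right hand side of the integrand estimate is dominated by the left even in the non-commuting regime. Once this pointwise estimate is secured, the remainder of the argument is bookkeeping: substitute into the integral representation, integrate, and collect terms to match the symmetric form above.
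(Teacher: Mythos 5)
Your reformulation of the claim as $2\tr[X^sY^{1-s}]\geq \tr X+\tr Y-\|X-Y\|_1$ is correct (noting that $\{X-Y\}_+$ in the statement is the spectral projector, so the right-hand side indeed equals $\tr X-\tr(X-Y)_+$), and the commuting case via $x^sy^{1-s}\geq\min(x,y)$ is fine. Note that the thesis offers no proof of this lemma --- it imports it verbatim from \cite[Thm.~1]{ACMBMAV07} --- so your argument has to stand on its own, and it does not: the entire difficulty is concentrated in the ``pointwise operator estimate'' that you invoke but never state. The ingredients you list do not assemble into such an estimate. Lieb's theorem gives joint \emph{concavity} of $(X,Y)\mapsto\tr[X^sY^{1-s}]$, which produces upper bounds at averages rather than a lower bound against the Helstrom quantity $\tfrac12(\tr X+\tr Y-\|X-Y\|_1)$; operator concavity of $Y\mapsto Y(Y+t)^{-1}$ likewise points in the wrong direction; and testing the variational formula $\tr(X-Y)_+=\max_{0\leq Q\leq\id}\tr[Q(X-Y)]$ at $Q=\{X-Y\}_+$ is an identity, so it contributes nothing. (The integral representation is also miswritten: for $s\in(0,1)$ one has $Y^{1-s}=\frac{\sin(s\pi)}{\pi}\int_0^\infty t^{-s}(Y+t)^{-1}Y\,\mathrm{d}t$, i.e.\ weight $t^{-s}$, not $t^{1-s}$.) As you yourself observe, the scalar $\min$ has no operator analogue; acknowledging that obstacle is not the same as overcoming it.

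The known proof is considerably more economical and runs along a different line. Set $X'\coloneqq Y+(X-Y)_+=X+(X-Y)_-$, so that $X'\geq X$ and $X'\geq Y$. By L\"owner's theorem $t\mapsto t^r$ is operator monotone for $r\in[0,1]$, hence $X'^s\geq X^s$ and $X'^{1-s}\geq Y^{1-s}$; since the trace of a product of two positive semidefinite operators is nonnegative, $\tr[(X'^s-X^s)(X'^{1-s}-Y^{1-s})]\geq0$. Expanding gives $\tr[X^sY^{1-s}]\geq\tr[X^sX'^{1-s}]+\tr[X'^sY^{1-s}]-\tr X'$, and the same monotonicity yields $\tr[X^sX'^{1-s}]\geq\tr[X^sX^{1-s}]=\tr X$ and $\tr[X'^sY^{1-s}]\geq\tr[Y^sY^{1-s}]=\tr Y$. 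Therefore $\tr[X^sY^{1-s}]\geq\tr X+\tr Y-\tr X'=\tr X-\tr(X-Y)_+$, which is the claim. The auxiliary operator $X'$ is the missing idea in your outline; once it is introduced, no integral representation and no concavity theorem are needed.
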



\chapter{Chernoff bound for the qubit covariant measurement}\label{appendixCBcm}

In this appendix we will give the missing details of the calculation in Section~\ref{DPexamplesCovMeas}. The goal is to determine the dual of the Chernoff bound for a qubit covariant measurement. 

We use the same definitions as in Section~\ref{DPexamplesCovMeas} and give the details of investigating Equation~\ref{covCa}, which we will repeat here for accessibility: 
\begin{equation}
C=\int_0^\pi \sin\theta \int_0^{2\pi}{d\phi\over4\pi} \left(1+\cos\theta \right)^{1-s}
\left(1+\cos\theta\cos\alpha+\cos\phi\sin\theta\sin\alpha \right)^{s}. 
\end{equation}
We start by using the relation
\begin{equation}
a^s={\sin(s\pi)\over\pi}\int_0^\infty
dx{a x^{s-1}\over a+x} 
\end{equation}
in the second factor and making the usual change of variables, $\theta\to u:=\cos\theta$, which gives
\begin{align}
C=\int_{-1}^{1} du &\left(1+u\right)^{1-s}{\sin(s\pi)\over4\pi^2}\int_0^\infty dx \dots \nonumber\\
& \dots\int_0^{2\pi}{d\phi\over4\pi}
{(1+u\cos\alpha+\sqrt{1-u^2}\cos\phi\sin\alpha)\,x^{s-1}\over
1+x+u\cos\alpha+\sqrt{1-u^2}\cos\phi\sin\alpha} .
\end{align}
The $\phi$-integration can be carried out using the residue theorem, with the result
\begin{align}
C=\int_{-1}^{1}  &  du \left(1+u\right)^{1-s}{\sin(s\pi)\over2\pi}\dots \nonumber\\
&\dots\int_0^\infty dx\, x^{s-1}
\left[
1 - {x\over\sqrt{(1 + x + u\cos\alpha)^2 - (1 - u^2)\sin^2\alpha}}
\right].
\label{after x}
\end{align}
Minimization over $\vec m_1$ and $\vec m_2$ is equivalent to minimization over $0\le\alpha\le \pi$. So,
let us take the derivative with respect to $\alpha$. One has
\begin{alignat}{2}
{dC\over d\alpha}  &=  -\sin\alpha{\sin(s\pi)\over2\pi}&&\int_0^\infty dx\, x^s\int_{-1}^{1}   du \dots\nonumber\\
&&&\dots\left(1+u\right)^{1-s}{(1+x)u+\cos\alpha\over\left[(1 + x + u\cos\alpha)^2 - (1 - u^2)\sin^2\alpha\right]^{3/2}}\nonumber\\
&:= -\sin\alpha{\sin(s\pi)\over2\pi}&&\int_0^\infty dx\, x^s f(\alpha,x),
\label{def f}
\end{alignat}
which shows that $\alpha=0,\,\pi$, are extreme points of $C$. To prove that these are the only extremes,
we next show that $f(\alpha,x)>0$ if $0<\alpha< \pi$. We first notice that the integrant in the first line of Equation~\eqref{def f} is
$$
-{1\over \sin^2\alpha}\; {d\over d u} {1+x+u \cos\alpha\over \sqrt{(1 + x + u\cos\alpha)^2 - (1 - u^2)\sin^2\alpha}}.
$$
Thus, integrating by parts one has
\begin{equation}
f(\alpha,x)={1 - s\over\sin^2 \alpha} \int_{-1}^1  \, {dx\,(1+u)^{-s}(1+x+u\cos\alpha)\over \sqrt{(1 + x + u\cos\alpha)^2 - (1 - u^2)\sin^2\alpha}}
-{2^{1-s}\over\sin^2\alpha}.
\end{equation}
We further note that if $0<\alpha<\pi$,
\begin{equation}
{1+x+u\cos\alpha\over \sqrt{(1 + x + u\cos\alpha)^2 - (1 - u^2)\sin^2\alpha}}>1.
\end{equation}
Hence
\begin{equation}
f(\alpha,x)>{1 - s\over\sin^2 \alpha}\int_{-1}^1dx\, (1+u)^{-s}-{2^{1-s}\over\sin^2\alpha}=0.
\end{equation}
Therefore, $dC/d\alpha\le 0$, and it only vanishes at $\alpha=0,\pi$. It follows that $C$ has a maximum at $\alpha=0$ and a minimum at $\alpha=\pi$.

The remainder of the calculation can be found in Section~\ref{DPexamplesCovMeas}. 


\chapter{Bounds on the concavity of the von Neumann entropy}\label{BoundsOnConc}
In Chapter \ref{infoCombining} we needed to relate the fidelity characteristic $f=F(\rho_0,\rho_1)$ of a binary-input classical-quantum channel with output states $\rho_0$ and $\rho_1$ back to its symmetric capacity $\log2-H$, therefore we need a lower bound on the concavity of the von Neumann entropy. This will be a special case of the following new bounds (see also Remark \ref{remarkOnRenesBound}):

\begin{thm}[Lower bounds on concavity of von Neumann entropy]\label{vNconcavityimprovement}
Let $\rho_i\in{\mathcal B}(\mathbb C^d)$ be quantum states for $i=1,\ldots,n$ and $\{p_i\}_{i=1}^n$ be a probability distribution. Then:
\begin{align}
H\left(\sum_{i=1}^np_i\rho_i\right)&-\sum_{i=1}^np_iH(\rho_i)\nonumber\\
&=H(\{p_i\})-D\Big(\sum_{i,j=1}^n\sqrt{p_ip_j}|i\rangle\langle j|\otimes\sqrt{\rho_i}\sqrt{\rho_j}\Big\|\sum_{i=1}^np_i|i\rangle\langle i|\otimes\rho_i\Big)\label{equalityinconcavity}\\
&\geq H(\{p_i\})-\log\Big(1+2\sum_{1\leq i<j\leq n}\sqrt{p_ip_j}{\textrm tr}[\sqrt{\rho_i}\sqrt{\rho_j}]\Big)\label{concavityLBwithSQRT}\\
&\geq H(\{p_i\})-\log\Big(1+2\sum_{1\leq i<j\leq n}\sqrt{p_ip_j}F(\rho_i,\rho_j)\Big).\label{concavityLBwithF}
\end{align}
\end{thm}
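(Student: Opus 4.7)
The plan is to prove the equality (\ref{equalityinconcavity}) first and then derive the two inequalities in turn, with the last one following quickly from the second.

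For (\ref{equalityinconcavity}) I would introduce $\tau_{XA}:=\sum_{ij}\sqrt{p_ip_j}|i\rangle\langle j|\otimes\sqrt{\rho_i}\sqrt{\rho_j}$ and observe that it is the $XA$-marginal of the pure state $|\Psi\rangle_{XAR}:=\sum_i\sqrt{p_i}|i\rangle_X\otimes(\sqrt{\rho_i}\otimes I)|\Omega\rangle_{AR}$, where $|\Omega\rangle=\sum_k|kk\rangle$ is the unnormalized maximally entangled vector. A direct calculation gives $\tr_{XA}|\Psi\rangle\langle\Psi|=\sum_i p_i\rho_i^{T}$, so by Schmidt duality $\tau_{XA}$ has the same eigenvalues as $\sum_i p_i\rho_i$ and in particular $H(\tau_{XA})=H(\sum_i p_i\rho_i)$. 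Moreover $\sigma_{XA}$ equals the complete $X$-dephasing of $\tau_{XA}$; since $\log\sigma$ is block-diagonal in $X$, the off-diagonal $i\neq j$ blocks of $\tau$ do not contribute, giving $\tr(\tau\log\sigma)=\tr(\sigma\log\sigma)=-H(\{p_i\})-\sum_i p_i H(\rho_i)$. Plugging these two facts into $D(\tau\|\sigma)=-H(\tau)-\tr(\tau\log\sigma)$ and rearranging yields (\ref{equalityinconcavity}).

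Given this equality, (\ref{concavityLBwithSQRT}) is equivalent to the upper bound $D(\tau\|\sigma)\leq\log Z$ with $Z:=\sum_{ij}\sqrt{p_ip_j}\tr(\sqrt{\rho_i}\sqrt{\rho_j})$ (the $i=j$ contributions summing to $1$). My plan is to note that $D(\tau\|\sigma)=H(X|A)_\sigma$ -- because $\tau$ and $\sigma$ share the same $A$-marginal and dephasing $X$ turns $\tau$ into $\sigma$ -- and then to upper bound this conditional entropy by a measurement-based argument: one applies a well-chosen POVM on $A$, the natural candidate being the pretty-good measurement associated to the purified ensemble $\{p_i,|\phi_i\rangle_{AR}\}$ with $|\phi_i\rangle:=(\sqrt{\rho_i}\otimes I)|\Omega\rangle$, whose weighted Gram matrix is exactly $\tilde G_{ij}=\sqrt{p_ip_j}\tr(\sqrt{\rho_i}\sqrt{\rho_j})$ (a density matrix with entry-sum equal to $Z$). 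By the data-processing inequality this would reduce the claim to a classical bound $H(X|Y)\leq\log Z$ on the induced joint distribution, accessible via a direct log-sum computation. An alternative route I would explore is matrix-analytic, using that $Z=\|\Gamma\|_2^2$ for the positive operator $\Gamma:=\sum_i\sqrt{p_i}\sqrt{\rho_i}$, combined with operator concavity of $\log$.

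Finally, (\ref{concavityLBwithF}) follows immediately from (\ref{concavityLBwithSQRT}) via the affinity--fidelity bound $\tr(\sqrt{\rho_i}\sqrt{\rho_j})\leq\|\sqrt{\rho_i}\sqrt{\rho_j}\|_1=F(\rho_i,\rho_j)$ (the first step uses that $\tr(\sqrt{\rho_i}\sqrt{\rho_j})$ is real and nonnegative) together with monotonicity of $\log$. The hard part will be Step~2: the usual R\'enyi-type upper bounds $D\leq D_\alpha$ for $\alpha\in\{2,\infty\}$ can be checked to collapse on this particular pair $(\tau,\sigma)$ to $\log n$, independently of the $\rho_i$, which is far too weak. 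Any proof must therefore genuinely exploit the rank-one purification structure of $|\Psi\rangle$ and the specific coherent form of $\tau$, rather than merely the operator inequalities between $\tau$ and $\sigma$.
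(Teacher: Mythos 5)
Your proof of the equality \eqref{equalityinconcavity} is essentially the paper's: the same purification $|\Psi\rangle$, the same identification of the coherent Gram operator $\tau_{XA}$ as a marginal carrying the spectrum of $\sum_i p_i\rho_i$, and the same observation that dephasing the label register turns $\tau$ into $\sigma$, so that $D(\tau\|\sigma)$ reduces to the entropy difference $H(\sigma)-H(\tau)$. The passage from \eqref{concavityLBwithSQRT} to \eqref{concavityLBwithF} via $\tr[\sqrt{\rho_i}\sqrt{\rho_j}]\leq\|\sqrt{\rho_i}\sqrt{\rho_j}\|_1=F(\rho_i,\rho_j)$ is also exactly what the paper does.

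The gap is in your Step 2, and it is compounded by an incorrect claim that leads you to dismiss the route that actually works. You assert that the bounds $D\leq D_\alpha$ for $\alpha\in\{2,\infty\}$ collapse to $\log n$ on the pair $(\tau,\sigma)$. That is true for the \emph{Petz} R\'enyi-$2$ divergence $\log\tr\!\left[\tau^2\sigma^{-1}\right]$, which evaluates to $\log\sum_{i,j}p_j\tr[\rho_j\Pi_i]=\log n$ for full-rank states (with $\Pi_i$ the support projector of $\rho_i$). But it is false for the \emph{sandwiched} R\'enyi-$2$ divergence $\tilde D_2(\tau\|\sigma)=\log\tr\!\left[\sigma^{-1/2}\tau\sigma^{-1/2}\tau\right]$: since $\sigma^{-1/2}\tau\sigma^{-1/2}=\sum_{i,j}|i\rangle\langle j|\otimes\Pi_i\Pi_j$, a one-line trace computation gives $\tr\!\left[\sigma^{-1/2}\tau\sigma^{-1/2}\tau\right]=\sum_{i,j}\sqrt{p_ip_j}\tr[\sqrt{\rho_i}\sqrt{\rho_j}]=Z$ exactly. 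The bound $D\leq\tilde D_2=\log Z$ is precisely the paper's proof of \eqref{concavityLBwithSQRT}; the point is that the sandwiched family is the \emph{minimal} quantum R\'enyi divergence, which is what makes $\alpha=2$ tight enough here. Your alternative pretty-good-measurement route is only a sketch: you never exhibit the classical bound $H(X|Y)\leq\log Z$ you would need, and it is not clear that the PGM delivers it. As written, the proof of \eqref{concavityLBwithSQRT} is therefore incomplete, and the only concrete statement you make about how to close it is wrong.
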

\begin{proof} We will obtain the equality \eqref{equalityinconcavity} by keeping track of the gap term in the proof of the upper bound on the concavity in \cite[Theorem 11.10]{book2000mikeandike}, and the further inequalities by bounding the relative entropy from above. For the proof, define $\rho:=\sum_{i=1}^np_i\rho_i$.

Denote by $|\Omega\rangle_{AC}:=\sum_{i=1}^d|i\rangle_A\otimes|i\rangle_C$ the (unnormalized) maximally entangled state between two systems $A$ and $C$ of dimension $d$. Then $|\phi_i\rangle_{AC}:=(\mathbbm1\otimes\sqrt{\rho_i})|\Omega\rangle_{AC}$ are purifications of the $\rho_i$ in the sense that ${\textrm tr}_A[|\phi_i\rangle\langle\phi_i|_{AC}]=\rho_i$. We also have ${\textrm tr}_C[|\phi_i\rangle\langle\phi_i|_{AC}]=\rho_i^T$, where $^T$ denotes the transposition w.r.t.\ the basis $\{|i\rangle\}_A$. For a system $B$ of dimension $n$ with orthonormal basis $\{|i\rangle_B\}_{i=1}^n$, the state
\begin{align*}
|\psi\rangle_{ABC}:=\sum_{i=1}^n\sqrt{p_i}|i\rangle_B\otimes|\phi_i\rangle_{AC}
\end{align*}
is therefore a purification of $\rho^T$ in the sense that $\rho^T=\psi_A:={\textrm tr}_{BC}[\psi_{ABC}]$, where we have defined $\psi_{ABC}:=|\psi\rangle\langle\psi|_{ABC}$. Since the transposition leaves the spectrum invariant, we have $H(\rho)=H(\rho^T)=H(\psi_A)=H(\psi_{BC})$, where
\begin{align*}
\psi_{BC}:={\textrm tr}_A[\psi_{ABC}]=\sum_{i,j=1}^n\sqrt{p_ip_j}|i\rangle\langle j|_B\otimes(\sqrt{\rho_i}\sqrt{\rho_j})_C.
\end{align*}

Consider now the map $P_B(X):=\sum_{i=1}^n|i\rangle\langle i|_BX|i\rangle\langle i|_B$ acting on subsystem $B$, such that $P_B(\psi_{BC})=\sum_{i=1}^np_i|i\rangle\langle i|\otimes\rho_i$. Note that $P_B=P_B^*$ represents a projective measurement on $B$ and is selfadjoint w.r.t.\ the Hilbert-Schmidt inner product. We can therefore write:
\begin{align*}
D(\psi_{BC}\|P_B(\psi_{BC}))&=-H(\psi_{BC})-{\textrm tr}[\psi_{BC}\log P_B(\psi_{BC})]\\
&=-H(\psi_{BC})-{\textrm tr}[P_B(\psi_{BC})\log P_B(\psi_{BC})]\\
&=-H(\rho)+H(P_B(\psi_{BC}))\\
&=-H(\rho)+H(\{p_i\})+\sum_{i=1}^np_iS(\rho_i),
\end{align*}
which proves the equality \eqref{equalityinconcavity}.

To obtain the lower bound \eqref{concavityLBwithSQRT}, we bound the relative entropy from above by the \emph{sandwiched Renyi-$\alpha$ divergence} of order $\alpha=2$ \cite{WWY14,muller2013quantum,tomamichel2015quantum}:
\begin{align*}
D(\psi_{BC}\|P_B(\psi_{BC}))&\leq D_2(\psi_{BC}\|P_B(\psi_{BC})) \\
&=\log{\textrm tr}[(P_B(\psi_{BC}))^{-1/2}\psi_{BC}(P_B(\psi_{BC}))^{-1/2}\psi_{BC}].
\end{align*}
Note that the sandwiched Renyi divergences are the \emph{minimal} quantum generalizations of the classical Renyi-$\alpha$ divergences \cite{tomamichel2015quantum}, which will be advantageous to obtain a good lower bound. We can continue by using the explicit forms of $\psi_{BC}$ and $P_B(\psi_{BC})$ from above:
\begin{align*}
D(\psi_{BC}\|P_B(\psi_{BC}))&\leq\log{\textrm tr}\Big[\Big(\sum_{i,j=1}^n|i\rangle\langle j|_B\otimes\mathbbm1_C\Big)\Big(\sum_{k,l=1}^n\sqrt{p_kp_l}|k\rangle\langle l|_B\otimes\sqrt{\rho_k}\sqrt{\rho_l}\Big)\Big]\\
&=\log\Big(\sum_{i,j=1}^n\sqrt{p_ip_j}{\textrm tr}[\sqrt{\rho_i}\sqrt{\rho_j}]\Big),
\end{align*}
which agrees with \eqref{concavityLBwithSQRT} since the terms with $i=j$ sum to $\sum_{i=1}^np_i{\textrm tr}[\rho_i]=1$. The final bound \eqref{concavityLBwithF} is obtained by noting that $F(\rho_i,\rho_j)=\|\sqrt{\rho_i}\sqrt{\rho_j}\|_1\geq{\textrm tr}[\sqrt{\rho_i}\sqrt{\rho_j}]$ holds for any quantum states \cite{book2000mikeandike,audenaert2012comparisons}.
\end{proof}

\begin{rem}[Upper bounds on concavity of von Neumann entropy]
The equality \eqref{equalityinconcavity} in Theorem \ref{vNconcavityimprovement} can also be used to obtain \emph{upper} bounds on the concavity of von Neumann entropy: As opposed to the proof of Theorem \ref{vNconcavityimprovement}, where we used the upper bound $D\leq D_2$ involving the sandwiched Renyi-$2$ divergence, one could bound the relative entropy $D$ from below, e.g.\ using the Pinsker inequality~\cite{W13} or using a smaller divergence measure such as one of the various Renyi-$\alpha$ divergences with parameter $\alpha\in[0,1)$.
\end{rem}

We will later need the special case $n=2$ of Theorem \ref{vNconcavityimprovement} with uniform probabilities $\{p_i\}$ together with a bound from \cite{PhysRevLett.105.040505}, in order to obtain a bound on the fidelity parameter $f$ in terms of the channel entropy $H$ for binary-input classical-quantum channels.
\begin{thm}[Relation between fidelity parameter and channel entropy]\label{tightRelationFHtheorem}
Let $\sigma_0,\sigma_1$ be quantum states, and define $f:=F(\sigma_0,\sigma_1)$ and $H=\log2-H((\sigma_0+\sigma_1)/2)+(H(\sigma_0)+H(\sigma_1))/2=H(X|B)$, where $H(X|B)$ is evaluated on the state $\frac{1}{2}|0\rangle\langle0|_X\otimes(\sigma_0)_B+\frac{1}{2}|1\rangle\langle1|_X\otimes(\sigma_1)_B$. Then the following bound holds:
\begin{align}\label{tightrelationHf}
e^H-1\leq f\leq1-2h_2^{-1}(\log2-H),
\end{align}
where $h_2^{-1}:[0,\log2]\to[0,1/2]$ is the the inverse of the binary entropy function.
\end{thm}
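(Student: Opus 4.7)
The plan is to establish the two bounds separately, with the lower bound following immediately from Theorem \ref{vNconcavityimprovement} and the upper bound following from Uhlmann's theorem combined with data processing for the Holevo quantity.

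\textbf{Lower bound} $e^H - 1 \leq f$: I would rewrite $\log 2 - H$ as the Holevo quantity $\chi = H((\sigma_0+\sigma_1)/2) - \tfrac12 H(\sigma_0) - \tfrac12 H(\sigma_1)$, which is precisely the concavity gap of the von Neumann entropy evaluated at the uniform binary mixture. Applying inequality \eqref{concavityLBwithF} from Theorem \ref{vNconcavityimprovement} with $n=2$, $p_1 = p_2 = 1/2$, $\rho_1 = \sigma_0$, $\rho_2 = \sigma_1$ yields
\begin{equation*}
\chi \;\geq\; \log 2 - \log\!\left(1 + 2\cdot\tfrac{1}{2}\cdot F(\sigma_0,\sigma_1)\right) \;=\; \log 2 - \log(1+f).
\end{equation*}
Since $\chi = \log 2 - H$, this rearranges to $H \leq \log(1+f)$, i.e., $f \geq e^H - 1$.

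\textbf{Upper bound} $f \leq 1 - 2h_2^{-1}(\log 2 - H)$: I would use Uhlmann's theorem to choose purifications $|\phi_0\rangle_{BC},|\phi_1\rangle_{BC}$ of $\sigma_0,\sigma_1$ on an enlarged system $BC$ such that $|\langle\phi_0|\phi_1\rangle| = F(\sigma_0,\sigma_1) = f$. Consider the purified classical-quantum state
\begin{equation*}
\tau_{XBC} := \tfrac12 |0\rangle\langle 0|_X \otimes |\phi_0\rangle\langle\phi_0|_{BC} + \tfrac12 |1\rangle\langle 1|_X \otimes |\phi_1\rangle\langle\phi_1|_{BC}.
\end{equation*}
The reduced state $\bar\tau_{BC} = \tfrac12(|\phi_0\rangle\langle\phi_0| + |\phi_1\rangle\langle\phi_1|)$ has spectrum $\{(1+f)/2, (1-f)/2\}$, so $H(\bar\tau_{BC}) = h_2((1-f)/2)$; each conditional state is pure, so the Holevo quantity of the purified ensemble equals $h_2((1-f)/2)$. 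Tracing out $C$ returns the original ensemble on $XB$, and the Holevo quantity is monotone under partial trace (data processing for the quantum mutual information), so
\begin{equation*}
\log 2 - H \;=\; \chi_{XB} \;\leq\; \chi_{XBC} \;=\; h_2\!\left(\tfrac{1-f}{2}\right).
\end{equation*}
Since $h_2$ is strictly increasing on $[0,1/2]$ and $(1-f)/2 \in [0,1/2]$, inverting gives $h_2^{-1}(\log 2 - H) \leq (1-f)/2$, equivalently $f \leq 1 - 2h_2^{-1}(\log 2 - H)$.

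There is no serious obstacle here: both directions use off-the-shelf tools. The only subtle choice is to recognize $\log 2 - H$ as the Holevo quantity $\chi$ of the binary uniform ensemble, which is what makes Theorem \ref{vNconcavityimprovement} applicable on one side and makes the Uhlmann/data-processing argument applicable on the other. The bound is tight on both ends for pure states (where $\chi = h_2((1-f)/2)$ exactly, saturating the upper bound) and, in the limit, on the classical fully-distinguishable regime (where $H \to 0$ and $f \to 0$ consistently with $e^H - 1 \to 0$).
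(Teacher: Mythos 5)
Your proof is correct, and its overall architecture matches the paper's: the lower bound $e^H-1\leq f$ is obtained in exactly the same way, by specializing inequality \eqref{concavityLBwithF} of Theorem \ref{vNconcavityimprovement} to $n=2$ and uniform weights and rearranging $\log2-H\geq\log2-\log(1+f)$. The only divergence is in the upper bound: the paper simply cites the inequality $\log2-H\leq h_2\bigl(\tfrac{1-f}{2}\bigr)$ from an earlier reference (\cite{PhysRevLett.105.040505}), whereas you re-derive it from scratch via Uhlmann's theorem and data processing. Your derivation is sound: Uhlmann purifications $|\phi_0\rangle,|\phi_1\rangle$ with $|\langle\phi_0|\phi_1\rangle|=f$ give a uniform mixture with spectrum $\bigl\{\tfrac{1+f}{2},\tfrac{1-f}{2}\bigr\}$ supported on a two-dimensional subspace, so the Holevo quantity of the purified ensemble is exactly $h_2\bigl(\tfrac{1-f}{2}\bigr)$, and monotonicity of $I(X:BC)\geq I(X:B)$ under tracing out $C$ yields the claim; this is in fact the standard proof of the cited bound. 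What your route buys is self-containedness and a transparent view of when the upper bound is tight (pure states, where tracing out $C$ loses nothing); what the paper's route buys is brevity. Both arguments then invert $h_2$ on $[0,1/2]$ identically.
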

\begin{proof}The lower bound follows immediately from Theorem \ref{vNconcavityimprovement} in the special case of $n=2$  states $\sigma_0,\sigma_1$ with equal probabilities $p_0=p_1=1/2$:
\begin{align*}
\log2-H &=H\Big(\frac{\sigma_0+\sigma_1}{2}\Big)-\frac{H(\sigma_0)+H(\sigma_1)}{2} \\
&\geq\log2-\log(1+F(\sigma_0,\sigma_1))=\log2-\log(1+f).
\end{align*}
For the other direction, we need the following bound from \cite{PhysRevLett.105.040505}:
\begin{align*}
\log2-H &=H\Big(\frac{\sigma_0+\sigma_1}{2}\Big)-\frac{H(\sigma_0)+H(\sigma_1)}{2} \\ 
&\leq h_2\Big(\frac{1-F(\sigma_0,\sigma_1)}{2}\Big)=h_2\Big(\frac{1-f}{2}\Big),
\end{align*}
where $h_2$ is the binary entropy function. The upper bound in \eqref{tightrelationHf} follows now by noting that the inverse function $h_2^{-1}:[0,\log2 ]\to[0,1/2]$ is monotonically increasing.
\end{proof}

\begin{rem}\label{remarkonotherconcavitybounds}The main feature of the bound \eqref{tightrelationHf} for our purposes is that it is tight on \emph{both} ends of the interval $H\in[0,\log2]$. Namely, the bound implies $H=0\Leftrightarrow f=0$ as well as $H=\log2\Leftrightarrow f=1$, see also Fig.\ \ref{figboundHf}. In particular, we are not aware of any previous bound showing that small fidelity $f\approx0$ implies $H$ to be close to $0$. Such a statement, however, is needed for our proofs of Theorems \ref{QuantumMrsGerberTheorem2DifferentStates} and \ref{QuantumMrsGerberTheorem1StateTwice} (see Eqs.\ (\ref{first-asymmetric-lower-bound-eqn}) and (\ref{minof2expressionsinproofofHH})).

In particular, the bound $\log2-H\geq\frac{1}{2}\left(\frac{1}{2}\|\sigma_0-\sigma_1\|_1\right)^2$, which is the main result of \cite{kim2014bounds}, can never yield any non-trivial information for $H\in[0,(\log2)-1/2)$ (i.e.\ near $f\approx0$), since its right-hand side will never exceed $\frac{1}{2}$. Using the Fuchs-van de Graaf inequality $\frac{1}{2}\|\sigma_0-\sigma_1\|_1\geq1-f$ \cite{book2000mikeandike}, we would only obtain the bound $f\geq1-\sqrt{2(\log2-H)}$, which is also shown in Fig.\ \ref{figboundHf}.

Our lower bounds (\ref{concavityLBwithSQRT}) and (\ref{concavityLBwithF}) are generally good when the states $\rho_i$ are close to pairwise orthogonal: If $\max_{i\neq j}F(\rho_i,\rho_j)$ becomes close to $0$ then these lower bounds approach the value $H(\{p_i\})$, which is the value of the left-hand-side of the inequality for exactly pairwise orthogonal states $\rho_i$. Note however that the lower bounds (\ref{concavityLBwithSQRT}) and (\ref{concavityLBwithF}) can become negative and therefore trivial, e.g.\ when all states $\rho_i$ coincide (or have high pairwise fidelity) and the probability distribution $\{p_i\}$ is not uniform on its support. For a uniform probability distribution $\{p_i=1/n\}$ and any states $\rho_i$, the bounds (\ref{concavityLBwithSQRT}) and (\ref{concavityLBwithF}) are however always nonnegative (this case also covers Theorem \ref{tightRelationFHtheorem}).
\end{rem}
\begin{rem}\label{remarkOnAlexDaniel}
In \cite{MFW16} a different lower bound on the concavity of the von Neumann entropy was found which was shown to outperform the bound in \cite{kim2014bounds} in some cases. The bound is given in terms of the relative entropy, which can be easily bounded by $D(\rho || \sigma) \geq -2 \log{F(\rho, \sigma)}$, see \cite{muller2013quantum}. Nevertheless this bound can not be used in our general scenario since it becomes trivial whenever the involved states are pure. Note that this is not the case for our bound presented above. 
\end{rem}
\begin{rem}\label{remarkOnRenesBound}
Shortly before the initial submission of our paper we discovered that the bound from Theorem \ref{tightRelationFHtheorem} (the case of uniform input distribution) has recently been given in \cite{NR17}, also in the context of polar codes. Our Theorem \ref{vNconcavityimprovement} is however more general, it constitutes an \emph{equality} form of the concavity of the von Neumann entropy which allows for convenient relaxations, and is valid for \emph{non-uniform} distributions. Furthermore a weaker bound can already be found in~\cite{SRDR13}.
\end{rem}

\begin{figure}[t!]
\centering
\begin{overpic}[trim=3.1cm 21.3cm 8.7cm 2.6cm, clip, scale=0.8]{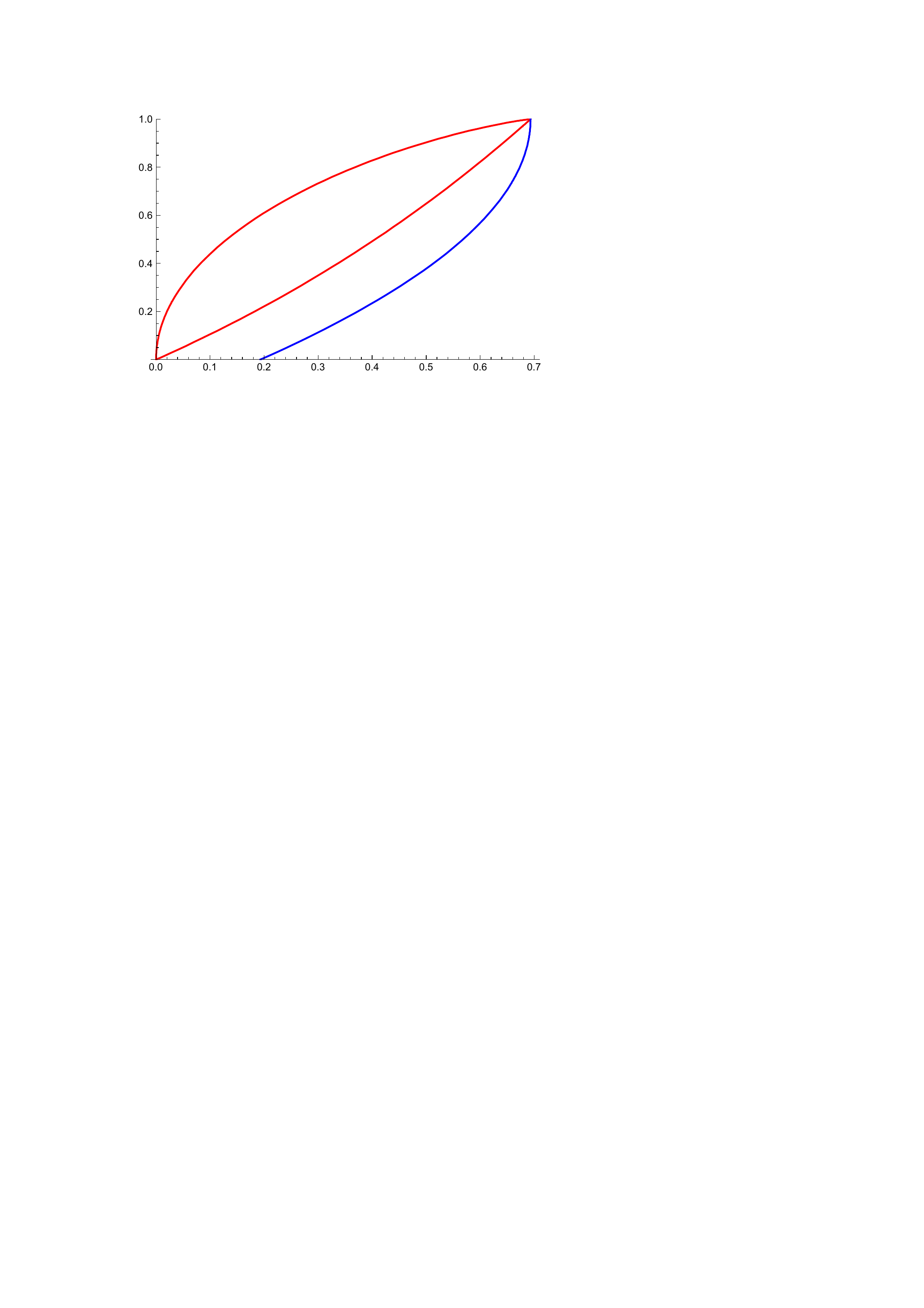}   
\put (-9,62) {$f$}
\put (106,0) {$H$}
\end{overpic}
\caption{\label{figboundHf} 
The red curves show the upper and lower bounds from (\ref{tightrelationHf}), the blue curve the lower bound via \cite{kim2014bounds} (see Remark \ref{remarkonotherconcavitybounds}).}
\end{figure}

\chapter{Properties of gaussian correlation measures}\label{appGauss}

We now prove the physically fundamental properties of the Gaussian steerability measure \eqref{G steer} as stated in Theorem \ref{G prop}. In \cite{steerability}, these facts were stated and proven only in the particular case in which the steered system is made of one mode.

\begin{proof}[Proof of Theorem \ref{G prop}] $ \\[-3ex] $
\begin{itemize}
\item[(1)] \textit{ $\mathcal{G}(A\rangle B)_V$ is convex and decreasing as a function of the CM $V_{AB}>0$.} \\
Both properties follow straightforwardly by combining concavity and monotonicity of the Schur complement with the convexity of $g_-$. Let us prove convexity for instance. Since the Schur complement is concave, for any $V_{AB},W_{AB}>0$ and $0\leq p\leq 1$ we obtain
\begin{equation*}
(pV_{AB}+(1-p)W_{AB}) \big/ (pV_A+(1-p) W_A)\ \geq\ p\, V_{AB}/V_A + (1-p)\, W_{AB}/W_A\, .
\end{equation*}
Applying the fact that $g_-$ is decreasing and convex gives
\begin{align}
&\mathcal{G}(A\rangle B)_{pV_{AB}+(1-p)W_{AB}}\ \nonumber\\
&=\ g_- \big( (pV_{AB}+(1-p)W_{AB}) \big/ (pV_A+(1-p) W_A) \big)\ \nonumber\\ 
&\leq\ g_- \big( p V_{AB}/V_A + (1-p) W_{AB}/W_A\big)\ \nonumber\\
&\leq\ p\, g_-(V_{AB}/V_A)\, +\, (1-p)\, g_- (W_{AB}/W_A)\ \nonumber\\ 
&=\ p\, \mathcal{G}(A\rangle B)_V\, +\, (1-p)\, \mathcal{G}(A\rangle B)_W \, . \nonumber
\end{align}

\item[(2)] \textit{ $\mathcal{G}(A\rangle B)$ is monotonically decreasing under general, non--deterministic Gaussian maps on the steering system $A$.} \\
Using the monotonicity of the Schur complement under general Gaussian maps, as given in Theorem \ref{CP incr sch} of the main text, one gets
\begin{equation*}
\Gamma_{A\rightarrow A'} (V_{AB}) \big/ \Gamma_{A\rightarrow A'}(V_A)\ \geq\ V_{AB} / V_A\, .
\end{equation*}
Applying $g_-$ to both sides yields exactly
\begin{equation}
\mathcal{G}(A'\rangle B)_{\Gamma_{A\rightarrow A'}(V_{AB})}\ \leq\ \mathcal{G}(A\rangle B)_{V_{AB}} \, .
\end{equation}

\item[(3)] \textit{ ${\mathcal G}(A\rangle B)$ is monotonically decreasing under deterministic, quantum Gaussian channels on the steered party $B$.} \\
Using the tripartite version of \eqref{mon steer 1} together with the positivity of the steerability measure, one easily deduces $\mathcal{G}(C\rangle AB)\geq \mathcal{G}(C \rangle A)$. Together with the invariance under symplectic transformations, this proves that $\mathcal{G}$ is decreasing under deterministic quantum Gaussian channels on the steered system.

\item[(4)] \textit{ ${\mathcal G}(A\rangle B)$ is additive under tensor products.}
We show 
\begin{equation*}
{\mathcal G}(A_1 A_2\rangle B_1 B_2)_{V_{A_1B_1}\oplus W_{A_2 B_2}} =  \mathcal{G}(A_1\rangle B_1)_{V_{A_1B_1}} +  \mathcal{G}(A_2\rangle B_2)_{W_{A_2 B_2}}.
\end{equation*}
This is straight forward, since
\begin{align}
&\mathcal{G}(A_1A_2\rangle B_1B_2)_{V_{A_1B_1}\oplus W_{A_2 B_2}}\ \nonumber\\
&=\ g_-\left( (V_{A_1B_1}\oplus W_{A_2 B_2}) \big/ (V_{A_1}\oplus W_{A_2}) \right)\, \nonumber\\ 
&=\, g_-\left( V_{A_1B_1} / V_{A_1} \oplus W_{A_2 B_2}/ W_{A_2} \right)\, \nonumber\\
&=\, g_-\left( V_{A_1B_1} / V_{A_1}\right)\, +\, g_-\left(W_{A_2 B_2} / W_{A_2} \right)\ \nonumber\\ 
&=\, \mathcal{G}(A_1\rangle B_1)_{V_{A_1B_1}}\, +\, \mathcal{G}(A_2\rangle B_2)_{W_{A_2 B_2}}\, . \nonumber
\end{align}

\item[(5)] \textit{ The upper bound ${\mathcal G}(A\rangle C)_V  \leq  g_+(V_{BC}/V_B)$ holds for any quantum CM $V_{ABC}$.} \\
Taking \eqref{INEQ 2}, applying $g_-$ and using its elementary properties yields exactly
\begin{equation}
\mathcal{G}(A\rangle C)\ =\ g_-(V_{AC}/V_A)\ \leq\ g_+(V_{BC}/V_B)\, .
\end{equation}
\end{itemize}
\end{proof}

\begin{rem}
We have just seen that the steerability measure $\mathcal{G}(A\rangle B)$ is decreasing under quantum Gaussian channels on the steered system $B$ (point 4 of Theorem \ref{G prop}). One could wonder, whether this is true also for more general non-deterministic quantum Gaussian maps. The answer is no, as some numerical counterexamples can easily show.
\end{rem}

\begin{rem}
As a corollary of Theorem \ref{G prop}, one sees easily that the R\'{e}nyi--2 measure of entanglement
\begin{equation}
E^{\text{G}}_{F,2}(A:B)\, =\, \inf_{\gamma_{AB}\, \text{pure}:\ \gamma_{AB}\,\leq\, V_{AB}}\, \frac{1}{2}\, \log\det \gamma_A\, ,
\end{equation}
is an upper bound on the steerabilities $\mathcal{G}(A\rangle B)$ and $\mathcal{G}(B\rangle A)$. In fact, consider the optimal pure $\gamma_{AB}\leq V_{AB}$ in the above equation and write
\begin{equation*}
E^{\text{G}}_{F,2}(A:B)_V\, =\, \frac{1}{2}\log\det \gamma_A\, =\, g_-(\gamma_{AB}/\gamma_A)\, \geq\, g_-(V_{AB}/V_A)\, =\, \mathcal{G}(A\rangle B)_V\, ,
\end{equation*}
where we used first the expression of the steerability in terms of local determinant for pure states and then the fact that $\mathcal{G}({A \rangle B})$ is monotonically decreasing as a function of the CM.
\end{rem}

We continue by providing the proof for Theorem \ref{G mono}.

\begin{proof}[Proof of Theorem \ref{G mono}] 
(a) It suffices to prove the inequality $\mathcal{G}(A\rangle BC)\, \geq\, \mathcal{G}(A\rangle B) + \mathcal{G}(A\rangle C)$ for a tripartite quantum CM $V_{ABC}$, as (\ref{mon steer 1}) would follow by iteration. Observe that $V_{AB}/V_A$ and $V_{AC}/V_A$ form the diagonal blocks of the bipartite matrix $V_{ABC}/V_A$. Applying \eqref{dec red g- eq} one thus obtains $\mathcal{G}(A\rangle BC)_V = g_-(V_{ABC}/V_A) \geq g_-(V_{AB}/V_A) + g_-(V_{AC}/V_A) = \mathcal{G}(A\rangle B)_V + \mathcal{G}(A\rangle BC)_V$, concluding the proof.

(b) For the case $n_A=1$ with $n_{B_j}$ arbitrary, one exploits the fact that only one term $\mathcal{G}(B_j\rangle A)$ in the right-hand side of (\ref{mon steer 2}) can be nonzero, due to the impossibility of jointly steering a single mode by Gaussian measurements as implied by (\ref{log det ineq}) \cite{Adesso}, combined with the monotonicity of $\mathcal{G}(B_1\ldots B_k \rangle A)$ under partial traces on the steering party as implied by Theorem~\ref{G prop}. 

The validity of \eqref{mon steer 2} for pure Gaussian states can be easily inferred by putting together inequality \eqref{E mono} and Theorem \ref{I2E}:
\begin{align*}
&\mathcal{G}(B_1\ldots B_k \rangle A)_V\, =\, E^{\text{G}}_{F,2}(B_1\ldots B_k:A)_V\,  \\
&\geq\, \sum_{j=1}^k E^{\text{G}}_{F,2}(B_j:A)_V \, \geq\, \sum_{j=1}^k \mathcal{G}(B_j\rangle A)_V\, ,
\end{align*}
where the first equality holds specifically for pure states.

On the contrary, already in the simplest case $k=2$, $n_A=2,\, n_{B_1}=n_{B_2}=1$, there exist mixed states violating inequality~\eqref{mon steer 2}. A counterexample is as follows:
\begin{equation} V_{AB_1B_2}\ =\ \begin{pmatrix}
 1.2 & -0.3 & 0.4 & -2.7 & 1.8 & -1.9 & 0.4 & -0.1 \\
 -0.3 & 0.9 & -1.2 & 0.4 & -1.2 & 0.5 & -0.4 & 0.1 \\
 0.4 & -1.2 & 4.5 & 1.6 & -1.4 & 1.8 & -0.1 & -0.3 \\
 -2.7 & 0.4 & 1.6 & 12. & -9.5 & 10.1 & -1.4 & -0.3 \\
 1.8 & -1.2 & -1.4 & -9.5 & 11.9 & -11.5 & 1.6 & 0.8 \\
 -1.9 & 0.5 & 1.8 & 10.1 & -11.5 & 11.9 & -1. & -1.4 \\
 0.4 & -0.4 & -0.1 & -1.4 & 1.6 & -1. & 2.4 & -2. \\
 -0.1 & 0.1 & -0.3 & -0.3 & 0.8 & -1.4 & -2. & 2.8
\end{pmatrix}\, .
\end{equation}
Here, the first four rows and columns pertain to $A$, the fifth and sixth to $B_1$, the last two to $B_2$. It can be easily verified that the minimum symplectic eigenvalue of the above matrix with respect to the symplectic form $\Omega_A\oplus \Omega_{B_1}\oplus\Omega_{B_2}$ is $\nu_{\min}(V_{B_1B_2A})=1.01359$, so that $V_{B_1B_2 A}$ is a legitimate quantum CM. However,
\begin{equation}
\mathcal{G}(B_1B_2\rangle A)_V\, -\, \mathcal{G}(B_1\rangle A)_V\, -\, \mathcal{G}(B_2\rangle A)_V\, =\, -0.816863\, .
\end{equation}
\end{proof}

Finally, we prove Proposition \ref{follia prop}. 

\begin{proof}[Proof of Proposition \ref{follia prop}]
Let us start by applying Lemma~\ref{lemma fact out} to decompose the symplectic space of $AB$ as $\Sigma_{AB}=\Sigma_R \oplus \Sigma_S$ in such a way that $V_{AB}=V_R \oplus \eta_S$, where $V_R > i\Omega_R$ and $\eta_S$ is a pure QCM. According to Lemma~\ref{lemma fact out}, the purification $\gamma_{ABC}$ can be taken to be of the form $\gamma_{ABC}=\gamma_{RC_{1}} \oplus \eta_S\oplus \delta_{C_{2}}$, with $\gamma_{C_{1}}>i\Omega_{C_{1}}$, $n_{C_{1}}=n_R$, and $\delta_{C_{2}}$ pure. If $\tau\leq V$ is a pure QCM, a projection onto $\Sigma_S$ reveals that $\tau_S = \Pi_S \tau \Pi_S^\intercal \leq \eta_S$. Since $\tau_S$ must be a legitimate QCM, and pure states are minimal within the set of QCMs, we deduce that $\tau_S=\eta_S$. Then, an application of Lemma~\ref{lemma pure reduction} allows us to conclude that $\tau = \tau_R \oplus \eta_S$, and accordingly $\tau_R \leq V_R$.

We claim that for all pure $\tau_R < V_R$ there is a pure QCM $\sigma_{C_{1}}$ such that
\begin{equation}
\left(\gamma_{RC_{1}} + 0_{R} \oplus \sigma_{C_{1}} \right) \big/ \left( \gamma_{C_{1}} +\sigma_{C_{1}} \right) = \tau_R .
\label{follia prop eq2}
\end{equation}
Constructing the extension $\sigma_{C}\coloneqq \sigma_{C_{1}}\oplus \tilde{\sigma}_{C_{2}}$, where $\tilde{\sigma}_{C_{2}}$ is an arbitrary pure QCM, we see that~\eqref{follia prop eq2} can be rewritten as 
\begin{equation}
\left(\gamma_{ABC} + 0_{AB} \oplus \sigma_{C} \right) \big/ \left( \gamma_{C} +\sigma_{C} \right) = \tau_R \oplus \eta_{S} .
\label{follia prop eq2bis}
\end{equation}
In fact, adding the ancillary system $C_{2}$ does not produce any effect on the Schur complement, since there are no off-diagonal block linking $C_{2}$ with any other subsystem. Analogously, the $S$ component of the $AB$ system can be brought out of the Schur complement because it is in direct sum with the rest.

In light of~\eqref{follia prop eq2bis}, we know that once~\eqref{follia prop eq2} has been established, in~\eqref{follia prop eq} we can achieve all QCMs $\gamma'$ that can be written as $\tau_{R}\oplus \eta_{S}$, with $\tau_{R}<V_{R}$. It is not difficult to see that this would allow us to conclude. Before proving~\eqref{follia prop eq2}, let us see why. The main point here is that every pure QCM $\tau_R\leq V_R$ can be thought of as the limit of a sequence of pure QCMs $\tau_R(t)< V_R$. An explicit formula for such a sequence reads $\tau_R(t) = \tau_R \#_t \gamma_{V_R}^\#$, where $\gamma_{V_R}^\#$ is the pure QCM defined in Lemma~\ref{lemma gamma sharp}, and $\#_t$ denotes the weighted geometric mean~\eqref{geom geod}. Observe that: (i) $\tau_R(t)$ is a QCM since it is known that the set of QCMs is closed under weighted geometric mean~\cite[Corollary 8]{SympIneq}; (ii) $\tau_R(t)$ is in fact a pure QCM, because according to~\eqref{det geom} its determinant satisfies $\det \tau_R(t) = \left(\det \tau_R\right)^{1-t} \big(\det \gamma_{V_R}^\# \big)^t = 1$; (iii) $\lim_{t\rightarrow 0^+} \tau_R(t)=\tau_R$ as can be seen easily from~\eqref{geom geod}; and (iv) $\tau_R(t)< V_R$ for all $t>0$. This latter fact can be justified as follows. Since $V_R > i\Omega_R$, from Lemma~\ref{lemma gamma sharp} we deduce $\gamma_{V_R}^\# < V_R$. Taking into account that $\tau_R \leq V_R$, the claim follows from the strict monotonicity of the weighted geometric mean, in turn an easy consequence of~\eqref{geom geod}.

Now, let us prove~\eqref{follia prop eq2}. We start by writing
\begin{equation*}
\gamma_{RC_{1}} = \begin{pmatrix} V_R & L \\ L^\intercal & \gamma_{C_{1}} \end{pmatrix} ,
\end{equation*}
where $V_R>i\Omega_R$, $\gamma_{C_{1}}> i\Omega_{C_{1}}$, and the off-diagonal block $L$ is square. As a matter of fact, more is true, namely that $L$ is also invertible. The simplest way to see this involves two ingredients: (a) the identity $\Omega V_{R}^{-1} \Omega^\intercal = \gamma_{RC_{1}} / \gamma_{C_{1}} = V_{R} - L \gamma_{C_{1}}^{-1} L^\intercal$, easily seen to be a special case of Equation~\ref{lemma2}; and (b) the fact that $V_R> \Omega V_R^{-1} \Omega^\intercal$ because of Lemma~\ref{lemma gamma sharp}. Combining these two ingredients we see that
\begin{equation*}
V_{R} > \Omega V_{R}^{-1} \Omega^\intercal = V_{R} - L \gamma_{C_{1}}^{-1} L^\intercal ,
\end{equation*}
which implies $L \gamma_{C_{1}}^{-1} L^\intercal>0$ and in turn the invertibility of $L$. Now, for a pure QCM $\tau_R< V_R$, take $\sigma_{C_{1}} = L^\intercal (V_R-\tau_R)^{-1} L - \gamma_{C_{1}}$. On the one hand, 
\begin{align*}
\big(\gamma_{RC_{1}} \!+ 0_R\!\oplus\! \sigma_{C_{1}}\big) \big/ \big(\gamma_{C_{1}} \!+ \sigma_{C_{1}}\big) &= V_R - L \left(\gamma_{C_{1}}\! + \sigma_{C_{1}}\right)^{-1}\! L^\intercal \\[0.8ex]
&= \tau_R
\end{align*}
by construction. On the other hand, write
\begin{align*}
\sigma_{C_{1}} \!- i\Omega_{C_{1}} &= L^\intercal (V_R - \tau_R)^{-1} L - (\gamma_{C_{1}} + i\Omega_{C_{1}}) \\[0.8ex]
&= L^\intercal (V_R - \tau_R)^{-1} L - L^\intercal (V_R + i \Omega_R)^{-1} L \\[0.8ex]
&= L^\intercal \left( (V_R - \tau_R)^{-1} - (V_R + i \Omega_R)^{-1} \right) L \\[0.8ex]
&= L^\intercal (V_R - \tau_R)^{-1} \times \left( (V_R + i \Omega_R) - (V_R - \tau_R) \right)\times (V_R + i \Omega_R)^{-1} L \\[0.8ex]
&= L^\intercal (V_R - \tau_R)^{-1} \left( \tau_R + i \Omega_R \right) (V_R + i \Omega_R)^{-1} L ,
\end{align*}
where we employed Lemma~\ref{lemma follia 0} in the form $\gamma_{C_{1}} + i \Omega_{C_{1}} = L^\intercal (V_R + i \Omega_R)^{-1} L$ and performed some elementary algebraic manipulations. Now, from the third line of the above calculation it is clear that $\sigma_{C_{1}} - i \Omega_{C_{1}}\geq 0$, since from $V_R - i \Omega_R \geq V_R - \tau_R > 0$ we immediately deduce $(V_R - \tau_R)^{-1} \geq (V_R + i \Omega_R)^{-1}$. This shows that $\sigma_{C_{1}}$ is a valid QCM. Moreover, observe that
\begin{align*}
\rk \left(\sigma_{C_{1}} - i \Omega_{C_{1}}\right) &= \rk \left( L^\intercal (V_R - \tau_R)^{-1} \left( \tau_R + i \Omega_R \right) (V_R + i \Omega_R)^{-1} L \right) \\[0.8ex]
&= \rk \left( \tau_R + i \Omega_R \right) \\
&= n_R \\[0.8ex]
&= n_{C_{1}} ,
\end{align*}
which tells us that $\sigma_{C_{1}}$ is also a pure QCM.
\end{proof}
\bibliographystyle{alpha}
\bibliography{biblio}

\end{document}